\date{}
\newtcolorbox{wbox}
{
	colback  = white,
}
\definecolor{mygray}{RGB}{218,215,203}
\newcommand{\NN}{\ensuremath{\mathbb{N}}}%
\newcommand{\QSAT}{\textsc{2-Quantified 3-Dnf-Sat}}
\newcommand{\ASHG}{\textsc{Ashg-Popularity-Existence}}
\newcommand{\FHG}{\textsc{Fhg-Popularity-Existence}}
\newcommand{\XX}{\ensuremath{\mathcal{X}}}%
\newcommand{\YY}{\ensuremath{\mathcal{Y}}}%
\newcommand{\CC}{\ensuremath{\mathcal{C}}}%
\newcommand{\x}{\ensuremath{a_x}}%
\newcommand{\nx}{\ensuremath{a_{\neg x}}}%
\newcommand{\xl}{\ensuremath{a_{\alpha}}}%
\newcommand{\nxl}{\ensuremath{a_{\neg\alpha}}}%
\newcommand{\xt}{\ensuremath{x_t}}%
\newcommand{\xf}{\ensuremath{x_f}}%
\newcommand{\xfp}{\ensuremath{x_{f'}}}%
\newcommand{\Xt}{\ensuremath{X_t}}%
\newcommand{\Xf}{\ensuremath{X_f}}%
\newcommand{\Xfp}{\ensuremath{X_{f'}}}%
\newcommand{\yfix}{\ensuremath{a_y}}%
\newcommand{\nyfix}{\ensuremath{a_{\neg y}}}%
\newcommand{\y}{\ensuremath{a_{\beta}}}%
\newcommand{\ny}{\ensuremath{a_{\neg\beta}}}%
\newcommand{\yp}{\ensuremath{a_{\beta}'}}%
\newcommand{\ypp}{\ensuremath{a_{\beta}''}}%
\newcommand{\ga}{\ensuremath{a}}%
\newcommand{\gap}{\ensuremath{a'}}%
\newcommand{\ca}{\ensuremath{a_c}}%
\newcommand{\casec}{\ensuremath{a_{c'}}}%
\newcommand{\la}{\ensuremath{\ell_c}}%
\newcommand{\ra}{\ensuremath{r_c}}%
\newcommand{\Tru}{\textsc{True}}
\newcommand{\Fals}{\textsc{False}}
\newtheorem{theorem}{Theorem}[section]
\newtheorem{proposition}[theorem]{Proposition}
\newtheorem{lemma}[theorem]{Lemma}
\theoremstyle{definition}
\newtheorem{definition}[theorem]{Definition}
\title{Settling the Complexity of Popularity in Additively Separable and Fractional Hedonic Games}
\author{Martin Bullinger}
\author{Matan Gilboa}
\affil{Department of Computer Science, University of Oxford, UK}
\begin{document}

\maketitle
\begin{abstract}
We study coalition formation in the framework of hedonic games.
There, a set of agents needs to be partitioned into disjoint coalitions, where agents have a preference order over coalitions. 
A partition is called popular if it does not lose a majority vote among the agents against any other partition.
Unfortunately, hedonic games need not admit popular partitions and prior work suggests significant computational hardness.
We confirm this impression by proving that deciding about the existence of popular partitions in additively separable and fractional hedonic games is $\Sigma_2^p$-complete.
This settles the complexity of these problems and is the first work that proves completeness of popularity for the second level of the polynomial hierarchy.
\end{abstract}

\section{Introduction}
We consider the task of partitioning a set of agents, say humans or machines, into disjoint coalitions.
The agents have preferences regarding the coalition they are part of and a reasonable partition should reflect these preferences.
This task is commonly studied in the framework of \emph{coalition formation} and is an intriguing object of study at the intersection of economics and computer science.
The typical economic setting is the formation of teams, such as working groups or political parties, but applications also consider reaching international agreements, establishing research collaboration, or forming customs unions \citep{Ray07a}.
Much related problems are community detection and clustering, two important tasks in social science and machine learning, respectively \citep{Newm04a,CLMP22a}.

The output of a coalition formation scenario is usually measured by means of solution concepts, which often incorporate stability or optimality.
While stability conceptualizes the prospect of agents staying in their own coalition rather than performing deviations to join other coalitions, optimality aims at global guarantees, for instance, with respect to notions of welfare.
We aim at finding an outcome that can be established as a status quo such that its legitimacy cannot be overthrown by another outcome when performing a majority vote against the status quo.
A solution concept that captures this idea is the notion of \emph{popularity} due to \citet{Gard75a}.\footnote{In his original work, \citet{Gard75a} calls popular outcomes ``majority assignments.''}
Informally speaking, an outcome is popular if no other outcome wins a vote against this outcome.
In social choice theory, this corresponds to the well-established notion of weak Condorcet winners \citep{Cond85a}, but popularity can be defined in any context where agents have preferences over outcomes.

\citet{Gard75a} was the first to consider popularity in a coalitional setting.
He considered bipartite matching instances and showed that stable matchings---in the sense of \citet{GaSh62a}---are popular if the agents' preferences are strict.
Interestingly, relaxing either assumption (bipartiteness or strict preferences) may lead to instances in which popular outcomes do not exist, and the corresponding decision problems become \NP-complete \citep{BIM10a,FKPZ19a,GMSZ21a}.
Notably, membership in \NP{} is not trivial for this problem because one has to certify that a matching does not lose a vote against any other matching, of which there are exponentially many.
This task can, however, be performed by transforming the verification of popular matchings to a maximum weight matching problem \citep{BIM10a} or a linear program that can simultaneously handle weak and incomplete preferences as well as nonbipartite instances \citep{KMN11a,BrBu20a}.

When allowing coalitions to be of size greater than~$2$, we reach the classical domain of coalition formation.
We consider the prominent classes of additively separable and fractional hedonic games \citep{BoJa02a,ABB+17a}.
Specifically, we study the following decision problem.
	\begin{wbox}
		\ASHG{} (\FHG)\\
		\textbf{Input:} Additively separable hedonic game (fractional hedonic game)\\
		\textbf{Question:} Does the given game admit a popular partition?
	\end{wbox}

Previous to our work, \citet{ABS11c} and \citet{BrBu20a} already presented evidence that the complexity of popularity increases further in these classes of games:
the verification of popular coalition structures becomes \coNP-complete and the existence problem is both \NP-hard and \coNP-hard.
In addition, the definition of popularity (\emph{existence} of an outcome such that \emph{for all} other outcomes, a vote is not lost) places popularity inside the complexity class $\Sigma_2^p$.
In our work, we complete the complexity picture for additively separable and fractional hedonic games by proving the following two theorems.

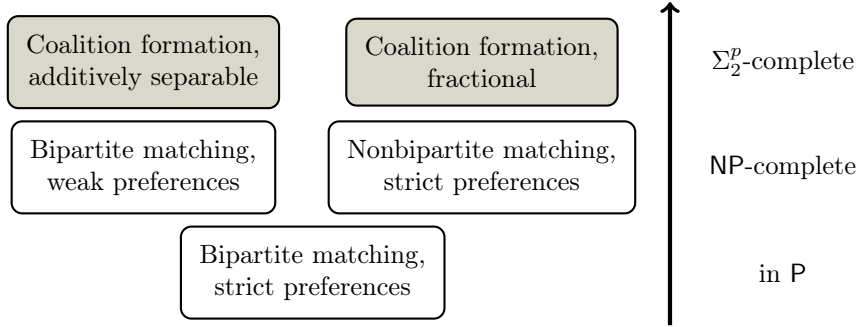
\begin{figure}[t]
    \centering
    \begin{tikzpicture}
        \draw[->,ultra thick] (0,-.1) -- (0,4.2);
        
        \node at (1.5,.6) {in \P};
        \node at (1.5,2) {\NP-complete};
        \node at (1.5,3.4) {$\Sigma_2^p$-complete};

        \node[align = center, draw, thick, rounded corners,inner sep = .7em] at (-4.75,.6) {Bipartite matching,\\ strict preferences};
        \node[align = center, draw, thick, rounded corners,inner sep = .7em] at (-7,2) {Bipartite matching,\\ weak preferences};
        \node[align = center, draw, thick, rounded corners,inner sep = .7em] at (-2.5,2) {Nonbipartite matching,\\ strict preferences};
        \node[align = center, draw, thick, rounded corners,inner sep = .7em, fill = mygray] at (-7,3.4) {Coalition formation,\\ additively separable};
        \node[align = center, draw, thick, rounded corners,inner sep = .7em, fill = mygray] at (-2.5,3.4) {Coalition formation,\\ fractional};
        
    \end{tikzpicture}
    \caption{Complexity hierarchy of popularity in coalitional scenarios.
    Gray boxes refer to our main results.}
    \label{fig:complexity-overview}
\end{figure}

\begin{theorem}
\label{thm_ashg}
\ASHG{} is $\Sigma_2^p$-complete.
\end{theorem}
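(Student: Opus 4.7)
The plan is to prove membership and hardness separately. For membership in $\Sigma_2^p$, the definition of popularity directly yields an $\exists\pi\forall\pi'$ form: guess a partition $\pi$ (of polynomial size), then verify that no other partition $\pi'$ receives strictly more votes than $\pi$. The inner universal check is exactly the \coNP{} verification of popularity already established in the literature cited in the excerpt (\citet{ABS11c}, \citet{BrBu20a}), so the composition sits in $\Sigma_2^p$.

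For hardness, the natural route is a reduction from \QSAT. Given an instance $\exists X \forall Y : \bigvee_{c \in C} D_c$, where each $D_c$ is a conjunction of at most three literals, I would construct an ASHG $G_\Phi$ whose agents are organised into three kinds of gadgets. Each variable $x \in X$ receives an \emph{X-gadget} built around literal agents $\x, \nx$ together with auxiliary agents, whose weights are engineered so that any popular partition commits the gadget to one of two canonical configurations representing $x = \Tru$ or $x = \Fals$; deviations that tamper with such a commitment strictly lose votes and can be ignored. Each variable $y \in Y$ receives a \emph{Y-gadget} whose canonical placement in $\pi$ is neutral but which admits an alternative placement corresponding to $y = \Tru$ or $y = \Fals$, contributing a fixed positive vote surplus to a would-be deviator, so that any $\pi'$ threatening $\pi$ can be decoded as a $Y$-assignment. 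Each clause $D_c$ receives a \emph{clause gadget} on agents such as $\ca, \casec, \la, \ra$ connected to the literal agents appearing in $D_c$, calibrated so that whenever $D_c$ is satisfied by the joint $(X, Y)$-assignment read off from $\pi$ and $\pi'$, the clause gadget contributes a negative term that offsets the Y-gadget surplus.

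The correctness argument then runs in both directions. If $\Phi$ is a yes-instance, fix a witnessing $X^\ast$ and let $\pi^\ast$ be the canonical partition representing it. For any alternative $\pi'$, decode a $Y$-assignment $Y'$; since $\Phi$ is true, some clause $D_{c^\ast}(X^\ast, Y')$ holds, and the corresponding clause gadget cancels the aggregate Y-gadget surplus, so $\pi'$ does not beat $\pi^\ast$, whence $\pi^\ast$ is popular. If $\Phi$ is a no-instance, then for every partition $\pi$ encoding any $X$-assignment there exists a $Y$ such that all $D_c(X, Y)$ are false, and the alternative partition assembled from this $Y$ strictly beats $\pi$. Partitions not encoding a valid $X$-assignment should be handled by first showing that they are dominated by a canonical repair, so no partition is popular in the no-instance.

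The main obstacle will be the fine-grained balancing of vote counts across gadgets. The reduction must ensure that Y-gadget surpluses and clause-gadget deficits align exactly, that X-gadgets do not leak additional votes when combined with Y- or clause-side deviations, and that no hybrid deviation extracts a net gain by aggregating small advantages across many gadgets simultaneously. Controlling this typically requires structural lemmas that restrict, without loss of generality, the form of any beating partition $\pi'$ to one interpretable as a clean $Y$-assignment paired with canonical responses in the X- and clause-gadgets, together with a judicious choice of integer weights making the decisive inequalities strict exactly when the formula is falsified. This is where I expect the bulk of the technical effort to lie.
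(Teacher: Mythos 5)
Your membership argument is fine and matches the paper's. For hardness, however, what you have written is a research plan rather than a proof: every step that carries actual content (the gadget weights, the structural lemmas restricting the form of a beating partition, the exact vote accounting) is deferred to ``the bulk of the technical effort.'' In a $\Sigma_2^p$-hardness proof for popularity that deferred part \emph{is} the proof, so as it stands there is a genuine gap. More concretely, the one design decision you do commit to --- that each $Y$-gadget contributes a fixed positive vote surplus to a would-be deviator, which satisfied clause gadgets must then cancel --- is not the mechanism the paper uses, and it is not clear it can be made to work. In the paper's reduction the $Y$-gadgets are exactly vote-neutral under any deviation ($\phi_{\{\y,\yp\}}(\pi^*,\pi)\ge 0$, with equality in the relevant deviations); the deviator's entire surplus of $+1$ comes from a deliberate count imbalance between $m$ clause agents and $m-1$ auxiliary $C'$-agents, and this surplus is realized only if \emph{every} clause agent strictly prefers the deviation, which the $-2$ valuations from clause agents to their own literals make impossible as soon as a single clause is satisfied. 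Your scheme instead needs $n$ independent $Y$-surpluses to be cancelled by a variable number of satisfied clauses, which is a much harder balancing act and is precisely the kind of ``aggregating small advantages across many gadgets'' failure mode you yourself flag.

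The second missing idea is how to handle the no-instance direction. You propose that partitions not encoding a valid $X$-assignment are ``dominated by a canonical repair,'' but domination (Pareto improvement) is far too weak: the problematic partitions are typically Pareto-optimal and must be beaten in a majority vote by a specifically constructed counter-partition. The paper achieves this by embedding a known 5-agent No-instance of \ASHG{} (three top agents, two bottom agents $b_1,b_2$) into the construction, with agent sets $T_1,T_2$ and the formula-encoding agents playing the roles of the top agents; any partition that fails to separate this left side from the formula gadgets inherits the cyclic ``rotation'' dynamics of the No-instance and is automatically beaten, which is what forces every popular partition into the canonical, assignment-readable form. Without an analogue of this trapped No-instance your reduction has no source of non-existence in the no-case, and without the $|C|$-versus-$|C|-1$ counting trick it has no way to make the decisive popularity margin come out to exactly $-1$ precisely when the formula is falsified. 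Both ideas would need to be supplied before the proposal could be completed.
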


\begin{theorem}
\label{thm_fhg}
\FHG{} is $\Sigma_2^p$-complete, even if valuations are nonnegative.
\end{theorem}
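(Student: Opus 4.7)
The plan is to prove membership in $\Sigma_2^p$ and $\Sigma_2^p$-hardness separately, with the latter being the bulk of the work. Membership follows directly from the quantifier structure of popularity: a candidate popular partition $\pi$ is guessed, and verifying that no other partition beats $\pi$ lies in coNP, because utilities in a fractional hedonic game are polynomial-time computable from $\pi$ and any proposed $\pi'$, and the counting comparison between agents strictly gaining and strictly losing is immediate.

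For hardness, I would reduce from \QSAT, i.e., from formulas of the form $\exists \XX\, \forall \YY\, \phi(\XX,\YY)$ with $\phi$ in $3$-DNF. Given such an instance $I$, the goal is to build in polynomial time an FHG $G_I$ with nonnegative valuations so that $G_I$ admits a popular partition if and only if $I$ is a yes-instance. The design principle is to let candidate popular partitions canonically encode assignments to \XX, while potentially beating deviations encode an assignment to \YY{} together with a witnessing satisfied clause of $\phi$; a deviation then wins the majority vote against the status-quo partition exactly when the encoded $(\XX,\YY)$-pair satisfies $\phi$. Concretely, for each existentially quantified variable $x$ I would introduce a variable gadget built around two literal agents \x{} and \nx{} (plus padding) whose symmetric placement in any popular partition records the truth value of $x$; for each universally quantified $y$, an analogous gadget on agents \yfix{}, \nyfix{} whose symmetric placement instead appears in the deviating partition; and for each clause $c$, a clause gadget with auxiliary agents \ca{} that enables and empowers a deviation precisely when some clause is satisfied under the encoded assignment.

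The main obstacle is carrying out this scheme with \emph{nonnegative} valuations in the \emph{fractional} setting, where neither negative edges nor unbounded positive edges are available to forbid coalitions directly. I expect to make the coalition-size denominator do the separating work: enlarging a coalition dilutes the utilities of all other members, so careful calibration of gadget sizes and weights would let a small set of swing agents control every vote while all remaining agents are either strictly satisfied or strictly indifferent between $\pi$ and any candidate deviation. Correctness should then follow from two structural lemmas: first, that any popular $\pi$ must respect the intended variable-gadget structure, so that a well-defined $\XX$-assignment can be read off from $\pi$; second, that any deviation $\pi'$ must either project to a clean $(\YY,c)$-structured deviation or lose its vote by a controllable margin, reducing the existence of a beating deviation to satisfaction of $\phi$ by some $\YY$ under the fixed $\XX$. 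This mirrors the construction behind Theorem \ref{thm_ashg} but requires systematically replacing each negative-weight enforcement by a size-based fractional one, which I anticipate to be the delicate technical point.
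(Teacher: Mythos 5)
Your membership argument is fine and matches the paper's. The hardness plan, however, has the polarity of the clause gadget inverted, and this is not a cosmetic slip. You want the constructed game to admit a popular partition if and only if $\exists \tau_{\XX}\forall\tau_{\YY}\,\psi(\tau_{\XX},\tau_{\YY})=\Tru$. A partition encoding $\tau_{\XX}$ is popular iff \emph{no} deviation beats it, so a deviation encoding $\tau_{\YY}$ must win the vote exactly when $\psi(\tau_{\XX},\tau_{\YY})=\Fals$, i.e., when \emph{no} DNF clause is fully satisfied --- which is what the paper arranges: each clause agent $\ca$ values the literal agents of her own clause at $0$ and all other literal agents at $v_C>1$, so $\ca$ gains from joining the deviating coalition precisely when at most two of her own literals are present, and the deviation needs \emph{all} clause agents on its side to come out ahead. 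Your stated design principle (``a deviation wins exactly when the encoded pair satisfies $\phi$'') yields a game admitting a popular partition iff $\exists\XX\forall\YY\,\neg\phi$; since $\neg\phi$ is a CNF and, for fixed $\tau_{\XX}$, deciding whether a CNF is a tautology over $\YY$ takes polynomial time, that problem lies in \NP, so the reduction as designed cannot establish $\Sigma_2^p$-hardness.

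Beyond the polarity, the proposal omits the one idea that makes the nonnegative fractional construction go through: a self-contained No-instance gadget. The paper attaches to each clause a star with center $\ra$ and six leaves, which admits no popular partition in isolation, and connects it to the rest of the game only through the single positive edge between $\ca$ and the leaf $\la^6$. Popularity then \emph{forces} $\{\ca,\la^6\}\in\pi^*$, and from this the remaining coalitions ($\{j,j',j''\}$, $\{\y,\yp,\ypp\}$, and the $\{\xl,\xt\}$ versus $\{\xl,\xf,\xfp\}$ dichotomy encoding $\tau_{\XX}$) are pinned down by a long chain of Pareto-optimality and popularity arguments --- exactly the structural lemma you postulate but do not supply. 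Likewise, the single swing agent you anticipate is realized concretely as the agent $j$, whose coalition in a beating deviation must consist of $n$ $X$-agents, $n$ $Y$-agents, and all $C$-agents, from which $\tau_{\YY}$ is read off. Your appeal to ``careful calibration of gadget sizes and weights'' correctly names the difficulty (size dilution is the only exclusion mechanism available with nonnegative values) but provides no mechanism for it; as it stands the proposal is a statement of intent rather than a proof, and the intent itself points the reduction in the wrong direction.
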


This completes the characterization of the complexity hierarchy of popularity in coalitional scenarios, as detailed in \Cref{fig:complexity-overview}. These results highlight the significant computational hardness presented by these problems, beyond the previously known results. 
While many NP-hard problems can be addressed in practice using SAT or ILP solvers, our $\Sigma_2^p$-completeness results indicate a higher level of complexity that surpasses these typical approaches.

Note that the nonnegativity assumption in \Cref{thm_fhg} is a strong additional restriction, which is not possible for \Cref{thm_ashg}.
Indeed, additively separable hedonic games define agents' utilities for coalitions based on the sum of valuations of its members.
Hence, forming the grand coalition containing all agents is optimal for all agents if valuations are nonnegative.
By contrast, the sum of valuations is divided by the size of the coalition in fractional hedonic games, which leads to nontrivial preferences, even for nonnegative valuations.

\section{Related Work}

Coalition formation in the framework of hedonic games was first considered by \citet{DrGr80a} and further conceptualized by \citet{BoJa02a}, \citet{BKS01a}, and \citet{CeRo01a}.
An overview on hedonic games can be found in the book chapters by \citet{AzSa15a} and \citet{BER24a}.

In a general model of hedonic games, agents have to rank an exponentially large set of possible coalitions.
Since this causes computational issues, a wide range of succinct preference representations has been proposed in the literature.
Often, this is based on restricting attention to important meta-information about a coalition such as its size \citep{BoJa02a} or its best or worst member \citep{CeRo01a}.
Another way is to aggregate cardinal valuation functions of single agents to a utility for a coalition.
We consider models that follow this latter approach, namely additively separable hedonic games (ASHGs) and fractional hedonic games (FHGs) \citep{BoJa02a,ABB+17a}.

Similar to the landscape of game classes, there exists a variety of solution concepts for hedonic games.
We focus our discussion on the  large body of research on ASHGs and FHGs. 
Much of this literature concerns stability, i.e., the absence of beneficial deviations to join other or form new coalitions.
A common theme is that stability is usually only satisfiable for restricted domains of games, and various computational hardness results have been observed.
Interestingly, there is a difference in complexity dependent on whether single agents or groups of agents perform a deviation.
Whether a single agent can perform a deviation can usually be checked in polynomial time and we obtain \NP-completeness results \citep{SuDi10a,ABS11c,BBS14a,BBW21b,BBT23a}.
By contrast, whether a group deviation exists is itself \NP-complete to check and hence the existence of group stability, e.g., whether there exist partitions in the \emph{core}, becomes $\Sigma_2^p$-complete \citep{Woeg13a,Pete17b,ABB+17a}.
Prior to our complexity results on popularity, these were the only problems known to be $\Sigma_2^p$-complete for hedonic games.

It is possible to achieve more positive results regarding the existence of stable outcomes by considering restricted domains \citep{BoJa02a,DBHS06a}, weakened solution concepts \citep{FMM21a}, stability under randomized deviations \citep{FFKV23a}, or in random games \citep{BuKr24a}.
For instance, symmetric utilities lead to the existence of single-deviation stability in ASHGs \citep{BoJa02a}, but the same is not true in FHGs \citep{BBW21b}, and even in ASHGs, computation is still \PLS-hard \citep{GaSa19a}.

Popularity, our main solution concept, has received less attention.
Most related to our work is the paper by \citet{BrBu20a} who prove \NP-hardness and \coNP-hardness of the existence problem for ASHGs and FHGs.
In addition, they show \coNP-completeness of the verification of popular partitions, a problem that was also considered by \citet{ABS11c} for ASHGs.\footnote{\citet{ABS11c} also consider the existence problem of popularity for ASHGs, but their proof was pointed out to be incomplete by \citet{BrBu20a}.}
Our results improve upon these results by showing $\Sigma_2^p$-completeness of the existence problem, which settles the precise complexity of popularity in ASHGs and FHGs.

Popularity has also been considered in further classes of hedonic games.
\citet{BrBu20a} and \citet{CsPe21a} study it for games with coalitions bounded in size by three, and \citet{KLR+20a} consider a preference model based on the distinction of friends, enemies, and neutrals.
Moreover, \citet{KeRo20a} consider popularity for a nonhedonic class of coalition formation games aimed at modeling altruism. 
All of these papers show \coNP-completeness of the verification problem.
However, while \citet{BrBu20a} and \citet{CsPe21a} at least show \NP-hardness, the complexity of the existence problem remains unresolved in all of these models.

\section{Preliminaries}

In this section, we provide the preliminaries for our work.
We start with defining hedonic games, then define important solution concepts, and finally discuss the computational aspects of these solution concepts.

\subsection{Succinct Classes of Cardinal Hedonic Games}
Let $N$ be a set of agents. 
A \textit{coalition} is a nonempty subset of $N$. 
A coalition of size one is called a \textit{singleton} coalition. 
Denote by $\mathcal{N}_i=\{S\subseteq N\colon i\in S\}$ the set of all coalitions agent $i$ belongs to. 
A \textit{coalition structure}, or a \textit{partition}, is a partition $\pi$ of $N$ into coalitions. 
For an agent $i\in N$, we denote by $\pi(i)$ the coalition $i$ belongs to in $\pi$. 

A \textit{hedonic game} is a pair $(N,\succsim)$, where $\succsim=(\succsim_i)_{i\in N}$ is a preference profile specifying the preferences of each agent $i$ as a complete and transitive preference order $\succsim_i$ over $\mathcal{N}_i$. 
In hedonic games, agents are only concerned with the members of their own coalition which is also reflected in their preference order.
Therefore, we can naturally define an associated preference order over partitions by $\pi\succsim_i\pi'$ if and only if $\pi(i)\succsim_i\pi'(i)$.
For coalitions $S,S'\in \mathcal{N}_i$, we say that agent $i$ \textit{weakly prefers} $S$ over $S'$ if $S\succsim_i S'$.
Moreover, we say that $i$ \textit{prefers} $S$ over $S'$ if $S\succ_i S'$.
We use the same terminology for preferences over partitions.

In this paper, we assume agents rank coalitions (and by extension, partitions) by underlying utility functions $u = (u_i\colon \mathcal{N}_i\to\mathbb{R})_{i\in N}$.
These induce a hedonic game $(N,\succsim)$ where, for every agent $i\in N$ and two coalitions $S, S'\in \mathcal N_i$, we define $S\succsim_i S'$ if and only if $u_i(S)\ge u_i(S')$.
Hence, $i$ prefers $S$ over $S'$ if and only if $u_i(S) > u_i(S')$.
We say that $u_i(S)$ is $i$'s utility for coalition $S$ and extend this to utilities for partitions by setting $u_i(\pi) = u_i(\pi(i))$.
A hedonic game together with its utility-based representation is called a \textit{cardinal hedonic game} and is specified by the pair $(N,u)$.

Hedonic games as introduced so far need every agent to specify a preference order or cardinal values for an exponentially large set of coalitions.
By contrast, we focus on succinctly representable sub-classes of cardinal hedonic games, where the utilities are induced by the aggregation of values that each agent assigns to other members of her coalition. 
These games are specified by a pair $(N,v)$, where $v = (v_i\colon N\to \mathbb R)_{i\in N}$ is a vector of \textit{valuation functions}. 
The quantity $v_i(j)$ denotes the value agent $i$ assigns to agent $j$.

Following \citet{BoJa02a}, an \textit{additively separable hedonic game} (ASHG) given by the pair $(N,v)$ is the cardinal hedonic game $(N,u)$ where 
\[u_i(S)=\sum_{j\in S\setminus \{i\}}v_i(j)\text.\]
Hence, the utility $u_i(S)$ of agent $i$ for coalition $S\in \mathcal N_i$ is defined as the sum of the values agent $i$ assigns to the other members of her coalition.

Following \citet{ABB+17a}, a \textit{fractional hedonic game} (FHG) given by the pair $(N,v)$ is the cardinal hedonic game $(N,u)$ where 
\[u_i(S)=\frac{\sum_{j\in S\setminus \{i\}}v_i(j)}{|S|}\text.\]
Hence, the utility $u_i(S)$ of agent $i$ for coalition $S\in \mathcal N_i$ is defined as the sum of the values agent $i$ assigns to the other members of her coalition divided by the coalition size. 
This quantity can be interpreted as the average value that $i$ assigns to the members of her coalition if we include a value of $0$ for herself.

\subsection{Popular Partitions}
We now move towards defining popularity, our main solution concept, for a given hedonic game $(N,\succsim)$. 
Let $\pi$ and $\pi'$ be two partitions of $N$. 
We denote the set of agents who prefer $\pi$ over $\pi'$ by $N(\pi,\pi')$, i.e., $N(\pi,\pi')=\{i\in N\colon\pi\succ_i\pi'\}$. 
For any subset of agents $M\subseteq N$, we define the \textit{popularity margin} on $M$ with respect to the ordered pair $(\pi,\pi')$ to be $\phi_M(\pi,\pi')=|N(\pi,\pi')\cap M|-|N(\pi',\pi)\cap M|$.
Note that in this definition, agents who are indifferent between the two partitions do not contribute to any of the two terms. 
When $M$ is a singleton containing a single agent $\ga$, we use the abbreviated notation $\phi_{\ga}(\pi,\pi')=\phi_{\{\ga\}}(\pi,\pi')$. 
The definition of popularity margins is useful as sometimes it is convenient to consider restricted subsets of agents separately. 
Further, considering the entire set of agents, we define the \textit{popularity margin} of the ordered pair $(\pi,\pi')$ as $\phi(\pi,\pi')=\phi_N(\pi,\pi')$.
Note that the popularity margin is antisymmetric, i.e., $\phi(\pi,\pi') = -\phi(\pi',\pi)$.
We say that $\pi$ is \textit{more popular} than $\pi'$ if $\phi(\pi,\pi')>0$. 
Moreover, $\pi$ is called \textit{popular} if there exists no partition $\pi'$ that is more popular than $\pi$, i.e., for any partition $\pi'$ it holds that $\phi(\pi,\pi')\geq 0$.

Another useful concept in the context of popularity is Pareto optimality.
We say that $\pi'$ is a \textit{Pareto improvement} from $\pi$ if all agents weakly prefer $\pi'$ over $\pi$, and at least one agent strictly prefers $\pi'$ over $\pi$. 
If there exists no Pareto improvement from $\pi$, we say $\pi$ is \textit{Pareto-optimal}. 
Clearly, popular partitions are Pareto-optimal.
Indeed, every Pareto improvement is a more popular partition.
By contrast, Pareto-optimal partitions need not be popular.
In addition, a useful observation is that it suffices to restrict attention to Pareto-optimal partitions when considering popularity \citep{BrBu20a}.

\begin{proposition}[\citealp{BrBu20a}, Proposition~4]\label{prop:wlogPO}
    A partition $\pi$ is popular if and only if for all Pareto-optimal partitions $\pi'$ it holds that $\phi(\pi,\pi')\geq 0$.
\end{proposition}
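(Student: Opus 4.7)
The forward direction is immediate from the definition: if $\pi$ is popular, then $\phi(\pi,\pi') \geq 0$ for every partition $\pi'$, so in particular for every Pareto-optimal one. The substance of the proposition lies in the converse, which I would argue by contraposition.

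Suppose $\pi$ is not popular, so there exists a partition $\pi''$ with $\phi(\pi,\pi'') < 0$. The plan is to exhibit a Pareto-optimal partition $\pi'$ with $\phi(\pi,\pi') < 0$ as well, contradicting the hypothesis. Since there are only finitely many partitions, starting at $\pi''$ one can iteratively follow Pareto improvements until reaching a Pareto-optimal partition $\pi'$; by construction every agent weakly prefers $\pi'$ to $\pi''$ (allowing $\pi' = \pi''$ if $\pi''$ was already Pareto-optimal). Equivalently, pick any maximal element of the finite set $\{\sigma : \text{every agent weakly prefers }\sigma\text{ to }\pi''\}$ under Pareto dominance.

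The key step is the monotonicity inequality $\phi(\pi,\pi') \leq \phi(\pi,\pi'')$. This rests on two inclusions that follow from transitivity of each agent's preference. First, $N(\pi'',\pi) \subseteq N(\pi',\pi)$: an agent who strictly prefers $\pi''$ to $\pi$ and weakly prefers $\pi'$ to $\pi''$ must strictly prefer $\pi'$ to $\pi$. Symmetrically, $N(\pi,\pi') \subseteq N(\pi,\pi'')$: an agent who strictly prefers $\pi$ to $\pi'$ and weakly prefers $\pi'$ to $\pi''$ must strictly prefer $\pi$ to $\pi''$. Taking cardinalities gives $|N(\pi,\pi')| \leq |N(\pi,\pi'')|$ and $|N(\pi',\pi)| \geq |N(\pi'',\pi)|$, hence $\phi(\pi,\pi') \leq \phi(\pi,\pi'') < 0$. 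Thus $\pi'$ is a Pareto-optimal partition more popular than $\pi$, the desired contradiction. The main (modest) obstacle is executing this monotonicity step cleanly; otherwise the argument is purely structural, hinging on the observation that a Pareto improvement from $\pi''$ to $\pi'$ can only shift agents' opinions in favor of $\pi'$ against any fixed third alternative such as $\pi$.
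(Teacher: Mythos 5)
Your proof is correct: the paper itself does not reprove this statement but imports it from \citet{BrBu20a}, and your argument (pass from a witness $\pi''$ of non-popularity to a Pareto-optimal $\pi'$ weakly Pareto-dominating it, then use transitivity to get $N(\pi'',\pi)\subseteq N(\pi',\pi)$ and $N(\pi,\pi')\subseteq N(\pi,\pi'')$, hence $\phi(\pi,\pi')\leq\phi(\pi,\pi'')<0$) is exactly the standard one underlying that reference. No gaps; the finiteness argument for reaching a Pareto-optimal partition and the monotonicity of the popularity margin are both handled correctly.
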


As a consequence, whenever we postulate a more popular partition than a given partition, we may assume without loss of generality that this partition is Pareto-optimal.

\subsection{Complexity Theory}
We assume familiarity of the reader with basic notions of complexity theory such as polynomial-time reductions or the classes \P{} (\emph{deterministic polynomial time}) and \NP{} (\emph{nondeterministic polynomial time}).
Here, we focus on the complexity class $\Sigma_2^p$ in the second level of the polynomial hierarchy, which captures the problems considered in this paper.
We refer to the textbooks by \citet{Papa94a} and \citet{ArBa09a} for an introduction to complexity and a more in-depth covering of $\Sigma_2^p$.

The class $\Sigma_2^p$ contains all problems $P$ for which there exists a polynomial-time Turing machine $M$ and a polynomial $q$ such that $x$ is a Yes-instance of $P$ if and only if there exists a $y\in \{0,1\}^{q(|x|)}$ such that for all $z\in \{0,1\}^{q(|x|)}$ it holds that 
$M(x, y, z) = \Tru$.
Informally speaking, this captures problems in which the solutions $y$ of an instance $x$ are challenged by any possible adversary $z$.
The class is thus described by the concatenation of an existential and a universal quantifier.
It therefore contains \NP{}, which is defined by just an existential quantifier (because we can ignore the universal quantifier), and \coNP{}, which is defined by just a universal quantifier (because we can ignore the existential quantifier).
As with other complexity classes, a problem $P$ is said to be $\Sigma_2^p$-\emph{hard} if for every problem in $\Sigma_2^p$, there exists a polynomial-time reduction from this problem to $P$.
A problem is said to be $\Sigma_2^p$-\emph{complete} if it is $\Sigma_2^p$-hard and contained in $\Sigma_2^p$.

As a first example, we define the problem \QSAT, which is a canonical \textsc{Sat} problem for $\Sigma_2^p$.
It is the source problem of our reductions in \Cref{thm_ashg,thm_fhg}. 
	\begin{wbox}
		\QSAT\\
		\textbf{Input:} Two sets $\XX=\{x_1, \dots,x_n\}$ and $\YY=\{y_1, \dots,y_n\}$ of Boolean variables and a Boolean formula $\psi(\XX,\YY)$ over $\XX\cup \YY$ in disjunctive normal form, where each of the conjunctive clauses consists of exactly three distinct literals.\\
		\textbf{Question:} Does there exist a truth assignment $\tau_{\XX}$ to $ x_1, \dots,x_n$ such that for all truth assignments $\tau_{\YY}$ to $y_1, \dots,y_n$ it holds that $\psi(\tau_{\XX},\tau_{\YY})=\Tru$?
	\end{wbox}

\QSAT{} is exactly in the spirit of $\Sigma_2^p$.
Yes-instances are described by the existence of a certificate (the truth assignment to $x_1, \dots,x_n$) such that the output of the formula is \Tru{} regardless of the truth assignment to $y_1, \dots,y_n$.
Even more, \QSAT{} was shown to be $\Sigma_2^p$-complete by \citet{St77}.

As a second example, we argue that \ASHG{} and \FHG{} are contained in $\Sigma_2^p$, as remarked by \citet{BrBu20a}:
One can consider a polynomial-time Turing machine with three inputs that are a hedonic game (say, an ASHG or FHG) and two partitions $\pi$ and $\pi'$ and it outputs \Tru{} if and only if $\phi(\pi,\pi')\ge 0$ in the given hedonic game.
This Turing machine attests membership in $\Sigma_2^p$ of the existence problem of popularity.
In our proofs, we will therefore only consider hardness.

\section{Popularity in ASHGs}
\label{sec_ashg}

In this section, we discuss the proof of \Cref{thm_ashg}.
We start by describing our reduction from \QSAT.
Then, in the subsequent two sections, we give an overview of the proof that satisfiability of the source instance implies the existence of a popular partition and vice versa.
We focus on the key arguments and an illustration while the detailed proof is deferred to \Cref{apndx_ashg}.

\subsection{Setup of the Reduction}
\label{sec_ashg:setup}
We now describe the construction of the reduction.
First, we introduce the following No-instance of \ASHG, on which the reduction relies; it is a symmetric version of the No-instance described by \citet[Example~4]{ABS11c}.
Suppose we have five agents, consisting of three \emph{top} agents $t_1$, $t_2$, and $t_3$, and two \emph{bottom} agents $b_1$ and $b_2$. 
For each $i\in\{1,2,3\}$, $t_i$ assigns value $1$ to $b_1$ and value $2$ to $b_2$. 
Moreover, $b_1$ assigns value $1$ to each top agent and $b_2$ assigns value $2$ to each top agent. 
All other values are set to $-\infty$ (or some large negative value, e.g., $-7$ suffices here) between the agents.
This instance is depicted in \Cref{fig_ashg:No_instance}.

\begin{figure}[t]
\centering
\includegraphics[width=.4\textwidth]{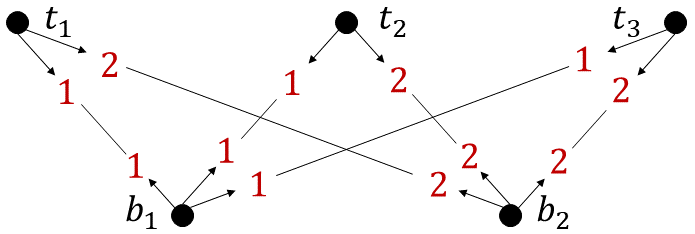}
	\caption{A No-instance of \ASHG. Omitted edges imply value $-\infty$.}
	\label{fig_ashg:No_instance}
\end{figure}

One may verify that there exists no popular partition in this instance:
It is easy to see that it is more popular to dissolve any coalition of size at least three into singletons.
Hence, the interesting case is a partition of the type $\{\{t_1,b_1\},\{t_2,b_2\},\{t_3\}\}$, which, however, is less popular than $\{\{t_1,b_2\},\{t_3,b_1\},\{t_2\}\}$.

In our reduction, we construct a game which has a similar structure to this No-instance (with some additional agents).
However, each top agent $t_i$ is replaced by a set of multiple agents who, intuitively, together function in a similar way as the single agent $t_i$. 
Hence, familiarity with the above No-instance is helpful to understand the reduction as well: when a satisfying assignment to the \QSAT{} instance does not exist, the reduced game simulates a behaviour similar to that of this No-instance.

We proceed by describing our reduction.
Suppose that we are given an arbitrary instance $(\XX,\YY,\psi)$ of \QSAT.
Denote by $\CC$ the set of clauses in $\psi$ and let $m=|\CC|$; without loss of generality, we may assume that $m\geq 2$. 
We construct the following ASHG consisting of $12n+4m-1$ agents, depicted in \Cref{fig_ashg:reduction}.

\begin{figure}[t]
\centering
\includegraphics[width=.8\textwidth]{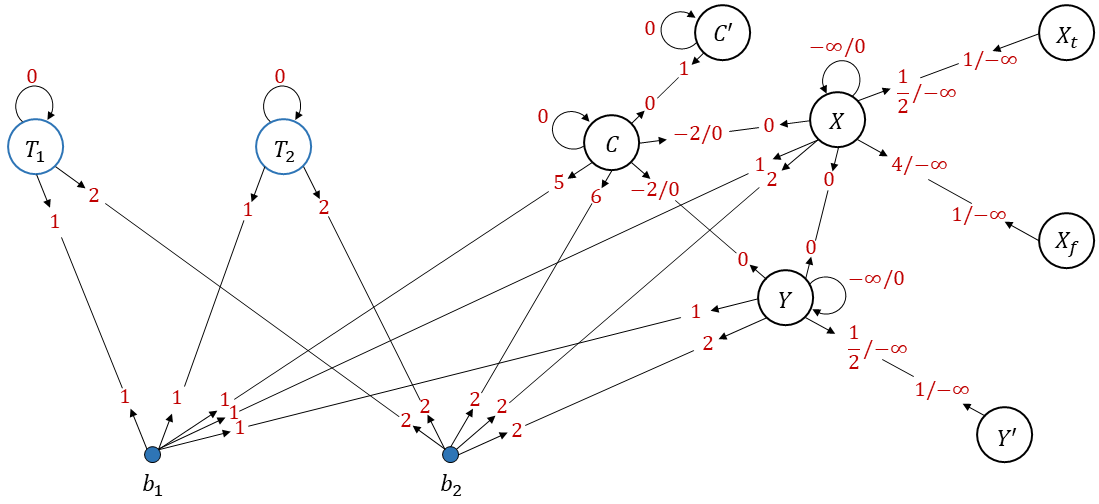}
	\caption{The reduction for the proof of \Cref{thm_ashg}. Omitted edges imply value $-\infty$. When two values $v_1/v_2$ appear, $v_1$ refers to corresponding agents, and $v_2$ to noncorresponding. Left-side agents are marked in blue. $b_1$ and $b_2$ are single agents, while the rest represent sets of agents.}
	\label{fig_ashg:reduction}
\end{figure}

\begin{itemize}
    \item For every variable $x\in \XX$:
        \begin{itemize}
            \item We create two $X$-agents $\x$ and $\nx$, where the former represents the variable and the latter its negation. 
            We will use $\alpha$ to denote any literal over $\XX$, meaning $\xl$ can correspond to a variable or its negation; accordingly, $\nxl$ will simply correspond to the negated literal, e.g., if $\alpha = \lnot x$, then $\nxl = \x$.
            If $\x$ and $\nx$ originate in the same variable, they are called \emph{complementary} agents. 
            \item We create a corresponding $\Xt$-agent and a corresponding $\Xf$-agent, denoted $\xt$ and $\xf$, respectively.
            The subscripts of these agents indicate ``true'' and ``false'' and these agents are used to deduct the satisfying truth assignments from popular partitions (and vice versa).
        \end{itemize}
    \item For every variable $y\in \YY$:
        \begin{itemize}
            \item We create two $Y$-agents $\yfix$ and $\nyfix$, where the former represents the variable and the latter its negation. 
            We will use $\beta$ to denote any literal over $\YY$, meaning $\y$ can correspond to a variable or its negation; $\ny$ will refer to the agent corresponding to the negated literal. 
            If $\y$ and $\ny$ originate in the same variable, they are called \emph{complementary} agents.
            \item We create a $Y'$-agent $a_{y}'$ corresponding to $\yfix$, and a $Y'$-agent $a_{\neg y}'$ corresponding to $\nyfix$ (we emphasize that, in contrast to the $X$-agents, which have corresponding agents as a pair, $\yfix$ and $\nyfix$ each have separate $Y'$-agents).
        \end{itemize}
    \item For every clause $c\in \CC$, we create a $C$-agent $\ca$.
    For a literal $\alpha$ over $\XX$ (or $\beta$ over $\YY$) occurring in $c$, we refer to the $X$-agent $\xl$ (or $Y$-agent $\y$) as corresponding to clause $c$.
    \item We create $m-1$ agents, called $C'$-agents.
    \item We create $2n+m$ agents, called $T_1$ agents and another $2n+m$ agents, called $T_2$ agents.
    \item We create a single agent denoted $b_1$, and a single agent denoted $b_2$. 
\end{itemize}

For each agent type, the set of all agents from that type is denoted by the name of the type (e.g., $T_1$ is the set of all $T_1$-agents).
We use the terms \emph{real agents} to refer to the $X$-, $Y$-, and $C$-agents,
and \emph{structure agents} to refer to all other agents.
In addition, we speak of \emph{left-side agents} to refer to $b_1$, $b_2$, and the $T_1$- and $T_2$-agents, and \emph{right-side agents} to refer to the other agents (this terminology is based on the visualization in \Cref{fig_ashg:reduction}).
We denote by $L$ and $R$ the sets of all left-side and right-side agents, respectively. 

We refer to \Cref{fig_ashg:reduction} for an overview of the valuation functions and to \Cref{apndx_ashg} for a detailed description.
Valuations missing from the figure (as well as some of the depicted ones) correspond to a large negative constant which we indicate by a value of $-\infty$.
For the reduction to work, one can, for instance, set $\infty = 6(12n+4m-1)$.
This completes the description of the reduction. 

When the input \QSAT{} instance is a No-instance, the reduced ASHG mimics the No-instance in \Cref{fig_ashg:No_instance}, where $b_1$ and $b_2$ are still single agents, but $t_1$ and $t_2$ are replaced by the sets $T_1$ and $T_2$, and the real agents correspond to the agent $t_3$. 
The real agents also encode the source instance of \QSAT, as they are representatives of the literals and clauses.
The right-side structure agents provide options for ``good'' coalitions for the real agents.

In essence, a popular partition can only exist if all right-side agents are in good coalition which will allow for a partition corresponding to the partition $\{\{t_1,b_1\},\{t_2,b_2\},\{t_3\}\}$ to be popular.
The good coalitions for the real agents are:
\begin{itemize}
    \item coalitions of the type $\{\x,\xt\}$ and $\{\nx,\xf\}$ or $\{\nx,\xt\}$ and $\{\x,\xf\}$ for the $X$-agents,
    \item coalitions $\{\y,\yp\}$ for the $Y$-agents, and
    \item coalition $C\cup C'$ for the $C$-agents.
\end{itemize}

The crucial part is to determine the exact coalitions of the $X$-agents.
Whether we form $\{\x,\xt\}$ or $\{\nx,\xt\}$ corresponds to a truth assignment to the $\XX$ variables.

To prove \Cref{thm_ashg}, we will show that the logical formula is satisfiable if and only if there exists a popular partition in the constructed ASHG. 
If we have a truth assignment, we can define a partition as described above and prove that it is popular.
Conversely, a popular partition has to be a structure similar to the partition described above and we can use it to extract a truth assignment.
The two directions of the proof will be sketched in \Cref{sec_ashg:if_satisfiable,sec_ashg:if_popular}.

\subsection{Satisfiability Implies Popular Partition}
\label{sec_ashg:if_satisfiable}
Throughout this section, we assume that $(\XX,\YY,\psi)$ is a Yes-instance of \QSAT.
Hence, there is a truth assignment $\tau_{\XX}$ to the variables in $\XX$ such that for all truth assignments $\tau_{\YY}$ to the variables in $\YY$ it holds that $\psi(\tau_{\XX},\tau_{\YY}) = \Tru$. 
Consider the following partition of the agents, denoted by $\pi^*$.
\begin{itemize}
    \item For each $x\in\XX$, if $x$ is assigned \Tru{} by $\tau_{\XX}$ then $\{\{\x,\xt\},\{\nx,\xf\}\}\subseteq\pi^*$, and if $x$ is assigned \Fals{} by $\tau_{\XX}$ then $\{\{\nx,\xt\},\{\x,\xf\}\}\subseteq\pi^*$.
    \item Each $Y$-agent $\y$ forms a coalition with her corresponding $Y'$-agent $\yp$.
    \item The coalition $C\cup C'$ is formed.
    \item Coalitions $T_1\cup\{b_1\}$ and $T_2\cup\{b_2\}$ are formed.
\end{itemize}
Our goal is to show that $\pi^*$ is a popular partition. 
We defer the detailed proof of this statement to \Cref{apndx_ashg:if_satisfiable} and focus on outlining key steps.
Assume towards contradiction that there exists a partition $\pi$ that is more popular than $\pi^*$.
We wish to use $\pi$ to extract a truth assignment $\tau_{\YY}$ to the variables in $\YY$ such that $\psi(\tau_{\XX},\tau_{\YY}) = \Fals$.
For this, we will determine various structural insights about the partition $\pi$ and finally use the coalition $\pi(b_1)$ to find both the assignment $\tau_{\YY}$ as well as a proof that it can be used to evaluate $\psi$ as false.

For determining the structure of $\pi$, it is good to first consider the popularity margin for certain groups of agents.
By using that $\pi^*$ is a very good partition for the $Y'$-, $X_f$-, $X_t$-, and $C'$-agents, we obtain the following facts
\begin{itemize}
    \item For each $\y\in Y$, it holds that $\phi_{\{\y,\yp\}}(\pi^*,\pi)\geq 0$.
    \item For each $x\in\XX$, it holds that $\phi_{\{\x,\nx,\xt,\xf\}}(\pi^*,\pi)\geq 0$.
    \item If for every $\ca\in C$ we have that $\phi_{\ca}(\pi,\pi^*)>0$, then $\phi_{C\cup C'}(\pi^*,\pi)=-1$. 
Otherwise, $\phi_{C\cup C'}(\pi^*,\pi)\geq 0$.
\end{itemize}
Together, the worst-case popularity margin of the right-side agents is thus $\phi_{R}(\pi^*,\pi)\geq -1$.

As a next step, we consider coalitions of left-side agents and show that
\begin{enumerate}
    \item Agents in $T_1$ and $T_2$ cannot be in the same coalition.\label{summary_ashg1:item_TT}
    \item The coalition of $b_1$ contains a right-side agent.\label{summary_ashg1:item_b1}
    \item The coalition of $b_2$ does not contain a right-side agent.\label{summary_ashg1:item_b2}
\end{enumerate}

\Cref{summary_ashg1:item_TT} holds because these agents only gain positive value from $b_1$ and $b_2$, whereas valuations between agents in $T_1$ and $T_2$ are $-\infty$.
This insight can then be leveraged to show that at least one agent of $\{b_1,b_2\}$ has to contain a right-side agent.
Otherwise, it is easy to deduce that the left-side agents have a popularity margin of $\phi_L(\pi^*,\pi)\ge 0$, and furthermore no $C$-agent can gain positive utility in $\pi$, and therefore also $\phi_R(\pi^*,\pi)\ge 0$.
Together, these two facts imply that $\pi$ was not more popular. \Cref{summary_ashg1:item_b1,summary_ashg1:item_b2} follow with little effort from this conclusion.

We can now show that $\phi_{T_1\cup T_2\cup \{b_2\}}(\pi^*,\pi)\geq 0$.
Together with our other insights about the popularity margins, $\pi$ can only be more popular than $\pi^*$ if $\phi_{b_1}(\pi^*,\pi)\leq 0$.

Next, it is easy to see that each agent in $T_1$ or $T_2$ that forms a coalition with a right-side agent would have to be in the coalition with $b_1$.
However, by carefully analysing $\pi(b_1)$, it can then be shown that it cannot contain agents in $T_1$ and $T_2$.

To summarize our knowledge about left-side agents, we know that $b_1$ forms a coalition with right-side agents only, whereas all other left-side agents form coalitions with other left-side agents.

The next step is to analyze the exact coalition of $b_1$ in $\pi$.
It can be shown that $\pi(b_1)$ can only contain real agents (recall that $\phi_{b_1}(\pi^*,\pi)\leq 0$) and that it has to contain exactly $n$ $X$-agents corresponding to the agents forming coalitions with the $X_f$-agents in $\pi^*$, all $C$-agents, and either $\y$ or $\ny$ for every $\YY$ variable.

We can now extract a truth assignment $\tau_{\YY}$ to $\YY$ from the $Y$-agents contained in $\pi(b_1)$.
The only way that $\pi$ is more popular than $\pi^*$ is when all $C$-agents prefer $\pi$ over $\pi^*$ which, due to the valuations by the $C$-agents of the agents corresponding to their respective literals, can only happen if $\tau_{\XX}$ and $\tau_{\YY}$ evaluate every clause to \Fals.
This implies that $\psi(\tau_{\XX},\tau_{\YY}) = \Fals$, a contradiction. We thus conclude this part of the proof.

\subsection{Popular Partition Implies Satisfiability}
\label{sec_ashg:if_popular}
Throughout this section, we assume that there is a popular partition $\pi^*$ in the reduced ASHG.
We will prove that this implies that the source instance is a Yes-instance to \QSAT. We defer the detailed proof of this statement to  \Cref{apndx_ashg:if_popular}.
In this section, we give an overview of the proof.

Our main goal is to show that $\pi^*$ has a structure similar to that of the popular partition defined in \Cref{sec_ashg:if_satisfiable} (up to symmetries), which will enable us to extract a satisfying truth assignment to the variables in $\XX$ by looking at the coalitions of the $X_t$-agents.

As a first step, we show that left-side and right-side agents cannot form a joint coalition.
Suppose a coalition $S\in \pi^*$ contains both a left-side and a right-side agent. 
The only agents who may have a nonnegative utility in such a coalition are $b_1$, $b_2$, and real agents, and thus $S$ must contain some combination of agents $b_1$ and $b_2$. 
If both $b_1$ and $b_2$ are in $S$, then the partition obtained from $\pi^*$ by extracting $b_1$ and $b_2$ from $S$, and forming the coalitions $\{b_1\}\cup T_1$ and $\{b_2\}\cup T_2$, can be shown to be more popular. 
So, only one of $b_1$ and $b_2$ may reside in $S$. 
Denote $b_j\in S$, and $b_i\notin S$, where $i,j\in\{1,2\}$. Hence, it is easy to see that we must have either $\pi^*(b_i)=\{b_i\}\cup T_1$ or $\pi^*(b_i)=\{b_i\}\cup T_2$. Without loss of generality, assume $\pi^*(b_i)=\{b_i\}\cup T_1$. Now, intuitively, we can think of $T_1$, $T_2$, and $S\setminus\{b_j\}$ as the agents $t_1$, $t_2$, and $t_3$ from the No-instance discussed in \Cref{sec_ashg:setup}, respectively. 
A deviation analogous to that discussed in the context of this No-instance shows that this partition is not popular.

Having established that the left and right side are separated, the only possibility for $\pi^*$ to be popular is if agents form coalitions with their corresponding agents, who give them positive utility. 
Specifically, the following must hold.
\begin{enumerate}
    \item For the left side, we have that\label{item_ashg2:left_side} 
	$\{b_1\}\cup T_1\in\pi^*$ and $\{b_2\}\cup T_2\in\pi^*$, or
    $\{b_2\}\cup T_1\in\pi^*$ and $\{b_1\}\cup T_2\in\pi^*$.
    \item We have that $C\cup C'\in\pi^*$.\label{item_ashg2:C}
    \item For each $\y\in Y$, we have that $\{\y,\yp\}\in\pi^*$.\label{item_ashg2:Y}
    \item For each $x\in\XX$, we have that\label{item_ashg2:X} $\{\x,\xt\}\in\pi^*$ and $\{\nx,\xf\}\in\pi^*$, or $\{\x,\xf\}\in\pi^*$ and $\{\nx,\xt\}\in\pi^*$.
\end{enumerate}

This allows us to define the following truth assignment $\tau_{\XX}$ to the $\XX$ variables. 
For each $x\in \XX$, $x$ is assigned \Tru{} if and only if $\pi^*(\x)=\{\x,\xt\}$ (by \Cref{item_ashg2:X}, this is a valid assignment). 
We claim that $\tau_{\XX}$ is a satisfying assignment to the $\QSAT$ instance, i.e., that $\psi(\tau_{\XX},\tau_{\YY}) = \Tru$ for all truth assignments $\tau_{\YY}$ to the $\YY$ variables.

Assume otherwise, namely that there exists a truth assignment $\tau_{\YY}$ to the $\YY$ variables such that $\psi(\tau_{\XX},\tau_{\YY})=\Fals$. 
We will now find a partition that is more popular than $\pi^*$. Recalling \Cref{item_ashg2:left_side}, let us assume without loss of generality that $\{\{b_1\}\cup T_1,\{b_2\}\cup T_2\}\subseteq\pi^*$.
Consider the partition $\pi$ obtained from $\pi^*$ as follows.
\begin{itemize}
    \item Extract the following agents from their respective coalitions, and form a new coalition $S$:
    \begin{itemize}
        \item All $\xl\in X$ such that $\{\xl,\xt\}\in \pi^*$, for some $\xt\in\Xt$.
        \item All $\y\in Y$ such that the literal represented by $\y$ is assigned \Tru{} by $\tau_{\YY}$.
        \item All $C$-agents and agent $b_1$.
    \end{itemize}
    \item Extract $b_2$ from her coalition, and set $\pi(b_2)=\{b_2\}\cup T_1$.    
\end{itemize}
Note that the new coalition $S$ consists of $2n+m+1$ agents.
Moreover, by definition of $\tau_{\XX}$, if $\tau_{\XX}$ assigns \Tru{} to $x$, then $S$ contains $\x$ and if $\tau_{\XX}$ assigns \Fals{} to $x$, then $S$ contains $\nx$.
In addition, for $y\in \YY$, $S$ contains $\yfix$ if $\tau_{\YY}$ assigns \Tru{} to $y$ and $S$ contains $\nyfix$ if $\tau_{\YY}$ assigns \Fals{} to $y$.

We compute the popularity margin between $\pi$ and $\pi^*$. Let $c\in\CC$. Since $\psi(\tau_{\XX},\tau_{\YY})=\Fals$, we have that $c$ has at most two literals in $S$ assigned \Tru{} by $\tau_{\XX}$ and $\tau_{\YY}$. Hence, since the $X$- and $Y$-agents in $S$ correspond to the literals assigned \Tru{} by $\tau_{\XX}$ and $\tau_{\YY}$, there are at most two $X$- or $Y$-agents in $S$ to whom $\ca$ assigns value $-2$ (to the other $X$- or $Y$-agents she assigns $0$). Therefore, all $C$-agents prefer $\pi$ over $\pi^*$. Thus, it is simple to check that $\phi_R(\pi^*,\pi)=-1$ (which stems from the fact that $|C'|-|C|=-1$). Furthermore, we have $\phi_L(\pi^*,\pi)=0$ ($T_1$-agents prefer $\pi$, $T_2$-agents prefer $\pi^*$, and $b_1$ and $b_2$ are indifferent between the partitions).
Altogether, we conclude that $\phi(\pi^*,\pi)=-1$, in contradiction to $\pi^*$ being a popular partition.
Hence, $(\XX,\YY,\psi)$ is a Yes-instance of \QSAT.

\section{Popularity in Nonnegative FHGs}
\label{sec_fhg}

In this section, we discuss the proof of \Cref{thm_fhg}.
We start by describing our reduction from \QSAT.
Subsequently, we give an overview of the proof that satisfiability of the source instance implies the existence of a popular partition and vice versa. 
The detailed proof is deferred to \Cref{apndx_fhg}.

\subsection{Setup of the Reduction}
\label{sec_fhg:setup}

 In this section, we detail the construction of the reduction.
First, we introduce the following No-instance of \FHG, on which the reduction relies.
Consider the FHG induced by a star graph, consisting of a central node $r$ and $k$ leaves $\ell^1, \dots,\ell^k$. 
Agent $r$ assigns value $1$ to all leaves, and the leaves assign value $1$ to $r$, and $0$ to other leaves. 
For $k = 6$, this instance can be seen in the left part of \Cref{fig_fhg:reduction}.
It was shown by \cite{BrBu20a} that the star instance does not admit any popular partitions for $k\geq 6$, whereas it does if $k<6$.
In our reduction, we construct a game which includes multiple such star graphs, each corresponding to a clause from the given logical formula.

Suppose we are given an arbitrary instance $(\XX, \YY, \psi)$ of \QSAT{} with $n\geq 2$, i.e., $\XX$ and $\YY$ contain $n$ variables each. 
As in \Cref{sec_ashg}, denote by $\CC$ the set of clauses in $\psi$, and let $m=|\CC|$; without loss of generality, we may assume that $m\geq 2$. 
We then construct the following FHG, consisting of $11n+8m+3$ agents.
An illustration of the reduction is given in \Cref{fig_fhg:reduction}.

\begin{figure}[t]
\centering
\includegraphics[width=.8\textwidth]{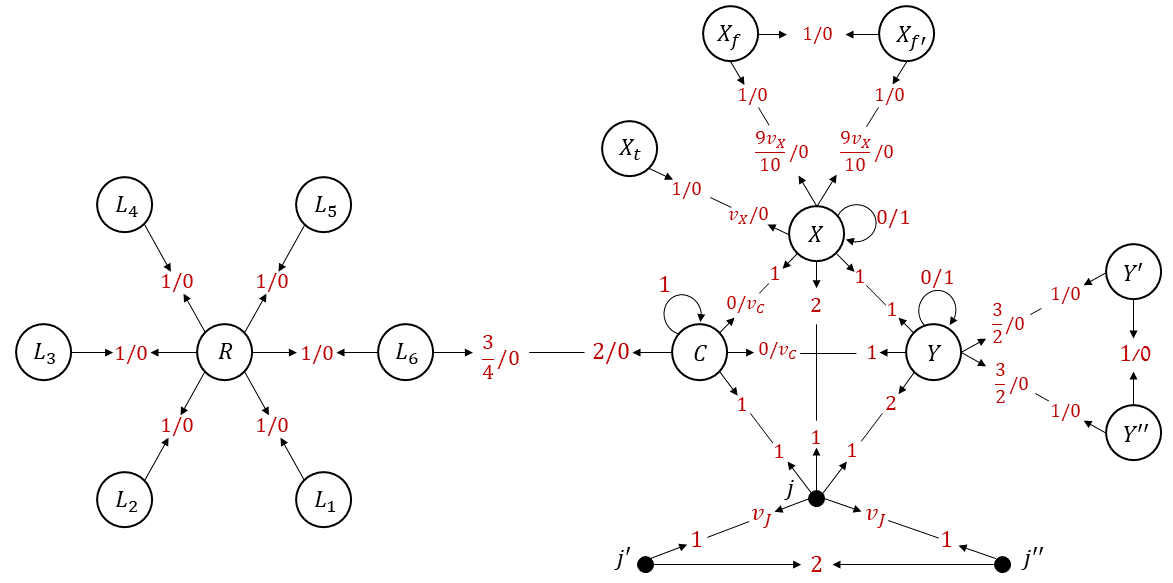}
	\caption{The reduction for the proof of \Cref{thm_fhg}.
	Each node refers to a certain agent type, i.e., to the respective set of agents.
	Edges indicate valuations between all agents in the respective sets.
	When two values $v_1/v_2$ appear, $v_1$ refers to corresponding agents, and $v_2$ to noncorresponding ones.
	Omitted edges imply value $0$.}
	\label{fig_fhg:reduction}
\end{figure}

\begin{itemize}
    \item For every variable $x\in \XX$:
        \begin{itemize}
            \item We create two $X$-agents $\x$ and $\nx$, where the former represents the variable and the latter its negation. 
            We will use $\alpha$ to denote any literal over $\XX$, meaning $\xl$ can correspond to a variable or its negation; accordingly, $\nxl$ will simply correspond to the negated literal.
            If $\x$ and $\nx$ originate in the same variable, they are called \textit{complementary} agents. 
            \item We create a corresponding $\Xt$-agent, $\Xf$-agent, and $\Xfp$-agent, denoted $\xt$, $\xf$, and $\xfp$, respectively.
            The subscripts of these agents indicate ``true'' and ``false'' and these agents are used to deduct the satisfying truth assignments from popular partitions (and vice versa).
        \end{itemize}
    \item For every variable $y\in \YY$:
        \begin{itemize}
            \item We create two $Y$-agents $\yfix$ and $\nyfix$, where the former represents the variable and the latter its negation. 
            We will use $\beta$ to denote any literal over $\YY$, meaning $\y$ can correspond to a variable or its negation; $\ny$ will refer to the agent corresponding to the negated literal. 
            If $\y$ and $\ny$ originate in the same variable, they are called \textit{complementary} agents.            
            \item We create a $Y'$-agent $a_{y}'$ and a $Y''$-agent $a_{y}''$ corresponding to $\yfix$, and a $Y'$-agent $a_{\neg y}'$ and a $Y''$-agent $a_{\neg y}''$ corresponding to $\nyfix$ (we emphasize that, in contrast to the $X$-agents, $\yfix$ and $\nyfix$ have separate $Y'$- and $Y''$-agents).
        \end{itemize}
    \item For every clause $c\in \CC$, we create a corresponding $C$-agent $\ca$, a corresponding $R$-agent $\ra$, and 6 corresponding agents $\la^1, \dots,\la^6$. 
    For each $i\in\{1, \dots,6\}$, $\la^i$ is referred to as an $L_i$-agent. 
    We use the term $L$-agents, or simply ``leaves'', to refer to all $L_1,\dots,L_6$-agents.
    We stress that $C$ always refers to clauses and not to centers of stars, and that $L$ and $R$ no longer refer to left and right side as in \Cref{sec_ashg}.
    For a literal $\alpha$ over $\XX$ (or $\beta$ over $\YY$) occurring in $c$, we refer to the $X$-agent $\xl$ (or $Y$-agent $\y$) as corresponding to clause $c$.
    \item Lastly, we construct three additional agents $j$, $j'$, and $j''$.
    
\end{itemize}
For each agent type, the set of all agents from that type is denoted by the name of the type (e.g., $L_6$ is the set of all $L_6$-agents).
As in \Cref{sec_ashg}, we use the terms \emph{real agents} to refer to the $X$-, $Y$-, and $C$-agents, and \emph{structure agents} to refer to any other agent.
We refer to \Cref{fig_fhg:reduction} for an overview of the valuation functions and to \Cref{apndx_fhg} for a detailed description.
The valuations depend on suitably chosen parameters $v_X=2\frac{4n+m+1}{4n+m+2}$, $v_C=\frac{2n+1}{2n-2.5}$, and $v_J=\frac{3(2n+m-1)}{2(2n+m)}$.
Moreover, valuations missing from the figure correspond to a value of $0$.

Recall that the reduction for ASHGs in \Cref{thm_ashg} used one blown-up No-instance, where the whole combinatorics of the source problem had replaced the role of a single agent.
For FHGs, the combinatorics of the source instance is still encoded similarly.
Literals are still represented by $X$- and $Y$-agents and once again, they have good options to form coalitions together with their corresponding structure agents by forming
\begin{itemize}
    \item coalitions $\{\x,\xt\}$ and $\{\nx,\xf,\xfp\}$ or $\{\nx,\xt\}$ and $\{\x,\xf,\xfp\}$ for the $X$-agents, and 
    \item coalitions $\{\y,\yp,\ypp\}$ for the $Y$-agents.
\end{itemize}
As in the proof for ASHGs, the coalitions of the $X_t$-agents in popular partitions correspond to satisfying truth assignments for the variables in $\XX$.

However, the combinatorics of the source instance is not embedded in a single No-instance, but multiple No-instances are used as gadgets.
This implies that popular partitions have to contain coalitions of some agent in the No-instance with another agent outside the No-instance.
We have one No-instance corresponding to a star for every clause and the only agent that is linked to the gadget by a positive valuation is its corresponding clause agent.
We will see that the only possibility to achieve popularity is to use this single connection by forming coalitions of the type $\{\ca,\la^6\}$.

To prove \Cref{thm_fhg}, we will show that the logical formula is satisfiable if and only if there exists a popular partition in the constructed FHG. 
Every truth assignment lets us define a popular partition along these ideas, while we can show that every popular partition has such a structure and lets us extract a truth assignment.
The two directions of the proof will be sketched in \Cref{sec_fhg:if_satisfiable} and \Cref{sec_fhg:if_popular}.

\subsection{Satisfiability Implies Popular Partition}
\label{sec_fhg:if_satisfiable}
Consider an instance $(\XX,\YY,\psi)$ of \QSAT{} and its reduced FHG as described in the previous section.
Throughout this section, we assume there is a truth assignment $\tau_{\XX}$ to $\XX$ such that for all truth assignments $\tau_{\YY}$ to $\YY$ it holds that $\psi(\tau_{\XX},\tau_{\YY})=\Tru$.
Consider the following partition of the agents, denoted by $\pi^*$:
\begin{itemize}
    \item For each clause $c\in\CC$, $\{\ca,\la^6\}\in\pi^*$ and $\{\ra,\la^1,\la^2,\la^3\}\in\pi^*$.
    \item All remaining $L$-agents (i.e., the $L_4$ and $L_5$-agents) form singleton coalitions.
    \item Agents $j$, $j'$, and $j''$ form a coalition together.
    \item Each $Y$-agent $\y$ forms a coalition with her corresponding agents $\yp\in Y'$ and $\ypp\in Y''$.
    \item For each variable $x\in \XX$, if $x$ is assigned \Tru{} by $\tau_{\XX}$ then $\{\{\x,\xt\},\{\nx,\xf,\xfp\}\}\subseteq\pi^*$, and if $x$ is assigned \Fals{} by $\tau_{\XX}$ then $\{\{\nx,\xt\},\{\x,\xf,\xfp\}\}\subseteq\pi^*$.
\end{itemize}
Our goal is to show $\pi^*$ is a popular partition. We defer the detailed proof of this statement to \Cref{apndx_fhg:if_satisfiable}.
In this section, we give an overview of the proof.

Assume towards contradiction that there exists a partition $\pi$ that is more popular than $\pi^*$. 
An analysis of the sets $S_1=C\cup R\cup L$, $S_2=X\cup \Xt\cup \Xf\cup \Xfp$, $S_3=Y\cup Y'\cup Y''$, and $S_4=\{j',j''\}$ shows that $\phi_{S_i}(\pi^*,\pi)\geq 0$, for each $i\in \{1,2,3,4\}$; namely, there exists an inherent trade-off in each of those sets, where increasing one agent's utility decreases that of another. 
Hence, the only possibility for $\pi$ to be more popular than $\pi^*$ is if $\phi_j(\pi^*,\pi)=-1$, while $\phi_{S_i}(\pi^*,\pi)=0$ for each $i\in \{1,2,3,4\}$. To satisfy $\phi_j(\pi^*,\pi)=-1$, observe that $\pi(j)$ must include some real agents.
We will use $\pi(j)$ to extract an assignment to $\YY$ together with which $\tau_{\XX}$ does not evaluate $\psi$ as \Tru.
Now, a more careful analysis of the sets $S_i$ shows the following facts.
\begin{enumerate}
    \item For all $c\in\CC$, we must have $\phi_{\ca}(\pi^*,\pi)=-1$ (to satisfy $\phi_{S_1}(\pi^*,\pi)=0$).\label{item_fhg1:C}
    \item For all $x\in\XX$ and $\alpha\in\{x,\neg x\}$, if $\pi^*(\xl)=\{\xl,\xf,\xfp\}$ then $\xl\notin\pi(j)$ (to satisfy $\phi_{S_2}(\pi^*,\pi)=0$).\label{item_fhg1:X}
    \item For all $y\in\YY$, we must have $\yfix\notin\pi(\nyfix)$ (to satisfy $\phi_{S_3}(\pi^*,\pi)=0$).\label{item_fhg1:Y}
    \item $j',j''\notin\pi(j)$ (to satisfy $\phi_{S_4}(\pi^*,\pi)=0$, since $\pi(j)$ contains real agents).\label{item_fhg1:J}
\end{enumerate}
\Cref{item_fhg1:X,item_fhg1:Y} imply that $|\pi(j)\cap X|\leq n$ and $|\pi(j)\cap Y|\leq n$. Following this and \Cref{item_fhg1:J}, to satisfy $\phi_j(\pi^*,\pi)=-1$, $\pi(j)$ must consist of $j$, $n$ $X$-agents, $n$ $Y$-agents, all $C$-agents, and no other agent (since $u_j(\pi^*)=\frac{2n+m-1}{2n+m}$). 
Fix clause $c\in \CC$, and let $k_c\in\{0,1,2,3\}$ denote the number of $X$- and $Y$-agents in $\pi(\ca)$ who represent literals from the original clause~$c$. 
By \Cref{item_fhg1:C}, we must have that $u_{\ca}(\pi)>u_{\ca}(\pi^*)$; it can be verified that this holds if and only if $k_c\leq 2$.
 
Consider the following assignment $\tau_{\YY}$ to the $\YY$ variables: variable $y\in \YY$ is assigned \Tru{} if and only if $\yfix\in\pi(j)$ (by \Cref{item_fhg1:Y}, this is a valid assignment). By \Cref{item_fhg1:X}, and since $|\pi(j)\cap X|=n$, the $X$-agents in $\pi(j)$ are exactly the agents $\xl\in X$ for which $\alpha$ is assigned \Tru{} by $\tau_{\XX}$.
Hence, we obtain $\psi(\tau_{\XX},\tau_{\YY})=\Fals$ because for each clause $c\in \CC$ there exists a literal assigned \Fals{} by $(\tau_{\XX},\tau_{\YY})$ (since $k_c\leq 2$).
This gives a contradiction to our choice of $\tau_{\XX}$.
Therefore $\pi^*$ is a popular partition.

\subsection{Popular Partition Implies Satisfiability}
\label{sec_fhg:if_popular}
Throughout this section, we assume that there is a popular partition $\pi^*$ in the constructed FHG.
We will prove that this implies the source instance is a Yes-instance to \QSAT. We defer the detailed proof of this statement to  \Cref{apndx_fhg:if_popular}.
In this section, we give an overview of the proof.

We wish to gain knowledge regarding the structure of $\pi^*$. 
First, since the star instance is a No-instance, some agent in $\{\ra,\la^1,\dots,\la^6\}$ has to form a coalition with an agent outside of this set and we show that this entails that $\{\ca,\la^6\}\in\pi^*$. 
Therefore, it follows that the remaining agents must be partitioned in the ``natural'' way, dictated by the positive relations of corresponding agents.
Specifically, we have that $\{j,j',j''\}\in\pi^*$, for all $\y\in Y$ we have that $\{\y,\yp,\ypp\}\in\pi^*$, and for all $x\in\XX$ we either have that $\{\{\x,\xt\},\{\nx,\xf,\xfp\}\}\subset\pi^*$ or that $\{\{\nx,\xt\},\{\x,\xf,\xfp\}\}\subset\pi^*$. 

This allows us to define the following truth assignment $\tau_{\XX}$ to the $\XX$ variables. 
For each $x\in \XX$, $x$ is assigned \Tru{} if $\pi^*(\x)=\{\x,\xt\}$ and \Fals{} if $\pi^*(\x)=\{\x,\xf,\xfp\}$. 
We claim that $\tau_{\XX}$ is a satisfying assignment to the source instance, i.e., that $\psi(\tau_{\XX},\tau_{\YY}) = \Tru$ for all truth assignments $\tau_{\YY}$ to the $\YY$ variables.

Assume otherwise, namely that there exists a truth assignment $\tau_{\YY}$ to the $\YY$ variables such that $\psi(\tau_{\XX},\tau_{\YY})=\Fals$. 
We will show that this allows us to find a partition that is more popular than $\pi^*$. 
Therefore, consider the partition $\pi$ obtained from $\pi^*$ by extracting the following agents from their respective coalitions to form a new coalition $S$:
\begin{itemize}
    \item All $\xl\in X$ such that $\{\xl,\xt\}\in \pi^*$, for some $\xt\in\Xt$.
    \item All $\y\in Y$ such that the literal represented by $\y$ is assigned \Tru{} by $\tau_{\YY}$.
    \item All $C$-agents and agent $j$.
\end{itemize}
Note that the new coalition consists of $2n+m+1$ agents.
Moreover, by definition of $\tau_{\XX}$, if $\tau_{\XX}$ assigns \Tru{} to $x$, then $S$ contains $\x$ and if $\tau_{\XX}$ assigns \Fals{} to $x$, then $S$ contains $\nx$.
In addition, for $y\in \YY$, $S$ contains $\yfix$ if $\tau_{\YY}$ assigns \Tru{} to $y$ and $S$ contains $\nyfix$ if $\tau_{\YY}$ assigns \Fals{} to $y$.

We compute the popularity margin between $\pi$ and $\pi^*$. All $Y$-agents are indifferent between $\pi^*$ and $\pi$. All $X$-agents in $S$ prefer $\pi$, while their corresponding $\Xt$-agents prefer $\pi^*$.
Let $c\in\CC$. Since $\psi(\tau_{\XX},\tau_{\YY})=\Fals$, we have that $c$ has at most two literals in $S$ assigned \Tru{} by $\tau_{\XX}$ and $\tau_{\YY}$. Hence, since the $X$- and $Y$-agents in $S$ correspond to the literals assigned \Tru{} by $\tau_{\XX}$ and $\tau_{\YY}$, there are at most two $X$- or $Y$-agents in $S$ to whom $\ca$ assigns value $0$. Thus, it may be verified that $u_{\ca}(\pi)>u_{\ca}(\pi^*)$, i.e., $\ca$ prefers $\pi$; this cancels out with $\la^6$, who clearly prefers $\pi^*$. 
Lastly, one may verify that $j$ prefers $\pi$ over $\pi^*$. 
Altogether, we conclude that $\phi(\pi^*,\pi)=-1$, in contradiction to $\pi^*$ being a popular partition. Hence, $(\XX,\YY,\psi)$ is a Yes-instance of \QSAT.

\section{Conclusion}

We considered the complexity of deciding whether popular partitions exist in typical classes of hedonic games.
By showing that this problem is $\Sigma_2^p$-complete, we pinpoint its precise complexity for ASHGs as well as FHGs with nonnegative valuation functions.
Hence, allowing coalitions of size at least three can raise the complexity of popularity from completeness for the first to the second level of the polynomial hierarchy.

Our work is an important step in understanding popularity in coalition formation.
However, there are still various dimensions along which a deeper understanding would be welcome.
Firstly, our methods might aid in resolving the exact complexity of popularity in other classes of coalition formation games for which popularity was considered before \citep{KLR+20a,KeRo20a,BrBu20a,CsPe21a}.
Secondly, it would be interesting to consider popularity in other classes of hedonic games.
Possible candidates include modified fractional hedonic games \citep{Olse12a} and anonymous hedonic games \citep{BoJa02a}.

Finally, popularity has the closely related concepts of strong popularity, where a partition has to strictly win the vote in a pairwise comparison against every other partition, and mixed popularity, which considers probability distributions of partitions that are popular in expectation.
In the domain of matching, strong popularity and mixed popularity were first considered by \citet{Gard75a} and \citet{KMN11a}, respectively, and studied by \citet{BrBu20a} in ASHGs and FHGs.
In these classes, \citet{BrBu20a} show that deciding whether a strongly popular partition exists is \coNP-hard\footnote{This aligns with known complexity results for strong popularity in other classes of coalition formation games \citep{KLR+20a,KeRo20a}.} while computing a mixed popular partition is \NP-hard.
The exact complexity of both problems remains open.
Notably, $\Sigma_2^p$ seems not to be the right complexity class for these problems because mixed popular partitions always exist and strongly popular partitions are unique whenever they exist.
Resolving their complexity could lead to an intriguing complexity picture of concepts of popularity in hedonic games.

\section*{Acknowledgments}
Martin Bullinger was supported by the AI Programme of The Alan Turing Institute. 
Matan Gilboa was supported by an Oxford-Reuben Foundation Graduate Scholarship.
We would like to thank Edith Elkind for many fruitful discussions.

\appendix

\section*{Appendix}

In the appendix, we provide the proof details for \Cref{thm_ashg,thm_fhg}.

\section{Detailed Proof of \Cref{thm_ashg}}
\label{apndx_ashg}
In this section, we provide the proof of \Cref{thm_ashg}.
The proof is based on the reduction from \QSAT{} constructed in \Cref{sec_ashg:setup}. 
To complete the description of the reduced instance, we now specify the exact valuation functions of the agents as detailed in \Cref{fig_ashg:reduction}. 
A summary of the agents' valuations can be seen in \Cref{table_ASHG}.

Whenever a value assigned by some agent $\ga$ to some agent $\gap$ is not specified, it means $\ga$ assigns a value of $-\infty$ to $\gap$. 
We use $-\infty$ to denote a constant negative number of very large magnitude. 
The property that we need this number to satisfy is that no matter which positive values an agent may get in her coalition, she would still get a negative utility if she has even a single $-\infty$ value in that coalition.
For a polynomial-time reduction, one can, for instance, set $\infty = 6(12n+4m-1)$, which corresponds to the product of the largest positive utility and the number of agents.

\begin{itemize}
    \item Any two complementary $X$-agents $\x$ and $\nx$ both assign value $\frac{1}{2}$ to their corresponding $\Xt$-agent, and~$4$ to their corresponding $\Xf$-agent. 
    Furthermore, they assign~$0$ to any real agent, except for their complementary $X$-agent (to whom they assign value $-\infty$). 
    Lastly, they assign value $1$ to $b_1$, and $2$ to $b_2$.
    
    \item Any $Y$-agent $\y$ assigns value $\frac{1}{2}$ to her corresponding $Y'$-agent $\yp$. 
    Furthermore, she assigns~$0$ to any real agent, except for her complementary agent $\ny\in Y$ (to whom she assigns value $-\infty$). 
    Lastly, she assigns value $1$ to $b_1$, and $2$ to $b_2$.
    
    \item Any $C$-agent $\ca$ assigns value $-2$ to any $X$-agent or $Y$-agent corresponding to a literal that is present in the original clause $c$.
    Moreover, she assigns~$0$ to any other real agent and to all $C'$-agents. 
    Furthermore, she assigns~$5$ to $b_1$, and $6$ to $b_2$.
    
    \item Any $\Xt$- and $\Xf$-agents corresponding to variable $x\in \XX$ assign~$1$ to their two corresponding $X$-agents $\x$ and $\nx$.
    
    \item Any $Y'$-agent $\yp$ assigns value $1$ to her corresponding $Y$-agent $\y$.
    
    \item Any $C'$-agent assigns value $0$ to any other $C'$-agent and~$1$ to all $C$-agents.

    \item Any $T_1$-agent assigns value $0$ to any other $T_1$-agent, $1$ to $b_1$ and $2$ to $b_2$.

    \item Any $T_2$-agent assigns value $0$ to any other $T_2$-agent, $1$ to $b_1$ and $2$ to $b_2$.

    \item Agent $b_1$ assigns value $1$ to any real agent, $T_1$-agent, or $T_2$-agent.

    \item Agent $b_2$ assigns value $2$ to any real agent, $T_1$-agent, or $T_2$-agent.
\end{itemize}

\begin{table}[t]
\centering
\caption{Summary of values of ASHG reduction (assigned by row agents to column agents). When two values $v_1/v_2$ appear, $v_1$ refers to corresponding agents, and $v_2$ to noncorresponding.}
\label{table_ASHG}
\setlength{\tabcolsep}{6pt}
\renewcommand{\arraystretch}{1.1}
\begin{tabular}{ c c c c c c c c c c c c } 
\hline
& \textbf{$\x$} & \textbf{$\y$} & \textbf{$\ca$} & \textbf{$T_1$} & \textbf{$T_2$} & \textbf{$b_1$} & \textbf{$b_2$} & \textbf{$\xt$} & \textbf{$\xf$} & \textbf{$\yp$} & \textbf{$c'$} \\
\hline
$\x$ & $-\infty/0$ & $0$ & $0$ & $-\infty$ & $-\infty$ & $1$ & $2$ & $\frac{1}{2}/-\infty$ & $4/-\infty$ & $-\infty$ & $-\infty$ \\ 
 
$\y$ & $0$ & $-\infty/0$ & $0$ & $-\infty$ & $-\infty$ & $1$ & $2$ & $-\infty$ & $-\infty$ & $\frac{1}{2}/-\infty$ & $-\infty$ \\ 
 
$\ca$ & $-2/0$ & $-2/0$ & $0$ & $-\infty$ & $-\infty$ & $5$ & $6$ & $-\infty$ & $-\infty$ & $-\infty$ & $0$ \\ 
 
$T_1$ & $-\infty$ & $-\infty$ & $-\infty$ & $0$ & $-\infty$ & $1$ & $2$ & $-\infty$ & $-\infty$ & $-\infty$ & $-\infty$ \\ 

$T_2$ & $-\infty$ & $-\infty$ & $-\infty$ & $-\infty$ & $0$ & $1$ & $2$ & $-\infty$ & $-\infty$ & $-\infty$ & $-\infty$ \\ 

$b_1$ & $1$ & $1$ & $1$ & $1$ & $1$ & $0$ & $-\infty$ & $-\infty$ & $-\infty$ & $-\infty$ & $-\infty$ \\ 

$b_2$ & $2$ & $2$ & $2$ & $2$ & $2$ & $-\infty$ & $0$ & $-\infty$ & $-\infty$ & $-\infty$ & $-\infty$ \\ 

$\xt$ & $1/-\infty$ & $-\infty$ & $-\infty$ & $-\infty$ & $-\infty$ & $-\infty$ & $-\infty$ & $-\infty$ & $-\infty$ & $-\infty$ & $-\infty$ \\ 

$\xf$ & $1/-\infty$ & $-\infty$ & $-\infty$ & $-\infty$ & $-\infty$ & $-\infty$ & $-\infty$ & $-\infty$ & $-\infty$ & $-\infty$ & $-\infty$ \\ 

$\yp$ & $-\infty$ & $1/-\infty$ & $-\infty$ & $-\infty$ & $-\infty$ & $-\infty$ & $-\infty$ & $-\infty$ & $-\infty$ & $-\infty$ & $-\infty$ \\ 

$c'$ & $-\infty$ & $-\infty$ & $1$ & $-\infty$ & $-\infty$ & $-\infty$ & $-\infty$ & $-\infty$ & $-\infty$ & $-\infty$ & $0$ \\ 
\hline
\end{tabular}
\end{table}

In the subsequent two sections, we will prove that satisfiability of the source instance implies the existence of a popular partition and vice versa.

\subsection{Satisfiability Implies Popular Partition}
\label{apndx_ashg:if_satisfiable}
Throughout this section, we assume that $(\XX,\YY,\psi)$ is a Yes-instance of \QSAT.
Hence, there is a truth assignment $\tau_{\XX}$ to the variables in $\XX$ such that for all truth assignments $\tau_{\YY}$ to the variables in $\YY$ it holds that $\psi(\tau_{\XX},\tau_{\YY}) = \Tru$. 
Consider the following partition of the agents, denoted by $\pi^*$.
\begin{itemize}
    \item For each $x\in\XX$, if $x$ is assigned \Tru{} by $\tau_{\XX}$ then $\{\{\x,\xt\},\{\nx,\xf\}\}\subseteq\pi^*$, and if $x$ is assigned \Fals{} by $\tau_{\XX}$ then $\{\{\nx,\xt\},\{\x,\xf\}\}\subseteq\pi^*$.
    \item Each $Y$-agent $\y$ forms a coalition with her corresponding $Y'$-agent $\yp$.
    \item The coalition $C\cup C'$ is formed.
    \item Agent $b_1$ forms a coalition with all $T_1$-agents.
    \item Agent $b_2$ forms a coalition with all $T_2$-agents.
\end{itemize}
Our goal is to show that $\pi^*$ is a popular partition. 
To this end, assume towards contradiction that there exists a partition $\pi$ that is more popular than $\pi^*$. 
By \Cref{prop:wlogPO}, we may assume that $\pi$ is Pareto-optimal (note that, however, $\pi$ need not be popular).
We work towards deriving a contradiction by computing popularity margins for various agent sets.
We split this into several lemmas.

\begin{lemma}
\label{lem_ashg1:Y_balance}
Let $\y\in Y$. Then $\phi_{\{\y,\yp\}}(\pi^*,\pi)\geq 0$.
\end{lemma}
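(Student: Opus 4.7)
The plan is to prove $\phi_{\{\y,\yp\}}(\pi^*,\pi)\geq 0$ by a short case analysis on $\pi(\yp)$, exploiting the very sparse valuation structure of $\yp$. Recall from the reduction that $\yp$ assigns value $1$ to her corresponding $Y$-agent $\y$ and $-\infty$ to every other agent. Consequently, $u_\yp$ is maximized exactly on the coalition $\{\y,\yp\}$, giving utility $1$: the singleton coalition yields $0$, while any coalition containing $\yp$ together with at least one other agent besides the corresponding $\y$ contributes a $-\infty$ term that dominates all attainable positive contributions (by the choice $\infty = 6(12n+4m-1)$). In particular, $u_\yp(\pi^*) = 1$ since $\pi^*(\yp) = \{\y,\yp\}$.

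I would first dispatch the case $\pi(\yp) = \{\y,\yp\}$. Here $u_\yp(\pi) = 1 = u_\yp(\pi^*)$, and symmetrically $u_\y(\pi) = \tfrac{1}{2}$, since $\y$'s only non-$-\infty$ valuation among the agents in this coalition is her valuation of $\tfrac{1}{2}$ for $\yp$. Thus $u_\y(\pi) = u_\y(\pi^*) = \tfrac{1}{2}$. Both agents are indifferent between $\pi^*$ and $\pi$, so $\phi_{\{\y,\yp\}}(\pi^*,\pi) = 0$.

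In the remaining case $\pi(\yp) \neq \{\y,\yp\}$, the opening paragraph gives $u_\yp(\pi) < 1 = u_\yp(\pi^*)$, so $\yp$ strictly prefers $\pi^*$ and contributes $+1$ to the popularity margin over $\{\y,\yp\}$. Whatever $\y$'s preference may be, she contributes at most $-1$, yielding $\phi_{\{\y,\yp\}}(\pi^*,\pi) \geq 1 - 1 = 0$. Combining the two cases establishes the lemma.

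There is no serious obstacle in this argument; it does not even require the Pareto-optimality of $\pi$ guaranteed by \Cref{prop:wlogPO}. The only point requiring care is verifying that $\yp$'s utility in any coalition other than $\{\y,\yp\}$ is strictly below $1$, which is immediate from \Cref{table_ASHG} together with the magnitude of the chosen $-\infty$. The same template, with more cases and heavier bookkeeping, is exactly what will be needed for the subsequent analyses of $\phi_{\{\x,\nx,\xt,\xf\}}$ and $\phi_{C\cup C'}$, so this proof also serves as a warm-up establishing the structural pattern of the argument.
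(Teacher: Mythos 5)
Your proof is correct and follows essentially the same route as the paper's: both split on whether $\pi(\yp)=\{\y,\yp\}$ and otherwise use that $\yp$ attains her maximal utility of $1$ only in that coalition, so her $+1$ contribution to the margin covers any $-1$ from $\y$. The extra verification that $u_{\yp}(\pi)<1$ in all other cases is a fine (if slightly more explicit) elaboration of the same observation.
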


\begin{proof}
If $\pi(\yp)=\pi^*(\yp)=\{\y,\yp\}$, then the statement is trivial. 
Otherwise, we must have $u_{\yp}(\pi)<1=u_{\yp}(\pi^*)$, and so, even if $u_{\y}(\pi)>u_{\y}(\pi^*)$ we obtain the required result.
\qed
\end{proof}

\begin{lemma}
\label{lem_ashg1:X_balance}
Let $x\in\XX$. Then $\phi_{\{\x,\nx,\xt,\xf\}}(\pi^*,\pi)\geq 0$.
\end{lemma}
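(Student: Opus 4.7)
The plan is to exploit the very restricted valuations of $\xt$ and $\xf$: each assigns value $1$ to both $\x$ and $\nx$, and $-\infty$ to every other agent. Without loss of generality assume $\tau_{\XX}(x)=\Tru$, so $\{\x,\xt\},\{\nx,\xf\}\in\pi^*$, giving $u_{\x}(\pi^*)=\tfrac12$, $u_{\nx}(\pi^*)=4$, and $u_{\xt}(\pi^*)=u_{\xf}(\pi^*)=1$. The first step is to enumerate the coalitions on which $\xt$ attains finite utility: a singleton (utility $0$), a pair $\{\xt,\x\}$ or $\{\xt,\nx\}$ (utility $1$), or the triple $\{\xt,\x,\nx\}$ (utility $2$); similarly for $\xf$. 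Every other option yields $-\infty$ for the corresponding structure agent.

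Next I would split into two main cases. In Case~1, $\pi(\xt)=\{\xt,\x,\nx\}$ (or symmetrically $\pi(\xf)=\{\xf,\x,\nx\}$): then $\xt$ strictly prefers $\pi$, but $\x$ and $\nx$ both incur utility $-\infty$ (since $v_{\x}(\nx)=v_{\nx}(\x)=-\infty$) and therefore strictly prefer $\pi^*$; moreover $\xf$ is separated from both $X$-agents, so $u_{\xf}(\pi)\le 0<1$ and $\xf$ also strictly prefers $\pi^*$. The contributions sum to $-1+1+1+1=+2\ge 0$.

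In Case~2, neither triple occurs, so $u_{\xt}(\pi),u_{\xf}(\pi)\le 1$ and both weakly prefer $\pi^*$. I would sub-split on how many of them are indifferent. If both strictly prefer $\pi^*$, their $+2$ already dominates the worst-case $-2$ from $\x$ and $\nx$. If both are indifferent, each of $\pi(\xt)$ and $\pi(\xf)$ is a pair with a single $X$-agent, and disjointness of coalitions forces those $X$-agents to be distinct; either the pairing coincides with $\pi^*$ on the four agents (all indifferent) or it swaps the two $X$-agents, in which case $\x$ contributes $-1$ and $\nx$ contributes $+1$, so the margin is $0$. The delicate sub-case is when exactly one of $\xt,\xf$ is indifferent, say $\pi(\xt)\in\{\{\xt,\x\},\{\xt,\nx\}\}$: the paired $X$-agent is pinned to utility $\tfrac12$ and is indifferent, while a short enumeration of the possibilities for the complementary $X$-agent (singleton; with $b_1$ and/or $b_2$; or crammed into $\pi(\xf)$ with extra members that drop $u_{\xf}(\pi)$ to $-\infty$ while still delivering value $4$) confirms that the $+1$ contributed by $\xf$ always covers any $-1$ contributed by the complementary $X$-agent.

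The main obstacle is nothing conceptual but the bookkeeping across edge cases — in particular not overlooking the triple coalition $\{\xt,\x,\nx\}$ (which uniquely lets $\xt$ strictly prefer $\pi$), and remembering that $\pi(\xf)$ may contain an $X$-agent together with extra members so that $u_{\xf}(\pi)=-\infty$ while the enclosed $X$-agent still pockets the value $4$. Tracking agent counts rather than utility magnitudes keeps the accounting uniform.
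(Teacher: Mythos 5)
Your overall strategy mirrors the paper's: rule out the configurations where $\x$ and $\nx$ share a coalition, observe that $\xt$ and $\xf$ otherwise cannot prefer $\pi$, and case-split on how many of them are indifferent. Cases~1, 2a and 2b are fine. The gap is in Case~2c. Once you fix $\{\x,\xt\},\{\nx,\xf\}\in\pi^*$, the agents $\xt$ and $\xf$ are no longer interchangeable ($\xt$ is worth $\tfrac{1}{2}$ to its partner, $\xf$ is worth $4$), so ``say $\pi(\xt)\in\{\{\xt,\x\},\{\xt,\nx\}\}$'' is not a without-loss-of-generality reduction. In the subcase you do treat ($\xt$ indifferent, $\xf$ strictly prefers $\pi^*$) the accounting is immediate: $\xf$ contributes $+1$, the agent paired with $\xt$ contributes at least $0$, and the remaining $X$-agent contributes at least $-1$, so no enumeration is even needed. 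The untreated subcase is the one with actual content: if $\xf$ is the indifferent agent and $\pi(\xf)=\{\xf,\x\}$, then $\x$ jumps from $\tfrac{1}{2}$ to $4$ and contributes $-1$, $\xf$ contributes $0$, and $\xt$ contributes only $+1$, so your counting bottoms out at $-1$ unless you additionally show that $\nx$ cannot prefer~$\pi$.

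That missing step is precisely the pivotal observation in the paper's proof: an $X$-agent separated from its $\Xf$-agent has utility at most $\tfrac{1}{2}+1+2=\tfrac{7}{2}<4$ (its only positive values are $\tfrac{1}{2}$ for $\xt$, $1$ for $b_1$, $2$ for $b_2$, and $0$ for real agents), so $\nx$ must strictly prefer $\pi^*$ and contributes $+1$, restoring the margin to $+1$. You actually list the relevant possibilities in your parenthetical (singleton; with $b_1$ and/or $b_2$; or with $\xf$), but you only use them to conclude that $\xf$'s $+1$ absorbs the complementary agent's $-1$ --- a cushion that is unavailable exactly when $\xf$ is the indifferent one. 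A smaller slip in the treated subcase: if $\pi(\xt)=\{\xt,\nx\}$, the paired agent is not ``pinned to utility $\tfrac{1}{2}$ and indifferent'' but drops from $4$ to $\tfrac{1}{2}$ and strictly prefers $\pi^*$; this only helps your bound, but the claim as written is false.
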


\begin{proof}
Assume towards contradiction that $\phi_{\{\x,\nx,\xt,\xf\}}(\pi^*,\pi)<0$.
If $\nx\in\pi(\x)$ then $u_{\x}(\pi)<0<u_{\x}(\pi^*)$ and $u_{\nx}(\pi)<0<u_{\nx}(\pi^*)$, and we are done. 
Otherwise, observe that $u_{\xt}(\pi)\leq 1=u_{\xf}(\pi^*)$ and $u_{\xf}(\pi)\leq 1=u_{\xf}(\pi^*)$. Therefore, $\xt$ and $\xf$ cannot prefer $\pi$ over $\pi^*$.
If both of them prefer $\pi^*$, then we are done.
If both of them are indifferent, then either $\{\{\x,\xt\},\{\nx,\xf\}\}\subseteq\pi^*$ or $\{\{\nx,\xt\},\{\x,\xf\}\}\subseteq\pi^*$; either way we have $\phi_{\{\x,\nx,\xt,\xf\}}(\pi^*,\pi)\geq 0$, and again we are done.

So assume only one of $\xt$ and $\xf$ is indifferent, while the other prefers $\pi^*$ over $\pi$. 
Thus, because of our initial assumption, it must hold that both $\x$ and $\nx$ prefer $\pi$ over $\pi^*$.
Therefore, it is sufficient to show that one of them does not prefer $\pi$. 
Without loss of generality, assume that $\{\nx,\xf\}\in\pi^*$. 
If $\nx$ prefers~$\pi$, then $\pi(\nx)$ must contain $\xf$ and at least one additional agent to whom she assigns a positive value (we observe that without $\xf$, the utility of $\nx$ is less than $4$). 
Therefore, $\xf$ must prefer $\pi^*$, implying $\xt$ must be indifferent between $\pi$ and $\pi^*$. 
But then we must have $\{\x,\xt\}\in\pi$, implying that $\x$ does not prefer $\pi$.
We obtain a contradiction.
\qed
\end{proof}

\begin{lemma}
\label{lem_ashg1:C_balance_at_most_1}
Consider the $C$- and $C'$-agents. 
If for every agent $\ca\in C$ we have $\phi_{\ca}(\pi^*,\pi)<0$, then $\phi_{C\cup C'}(\pi^*,\pi)=-1$. 
Otherwise, $\phi_{C\cup C'}(\pi^*,\pi)\geq 0$.
\end{lemma}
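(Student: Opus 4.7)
The plan is to first pin down the baseline utilities under $\pi^*$. Since $\pi^*$ contains the single coalition $C\cup C'$, every $\ca\in C$ satisfies $u_{\ca}(\pi^*)=0$ (she values the other $C$- and $C'$-agents at $0$), while every $c'\in C'$ satisfies $u_{c'}(\pi^*)=m$, because $c'$ values each of the $m$ $C$-agents at $1$ and the remaining $C'$-agents at $0$. As $c'$ assigns $-\infty$ to every other agent, the value $m$ is the maximum attainable utility for $c'$ in any partition, and it is achieved precisely when $c'$ shares her coalition with all $m$ $C$-agents (and no forbidden agent). Thus $\phi_{C'}(\pi^*,\pi)\geq 0$ unconditionally, and a $C'$-agent can be indifferent between $\pi^*$ and $\pi$ only if her coalition in $\pi$ contains every $C$-agent.

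For the first case suppose that every $\ca\in C$ satisfies $\phi_{\ca}(\pi^*,\pi)<0$, i.e.\ $u_{\ca}(\pi)>0$. Inspecting the valuation table, the only agents that give $\ca$ strictly positive value are $b_1$ and $b_2$, so every $C$-agent must share her coalition with $b_1$ or $b_2$. Consequently, any coalition of $\pi$ containing every $C$-agent must also contain $b_1$ or $b_2$; but such a coalition cannot contain a $C'$-agent, since $c'$ has valuation $-\infty$ for both $b_1$ and $b_2$. Hence no $C'$-agent is indifferent, so all $m-1$ of them strictly prefer $\pi^*$, giving $\phi_{C'}(\pi^*,\pi)=m-1$ and $\phi_{C}(\pi^*,\pi)=-m$, which combine to $\phi_{C\cup C'}(\pi^*,\pi)=-1$ as claimed.

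For the complementary case, in which some $\ca\in C$ does not strictly prefer $\pi$, I would split on whether all $C$-agents share a single coalition in $\pi$. If they do, then since $C$-agents value each other at $0$ they obtain identical utility in $\pi$, and the hypothesis forces every $C$-agent to weakly prefer $\pi^*$; hence $\phi_{C}(\pi^*,\pi)\geq 0$ and combining with $\phi_{C'}(\pi^*,\pi)\geq 0$ yields the bound. Otherwise, the $C$-agents occupy at least two coalitions of $\pi$, so no coalition contains all of $C$; by the observation from the first step no $C'$-agent is indifferent and $\phi_{C'}(\pi^*,\pi)=m-1$, while the hypothesis bounds the number of $C$-agents strictly preferring $\pi$ by $m-1$, giving $\phi_{C}(\pi^*,\pi)\geq -(m-1)$. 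The two add to at least $0$. The only delicate step is the first-case observation that the agents $\{b_1,b_2\}$ act as ``blockers'' preventing any $C'$-agent from being co-coalitioned with all $m$ $C$-agents; once this is in place, the remainder reduces to a clean arithmetic case split. Note that Pareto-optimality of $\pi$ plays no role here, as the entire argument rests on the valuation table alone.
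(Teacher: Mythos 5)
Your first case is correct, and so is sub-case (b) of your second case, but sub-case (a) (``all $C$-agents share a single coalition in $\pi$'') rests on a false claim. You assert that because $C$-agents value each other at $0$ they obtain identical utility in a common coalition. This ignores that the common coalition may contain agents outside $C\cup C'$, and that $C$-agents value $X$- and $Y$-agents asymmetrically: $\ca$ assigns $-2$ to an $X$- or $Y$-agent whose literal occurs in clause $c$ and $0$ otherwise. For instance, if $\pi$ contained the coalition $S=C\cup\{b_1,\x,a_{x'},a_{x''}\}$ where $x,x',x''$ are the three variables of clause $c_1$ and occur in no other clause, then $u_{a_{c_1}}(S)=5-6=-1$ while every other $C$-agent gets utility $5$. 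Here some $C$-agent fails to strictly prefer $\pi$ (so you are in your second case) and all $C$-agents share one coalition (so you are in sub-case (a)), yet $\phi_C(\pi^*,\pi)=1-(m-1)=2-m<0$ for $m\geq 3$. Your claimed intermediate conclusion $\phi_C(\pi^*,\pi)\geq 0$ is therefore false, and pairing it with the weak bound $\phi_{C'}(\pi^*,\pi)\geq 0$ does not yield the lemma. (The lemma still holds in this example, but only because no $C'$-agent is indifferent, so $\phi_{C'}(\pi^*,\pi)=m-1$ --- a fact your sub-case (a) never invokes.)

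The repair is to split not on whether the $C$-agents are together but on whether some $C'$-agent is indifferent, which is what the paper does. If some $c'\in C'$ is indifferent, her coalition must contain all of $C$ and consist only of $C$- and $C'$-agents (every other agent is valued $-\infty$ by her), and in such a coalition every $C$-agent genuinely gets utility $0$ and is indifferent, giving $\phi_{C\cup C'}(\pi^*,\pi)\geq 0$. If no $C'$-agent is indifferent, then $\phi_{C'}(\pi^*,\pi)=m-1$, and the counting you already carried out in sub-case (b) --- at most $m-1$ of the $m$ $C$-agents strictly prefer $\pi$ in the second case, all $m$ of them in the first --- settles both cases of the lemma at once. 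The rest of your write-up, including the observation that Pareto-optimality plays no role, is consistent with the paper's argument.
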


\begin{proof}
Observe that all $C'$-agents obtain maximal utility in $\pi^*$, and so $\phi_{C'}(\pi^*,\pi)\geq 0$. 
If there exists some agent $c'\in C'$ who is indifferent between $\pi^*$ and $\pi$, then it must be that $\pi(c')$ contains the entire set $C$, possibly some $C'$-agents, and no other agent. 
Hence, we have $\phi_{C}(\pi^*,\pi)\geq 0$ and we are in the second case of the lemma.
As $\phi_{C\cup C'}(\pi^*,\pi)\geq 0$, we are done. 

It remains to consider the case where no agent $c'\in C'$ is indifferent, and so, as observed, they all prefer $\pi^*$.
If all $C$-agents prefer $\pi$ then $\phi_{C\cup C'}(\pi^*,\pi)=-1$ (recall that there are $m$ $C$-agents but only $m-1$ $C'$-agents).
Otherwise, we get $\phi_{C\cup C'}(\pi^*,\pi)\geq 0$.
\qed
\end{proof}

By now, we understand the popularity margin of the right-side agents.
Together, the last three lemmas show that $\phi_{R}(\pi^*,\pi)\geq -1$.
For estimating the popularity margin of the left-side agents, we determine the structure of their coalitions.

\begin{lemma}
\label{lem_ashg1:T_separated}
Fix two agents $t_1\in T_1$ and $t_2\in T_2$, and denote $S=\pi(t_1)$. 
Then it holds that $t_2\notin S$.
\end{lemma}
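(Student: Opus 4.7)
The plan is to prove this by contradiction: assume $t_2 \in S = \pi(t_1)$ and derive a violation of the Pareto-optimality of $\pi$, which may be assumed by \Cref{prop:wlogPO}.

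I would first establish that every agent in $S \setminus \{b_1, b_2\}$ has utility $-\infty$ in $\pi$. By a case split on the type of an agent $a \in S \setminus \{b_1, b_2\}$: if $a \in T_1$, then $v_a(t_2) = -\infty$; if $a \in T_2$, then $v_a(t_1) = -\infty$; if $a$ is a real agent (i.e., an $X$-, $Y$-, or $C$-agent), then $v_a(t_1) = v_a(t_2) = -\infty$; and if $a$ is a non-$b$ structure agent (i.e., an $X_t$-, $X_f$-, $Y'$-, or $C'$-agent), then $v_a(t) = -\infty$ for every agent $t$ that is not its unique corresponding agent, and in particular for $t_1$ and $t_2$. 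Hence $u_a(\pi) = -\infty$; in particular $u_{t_1}(\pi) = u_{t_2}(\pi) = -\infty$.

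The main argument splits on $\{b_1, b_2\} \cap S$. If this intersection is empty, or if it equals $\{b_1, b_2\}$ (in which case $v_{b_1}(b_2) = v_{b_2}(b_1) = -\infty$ forces $b_1$ and $b_2$ themselves to have utility $-\infty$), then every agent in $S$ has utility $-\infty$ in $\pi$. The partition $\pi'$ obtained from $\pi$ by replacing $S$ with the set of singletons $\{\{a\} : a \in S\}$ is then a Pareto improvement: every agent in $S$ strictly improves from $-\infty$ to $0$, while no agent outside $S$ is affected. This contradicts the Pareto-optimality of $\pi$.

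If exactly one of $b_1, b_2$ is in $S$, WLOG $b_1 \in S$ and $b_2 \notin S$, there are two sub-cases. If $S$ contains any agent $a$ with $v_{b_1}(a) = -\infty$ (namely, any $X_t$-, $X_f$-, $Y'$-, or $C'$-agent), then $b_1$ also has utility $-\infty$ and the singleton-splitting argument above again yields a Pareto improvement. Otherwise $S \setminus \{b_1\} \subseteq T_1 \cup T_2 \cup \mathrm{real}$, so $u_{b_1}(\pi) = |S| - 1 > 0$, and a naive split of $S$ would strictly hurt $b_1$. To construct a Pareto improvement in this sub-case, I would exploit the Pareto-optimality of $\pi$ to constrain the coalition $\pi(b_2)$ and the locations of the remaining $T_1$- and $T_2$-agents, and then perform a global move that replaces $S$ and $\pi(b_2)$ by the left-side coalitions of $\pi^*$, namely $\{b_1\} \cup T_1$ and $\{b_2\} \cup T_2$, while extracting any remaining real agents as singletons. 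The main obstacle is verifying this global modification is indeed a Pareto improvement: one must check agent by agent, using the $-\infty$-structure of the valuations and the constraints Pareto-optimality imposes on $\pi$ outside $S$, that no one becomes strictly worse off and that $t_1, t_2$ strictly improve.
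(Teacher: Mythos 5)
Your proposal diverges from the paper's argument in a way that leaves a real gap. The paper does not try to contradict Pareto-optimality of $\pi$ in this lemma; it contradicts the standing assumption that $\pi$ is \emph{more popular} than $\pi^*$. Concretely, it first invokes \Cref{lem_ashg1:Y_balance,lem_ashg1:X_balance,lem_ashg1:C_balance_at_most_1} to get $\phi_R(\pi^*,\pi)\geq -1$, and then shows that if $t_1,t_2\in S$ then $\phi_L(\pi^*,\pi)\geq 1$, so $\phi(\pi^*,\pi)\geq 0$. The case analysis for the left side is: Pareto-optimality forces $b_1\in S$ or $b_2\in S$; if both are in $S$, or if $\pi(b_j)$ (for the $b_j\notin S$) is a singleton, contains a right-side agent, or mixes $T_1$- and $T_2$-agents, then every $T$-agent has utility at most $0$ and prefers $\pi^*$; otherwise $\pi(b_j)$ contains only one type of $T$-agents, $b_j$ prefers $\pi^*$, and a counting argument over the $T$-agents still yields $\phi_L(\pi^*,\pi)\geq 1$. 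Your proposal never uses the popularity assumption or the preceding margin lemmas, so all the work must come from Pareto-optimality alone.

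That is exactly where your argument breaks. Your first two sub-cases (no $b$-agent in $S$, both in $S$, or $b_1$ paired with a $-\infty$-partner) are fine: dissolving $S$ into singletons is a genuine Pareto improvement. But in the remaining sub-case, where $S\setminus\{b_1\}\subseteq T_1\cup T_2\cup(X\cup Y\cup C)$ and $u_{b_1}(\pi)=|S|-1>0$, the proposed ``global move'' to $\{b_1\}\cup T_1$ and $\{b_2\}\cup T_2$ is demonstrably not a Pareto improvement in general. For instance, if $\pi(b_2)=\{b_2\}\cup(T_1\setminus\{t_1\})$, those $T_1$-agents currently have utility $2$ and would drop to utility $1$ when reassigned to $b_1$; likewise, any real agent with positive utility that you ``extract as a singleton'' is strictly hurt. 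Nor is it clear that \emph{some} Pareto improvement always exists in this configuration --- establishing that would require a bespoke case analysis (e.g., compensating $b_1$ by importing singleton $T$-agents) that you have not supplied and that the paper deliberately avoids by switching to the popularity-margin count. To repair the proof you should either adopt the paper's strategy (use $\phi_R(\pi^*,\pi)\geq -1$ and show $\phi_L(\pi^*,\pi)\geq 1$ by analyzing $\pi(b_j)$), or actually construct and verify a Pareto improvement in the problematic sub-case rather than gesturing at one.
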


\begin{proof}
Assume otherwise. 
By \Cref{lem_ashg1:Y_balance,lem_ashg1:X_balance,lem_ashg1:C_balance_at_most_1}, we have that $\phi_{R}(\pi^*,\pi)\geq -1$. 
Hence, it is sufficient to show that $\phi_{L}(\pi^*,\pi)\geq 1$ for obtaining a contradiction. 
Clearly $b_1\in S$ or $b_2\in S$, as otherwise it is a Pareto improvement to dissolve $S$ into singletons and $\pi$ is not Pareto-optimal.
In addition, observe that any $T_1$- or $T_2$-agent in $S$ obtains negative utility in $\pi$. 
If $b_1,b_2\in S$, then all $T_1$- and $T_2$-agents obtain utility at most $0$ in $\pi$, and thus prefer $\pi^*$ over $\pi$, and we are done.

So suppose $b_i\in S$ and $b_j\notin S$, where $i,j\in\{1,2\}$.
Consider the coalition $S'=\pi(b_j)$. 
If $S'$ is a singleton, or contains a right-side agent, or a combination of $T_1$- and $T_2$-agents, then all $T_1$- and $T_2$-agents obtain utility at most 0 in $\pi$, and we have the same contradiction as before. 
Otherwise, $S'$ contains either only $T_1$-agents or only $T_2$-agents (apart from $b_j$). 
Therefore, $b_j$ prefers $\pi^*$ (since $S'$ contains at most $2n+m-1$ $T_1$- or $T_2$-agents). 
Furthermore, we have that either all $T_1$-agents and at least one $T_2$-agent prefer $\pi^*$, or all $T_2$-agents and at least one $T_1$-agent prefer $\pi^*$ (considering that $S$ contains both $T_1$- and $T_2$-agents, and $S'$ contains only one of the types $T_1$ or $T_2$). 
Hence, we have a contradiction, since together $\phi_{L}(\pi^*,\pi)\geq 1$.
\qed
\end{proof}

\begin{lemma}
\label{lem_ashg1:B1_or_B2_right_side}
Either $\pi(b_1)$ or $\pi(b_2)$ must contain a right-side agent.
\end{lemma}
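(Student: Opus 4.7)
The plan is to argue by contradiction. Assume that $\pi(b_1) \cup \pi(b_2)$ contains no right-side agent, and aim to deduce that $\phi(\pi^*, \pi) \geq 0$, contradicting the hypothesis that $\pi$ is more popular than $\pi^*$. The popularity margin naturally splits as $\phi(\pi^*, \pi) = \phi_L(\pi^*, \pi) + \phi_R(\pi^*, \pi)$, so it suffices to establish that both summands are nonnegative.

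First, I would handle the right side. The key observation is that every $C$-agent's only positive valuations are toward $b_1$ and $b_2$; she assigns $0$ or $-2$ to every real agent and $0$ to every $C'$-agent. Since, by assumption, neither $b_1$ nor $b_2$ lies in $\pi(\ca)$, we obtain $u_{\ca}(\pi) \leq 0 = u_{\ca}(\pi^*)$ for every $\ca \in C$, so no $C$-agent strictly prefers $\pi$. By \Cref{lem_ashg1:C_balance_at_most_1}, this yields $\phi_{C \cup C'}(\pi^*, \pi) \geq 0$, and combining with \Cref{lem_ashg1:Y_balance,lem_ashg1:X_balance} summed over $\YY$ and $\XX$ produces $\phi_R(\pi^*, \pi) \geq 0$.

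Next, I would analyze the left side. Because $b_i$ assigns $-\infty$ to the other $b_j$ and to every right-side structure agent (i.e., all $\Xt$-, $\Xf$-, $Y'$-, and $C'$-agents), and since we have excluded right-side real agents by assumption, the coalitions $\pi(b_i)$ are contained in $\{b_i\} \cup T_1 \cup T_2$. By \Cref{lem_ashg1:T_separated} (or, equivalently, from the $-\infty$ valuations between $T_1$ and $T_2$), each $\pi(b_i)$ consists of $b_i$ together with agents from at most one of $T_1, T_2$. Letting $k_j = |\pi(b_1) \cap T_j|$ and $l_j = |\pi(b_2) \cap T_j|$, so that $k_1 k_2 = l_1 l_2 = 0$, a direct tally gives: every $T_1$-agent is indifferent (if with $b_1$), prefers $\pi$ (if with $b_2$), or prefers $\pi^*$ (otherwise); every $T_2$-agent is indifferent (if with $b_2$) or prefers $\pi^*$ (otherwise) and never prefers $\pi$; and $b_1, b_2$ weakly prefer $\pi^*$ because their utilities are proportional to coalition size and maximized in $\pi^*$. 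Summing yields
\[
\phi_L(\pi^*, \pi) \geq 2(2n+m) - k_1 - 2l_1 - l_2.
\]

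The final step is to verify this lower bound is nonnegative via a brief case analysis based on which of $l_1, l_2$ vanishes. The main obstacle is the case $l_2 = 0$ with $k_2 = 0$ (so both $b_1$ and $b_2$ are joined only by $T_1$-agents), where the term $-2l_1$ could in principle be as large as $-2(2n+m)$ in magnitude. Here, however, the constraint $k_1 + l_1 \leq |T_1| = 2n+m$ forces $k_1 + 2l_1 \leq 2(2n+m)$, preserving the bound. The other cases ($l_1 = 0$; or $l_2 = 0$ with $l_1 > 0$ and $k_1 = 0$) follow immediately from $k_1, l_1, l_2 \leq 2n+m$. Thus $\phi_L(\pi^*, \pi) \geq 0$, and combined with $\phi_R(\pi^*, \pi) \geq 0$ we obtain $\phi(\pi^*, \pi) \geq 0$, yielding the desired contradiction.
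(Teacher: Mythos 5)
Your overall strategy matches the paper's: split $\phi(\pi^*,\pi)$ into left and right parts, obtain $\phi_R(\pi^*,\pi)\ge 0$ from the observation that no $C$-agent can strictly prefer $\pi$ when $b_1,b_2\notin\pi(\ca)$, combined with \Cref{lem_ashg1:C_balance_at_most_1,lem_ashg1:X_balance,lem_ashg1:Y_balance}, and then show $\phi_L(\pi^*,\pi)\ge 0$ by a tally over the $T_1$-, $T_2$-agents and $b_1,b_2$. The right-side half and the left-side tally (including the closing case analysis on $k_1,l_1,l_2$) are correct as far as they go.

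There is, however, a genuine gap: you assert that $\pi(b_i)\subseteq\{b_i\}\cup T_1\cup T_2$, i.e., that $b_1\notin\pi(b_2)$, ``because $b_i$ assigns $-\infty$ to the other $b_j$.'' A mutual valuation of $-\infty$ does not by itself prevent two agents from sharing a coalition in the alternative partition $\pi$: $\pi$ is an arbitrary (Pareto-optimal) partition postulated to be more popular, and placing $b_1$ and $b_2$ together merely makes their own utilities very negative, while it can be strictly attractive to the $T$-agents in that coalition, who would then receive value $1+2=3$ and prefer $\pi$. (For the same reason, \Cref{lem_ashg1:T_separated} is not ``equivalent'' to the $-\infty$ valuations between $T_1$ and $T_2$; the paper proves it with a genuine popularity count.) Your bound $\phi_L(\pi^*,\pi)\ge 2(2n+m)-k_1-2l_1-l_2$ presupposes that $\pi(b_1)$ and $\pi(b_2)$ are distinct coalitions, so the case $b_1\in\pi(b_2)$ is simply not covered by your accounting. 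The fix is short and is exactly what the paper does: if $b_1\in\pi(b_2)$, then by \Cref{lem_ashg1:T_separated} this common coalition contains $T$-agents of at most one type, so all $2n+m$ agents of the other type have utility at most $0$ in $\pi$ and prefer $\pi^*$; together with $b_1$ and $b_2$ (who prefer $\pi^*$), this outweighs the at most $2n+m$ $T$-agents who could prefer $\pi$, giving $\phi_L(\pi^*,\pi)\ge 0$ in this case as well.
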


\begin{proof}
Assume towards contradiction that neither $\pi(b_1)$ nor $\pi(b_2)$ contain a right-side agent. 
Then we have that all $C$-agents are either indifferent between $\pi^*$ and $\pi$, or prefer $\pi^*$ (since $b_1$ and $b_2$ are the only agents to whom $C$-agents assign positive utility). 
Thus, by \Cref{lem_ashg1:C_balance_at_most_1,lem_ashg1:X_balance,lem_ashg1:Y_balance}, we have that $\phi_R(\pi^*,\pi)\geq 0$. 
So it is enough to show that $\phi_L(\pi^*,\pi)\geq 0$ to reach a contradiction.

By \Cref{lem_ashg1:T_separated} and by the assumption that no right-side agents form a coalition with $b_1$ and $b_2$ in $\pi$, we have that agents $b_1$ and $b_2$ cannot prefer $\pi$ over $\pi^*$. 
It remains to consider the $T_1$- and $T_2$-agents.  
If $b_1\in\pi(b_2)$, then either all $T_1$-agents have utility 0 in $\pi$ (and thus prefer $\pi^*$), or all $T_2$-agents do, and we are done. Hence, agents $b_1$ and $b_2$ are separated.

If $\pi(b_2)\cap T_1\neq\emptyset$, then by \Cref{lem_ashg1:T_separated} all $T_2$-agents obtain a utility of at most $1$ in $\pi$ (and thus prefer $\pi^*$), and again we are done. 
Hence, $\pi(b_2)\cap T_1=\emptyset$.
Now, if $\pi(b_1)\cap T_2\neq\emptyset$, then all $T_1$-agents have utility 0 in $\pi$ (and thus prefer $\pi^*$), and we are done.
Otherwise, all $T_1$-agents can obtain a utility of at most~$1$ in $\pi$, while all $T_2$-agents can obtain a utility of at most~$2$, and so they are all either indifferent between $\pi$ and $\pi^*$, or prefer $\pi^*$. 
In all cases, we obtain $\phi_L(\pi^*,\pi)\geq 0$, a contradiction to $\pi$ being more popular than $\pi^*$.
\qed
\end{proof}

\begin{lemma}
\label{lem_ashg1:B1_right_side}
It holds that $\pi(b_1)$ contains a right-side agent. 
\end{lemma}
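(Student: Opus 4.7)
The plan is to argue by contradiction. Suppose $\pi(b_1)$ contains no right-side agent. By \Cref{lem_ashg1:B1_or_B2_right_side}, $\pi(b_2)$ must then contain a right-side agent. The first step is to pin down the structure of $\pi(b_2)$: among right-side agents, only real agents ($X$-, $Y$-, $C$-agents) have finite valuations both to and from $b_2$, while all structure right-side agents ($X_t, X_f, Y', C'$) assign $-\infty$ to $b_2$ and vice versa. Moreover, $T_1$- and $T_2$-agents assign $-\infty$ to all real agents, so by Pareto-optimality (\Cref{prop:wlogPO}) they cannot be in a coalition together with real agents. Hence $\pi(b_2) \subseteq \{b_2\} \cup X \cup Y \cup C$.

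Next, I will locate the $T$-agents. Since $\pi(b_2)$ contains no $T$-agent, a $T$-agent can be in a coalition yielding positive utility only by being with $b_1$. By \Cref{lem_ashg1:T_separated}, the $T$-agents in $\pi(b_1)$ all belong to the same side, either $T_1$ or $T_2$. Therefore at least $2n+m$ $T$-agents are in coalitions containing neither $b_1$ nor $b_2$, so they obtain utility at most~$0$ in $\pi$, while in $\pi^*$ they had utility $1$ or $2$. All such orphan $T$-agents strictly prefer $\pi^*$ over $\pi$, contributing at least $2n+m$ to $\phi_L(\pi^*,\pi)$.

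The final step is to bound the remaining left-side contributions and assemble a numerical contradiction. Since $\pi(b_1) \subseteq \{b_1\} \cup T_1 \cup T_2$ contains at most $2n+m$ many $T$-agents (one side only), we have $u_{b_1}(\pi) \leq 2n+m = u_{b_1}(\pi^*)$, so $b_1$ does not strictly prefer $\pi$. The $T$-agents that \emph{are} in $\pi(b_1)$ receive utility~$1$ there, matching their $\pi^*$-value if they are $T_1$-agents and being strictly worse if they are $T_2$-agents; in either case they do not prefer~$\pi$. Only $b_2$ may strictly prefer $\pi$, contributing at worst $-1$. This gives $\phi_L(\pi^*,\pi) \geq (2n+m) - 1$. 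Combining with $\phi_R(\pi^*,\pi) \geq -1$ from \Cref{lem_ashg1:Y_balance,lem_ashg1:X_balance,lem_ashg1:C_balance_at_most_1}, we obtain $\phi(\pi^*,\pi) \geq 2n+m - 2 > 0$ (using $n \geq 1$, $m \geq 2$), contradicting the assumption that $\pi$ is more popular than $\pi^*$.

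The main obstacle is the bookkeeping involving the $T$-agents: the argument hinges on ruling out any mixed coalition of $T$-agents with real agents (via $-\infty$ valuations and Pareto-optimality), which forces enough orphan $T$-agents to strictly prefer $\pi^*$ to overwhelm the at most one unit that $b_2$ can contribute against $\pi^*$ on the left side, together with the at most one unit on the right side.
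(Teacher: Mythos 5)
Your proof follows essentially the same route as the paper's: assume $\pi(b_1)$ has no right-side agent, invoke \Cref{lem_ashg1:B1_or_B2_right_side} to place a right-side agent in $\pi(b_2)$, observe that this strands an entire side of $T$-agents (at least $2n+m$ of them) who then strictly prefer $\pi^*$, and set this against the bound $\phi_R(\pi^*,\pi)\geq -1$ from \Cref{lem_ashg1:Y_balance,lem_ashg1:X_balance,lem_ashg1:C_balance_at_most_1}. Your accounting is slightly sharper than the paper's (you bound $u_{b_1}(\pi)\le 2n+m$ explicitly and charge only $-1$ for $b_2$, where the paper simply charges $-3$ for $R\cup\{b_1,b_2\}$ against $\phi_{T_1\cup T_2}\ge 2n+m$), but the mechanism is identical and the arithmetic closes in both versions.

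One step is wrongly justified, though it happens not to be load-bearing. You claim that $T$-agents ``cannot be in a coalition together with real agents'' by Pareto-optimality. This does not follow: if a coalition contains $b_2$, a real agent, and a $T$-agent, then removing the $T$-agent strictly hurts $b_2$ (who values her at $2$), so it is not a Pareto improvement, and no other obvious split is one either. Indeed, the paper needs a separate and genuinely harder popularity argument (\Cref{lem_ashg1:T_right_side}) to exclude $T$-agents from coalitions with right-side agents, and that lemma is only available later. Your proof survives because the conclusion you actually use does not need this claim: any $T$-agent sitting in $\pi(b_2)$ would receive negative utility (she assigns $-\infty$ to the right-side agent there) and hence also strictly prefers $\pi^*$, so the count of at least $2n+m$ stranded $T$-agents strictly preferring $\pi^*$ goes through regardless of whether $\pi(b_2)$ contains $T$-agents. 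You should either drop the structural claim about $\pi(b_2)$ entirely or replace its justification with this utility argument.
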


\begin{proof}
Assume towards contradiction $\pi(b_1)$ contains no right-side agents. Then, by \Cref{lem_ashg1:B1_or_B2_right_side}, $\pi(b_2)$ contains a right-side agent, implying that any $T_1$- or $T_2$-agent in $\pi(b_2)$ obtains a negative utility. 
Hence, all $T_1$- and $T_2$-agents obtain utility at most~$1$ (as the only other agent to whom they assign positive value is $b_1$), implying all $T_2$-agents prefer $\pi^*$, while all $T_1$-agents are either indifferent or prefer $\pi^*$; therefore, we have $\phi_{T_1\cup T_2}(\pi^*,\pi)\geq 2n+m\geq 3$ (since $|T_2|=2n+m\geq 3$). 
By \Cref{lem_ashg1:X_balance,lem_ashg1:Y_balance,lem_ashg1:C_balance_at_most_1}, we have $\phi_{R\cup\{b_1,b_2\}}(\pi^*,\pi)\geq -3$, and thus we obtain $\phi(\pi^*,\pi)\geq 0$, a contradiction to $\pi$ being more popular that $\pi^*$.
\qed
\end{proof}

\begin{lemma}
\label{lem_ashg1:B2_no_right_side}
It holds that $\pi(b_2)$ contains no right-side agents.
\end{lemma}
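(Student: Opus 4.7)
The plan is to assume for contradiction that $\pi(b_2)$ contains a right-side agent; combined with \Cref{lem_ashg1:B1_right_side}, this means both $\pi(b_1)$ and $\pi(b_2)$ contain right-side agents. The strategy is to show that this forces $\phi_L(\pi^*,\pi)$ to be large enough to overwhelm the bound $\phi_R(\pi^*,\pi)\ge -1$ guaranteed by \Cref{lem_ashg1:Y_balance,lem_ashg1:X_balance,lem_ashg1:C_balance_at_most_1}, yielding $\phi(\pi^*,\pi)\ge 0$ and contradicting the assumption that $\pi$ is more popular than $\pi^*$.

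The key structural observation from \Cref{table_ASHG} is that every $T_1$- and $T_2$-agent assigns value $-\infty$ to every right-side agent. Hence, any $T_1$- or $T_2$-agent in a coalition containing a right-side agent receives utility $-\infty$, and in particular strictly prefers $\pi^*$ (where she has utility $1$ or $2$). Moreover, by \Cref{lem_ashg1:T_separated}, $T_1$- and $T_2$-agents cannot coexist, so any ``safe'' $T_1$- or $T_2$-agent---one not with $b_1$, $b_2$, or a right-side agent---forms a coalition only with her own type and thus obtains utility at most $0$, again strictly preferring $\pi^*$.

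I then split into two cases. In Case A, where $b_1\in \pi(b_2)$, both $b_1$ and $b_2$ attain utility $-\infty$ (since $v_{b_1}(b_2)=v_{b_2}(b_1)=-\infty$), while in $\pi^*$ they obtain utility $|T_1|$ and $2|T_2|$ respectively; so both strictly prefer $\pi^*$. Combined with the observation, every left-side agent prefers $\pi^*$, so $\phi_L(\pi^*,\pi)\ge |T_1|+|T_2|+2=2(2n+m)+2$, and together with $\phi_R(\pi^*,\pi)\ge -1$ this gives $\phi(\pi^*,\pi)>0$, a contradiction. In Case B, where $b_1$ and $b_2$ lie in distinct coalitions each containing a right-side agent, any $T_1$- or $T_2$-agent who joins $\pi(b_1)$ or $\pi(b_2)$ receives $-\infty$, and any other $T_1$- or $T_2$-agent has utility at most $0$; so all of $T_1\cup T_2$ strictly prefer $\pi^*$. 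Only $b_1$ and $b_2$ themselves might prefer $\pi$ (e.g.\ if their coalitions contain many real agents), contributing at most $-2$ to $\phi_L$. Hence $\phi_L(\pi^*,\pi)\ge 2(2n+m)-2$, and together with $\phi_R(\pi^*,\pi)\ge -1$ and $m\ge 2$ this yields $\phi(\pi^*,\pi)\ge 4n+2m-3\ge 1$, again a contradiction.

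The main obstacle is Case B: unlike in Case A, we cannot conclude that $b_1$ and $b_2$ prefer $\pi^*$, since their coalitions could contain many real agents that give them positive valuations. The saving grace is the slack provided by $|T_1|+|T_2|=2(2n+m)$ all preferring $\pi^*$, which comfortably dominates the at most $2$ ``bad'' contributions from $b_1$ and $b_2$ once we use the standing assumption $m\ge 2$ from the reduction setup in \Cref{sec_ashg:setup}.
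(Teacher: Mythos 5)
Your proof is correct and follows essentially the same route as the paper: all $4n+2m$ agents in $T_1\cup T_2$ strictly prefer $\pi^*$ once both $\pi(b_1)$ and $\pi(b_2)$ contain right-side agents, and this swamps the worst-case margin of $-3$ from $R\cup\{b_1,b_2\}$. The only difference is your explicit case split on whether $b_1\in\pi(b_2)$, which the paper avoids by simply bounding $\phi_{\{b_1,b_2\}}(\pi^*,\pi)\geq -2$ in all cases; the split is harmless but unnecessary.
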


\begin{proof}
Assume otherwise. Then by \Cref{lem_ashg1:B1_right_side}, we have that all $T_1$- and $T_2$-agents prefer $\pi^*$ over $\pi$, implying $\phi_{T_1\cup T_2}(\pi^*,\pi)\geq 4n+2m\geq 6$. \Cref{lem_ashg1:X_balance,lem_ashg1:Y_balance,lem_ashg1:C_balance_at_most_1} imply $\phi_{R\cup\{b_1,b_2\}}(\pi^*,\pi)\geq -3$. Hence, we get $\phi(\pi^*,\pi)\geq 0$, in contradiction to $\pi$ being more popular that $\pi^*$.
\qed
\end{proof}

\begin{lemma}
\label{lem_ashg1:B2_TT_balance}
It holds that $\phi_{T_1\cup T_2\cup \{b_2\}}(\pi^*,\pi)\geq 0$.
\end{lemma}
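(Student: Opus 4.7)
The plan is to perform a case analysis on $\pi(b_2)$. By \Cref{lem_ashg1:B2_no_right_side}, $\pi(b_2)$ contains no right-side agents, so it consists of $b_2$ together with some left-side agents. Since $b_1$ and $b_2$ assign value $-\infty$ to one another, $b_1\notin\pi(b_2)$; combined with \Cref{lem_ashg1:T_separated}, which prohibits $T_1$- and $T_2$-agents from sharing a coalition, we conclude that $\pi(b_2) = \{b_2\}\cup T$ where $T$ is a (possibly empty) subset of either $T_1$ or $T_2$ exclusively.

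Before splitting into cases, I would first establish a uniform observation: any $T_1$- or $T_2$-agent $t\notin\pi(b_2)$ cannot lie in $\pi(b_1)$ either, since by \Cref{lem_ashg1:B1_right_side} the coalition $\pi(b_1)$ contains a right-side agent, to whom $t$ assigns $-\infty$, contradicting Pareto optimality of $\pi$. Hence every such $t$ receives utility at most $0$ in $\pi$ (its only nonnegative values outside $\{b_1,b_2\}$ are $0$ on same-type agents), which is strictly less than its $\pi^*$-utility of $1$ or $2$. Consequently every $T_1$- or $T_2$-agent outside $\pi(b_2)$ strictly prefers $\pi^*$.

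I would then handle the two cases. If $T\subseteq T_2$, agents in $T$ obtain utility $2$ in both partitions and are indifferent, all $T_1$-agents and all $T_2$-agents outside $T$ prefer $\pi^*$ by the observation above, and $b_2$ weakly prefers $\pi^*$ since $u_{b_2}(\pi)=2|T|\le 2(2n+m)=u_{b_2}(\pi^*)$; the popularity margin is thus clearly nonnegative. If $T\subseteq T_1$, the agents in $T$ switch allegiance (utility jumps from $1$ to $2$, so they prefer $\pi$), but this loss is compensated: the remaining $(2n+m)-|T|$ members of $T_1$, all $2n+m$ members of $T_2$, and $b_2$ (indifferent or preferring $\pi^*$) give a margin of at least
\[
\bigl((2n+m)-|T|\bigr) + (2n+m) + 0 \;-\; |T| \;=\; 2\bigl((2n+m)-|T|\bigr) \;\ge\; 0.
\]
The case $T=\emptyset$ is subsumed by either branch.

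The main obstacle is the subcase $T\subseteq T_1$ with $|T|$ large: here one must make sure that every $T_2$-agent is truly stranded, simultaneously locked out of $\pi(b_1)$ (by the right-side agent there) and $\pi(b_2)$ (by the $T_1$-agents in $T$ via \Cref{lem_ashg1:T_separated}). Once this is pinned down, the remaining arithmetic is a direct consequence of $|T_1|=|T_2|=2n+m$, ensuring that the $T_2$-bloc alone exactly covers any deviation by $T$.
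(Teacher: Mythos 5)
Your proof reaches the right conclusion and follows essentially the same route as the paper: $b_2$ weakly prefers $\pi^*$ because $\pi(b_2)$ can contain only one type of $T$-agent (by \Cref{lem_ashg1:T_separated,lem_ashg1:B2_no_right_side}), and the entire opposite block of $2n+m$ $T$-agents strictly prefers $\pi^*$, which absorbs the at most $2n+m$ agents who could prefer $\pi$. The paper compresses this into three lines; your case split on $T\subseteq T_1$ versus $T\subseteq T_2$ just makes the bookkeeping explicit, and the arithmetic is correct.

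One intermediate justification is wrong, though harmless. You assert that a $T_1$- or $T_2$-agent $t\notin\pi(b_2)$ cannot lie in $\pi(b_1)$ because $\pi(b_1)$ contains a right-side agent and this would ``contradict Pareto optimality of $\pi$.'' Extracting $t$ into a singleton is \emph{not} a Pareto improvement: $b_1$ assigns value $1$ to $t$ and would be strictly worse off. (Indeed, ruling out $T$-agents in right-side coalitions requires a separate, nontrivial popularity argument in the paper, namely \Cref{lem_ashg1:T_right_side}, which itself relies on the present lemma, so no cheap shortcut is available here.) Fortunately your conclusion does not depend on this claim: if $t\in\pi(b_1)$, then $t$ shares a coalition with a right-side agent to whom she assigns $-\infty$, so $u_t(\pi)<0<u_t(\pi^*)$ and $t$ still strictly prefers $\pi^*$. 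With that repair the proof is sound.
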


\begin{proof}
By \Cref{lem_ashg1:T_separated,lem_ashg1:B2_no_right_side}, $b_2$ cannot prefer $\pi$ over $\pi^*$. 
Furthermore, by \Cref{lem_ashg1:B1_right_side,lem_ashg1:T_separated}, either all $T_1$-agents obtain utility $0$ in $\pi$ (and thus prefer $\pi^*$), or all $T_2$-agents do. Hence, we have $\phi_{T_1\cup T_2\cup \{b_2\}}(\pi^*,\pi)\geq 0$.
\qed
\end{proof}

\begin{lemma}
\label{lem_ashg1:b1_at_least_indifferent}
It holds that $\phi_{b_1}(\pi^*,\pi)\leq 0$.
\end{lemma}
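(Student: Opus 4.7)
The plan is to argue by contradiction. Since $b_1$ is a single agent, $\phi_{b_1}(\pi^*,\pi)\in\{-1,0,1\}$, so assuming $\phi_{b_1}(\pi^*,\pi)>0$ amounts to assuming $\phi_{b_1}(\pi^*,\pi)=1$, i.e., $b_1$ prefers $\pi^*$ over $\pi$. I would then aim to derive $\phi(\pi^*,\pi)\ge 0$, contradicting the assumption that $\pi$ is more popular than $\pi^*$.

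The key observation is that the agent set partitions as $N=R\,\dot\cup\, T_1\,\dot\cup\, T_2\,\dot\cup\,\{b_1,b_2\}$, so popularity margins add:
\begin{equation*}
\phi(\pi^*,\pi)=\phi_R(\pi^*,\pi)+\phi_{T_1\cup T_2\cup\{b_2\}}(\pi^*,\pi)+\phi_{b_1}(\pi^*,\pi).
\end{equation*}
Each of the three terms on the right has already been controlled in earlier lemmas. The first is handled by combining \Cref{lem_ashg1:Y_balance,lem_ashg1:X_balance,lem_ashg1:C_balance_at_most_1}, which (after summing over all variables $y\in\YY$ and $x\in\XX$ and adjoining $C\cup C'$) yields $\phi_R(\pi^*,\pi)\ge -1$. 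The second is exactly the content of \Cref{lem_ashg1:B2_TT_balance}, giving $\phi_{T_1\cup T_2\cup\{b_2\}}(\pi^*,\pi)\ge 0$. Plugging in the contradiction hypothesis $\phi_{b_1}(\pi^*,\pi)=1$ yields $\phi(\pi^*,\pi)\ge -1+0+1=0$, contradicting $\phi(\pi^*,\pi)<0$.

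Since all ingredients are already in place, I expect no real obstacle: the argument is a bookkeeping step that collates the bounds proved so far. The only subtlety is to make sure the decomposition of $N$ used above really is disjoint (it is, by construction of the reduction, because $R$ contains only right-side agents and $T_1,T_2,\{b_1,b_2\}$ partition the left side $L$), and to note that the identity $\phi_{b_1}(\pi^*,\pi)=-\phi_{b_1}(\pi,\pi^*)$ makes "$b_1$ prefers $\pi^*$" equivalent to $\phi_{b_1}(\pi^*,\pi)=1$. With these remarks, the lemma follows in a few lines.
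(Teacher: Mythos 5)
Your proposal is correct and is essentially identical to the paper's own (one-line) proof, which derives the same contradiction from \Cref{lem_ashg1:C_balance_at_most_1,lem_ashg1:B2_TT_balance,lem_ashg1:X_balance,lem_ashg1:Y_balance} under the assumption $\phi_{b_1}(\pi^*,\pi)>0$. The decomposition of $N$ and the bounds $\phi_R(\pi^*,\pi)\ge -1$ and $\phi_{T_1\cup T_2\cup\{b_2\}}(\pi^*,\pi)\ge 0$ are exactly the bookkeeping the paper intends.
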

\begin{proof}
If not, by \Cref{lem_ashg1:C_balance_at_most_1,lem_ashg1:B2_TT_balance,lem_ashg1:X_balance,lem_ashg1:Y_balance} we obtain a contradiction to our initial assumption of $\phi(\pi^*,\pi)<0$.
\end{proof}

\begin{lemma}
\label{lem_ashg1:T_right_side}
Let $\ga$ be a right-side agent, and denote $S=\pi(\ga)$. Then it holds that $T_1\cap S=\emptyset$ and $T_2\cap S=\emptyset$.
\end{lemma}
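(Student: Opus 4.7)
My approach is to proceed by contradiction: suppose some $t \in T_1 \cup T_2$ lies in $S = \pi(\ga)$ with $\ga$ right-side (by symmetry, I may take $t \in T_1$). The key observation is that every $T$-agent assigns every right-side agent the value $-\infty$, and vice versa, so $t$ has utility $-\infty$ in $\pi$ and so does every right-side agent in $S$. I then split on whether $b_1, b_2 \in S$.

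The case $b_2 \in S$ is immediately ruled out by Lemma \ref{lem_ashg1:B2_no_right_side}. If $b_1, b_2 \notin S$, I would extract $t$ as a singleton: the remaining agents in $S \setminus \{t\}$ are right-side agents (valuing $t$ at $-\infty$, hence strictly better off once $t$ leaves) or other $T_1$-agents (valuing $t$ at $0$, hence indifferent), using Lemma \ref{lem_ashg1:T_separated} to exclude $T_2$-agents from $S$. This Pareto-improves on $\pi$ and contradicts Pareto-optimality (which we may assume by Proposition \ref{prop:wlogPO}). The remaining case is $b_1 \in S$, so $S = \pi(b_1)$. If $S$ contains any structure right-side agent (an $X_t$-, $X_f$-, $Y'$-, or $C'$-agent), then $b_1$ has $u_{b_1}(\pi) = -\infty < 2n+m = u_{b_1}(\pi^*)$, so $b_1$ strictly prefers $\pi^*$, contradicting Lemma \ref{lem_ashg1:b1_at_least_indifferent}. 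Hence the right-side agents of $S$ form a nonempty set $R'$ of real agents, and $T_1'' := T_1 \cap S \ni t$ is also nonempty; every agent in $S \setminus \{b_1\}$ then has utility $-\infty$ in $\pi$ and strictly prefers $\pi^*$.

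The main obstacle will be closing out this last subcase, which I plan to do by sharpening Lemmas \ref{lem_ashg1:Y_balance}, \ref{lem_ashg1:X_balance}, and \ref{lem_ashg1:C_balance_at_most_1}. For each $\xl \in R' \cap X$, I would show via a short Pareto-optimality-informed case analysis on the placements of $\nxl$, $\xt$, $\xf$ that $\phi_{\{\xl,\nxl,\xt,\xf\}}(\pi^*,\pi) \geq 1$; the delicate scenario is when $\nxl$ migrates to $\{\nxl,\xf\}$ and strictly gains over $\pi^*$, but this is offset because with $\xl$ trapped in $S$, the agents $\xt$ and $\xf$ compete for $\nxl$ as their only remaining value-$1$ partner, so whichever is left out strictly loses. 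For each $\y \in R' \cap Y$, both $\y$ and $\yp$ strictly prefer $\pi^*$ (the latter because $\y$ is its only value-$1$ partner), giving $\phi_{\{\y,\yp\}}(\pi^*,\pi) \geq 2$. And since $b_1 \in S$ and Lemma \ref{lem_ashg1:B2_no_right_side} rules out $C$-agents coalescing with $b_2$, no $C$-agent achieves positive utility in $\pi$, so Lemma \ref{lem_ashg1:C_balance_at_most_1} gives $\phi_{C \cup C'}(\pi^*,\pi) \geq 0$ (strictly larger if $R' \cap C \neq \emptyset$, because then no $c' \in C'$ can coalesce with all of $C$ and each strictly prefers $\pi^*$). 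Summing these strengthenings yields $\phi_R(\pi^*,\pi) \geq |R'| \geq 1$, which combined with Lemmas \ref{lem_ashg1:B2_TT_balance} ($\phi_{T_1 \cup T_2 \cup \{b_2\}} \geq 0$) and \ref{lem_ashg1:b1_at_least_indifferent} ($\phi_{b_1} \geq -1$) gives $\phi(\pi^*,\pi) \geq 0$, contradicting $\pi$ being more popular than $\pi^*$.
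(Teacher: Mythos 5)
Your proposal is correct and follows essentially the same route as the paper: establish that $b_1$ must be in $S$ (the other cases being excluded by \Cref{lem_ashg1:B2_no_right_side} or a Pareto improvement), use \Cref{lem_ashg1:b1_at_least_indifferent} to rule out right-side structure agents in $S$, and then show that any real agent in $S$ forces a strictly positive popularity margin on its corresponding group of right-side agents, which together with \Cref{lem_ashg1:B2_TT_balance} contradicts $\phi(\pi^*,\pi)<0$. The paper phrases the last step as ``all group margins must be exactly $0$ while $b_1$ strictly prefers $\pi$'' and then exhibits a group with margin at least $1$, but this is the same computation as your direct summation, and your case analysis for the $X$-group (in particular that $\xt$ and $\xf$ cannot both pair with $\nxl$) matches the paper's.
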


\begin{proof}
Assume towards contradiction that $S$ contains a $T_1$- or $T_2$-agent. 
Then it must hold that $b_1\in S$, as otherwise it is a Pareto improvement to remove any $T_1$- or $T_2$-agents from $S$.
Hence, by \Cref{lem_ashg1:B2_no_right_side}, no $C$-agent prefers $\pi$ over $\pi^*$, because $b_1$ and $b_2$ are the only agents to whom they assign a positive value.
Therefore, by \Cref{lem_ashg1:C_balance_at_most_1}, we have that $\phi_{C\cup C'}(\pi^*,\pi)\geq 0$. 
Thus, by \Cref{lem_ashg1:B2_TT_balance,lem_ashg1:X_balance,lem_ashg1:Y_balance}, in order for $\pi$ to be more popular than $\pi^*$, it must be the case that $b_1$ prefers $\pi$ over $\pi^*$, while the following hold:
\begin{enumerate}
    \item We have $\phi_{C\cup C'}(\pi^*,\pi)=0$. \label{lem_ashg1:T_right_side_item_C}
    \item For all $\y\in Y$, we have $\phi_{\{\y,\yp\}}(\pi^*,\pi)=0$.\label{lem_ashg1:T_right_side_item_Y}
    \item For all $x\in\XX$, we have $\phi_{\{\x,\nx,\xt,\xf\}}(\pi^*,\pi)=0$.\label{lem_ashg1:T_right_side_item_X}
\end{enumerate}
Since $S$ contains both a right-side agent and a $T_1$- or $T_2$-agent, it is clear that all agents in $S$ apart from $b_1$ obtain negative utility, and thus prefer $\pi^*$. 
Furthermore, since $b_1$ prefers $\pi$, $S$ contains no structure agents from the right side. 
We will show $S$ contains no real agents either, contradicting the assumption that $S$ contains a right-side agent.

If $S$ contains a $C$-agent, then all $C'$-agents as well as this agent prefer $\pi^*$, and therefore $\phi_{C\cup C'}(\pi^*,\pi)\ge 1$, contradicting \Cref{lem_ashg1:T_right_side_item_C}.
If $S$ contains agent $\y\in Y$, then $\phi_{\{\y,\yp\}}(\pi^*,\pi)=2$, in contradiction to \Cref{lem_ashg1:T_right_side_item_Y}.
If, for some $x\in\XX$ and $\alpha\in\{x,\neg x\}$, $S$ contains $\xl$, then \Cref{lem_ashg1:T_right_side_item_X} is contradicted: Agent $\xl$ prefers $\pi^*$ over $\pi$.
In addition, one of $\xt$ and $\xf$ must prefer $\pi^*$, while the other is either indifferent, or also prefers $\pi^*$.

Thus, we conclude that $T_1\cap S=\emptyset$ and $T_2\cap S=\emptyset$. 
\qed
\end{proof}

In the next few lemmas, we wish to characterize the exact structure of the coalition of $b_1$. 
For this purpose, we denote $S=\pi(b_1)$ for the remainder of this section.
Our goal is to use this coalition to extract a truth assignment to $\YY$ to contradict that $\tau_{\XX}$ is a satisfying truth assignment for all truth assignments to $\YY$.

\begin{lemma}
\label{lem_ashg1:b1_reals}
Apart from $b_1$, $S$ contains only real agents.
\end{lemma}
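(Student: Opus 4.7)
The plan is to exclude each kind of non-real agent from $S=\pi(b_1)$ by combining the structural facts already proved with the $-\infty$ entries in $b_1$'s valuation function. I would first dispose of the left-side structure agents, then of the right-side structure agents.

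First, since $S$ contains a right-side agent by \Cref{lem_ashg1:B1_right_side}, I can apply \Cref{lem_ashg1:T_right_side} to any such agent and immediately conclude $T_1\cap S=T_2\cap S=\emptyset$. Moreover, by \Cref{lem_ashg1:B2_no_right_side} the coalition $\pi(b_2)$ contains no right-side agent whereas $S$ does; hence $\pi(b_1)\neq\pi(b_2)$, i.e.\ $b_2\notin S$. This already rules out every left-side structure agent from $S$.

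For the right-side structure agents (the $X_t$-, $X_f$-, $Y'$-, and $C'$-agents), the key observation is that $b_1$ assigns value $-\infty$ to each of them (see \Cref{table_ASHG}). Consequently, if $S$ contained even one such agent, we would have $u_{b_1}(\pi)=-\infty$. On the other hand, in $\pi^*$ agent $b_1$ forms the coalition $\{b_1\}\cup T_1$ and obtains $u_{b_1}(\pi^*)=2n+m$. Agent $b_1$ would then strictly prefer $\pi^*$ over $\pi$, giving $\phi_{b_1}(\pi^*,\pi)=1$, in direct contradiction to \Cref{lem_ashg1:b1_at_least_indifferent}. Hence $S\setminus\{b_1\}$ consists only of real agents.

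I do not anticipate a substantive obstacle: the argument reduces to a one-line comparison of $u_{b_1}$ under $\pi$ and $\pi^*$ once the correct valuation entries are recalled. The only subtlety worth flagging is to resist the tempting but more delicate route via Pareto-optimality of $\pi$—for instance, trying to extract some $\xt\in S$ together with her corresponding $X$-agent—since that $X$-agent may currently derive strictly positive utility from other coalition members (e.g.\ from $b_1$ itself, or from a corresponding $\xf$), and so the would-be Pareto improvement can fail. Going through the popularity-margin inequality for $b_1$ sidesteps this complication entirely.
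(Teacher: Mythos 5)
Your proof is correct and follows essentially the same route as the paper: \Cref{lem_ashg1:B1_right_side} combined with \Cref{lem_ashg1:T_right_side,lem_ashg1:B2_no_right_side} excludes the left-side agents, and \Cref{lem_ashg1:b1_at_least_indifferent} together with $b_1$'s $-\infty$ valuations of the right-side structure agents excludes those. The paper merely states the second step more tersely; your explicit comparison of $u_{b_1}(\pi)$ and $u_{b_1}(\pi^*)$ is exactly the intended argument.
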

\begin{proof}
By \Cref{lem_ashg1:B1_right_side}, $S$ contains a right-side agent.
Hence, by \Cref{lem_ashg1:B2_no_right_side,lem_ashg1:T_right_side}, it contains no other left-side agent.
Moreover, by \Cref{lem_ashg1:b1_at_least_indifferent}, $b_1$ must weakly prefer $\pi$ and, therefore, cannot contain a right-side structure agent.
\end{proof}

\begin{lemma}
\label{lem_ashg1:b1_x_nx}
Let $x\in\XX$. Then $\x\notin S$ or $\nx\notin S$.
\end{lemma}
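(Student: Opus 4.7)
The approach is to exploit the $-\infty$ valuation between complementary $X$-agents. If both $\x$ and $\nx$ lie in $S$, I can strengthen \Cref{lem_ashg1:X_balance} from $\phi_{\{\x,\nx,\xt,\xf\}}(\pi^*,\pi) \geq 0$ to $\geq 2$ for this particular variable $x$. This surplus is exactly enough to absorb the two ``bad'' contributions permitted by \Cref{lem_ashg1:C_balance_at_most_1} and \Cref{lem_ashg1:b1_at_least_indifferent}, forcing $\phi(\pi^*,\pi) \geq 0$ and contradicting that $\pi$ is more popular than $\pi^*$.

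Assume toward contradiction that $\x, \nx \in S$. By \Cref{lem_ashg1:b1_reals}, $S$ consists of $b_1$ together with real agents only. Since $v_{\x}(\nx) = -\infty$ while $\x$ assigns value $0$ to every other real agent and value $1$ to $b_1$, the utility $u_{\x}(\pi)$ is dominated by the $-\infty$ term and is strictly negative; the symmetric computation applies to $\nx$. By contrast, $\pi^*$ pairs each of $\x$ and $\nx$ with one of $\xt$ or $\xf$, so $u_{\x}(\pi^*), u_{\nx}(\pi^*) > 0$. Hence both $\x$ and $\nx$ strictly prefer $\pi^*$ over $\pi$.

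Next, I would argue that $\xt$ and $\xf$ both weakly prefer $\pi^*$. Each receives utility exactly $1$ in $\pi^*$. In $\pi$, their only positive valuations go to their two corresponding $X$-agents $\x$ and $\nx$; a priori $\xt$ could reach utility $2$ by sitting with both of them, but the hypothesis $\x,\nx \in S = \pi(b_1)$ precisely rules this out, so $u_{\xt}(\pi), u_{\xf}(\pi) \leq 1$. Combining with the previous paragraph yields $\phi_{\{\x,\nx,\xt,\xf\}}(\pi^*,\pi) \geq 2$.

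To conclude, I would sum popularity margins over a partition of $N$: the targeted four-element set contributes $\geq 2$; each of the other $n-1$ four-element $X$-groups contributes $\geq 0$ by \Cref{lem_ashg1:X_balance}; the $2n$ pairs $\{\y, \yp\}$ contribute $\geq 0$ each by \Cref{lem_ashg1:Y_balance}; $C \cup C'$ contributes $\geq -1$ by \Cref{lem_ashg1:C_balance_at_most_1}; $T_1 \cup T_2 \cup \{b_2\}$ contributes $\geq 0$ by \Cref{lem_ashg1:B2_TT_balance}; and $\{b_1\}$ contributes $\geq -1$ by \Cref{lem_ashg1:b1_at_least_indifferent}. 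These sets partition $N$, so the total is $\phi(\pi^*,\pi) \geq 2 - 1 - 1 = 0$, contradicting $\phi(\pi^*,\pi) < 0$. The main obstacle is the secondary use of the assumption $\x,\nx \in S$ to cap $u_{\xt}(\pi)$ and $u_{\xf}(\pi)$ at $1$; without this observation, the margin on $\{\x,\nx,\xt,\xf\}$ could drop to $0$ and the accounting would no longer close.
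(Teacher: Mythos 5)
Your proposal is correct and follows essentially the same route as the paper: assume $\x,\nx\in S$, observe that the $-\infty$ valuation forces both to strictly prefer $\pi^*$ while $\xt$ and $\xf$ cannot gain, and then close the accounting by summing the margins from \Cref{lem_ashg1:X_balance,lem_ashg1:Y_balance,lem_ashg1:C_balance_at_most_1,lem_ashg1:B2_TT_balance} over the same decomposition of $N$. The only difference is quantitative: since $\xt,\xf\notin S$ and $\x,\nx\in S$, the paper notes $u_{\xt}(\pi),u_{\xf}(\pi)\le 0<1$, so all four agents strictly prefer $\pi^*$ (margin $4$ rather than your $2$), but your weaker bound of $2$ already suffices.
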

\begin{proof}
Assume towards contradiction that $\x,\nx\in S$. Then we have $u_{\x}(\pi)<0$, $u_{\nx}(\pi)<0$, $u_{\xt}(\pi)\leq 0$ and $u_{\xf}(\pi)\leq 0$, and therefore all four of those agents prefer $\pi^*$ over $\pi$. 
Hence, by \Cref{lem_ashg1:B2_TT_balance,lem_ashg1:X_balance,lem_ashg1:Y_balance,lem_ashg1:C_balance_at_most_1}, we have a contradiction to the assumption that $\pi$ is more popular than $\pi^*$.
\end{proof}

\begin{lemma}
\label{lem_ashg1:b1_y_ny}
Let $y\in \YY$.
Then $\yfix\notin S$ or $\nyfix\notin S$.
\end{lemma}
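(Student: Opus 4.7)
The plan is to mirror the argument of Lemma \ref{lem_ashg1:b1_x_nx}, making appropriate adjustments for the $Y$-side structure. I would argue by contradiction: assume both $\yfix$ and $\nyfix$ lie in $S = \pi(b_1)$. Since complementary $Y$-agents assign each other value $-\infty$, both $\yfix$ and $\nyfix$ receive very negative utility in $\pi$, whereas in $\pi^*$ each sits with her corresponding $Y'$-agent and receives utility $\frac{1}{2}$. Hence both strictly prefer $\pi^*$ over $\pi$.

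Next, I would locate $a'_y$ and $a'_{\neg y}$. By Lemma \ref{lem_ashg1:b1_reals}, apart from $b_1$ the coalition $S$ consists only of real agents; therefore the $Y'$-agents $a'_y$ and $a'_{\neg y}$ are not in $S$. Each $Y'$-agent assigns value $1$ to her unique corresponding $Y$-agent and $-\infty$ to every other agent. Since $\yfix \in S \neq \pi(a'_y)$ and $\nyfix \in S \neq \pi(a'_{\neg y})$, both $a'_y$ and $a'_{\neg y}$ have utility at most $0$ in $\pi$, but utility $1$ in $\pi^*$. So both also strictly prefer $\pi^*$, giving $\phi_{\{\yfix, \nyfix, a'_y, a'_{\neg y}\}}(\pi^*,\pi) = 4$.

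Combining this with the previously established balance estimates assembles the contradiction. Lemma \ref{lem_ashg1:Y_balance} gives $\phi_{\{a_{\beta},a'_{\beta}\}}(\pi^*,\pi) \geq 0$ for every $Y$-literal $\beta$ other than the two literals of $y$; Lemma \ref{lem_ashg1:X_balance} yields $\phi_{X \cup X_t \cup X_f}(\pi^*,\pi) \geq 0$; Lemma \ref{lem_ashg1:C_balance_at_most_1} gives $\phi_{C \cup C'}(\pi^*,\pi) \geq -1$; Lemma \ref{lem_ashg1:B2_TT_balance} yields $\phi_{T_1 \cup T_2 \cup \{b_2\}}(\pi^*,\pi) \geq 0$; and Lemma \ref{lem_ashg1:b1_at_least_indifferent} gives $\phi_{b_1}(\pi^*,\pi) \geq -1$. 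Summing the contributions over all agents, $\phi(\pi^*,\pi) \geq 4 + 0 + 0 + (-1) + 0 + (-1) = 2 > 0$, contradicting the assumption that $\pi$ is more popular than $\pi^*$.

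There is essentially no serious obstacle here: the only subtlety worth checking is that $a'_y$ and $a'_{\neg y}$ really lie outside $S$, which is exactly what Lemma \ref{lem_ashg1:b1_reals} (previously established) delivers. The rest of the proof is a direct accounting using the balance lemmas already at hand, in complete analogy with Lemma \ref{lem_ashg1:b1_x_nx}.
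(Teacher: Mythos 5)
Your proof is correct and follows essentially the same route as the paper, which simply declares this lemma's proof identical to that of \Cref{lem_ashg1:b1_x_nx} under the substitution of the $Y$- and $Y'$-agents for the $X$-, $X_t$-, and $X_f$-agents; you spell out the same accounting (all four agents $\yfix,\nyfix,a_y',a_{\neg y}'$ strictly prefer $\pi^*$, then sum the balance lemmas over the remaining groups). The only cosmetic quirk is your citation of \Cref{lem_ashg1:b1_at_least_indifferent} for $\phi_{b_1}(\pi^*,\pi)\geq -1$ — that lemma asserts the opposite-direction bound $\phi_{b_1}(\pi^*,\pi)\leq 0$, but the bound you actually use is trivially true for any single agent, so nothing is affected.
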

\begin{proof}
The proof is identical to the proof of \Cref{lem_ashg1:b1_x_nx}, substituting $X$, $\x$, $\nx$, $\xt$, and $\xf$ by $Y$, $\yfix$, $\nyfix$, $a_{y}'$, and $a_{\neg y}'$, respectively.
\end{proof}

\begin{lemma}
\label{lem_ashg1:b1_coalition}
It holds that $|S\cap X|=|S\cap Y|=n$ and $|S\cap C|=m$.
\end{lemma}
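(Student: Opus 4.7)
The proof is essentially a counting argument that combines upper bounds on $|S\cap X|$, $|S\cap Y|$, and $|S\cap C|$ (coming from the preceding three lemmas) with a matching lower bound forced by agent $b_1$'s own preferences.

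The plan is as follows. First, I would compute $u_{b_1}(\pi^*)$ directly from the definition of $\pi^*$: since $b_1$ is partitioned together with all $2n+m$ agents in $T_1$, each of whom she values at $1$, we get $u_{b_1}(\pi^*) = 2n+m$. Next, I would invoke \Cref{lem_ashg1:b1_at_least_indifferent}, which gives $\phi_{b_1}(\pi^*,\pi) \le 0$, equivalently $u_{b_1}(\pi) \ge u_{b_1}(\pi^*) = 2n+m$. By \Cref{lem_ashg1:b1_reals}, $S \setminus \{b_1\}$ consists entirely of real agents, and $b_1$ values every real agent at $1$, so $u_{b_1}(\pi) = |S| - 1 = |S\cap X| + |S\cap Y| + |S\cap C|$. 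Hence
\[
|S\cap X| + |S\cap Y| + |S\cap C| \;\ge\; 2n+m.
\]

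Now I would derive the upper bounds. \Cref{lem_ashg1:b1_x_nx} states that for each $x \in \XX$ at most one of $\x, \nx$ lies in $S$, which gives $|S \cap X| \le n$. Analogously, \Cref{lem_ashg1:b1_y_ny} yields $|S \cap Y| \le n$. Trivially $|S \cap C| \le m$. Summing, $|S\cap X| + |S\cap Y| + |S\cap C| \le 2n+m$. Combined with the lower bound above, the inequality is tight, and since the three individual upper bounds add up exactly to the total, each must be met with equality, giving $|S \cap X| = |S \cap Y| = n$ and $|S \cap C| = m$.

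There is no real obstacle here beyond bookkeeping; the lemma is a direct consequence of the preceding structural results. The only subtlety worth mentioning is that the argument crucially uses \Cref{lem_ashg1:b1_reals} to rule out structure agents in $S$ (which would otherwise contribute $-\infty$ to $u_{b_1}(\pi)$ and make the lower bound vacuous), and it uses that the size of $T_1$ was chosen to be exactly $2n+m$ so that saturating the three upper bounds is forced rather than merely possible.
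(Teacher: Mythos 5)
Your proof is correct and follows essentially the same route as the paper's: the upper bounds $|S\cap X|\le n$, $|S\cap Y|\le n$, $|S\cap C|\le m$ from the preceding lemmas, combined with the lower bound $u_{b_1}(\pi)\ge u_{b_1}(\pi^*)=2n+m$ from \Cref{lem_ashg1:b1_at_least_indifferent} and the fact (via \Cref{lem_ashg1:b1_reals}) that $b_1$ values each member of $S\setminus\{b_1\}$ at exactly $1$, forcing all three bounds to be tight. Your write-up is somewhat more explicit about the counting than the paper's, but the argument is identical.
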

\begin{proof}
By \Cref{lem_ashg1:b1_x_nx}, it holds that $|S\cap X|\leq n$. Similarly, by \Cref{lem_ashg1:b1_y_ny}, we have that $|S\cap Y|\leq n$.
By \Cref{lem_ashg1:b1_at_least_indifferent}, we know that $u_{b_1}(\pi) \ge u_{b_1}(\pi^*) = 2n + m$.
Since, by \Cref{lem_ashg1:b1_reals}, it holds that $S\setminus \{b_1\}\subseteq X\cup Y\cup C$ and $b_1$ obtains a utility of~$1$ from all of these agents, we conclude that $S\cap X = n$, $S\cap Y = n$, and $S$ contains all $m$ $C$-agents.
\end{proof}

\begin{lemma}
\label{lem_ashg1:which_x_can_deviate}
Let $x\in\XX$ and $\alpha\in\{x,\neg x\}$, such that $\xl\in\pi^*(\xt)$ and $\nxl\in\pi^*(\xf)$. Then it must be that $\xl\in S$. 
\end{lemma}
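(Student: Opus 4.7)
The plan is to argue by contradiction, assuming $\xl \notin S$. By \Cref{lem_ashg1:b1_x_nx} and \Cref{lem_ashg1:b1_coalition}, exactly one of $\xl, \nxl$ lies in $S$, hence $\nxl \in S$. Since $S$ contains $b_1$ and otherwise only real agents (\Cref{lem_ashg1:b1_reals}), and since $\nxl$ values every real agent other than her complementary $\xl$ (absent from $S$) at $0$, we obtain $u_{\nxl}(\pi) = 1$, whereas $u_{\nxl}(\pi^*) = 4$. Hence $\nxl$ strictly prefers $\pi^*$, contributing $+1$ to $\phi_{\{\xl, \nxl, \xt, \xf\}}(\pi^*, \pi)$.

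Next I would carry out a case analysis on $\pi(\xl)$. The relevant observations are: (1) $\xl$ cannot coalesce with $b_1$ (as $\xl \notin S$), with $b_2$ (by \Cref{lem_ashg1:B2_no_right_side}), or with $\nxl$ (by Pareto-optimality of $\pi$ from \Cref{prop:wlogPO}); and (2) since $\xt, \xf$ assign $-\infty$ to every agent outside $\{\xl, \nxl\}$, in any Pareto-optimal $\pi$ they may only coalesce with $\xl$, $\nxl$, or each other. Using $\nxl \in S$, the Pareto-optimal possibilities for the positions of $\xl, \xt, \xf$ are: (i) $\xl$ lies in a $0$-utility coalition and $\pi(\xt) = \{\xt\}, \pi(\xf) = \{\xf\}$, yielding $\phi_{\{\xl, \nxl, \xt, \xf\}}(\pi^*, \pi) = 4$; (ii) $\pi(\xl) = \{\xl, \xt\}$ and $\pi(\xf) = \{\xf\}$, yielding $\phi = 2$; (iii) $\pi(\xl) = \{\xl, \xf\}$ and $\pi(\xt) = \{\xt\}$, yielding $\phi = 1$; and (iv) $\pi(\xl) = \{\xl, \xt, \xf\}$, where $\xt$ and $\xf$ each receive utility $-\infty$ and strictly prefer $\pi^*$, yielding $\phi = 2$. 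In the tightest case~(iii), $\xl$ strictly prefers $\pi$ (utility $4$ versus $1/2$), $\xf$ is indifferent, while $\xt$ (now alone) and $\nxl$ both strictly prefer $\pi^*$, so $\phi_{\{\xl, \nxl, \xt, \xf\}}(\pi^*, \pi) = -1 + 0 + 1 + 1 = 1$. In every case, $\phi_{\{\xl, \nxl, \xt, \xf\}}(\pi^*, \pi) \geq 1$.

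Finally, I would assemble the global popularity margin. Applying \Cref{lem_ashg1:X_balance} to every other variable $x' \in \XX$, the aggregated $X$-side contribution is at least $1$; combined with \Cref{lem_ashg1:Y_balance} and \Cref{lem_ashg1:C_balance_at_most_1}, this yields $\phi_R(\pi^*, \pi) \geq 1 + 0 + (-1) = 0$. Moreover, \Cref{lem_ashg1:b1_coalition} gives $|S \setminus \{b_1\}| = 2n + m$, hence $u_{b_1}(\pi) = 2n + m = u_{b_1}(\pi^*)$ and $\phi_{b_1}(\pi^*, \pi) = 0$; together with \Cref{lem_ashg1:B2_TT_balance}, this gives $\phi_L(\pi^*, \pi) \geq 0$. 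Thus $\phi(\pi^*, \pi) \geq 0$, contradicting $\pi$ being more popular than $\pi^*$. The main obstacle is handling the tight case~(iii), where the slack is only $1$: this forces us to use the exact equality $\phi_{b_1}(\pi^*, \pi) = 0$ rather than the looser bound $\phi_{b_1} \leq 0$ from \Cref{lem_ashg1:b1_at_least_indifferent}.
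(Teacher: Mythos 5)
Your proposal is correct and follows essentially the same route as the paper's proof: derive $\nxl\in S$ from \Cref{lem_ashg1:b1_x_nx,lem_ashg1:b1_coalition}, observe that $\nxl$ strictly prefers $\pi^*$, case-split on whether $\xt$ or $\xf$ joins $\xl$ to conclude $\phi_{\{\xl,\nxl,\xt,\xf\}}(\pi^*,\pi)\geq 1$, and combine with the other balance lemmas. Your explicit remark that the tight case forces the equality $\phi_{b_1}(\pi^*,\pi)=0$ (via \Cref{lem_ashg1:b1_coalition}) rather than the weaker bound of \Cref{lem_ashg1:b1_at_least_indifferent} is a point the paper leaves implicit, but it is the same underlying argument.
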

\begin{proof}
Assume towards contradiction that $\xl\notin S$.
Then, by \Cref{lem_ashg1:b1_coalition,lem_ashg1:b1_x_nx}, we have that $\nxl\in S$. 
By \Cref{lem_ashg1:b1_reals} we have that $\xt,\xf\notin S$, and therefore agent $\nxl$ prefers $\pi^*$ over $\pi$.
If $\{\xl,\xt\}\in\pi$, then $\xl$ and $\xt$ are indifferent between $\pi^*$ and $\pi$, while $u_{\xf}(\pi)\leq 0$, and so $\xf$ prefers $\pi^*$. 
Otherwise, if $\{\xl,\xf\}\in\pi$, then $\xf$ is indifferent, and $\xt$ prefers $\pi^*$. 
In any other case, we have that both $\xt$ and $\xf$ prefer $\pi^*$. 
Hence, either way, we have that $\phi_{\{\xl,\nxl,\xt,\xf\}}(\pi^*,\pi)>0$. Therefore, by \Cref{lem_ashg1:C_balance_at_most_1,lem_ashg1:Y_balance,lem_ashg1:X_balance,lem_ashg1:B2_TT_balance}, we have a contradiction to the assumption that $\pi$ is more popular than $\pi^*$.
\end{proof}

\begin{lemma}
\label{lem_ashg1:all_c_prefer_pi}
For every agent $\ca\in C$ we have $\phi_{\ca}(\pi^*,\pi)<0$.
\end{lemma}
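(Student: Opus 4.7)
The plan is a proof by contradiction: assume some $\ca\in C$ satisfies $\phi_{\ca}(\pi^*,\pi)\ge 0$, and show this forces $\phi_{C\cup C'}(\pi^*,\pi)\ge 1$, which combined with the previous structural lemmas yields $\phi(\pi^*,\pi)\ge 0$, contradicting the assumption that $\pi$ is more popular than $\pi^*$.

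First, I would pin down $u_{\ca}(\pi)$ exactly. By \Cref{lem_ashg1:b1_coalition}, we have $\ca\in S=\pi(b_1)$, and by \Cref{lem_ashg1:b1_reals,lem_ashg1:b1_x_nx,lem_ashg1:b1_y_ny,lem_ashg1:b1_coalition}, $S$ consists of $b_1$, all $m$ $C$-agents, exactly one agent from each complementary $X$-pair, and exactly one agent from each complementary $Y$-pair. Thus $\ca$ receives $+5$ from $b_1$, $0$ from each other $C$-agent (as well as from noncorresponding $X$- and $Y$-agents), and $-2$ for each of the $k_c\in\{0,1,2,3\}$ literals of clause $c$ present in $S$. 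Hence $u_{\ca}(\pi)=5-2k_c\in\{5,3,1,-1\}$, while $u_{\ca}(\pi^*)=0$; $\ca$ is therefore never indifferent between the two partitions, so our assumption forces $u_{\ca}(\pi)=-1$, i.e., $k_c=3$, and $\ca$ strictly prefers $\pi^*$.

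Next, I would compute $\phi_{C\cup C'}(\pi^*,\pi)$ directly. Since $C\subseteq S$ yet no $C'$-agent lies in $S$, every $C'$-agent has utility at most $0$ in $\pi$, while $u_{c'}(\pi^*)=m\ge 2$; hence every $C'$-agent strictly prefers $\pi^*$, contributing $|C'|=m-1$ to the margin. Each $C$-agent, by the same arithmetic, is never indifferent, so if $k$ of them prefer $\pi^*$ then $\phi_C(\pi^*,\pi)=k-(m-k)=2k-m$; our assumption yields $k\ge 1$, so
\[
\phi_{C\cup C'}(\pi^*,\pi)=(m-1)+(2k-m)=2k-1\ge 1.
\]

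Finally, I would aggregate over the partition of $N$ into $\{b_1\}$, $T_1\cup T_2\cup\{b_2\}$, the sets $\{\x,\nx,\xt,\xf\}$ for $x\in\XX$, the sets $\{\y,\yp\}$ for each $Y$-literal agent, and $C\cup C'$. Using $\phi_{b_1}(\pi^*,\pi)\ge -1$ (trivial, refined by \Cref{lem_ashg1:b1_at_least_indifferent}), \Cref{lem_ashg1:B2_TT_balance} for the left-side complement, \Cref{lem_ashg1:X_balance} for each $X$-quadruple, \Cref{lem_ashg1:Y_balance} for each $Y$-pair, and the bound just established, I obtain
\[
\phi(\pi^*,\pi)\ge -1+0+0+0+1=0,
\]
contradicting $\phi(\pi^*,\pi)<0$. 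The structural work was already done in the preceding lemmas, so there is no real obstacle; the proof is just a bookkeeping argument that tightly exploits $|C'|=|C|-1$ together with the impossibility of $\ca$ being indifferent.
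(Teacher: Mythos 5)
Your proof is correct and follows essentially the same route as the paper: assume some $\ca$ has $\phi_{\ca}(\pi^*,\pi)\geq 0$ and sum popularity margins over the same decomposition of $N$ to conclude $\phi(\pi^*,\pi)\geq 0$, contradicting that $\pi$ is more popular. The only (harmless) difference in bookkeeping is that the paper uses the exact indifference $\phi_{b_1}(\pi^*,\pi)=0$ from \Cref{lem_ashg1:b1_coalition} together with the bound $\phi_{C\cup C'}(\pi^*,\pi)\geq 0$ from \Cref{lem_ashg1:C_balance_at_most_1}, whereas you recompute the $C\cup C'$ margin from scratch to get the sharper bound $\phi_{C\cup C'}(\pi^*,\pi)\geq 1$ and then only need the trivial $\phi_{b_1}(\pi^*,\pi)\geq -1$.
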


\begin{proof}
By \Cref{lem_ashg1:b1_coalition}, we have that $u_{b_1}(\pi)=2n+m$, and so it holds that $\phi_{b_1}(\pi^*,\pi)=0$. Hence, by \Cref{lem_ashg1:C_balance_at_most_1,lem_ashg1:Y_balance,lem_ashg1:X_balance,lem_ashg1:B2_TT_balance}, the statement must hold in order to satisfy the assumption that $\pi$ is more popular than $\pi^*$. 
\end{proof}

We have now gained enough information about the structure of $\pi$ to contradict the fact that it is more popular than $\pi^*$. 
Consider the truth assignment $\tau_{\YY}$ to the $\YY$ variables, where $y\in\YY$ is assigned \Tru{} if and only if $\yfix\in S$ (by \Cref{lem_ashg1:b1_y_ny,lem_ashg1:b1_coalition}, this is a valid assignment). 
By \Cref{lem_ashg1:which_x_can_deviate} and by definition of $\tau_{\YY}$, the $X$- and $Y$-agents in $S$ represent exactly the literals assigned \Tru{} by $\tau_{\XX}$ and $\tau_{\YY}$. 
Furthermore, observe that according to \Cref{lem_ashg1:all_c_prefer_pi}, for every $c\in\CC$, $S$ contains at most $2$ $X$- or $Y$-agents representing literals from~$c$ (otherwise, $\ca$ would have negative utility in $\pi$). Therefore, we have that every $c\in\CC$ contains at most $2$ literals assigned \Tru{} by $\tau_{\XX}$ and $\tau_{\YY}$, implying that $\psi(\tau_{\XX},\tau_{\YY}) = \Fals$. 
We thus obtain a contradiction to our choice of $\tau_{\XX}$, implying there cannot exist a partition $\pi$ more popular than $\pi^*$.

This concludes the proof that the existence of a satisfying assignment to the \QSAT{} instance implies the existence of a popular partition in the constructed ASHG. \qed

\subsection{Popular Partition Implies Satisfiability}
\label{apndx_ashg:if_popular}
Throughout this section, we assume that there is a popular partition $\pi^*$ in the reduced ASHG.
We will prove that this implies that the source instance is a Yes-instance to \QSAT.
We observe that $\pi^*$ must be Pareto-optimal, as otherwise, applying a Pareto improvement will yield a more popular partition, a contradiction.
To extract a truth assignment to the variables in $\XX$, we have to understand the structure of~$\pi^*$. 
The first few lemmas lead up to the conclusion that there is a separation between left-side agents and right-side agents. 

\begin{lemma}
\label{lem_ashg2:TT_separate}
Let $S\in\pi^*$ be a coalition. If $S\cap T_1\neq \emptyset$, then $S\cap T_2=\emptyset$.
\end{lemma}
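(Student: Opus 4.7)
The plan is to derive a contradiction by assuming that some coalition $S \in \pi^*$ contains both a $T_1$-agent $t_1$ and a $T_2$-agent $t_2$, and then exhibiting a partition strictly more popular than $\pi^*$.

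The key observation I would exploit is that the reduction is built with mutually poisonous valuations between the two sides: $v_{t_1}(t_2) = v_{t_2}(t_1) = -\infty$. Hence both $t_1$ and $t_2$ have strongly negative utility in $\pi^*$. Inspection of the valuation table shows more broadly that, once a $T_1$- and a $T_2$-agent sit together in $S$, every other agent in $S \setminus \{b_1, b_2\}$ assigns $-\infty$ to some member of $S$: real agents and right-side structure agents assign $-\infty$ to all of $T_1 \cup T_2$, and $T_1$-agents and $T_2$-agents assign $-\infty$ to each other. Therefore every such agent also obtains strongly negative utility in $\pi^*$. Since additionally $v_{b_1}(b_2) = v_{b_2}(b_1) = -\infty$, at most one of $b_1, b_2$ can enjoy nonnegative utility inside $S$ under $\pi^*$.

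Next, I would define the partition $\pi$ obtained from $\pi^*$ by dissolving $S$ entirely into singletons and leaving every other coalition unchanged. In $\pi$, every agent of $S$ has utility $0$. Comparing $\pi$ with $\pi^*$: all agents outside $S$ are indifferent; every agent in $S \setminus \{b_1, b_2\}$ strictly prefers $\pi$, since their utility rises from a very negative value to $0$; and at most one of $b_1, b_2$ can strictly prefer $\pi^*$. Since $\{t_1, t_2\} \subseteq S \setminus \{b_1, b_2\}$, this yields $\phi(\pi, \pi^*) \ge 2 - 1 = 1 > 0$, contradicting popularity of $\pi^*$.

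The only real care needed is the bookkeeping: verifying type-by-type from the valuation table that every agent in $S \setminus \{b_1, b_2\}$ really does incur a $-\infty$ penalty once both $t_1$ and $t_2$ lie in $S$, and separately handling the three subcases based on whether $b_1$, $b_2$, or neither belongs to $S$. Because the reduction deliberately restricts positive valuations among non-$b$ agents, I expect no genuine obstacle; the extreme negative valuations make $t_1$ and $t_2$ such losers in $\pi^*$ that any dissolution of $S$ produces an overwhelming vote gain regardless of the exact composition of $S$.
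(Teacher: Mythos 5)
Your proposal is correct and follows essentially the same route as the paper: observe that once $S$ contains both a $T_1$- and a $T_2$-agent, only $b_1$ and $b_2$ can have nonnegative utility in $S$ (and at most one of them, since they assign $-\infty$ to each other), then dissolve $S$ into singletons so that at least the two $T$-agents strictly gain while at most one agent loses. The case bookkeeping you flag is exactly the bookkeeping the paper performs, so there is no gap.
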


\begin{proof}
Assume towards contradiction that $S\in\pi^*$ contains a $T_1$- and a $T_2$-agent.
Observe that the only agents who may have positive utility in $S$ are $b_1$ and $b_2$. 
Hence, the partition $\pi$ obtained from $\pi^*$ by dissolving all members of $S$ into singletons is more popular: if $b_1,b_2\in S$, then all members of $S$ prefer $\pi$ over $\pi^*$; otherwise, at most one agent prefers $\pi^*$, while at least two prefer $\pi$ (since $S$ contains a $T_1$- and a $T_2$-agent). 
This is a contradiction to the popularity of $\pi^*$.
\qed
\end{proof}

\begin{lemma}
\label{lem_ashg2:BBT}
Denote $S=\pi^*(b_1)$. If $b_2\in S$, then we have that $S\cap T_1=S\cap T_2=\emptyset$.
\end{lemma}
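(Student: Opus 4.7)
The plan is to assume for contradiction that $b_2 \in S = \pi^*(b_1)$ and that $S$ contains some agent from $T_1 \cup T_2$. By \Cref{lem_ashg2:TT_separate}, $S$ cannot contain agents of both types, so without loss of generality assume $S \cap T_1 \ne \emptyset$ and $S \cap T_2 = \emptyset$. The crucial initial observation is that since $v_{b_1}(b_2) = v_{b_2}(b_1) = -\infty$, both $b_1$ and $b_2$ already have utility $-\infty$ in $\pi^*$. The remaining question is what the $T_1$-agents in $S$ can do, and I will split into two cases based on their utility.

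In the first case, suppose some (equivalently, every) $T_1$-agent in $S$ has utility $-\infty$ in $\pi^*$. Then one can check from the valuation table that any non-$\{b_1,b_2\}\cup T_1$ agent in $S$ also has utility $-\infty$ (since only $b_1$ and $b_2$ assign finite value to $T_1$-agents). Hence every agent of $S$ has utility $-\infty$. Dissolving $S$ into singletons while leaving every other coalition of $\pi^*$ unchanged gives a Pareto improvement (each agent of $S$ strictly prefers the new partition, others are indifferent), contradicting the popularity of $\pi^*$.

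In the second case, every $T_1$-agent in $S$ has a finite (and thus nonnegative) utility. Since such an agent values all right-side agents and $T_2$-agents at $-\infty$, we must have $S \subseteq \{b_1, b_2\} \cup T_1$, and therefore $S = \{b_1, b_2\} \cup (S \cap T_1)$. Mimicking the swap of the underlying No-instance of \Cref{fig_ashg:No_instance}, I construct $\pi$ from $\pi^*$ by forming the two coalitions $\{b_2\}\cup T_1$ and $\{b_1\}\cup T_2$, and keeping all other (right-side) coalitions of $\pi^*$ intact. This is well-defined because in $\pi^*$ the $T_1$-agents outside $S$ and the $T_2$-agents could only be grouped among themselves (they cannot be with $b_1$ or $b_2$, which are in $S$, nor with each other by \Cref{lem_ashg2:TT_separate}).

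A direct popularity-margin computation then yields the contradiction. Agents $b_1$ and $b_2$ jump from utility $-\infty$ to positive utility, so both prefer $\pi$. Each $T_1$-agent originally in $S$ drops from $3$ to $2$ and prefers $\pi^*$, but each $T_1$-agent outside $S$ rises from $0$ to $2$ and each $T_2$-agent rises from $0$ to $1$, so all of them prefer $\pi$. All right-side agents are indifferent. Since $|T_2| = |T_1| = 2n+m$, the balance gives $\phi(\pi,\pi^*) \ge 2 + |T_1| + |T_2| - 2|S\cap T_1| \ge 2$, contradicting popularity of $\pi^*$. The main conceptual obstacle is identifying the right deviation (the cyclic swap pairing $b_1$ with $T_2$ and $b_2$ with $T_1$); once found, each step is a routine verification from the valuation table.
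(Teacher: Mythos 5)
Your proof is correct and takes essentially the same route as the paper: first reduce to $S\subseteq\{b_1,b_2\}\cup T_1$ via Pareto-improvement arguments (your Case~1 is the paper's single observation that otherwise every member of $S$ has negative utility), then exhibit a swap deviation in the spirit of the underlying No-instance under which $b_1$, $b_2$, and all $T_2$-agents gain while only the $T_1$-agents in $S$ lose. The paper's deviation is marginally simpler---it only extracts $b_2$ and forms $\{b_2\}\cup T_2$, leaving $b_1$ with $S\cap T_1$---but your cyclic swap gives the same margin bound $\phi(\pi,\pi^*)\geq 2$, and your claim that $T_1$-agents outside $S$ and $T_2$-agents sit in coalitions of utility $0$ follows from Pareto optimality exactly as the paper argues for $T_2$.
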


\begin{proof}
Assume towards contradiction that $b_2\in S$ and that, without loss of generality, we have that $S\cap T_1\neq\emptyset$.
By \Cref{lem_ashg2:TT_separate}, we have that $S\cap T_2=\emptyset$, and thus for all $t_2\in T_2$ it holds that $\pi^*(t_2)\subseteq T_2$.
Otherwise, every agent in $\pi^*(t_2)$ obtains a negative utility (recall that $b_1,b_2\notin \pi^*(t_2)$) and therefore dissolving $\pi^*(t_2)$ into singletons would be a Pareto improvement.
This also implies that $u_{t_2}(\pi^*)\leq 0$. 
Observe that $S$ contains no agents apart from $b_1$, $b_2$, and $T_1$-agents, as otherwise all members of $S$ obtain a negative utility and it would be a Pareto improvement to dissolve $S$ into singletons. 
Therefore, consider the partition $\pi$ obtained from $\pi^*$ by extracting $b_2$ from $S$, and setting $\pi(b_2)=\{b_2\}\cup T_2$.
This affects the coalitions of no agents other than the agents in $S$ and $T_2$.
We have that $b_1$, $b_2$, and all $T_2$-agents prefer $\pi$ over $\pi^*$, while only $T_1$-agents may prefer $\pi^*$, a contradiction to the popularity of $\pi^*$.
\qed
\end{proof}

\begin{lemma}
\label{lem_ashg2:BB_right}
If $\pi^*(b_1)$ contains a right-side agent, then $\pi^*(b_2)$ does not.
\end{lemma}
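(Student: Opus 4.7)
The plan is to derive a contradiction by assuming both $S_1 := \pi^*(b_1)$ and $S_2 := \pi^*(b_2)$ contain right-side agents, and then constructing a partition $\pi$ that is strictly more popular than $\pi^*$. The construction is analogous to the rotation argument used for the No-instance of \Cref{sec_ashg:setup}: we move the $T$-agents into coalitions with $b_1$ and $b_2$, displacing the current real-agent companions of the $b_i$'s.

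First I will establish structural constraints on $S_1$ and $S_2$. Since $b_1$ and $b_2$ assign $-\infty$ to each other, $S_1 \neq S_2$. Every right-side structure agent assigns $-\infty$ to $b_1$ and $b_2$, and real agents and $T$-agents mutually assign $-\infty$. Using standard popularity-improvement arguments in the spirit of \Cref{lem_ashg2:TT_separate,lem_ashg2:BBT} (extract the offending agents into singletons), this forces $A := S_1 \setminus \{b_1\}$ and $B := S_2 \setminus \{b_2\}$ to consist exclusively of real agents; both are nonempty by hypothesis. Consequently, all $T_1$- and $T_2$-agents end up in coalitions consisting only of their own type (or as singletons), each with utility $0$.

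Next, I consider the partition $\pi$ obtained from $\pi^*$ by extracting $b_1, b_2$ from $S_1, S_2$, removing all $T$-agents from their coalitions, and forming the two new coalitions $\{b_1\} \cup T_1$ and $\{b_2\} \cup T_2$, while leaving $A$, $B$, and all other coalitions in place. The popularity margin then decomposes cleanly: every $T_1$-agent (resp.\ $T_2$-agent) strictly prefers $\pi$ (gaining from $0$ to $1$, resp.\ to $2$); every $a \in A$ strictly prefers $\pi^*$ (losing exactly $v_a(b_1) \geq 1$); every $b \in B$ strictly prefers $\pi^*$ (losing exactly $v_b(b_2) \geq 2$); and all remaining agents other than $b_1, b_2$ are indifferent. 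The contribution of these four groups to $\phi(\pi,\pi^*)$ is $|T_1| + |T_2| - |A| - |B| \geq (4n+2m) - (4n+m) = m \geq 2$, using $|A| + |B| \leq |X| + |Y| + |C| = 4n+m$.

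The main obstacle lies in controlling the contributions of $b_1$ and $b_2$, whose preferences flip depending on how $|A|$ and $|B|$ compare to $|T_1| = |T_2| = 2n+m$. Here I will use the key observation that if $|A| > 2n+m$, then $|B| \leq (4n+m) - |A| < 2n \leq 2n+m$, so $b_2$ strictly prefers $\pi$ and compensates for $b_1$'s strict preference for $\pi^*$; symmetrically if $|B| > 2n+m$; and otherwise both $b_i$'s weakly prefer $\pi$. Hence the combined contribution of $b_1, b_2$ is nonnegative in every case, giving $\phi(\pi, \pi^*) \geq m \geq 2 > 0$, contradicting the popularity of $\pi^*$.
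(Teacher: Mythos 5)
Your deviation is exactly the paper's: extract $b_1$ and $b_2$ and form $\{b_1\}\cup T_1$ and $\{b_2\}\cup T_2$, then count. Your treatment of the $b_1,b_2$ contribution via $|A|+|B|\leq 4n+m<2(2n+m)$ is correct and matches the paper's closing case. The problem is your first step, where you assert that $A=S_1\setminus\{b_1\}$ and $B=S_2\setminus\{b_2\}$ consist exclusively of real agents ``by standard popularity-improvement arguments in the spirit of \Cref{lem_ashg2:TT_separate,lem_ashg2:BBT}.'' Neither cited lemma gives this, and the claim does not follow from a one-line extraction argument. Two concrete issues: (i) you dismiss $S_1\neq S_2$ because $b_1$ and $b_2$ assign $-\infty$ to each other, but mutual $-\infty$ valuations do not by themselves exclude cohabitation in a popular partition --- the separation of $b_1$ and $b_2$ is only established later (\Cref{lem_ashg2:BB_separated}), and its proof \emph{uses} the present lemma; (ii) ruling out right-side structure agents from $S_1$ is not routine: if, say, $\x$, $\nx$, and $\xt$ all sit in $S_1$, extracting the offending agent $\xt$ into a singleton makes $\xt$ and $b_1$ better but $\x$ and $\nx$ worse, a popularity margin of $0$, so that sub-case needs a different deviation (e.g., dissolving $\{\x,\nx,\xt\}$, which is a Pareto improvement). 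So as written, your structural preprocessing is a genuine gap, even though each sub-case can in fact be closed with additional work.

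The paper avoids this entirely: it performs the same deviation without any preprocessing, notes that the $4n+2m$ agents in $T_1\cup T_2$ all prefer $\pi$ while only $b_1$, $b_2$, and the $4n+m+2$ real agents could possibly prefer $\pi^*$, and then shows that in every ``dirty'' configuration (a right-side structure agent or a $T$-agent in $\pi^*(b_1)$ or $\pi^*(b_2)$, or $b_1\in\pi^*(b_2)$) at least one of those potential $\pi^*$-preferrers actually prefers $\pi$, which is all that is needed since $m\geq 2$. If you want to keep your cleaner decomposition, you must either prove your step 1 with its own case analysis or, better, observe that each violation of it only strengthens the count (a structure agent or $T$-agent in $A$ prefers $\pi$, and forces $b_1$ to prefer $\pi$ as well) and fold that observation into the final tally, as the paper does.
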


\begin{proof}
Assume towards contradiction that both $\pi^*(b_1)$ and $\pi^*(b_2)$ contain a right-side agent. Then all $T_1$- and $T_2$-agents have utility at most $0$. 
Furthermore, for $i\in\{1,2\}$, we have that for any agent $t_i \in T_i$, either $t_i\in\pi^*(b_1)\cup\pi^*(b_2)$, or $\pi^*(t_i)$ contains only $T_i$-agents, as otherwise it is a Pareto improvement to remove $t_i$ from $\pi^*(t_i)$.

Thus, consider the partition $\pi$ obtained from $\pi^*$ by extracting $b_1$ and $b_2$ from their coalitions and setting $\pi(b_1)=\{b_1\}\cup T_1$ and $\pi(b_2)=\{b_2\}\cup T_2$. 
We have that all $T_1$- and $T_2$-agents prefer $\pi$ over $\pi^*$ (which there are $4n+2m$ of). 
Observe that the only agents who may prefer $\pi^*$ are $b_1$, $b_2$, and the real agents (of which there are a total of $4n+m+2$ agents).
Therefore, since $m\geq 2$, it is enough to show that one of those $4n+m+2$ agents prefers $\pi$ to establish that $\pi$ is more popular that $\pi^*$, thus reaching a contradiction. 
Observe that if $\pi^*(b_1)$ or $\pi^*(b_2)$ contains a right-side structure agent, then $b_1$ or $b_2$ prefer $\pi$, and we are done. Hence, we may assume this is not the case, implying $\pi^*(b_1)$ and $\pi^*(b_2)$ both contain real agents.
Similarly, if $b_1\in\pi^*(b_2)$ we reach the same contradiction, so we may assume $b_1\notin\pi^*(b_2)$.
If either $\pi^*(b_1)$ or $\pi^*(b_2)$ contains a $T_1\cup T_2$-agent, then clearly any real agent present in that coalition (of which there exists at least one) prefers $\pi$ over $\pi^*$, and we are done. 
Otherwise, $b_1$ and $b_2$ form coalitions with real agents only.
Hence, at least one of $b_1$ and $b_2$ must prefer $\pi$ since she has at most $2n+\frac{m}{2}$ real agents in her coalition in $\pi^*$. 
Again, we are done.
\qed
\end{proof}

\begin{lemma}
\label{lem_ashg2:TT_right}
Let $t_1\in T_1$ and $t_2\in T_2$. Then $\pi^*(t_1)$ and $\pi^*(t_2)$ contain no right-side agents.
\end{lemma}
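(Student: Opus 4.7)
The plan is to argue by contradiction via a local swap: suppose some $t_1 \in T_1$ has $S := \pi^*(t_1)$ containing a right-side agent; by symmetry it suffices to treat $T_1$, since the argument for $T_2$ is analogous. The driving observation is that $T_1$-agents assign $-\infty$ to every right-side agent and vice versa, so $t_1$ has utility $-\infty$ in $\pi^*$, and so does every right-side agent in $S$ because of $t_1$.

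First I would pin down the structure of $S$. Since $\pi^*$ is popular and thus Pareto-optimal, and only $b_1$ and $b_2$ positively value $t_1$, at least one of $\{b_1,b_2\}$ must lie in $S$; otherwise extracting $t_1$ into a singleton strictly improves $t_1$ and leaves every other agent's utility unchanged, a Pareto improvement. By \Cref{lem_ashg2:BBT}, both $b_1$ and $b_2$ cannot lie in $S$ simultaneously because $t_1 \in T_1 \cap S$. Hence exactly one of them lies in $S$; without loss of generality assume $b_1 \in S$ and $b_2 \notin S$ (the opposite case is symmetric, with value $1$ replaced by $2$ throughout).

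The core step is to define $\pi$ by modifying $\pi^*$ only inside $S$: replace $S$ by the coalition $\{b_1\} \cup (T_1 \cap S)$ and let every remaining agent of $S$ (which is necessarily a right-side agent) be a singleton. Only agents in $S$ can change their preference. Each agent in $T_1 \cap S$ jumps from utility $-\infty$ (caused by any right-side agent in $S$) to utility $1$ (from $b_1$, with $0$ from each other $T_1$-agent), and each right-side agent in $S$ jumps from $-\infty$ (caused by $t_1$) to $0$; so all of these strictly prefer $\pi$.

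The only potential loser is $b_1$, so I would finish with a short case split on $S$. If $S$ contains any right-side structure agent (i.e., an $\Xt$-, $\Xf$-, $Y'$-, or $C'$-agent), then $b_1$ herself has utility $-\infty$ in $\pi^*$ since she assigns $-\infty$ to each of these types, so $b_1$ also strictly prefers $\pi$, giving $\phi(\pi,\pi^*) = |S| > 0$. Otherwise $S$ contains only real right-side agents, and $b_1$'s utility in $\pi^*$ is the finite positive quantity $|T_1 \cap S| + |S \cap (X \cup Y \cup C)|$, strictly larger than her utility $|T_1 \cap S|$ in $\pi$; thus $b_1$ is the sole agent preferring $\pi^*$, while the at least one $T_1$-agent and the at least one real right-side agent in $S$ all prefer $\pi$, yielding $\phi(\pi,\pi^*) \ge 1 > 0$. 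Either way $\pi$ is strictly more popular than $\pi^*$, the desired contradiction. The only subtlety I anticipate is careful bookkeeping of the $-\infty$ interactions, which is in fact what makes the argument go through.
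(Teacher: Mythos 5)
Your proof is correct and follows essentially the same route as the paper's: both derive a contradiction from a local deviation affecting only $S=\pi^*(t_1)$, whose only possible loser is the single agent among $b_1,b_2$ in $S$ (at most one by the earlier lemmas), while $t_1$ and the right-side agent both strictly gain because of the mutual $-\infty$ valuations. The paper's deviation is slightly leaner---it merely extracts $t_1$ into a singleton, which already strictly improves $t_1$ and every right-side agent in $S$---so your preliminary step establishing that exactly one of $b_1,b_2$ lies in $S$ (and the full dissolution of $S$) is not actually needed.
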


\begin{proof}
Assume towards contradiction $\pi^*(t_1)$ contains a right-side agent $\ga$. 
Then $u_{t_1}(\pi^*)<0$ and $u_{\ga}(\pi^*)<0$. 
Furthermore, by \Cref{lem_ashg2:BB_right}, only one of $b_1$ and $b_2$ may be in $\pi^*(t_1)$. 
Consider the partition $\pi$ obtained when removing $t_1$ from $\pi^*(t_1)$ and forming a singleton $\pi(t_1)=\{t_1\}$: 
All agents in $\pi^*(t_1)$ except possibly one of $b_1$ and $b_2$ prefer $\pi$ over $\pi^*$.
Since there are at least two agents (namely $t_1$ and $\ga$) preferring $\pi$, we obtain a contradiction.
The proof for $\pi^*(t_2)$ is analogous.
\qed
\end{proof}

\begin{lemma}
\label{lem_ashg2:BB_separated}
Denote $S=\pi^*(b_1)$. Then it holds that $b_2\notin S$.
\end{lemma}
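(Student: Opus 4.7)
The plan is a short contradiction argument. Assume towards contradiction that $b_2 \in S$, so that $\pi^*(b_1) = \pi^*(b_2) = S$.

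First, I would combine Lemmas~\ref{lem_ashg2:BBT} and~\ref{lem_ashg2:BB_right} to pin down the composition of $S$. Lemma~\ref{lem_ashg2:BBT} directly gives $S \cap T_1 = S \cap T_2 = \emptyset$. Lemma~\ref{lem_ashg2:BB_right} states that $\pi^*(b_1)$ and $\pi^*(b_2)$ cannot simultaneously contain a right-side agent, but because they are literally the same coalition under our assumption, this forces $S$ to contain no right-side agent at all. Since the agent set decomposes as $\{b_1,b_2\} \cup T_1 \cup T_2 \cup R$, we are left with $S = \{b_1, b_2\}$.

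The final step is to exhibit a Pareto improvement, exploiting the $-\infty$ valuations between $b_1$ and $b_2$. In the coalition $\{b_1,b_2\}$ we have $u_{b_1}(\pi^*) = v_{b_1}(b_2) = -\infty$ and similarly $u_{b_2}(\pi^*) = -\infty$. Let $\pi$ be obtained from $\pi^*$ by replacing $S$ with the two singleton coalitions $\{b_1\}$ and $\{b_2\}$ and leaving every other coalition of $\pi^*$ untouched. Then $u_{b_1}(\pi) = u_{b_2}(\pi) = 0$, and no other agent's coalition changes, so $\pi$ is a Pareto improvement from $\pi^*$. Since every Pareto improvement is a more popular partition (as noted in the preliminaries), this contradicts the popularity of $\pi^*$. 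There is no real obstacle here: once Lemmas~\ref{lem_ashg2:BBT} and~\ref{lem_ashg2:BB_right} have done the work of isolating $\{b_1,b_2\}$, the catastrophic $-\infty$ utility immediately kills the assumption that $b_2 \in S$.
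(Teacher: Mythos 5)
Your proposal is correct and follows exactly the paper's argument: Lemmas~\ref{lem_ashg2:BBT} and~\ref{lem_ashg2:BB_right} reduce $S$ to $\{b_1,b_2\}$, and dissolving this coalition into singletons is a Pareto improvement contradicting popularity. The extra detail you supply about the $-\infty$ valuations between $b_1$ and $b_2$ is a correct elaboration of what the paper leaves as ``clearly.''
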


\begin{proof}
Assume otherwise. By \Cref{lem_ashg2:BB_right,lem_ashg2:BBT}, $S$ contains no right-side agent and no $T_1$- or $T_2$-agents. Hence, we have $S=\{b_1,b_2\}$, and so it is clearly a Pareto improvement to dissolve $S$ into singletons, a contradiction.
\qed
\end{proof}

\begin{lemma}
\label{lem_ashg2:right_side_arrangement}
If $\pi^*(b_1)$ (resp., $\pi^*(b_2)$) contains a right-side agent, then either $\pi^*(b_2)=\{b_2\}\cup T_1$ or $\pi^*(b_2)=\{b_2\}\cup T_2$ (resp., $\pi^*(b_1)=\{b_1\}\cup T_1$ or $\pi^*(b_1)=\{b_1\}\cup T_2$).
If $\pi^*(b_1)$ and $\pi^*(b_2)$ contain no right-side agents, then either $\{\{b_1\}\cup T_1,\{b_2\}\cup T_2\}\subseteq\pi^*$ or $\{\{b_2\}\cup T_1,\{b_1\}\cup T_2\}\subseteq\pi^*$.
\end{lemma}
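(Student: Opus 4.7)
The plan is to case-split on whether $\pi^*(b_1)$ or $\pi^*(b_2)$ contains a right-side agent, and in each case to read off the structure by combining the earlier lemmas with the Pareto optimality of $\pi^*$ (which follows from popularity).

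First I would handle the case where $\pi^*(b_1)$ contains a right-side agent (the case for $\pi^*(b_2)$ is symmetric). By \Cref{lem_ashg2:BB_separated}, $b_2 \notin \pi^*(b_1)$, and by \Cref{lem_ashg2:BB_right}, $\pi^*(b_2)$ contains no right-side agent. Moreover, the contrapositive of \Cref{lem_ashg2:TT_right} forces $\pi^*(b_1)\cap(T_1\cup T_2)=\emptyset$, since any $T_1$- or $T_2$-agent sharing $\pi^*(b_1)$ would see a right-side agent in its coalition. Hence $\pi^*(b_2)$ consists of $b_2$ together with some set $A\subseteq T_1\cup T_2$, and \Cref{lem_ashg2:TT_separate} refines this to $A\subseteq T_1$ or $A\subseteq T_2$. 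Without loss of generality $A\subseteq T_1$; I then aim to show $A=T_1$. If some $t_1\in T_1\setminus A$ existed, applying the same chain of lemmas would force $\pi^*(t_1)\subseteq T_1$, so $t_1$ has utility $0$. Moving $t_1$ into $\pi^*(b_2)$ would then be a Pareto improvement (both $t_1$ and $b_2$ gain $+2$ while valuations among $T_1$-agents are $0$), contradicting popularity.

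Second, I would handle the case where neither $\pi^*(b_1)$ nor $\pi^*(b_2)$ contains a right-side agent. By \Cref{lem_ashg2:BB_separated} the two are distinct coalitions, each of the form $\{b_i\}\cup A_i$ with $A_i$ lying entirely in $T_1$ or entirely in $T_2$ by \Cref{lem_ashg2:TT_separate}. I would first rule out the scenario where $A_1$ and $A_2$ sit in the same $T_k$ (the other same-side case is symmetric). If, say, $A_1,A_2\subseteq T_1$, then every $T_2$-agent must lie in a pure $T_2$-coalition of utility $0$ (by \Cref{lem_ashg2:TT_right} and \Cref{lem_ashg2:TT_separate}). Replacing $\pi^*(b_2)$ by $\{b_2\}\cup T_2$ makes all $2n+m$ agents of $T_2$ strictly prefer the new partition and $b_2$ weakly prefer it (strictly whenever $|A_2|<2n+m$), while only the $|A_2|$ agents of $A_2$ prefer $\pi^*$, giving a more popular partition. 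In the boundary sub-case $|A_2|=2n+m$, one has $A_2=T_1$ and $A_1=\emptyset$, so $b_1$ is a singleton; then adjoining $b_1$ to any $T_2$-agent is an outright Pareto improvement. Both paths contradict popularity.

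Finally, once $A_1$ and $A_2$ are known to lie in different $T$-sets, say $A_1\subseteq T_1$ and $A_2\subseteq T_2$, I would pin down $A_1=T_1$ and $A_2=T_2$ by the same Pareto-improvement trick used in Case~1: any missing $t_i\in T_i\setminus A_i$ sits in a pure $T_i$-coalition of utility $0$, and moving it into $\pi^*(b_i)$ harms no one while strictly benefiting $t_i$ and $b_i$. The main obstacle I anticipate is the boundary of the same-side case, where the natural swap has popularity margin exactly zero and a separate Pareto argument from the singleton $\{b_1\}$ is required; everything else is a clean assembly of the earlier structural lemmas.
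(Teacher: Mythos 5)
Your proposal is correct and follows essentially the same route as the paper: it assembles \Cref{lem_ashg2:BB_right,lem_ashg2:BB_separated,lem_ashg2:TT_right,lem_ashg2:TT_separate} to constrain the coalitions of $b_1$ and $b_2$ to subsets of a single $T$-set, and then pins down the exact structure via Pareto improvements and one popularity comparison. The paper leaves these final verifications as "simple to verify"; your write-up supplies them explicitly, including the genuine boundary sub-case ($A_2=T_1$, $b_1$ a singleton) where the natural swap has margin zero and a separate Pareto argument is needed.
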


\begin{proof}
Suppose without loss of generality that $\pi^*(b_1)$ contains a right-side agent. Then by \Cref{lem_ashg2:BB_right}, $\pi^*(b_2)$ contains no right-side agent.
Moreover, by \Cref{lem_ashg2:TT_right}, the coalitions of all $T_1$- or $T_2$-agents contain no right-side agents. 
Furthermore, by \Cref{lem_ashg2:TT_separate}, $T_1$- and $T_2$- agents are in different coalitions of $\pi^*$. 
Hence, it is simple to verify that if none of the suggested coalitions $\pi^*(b_2)$ exists, then constructing one of them must be a Pareto improvement.

Now, suppose that $\pi^*(b_1)$ and $\pi^*(b_2)$ contain no right-side agents. 
Using \Cref{lem_ashg2:TT_separate,lem_ashg2:BB_separated}, it is easy to verify that the suggested sub-partitions of $\{b_1,b_2\}\cup T_1\cup T_2$ are more popular than any other.
\qed
\end{proof}

We are ready to complete our first main step of proving that coalitions can contain only left-side or only right-side agents.

\begin{lemma}
\label{lem_ashg2:left_right}
Let $S\in\pi^*$ be a coalition. If $S$ contains a left-side agent, it cannot contain a right-side agent.
\end{lemma}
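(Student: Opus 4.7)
The plan is a proof by contradiction. By Lemmas~\ref{lem_ashg2:TT_right} and~\ref{lem_ashg2:BB_separated}, it suffices to show that neither $\pi^*(b_1)$ nor $\pi^*(b_2)$ contains a right-side agent, and these two cases are symmetric. So assume for contradiction that $S := \pi^*(b_1)$ contains a right-side agent. By Lemma~\ref{lem_ashg2:right_side_arrangement}, $\pi^*(b_2) \in \{\{b_2\}\cup T_1,\{b_2\}\cup T_2\}$; since $T_1$ and $T_2$ play interchangeable roles in the game, I may take $\pi^*(b_2)=\{b_2\}\cup T_2$. Combined with Lemmas~\ref{lem_ashg2:TT_separate} and~\ref{lem_ashg2:TT_right}, every $T_1$-agent then sits in a coalition containing only $T_1$-agents, each obtaining utility $0$.

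Let $r$ and $k$ denote the numbers of real and structure right-side agents in $S\setminus\{b_1\}$; by assumption $r+k\geq 1$. The main construction is the analogue of the No-instance deviation from Section~\ref{sec_ashg:setup}: define $\pi$ by forming the coalitions $\{b_1\}\cup T_1$ and $\{b_2\}\cup(S\setminus\{b_1\})$, and making every $T_2$-agent a singleton. I would then compute $\phi(\pi^*,\pi)$ agent by agent. The $T_1$- and $T_2$-agents contribute $-(2n+m)$ and $+(2n+m)$ respectively and cancel. A short case analysis on whether $k\geq 1$ shows that the contributions of $b_1$ and $b_2$ also always cancel: either both are indifferent, or they take opposite signs of equal magnitude. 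Each real agent $\ga\in S\setminus\{b_1\}$ sees only the swap $b_1\to b_2$, which changes her utility by exactly $v_\ga(b_2)-v_\ga(b_1)=+1$ (this value holds for every $X$-, $Y$-, and $C$-agent), so she strictly prefers $\pi$ regardless of the other contributions to her utility. Every structure agent in $S\setminus\{b_1\}$ assigns the same large negative value to both $b_1$ and $b_2$, and is therefore indifferent. Hence $\phi(\pi^*,\pi)=-r$, which contradicts popularity whenever $r\geq 1$.

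The remaining edge case is $r=0$, so $S$ consists of $b_1$ and $k\geq 1$ structure agents only. Here I would invoke that $\pi^*$ is Pareto-optimal, as popular partitions are. Any structure agent $g\in S$ assigns $-\infty$ to $b_1$ and $-\infty$ to every other structure agent, so $u_g(\pi^*)$ is a very large negative number. Extracting $g$ into a singleton strictly improves $g$'s utility (to $0$), weakly improves $b_1$'s utility (by removing the $-\infty$-contribution of $g$), and weakly improves every remaining structure agent's utility by the same token; no real agent is affected since $r=0$. This is a Pareto improvement, contradicting Pareto-optimality of $\pi^*$.

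The main technical obstacle is the bookkeeping around the large-magnitude ``$-\infty$'' valuations in $S$. The crucial observation that makes the main construction go through is that these $-\infty$ contributions cancel out of the $b_1\leftrightarrow b_2$ swap, since both $b_1$ and $b_2$ are valued identically at $-\infty$ by every right-side structure agent; this isolates the clean $+1$ gain enjoyed by each real agent in $S\setminus\{b_1\}$, regardless of whether her utility in $\pi^*$ was finite or dominated by negative-magnitude terms. The residual edge case $r=0$ is then dispatched by a direct Pareto-optimality argument.
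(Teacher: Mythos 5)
Your treatment of the case $S=\pi^*(b_1)$ is essentially correct and coincides with the paper's Case~1: after reducing via \Cref{lem_ashg2:TT_right,lem_ashg2:BB_separated} to the coalitions of $b_1$ and $b_2$, the swap that sends $b_1$ to the free $T$-set and $b_2$ into $S\setminus\{b_1\}$ indeed makes the $T_1$/$T_2$ contributions and the $\{b_1,b_2\}$ contributions cancel, leaves every right-side structure agent in $S$ indifferent (both $b_1$ and $b_2$ are valued at $-\infty$ by them), and gives every real agent in $S$ a strict gain of $v_{\ga}(b_2)-v_{\ga}(b_1)=1$; your $r=0$ edge case via Pareto optimality also works. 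The genuine gap is your opening claim that the two cases are symmetric. While $T_1$ and $T_2$ are interchangeable, $b_1$ and $b_2$ are not: every real agent and every $T$-agent values $b_2$ strictly more than $b_1$ ($2$ versus $1$ for $X$-, $Y$-, and $T$-agents, $6$ versus $5$ for $C$-agents), and this asymmetry is the engine of the whole reduction, mirroring the underlying No-instance. If it is $b_2$ that sits in the mixed coalition $S$ and you mirror your construction by moving $b_1$ into $S\setminus\{b_2\}$, each real agent in $S$ now \emph{loses} value $1$ and prefers $\pi^*$, so the overall margin flips to $+r$ and the deviation proves nothing.

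The case $b_2\in S$ therefore needs a genuinely different argument, which is why the paper splits into two cases. There, one pulls \emph{both} $b_1$ and $b_2$ out and matches them to the two $T$-sets, so that all $4n+2m$ agents of $T_1\cup T_2$ strictly prefer the new partition, $b_1$ is indifferent, the right-side structure agents in $S$ strictly prefer it, and even if $b_2$ together with every one of the at most $4n+m$ real agents prefers $\pi^*$, the margin is still at least $m-1\ge 1$ in favour of the deviation (this is where the assumption $m\ge 2$ is used). As written, your proof establishes only half of the lemma; you must add an argument of this counting type (or some other non-mirrored deviation) for the case that $b_2$ is the left-side agent in the mixed coalition.
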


\begin{proof} 
Assume that $S$ contains both a left-side and a right-side agent. By \Cref{lem_ashg2:BB_right,lem_ashg2:TT_right}, $S$ contains only one left-side agent, which must be $b_j$, where $j\in\{1,2\}$. 
If $j = 1$, set $i = 2$ and if $j = 2$, set $i = 1$.
By \Cref{lem_ashg2:right_side_arrangement}, the remaining left-side agents are partitioned as follows: either $\{b_i\}\cup T_1\in\pi^*$ or $\{b_i\}\cup T_2\in\pi^*$.
Without loss of generality, assume that $\{b_i\}\cup T_1\in\pi^*$. 
Observe that if $S$ contains no real agents, then it is a Pareto improvement to remove $b_j$ from $S$. 
Hence, $S$ contains some real agent $\ga$. 
We make a case distinction according to the identity of $b_j$.

\textit{Case 1:} Suppose $b_1\in S$ (i.e., $j=1$ and $i=2$). Then consider the partition $\pi$ obtained from $\pi^*$ by extracting $b_1$ and $b_2$ from their respective coalitions, and setting $\pi(b_1)=\{b_1\}\cup T_2$ and $\pi(b_2)=\{b_2\}\cup (S\setminus \{b_1\})$.
All $T_2$-agents clearly prefer $\pi$ over $\pi^*$ while all $T_1$-agents prefer $\pi^*$ (so those cancel out in terms of popularity).
Any structure agent in $S$ is indifferent between the partitions, since they value $b_1$ and $b_2$ equally. 
Every real agent in $S$ (of which there is at least one of) prefers $\pi$ over $\pi^*$, having a greater value assigned to $b_2$ than to $b_1$. 
Furthermore, we must have $\phi_{\{b_1,b_2\}}(\pi^*,\pi)=0$ because the valuation functions of $b_1$ and $b_2$ are identical for all agents except themselves.
Hence, $\pi$ is more popular than $\pi^*$, a contradiction.

\textit{Case 2:} Suppose $b_2\in S$ (i.e., $j=2$ and $i=1$). Then consider the partition $\pi$ obtained from $\pi^*$ by extracting $b_1$ and $b_2$ from their respective coalitions, and setting $\pi(b_1)=\{b_1\}\cup T_2$ and $\pi(b_2)=\{b_2\}\cup T_1$. All $T_1$- and $T_2$-agents clearly prefer $\pi$ over $\pi^*$. Agent $b_1$ is indifferent between $\pi^*$ and $\pi$. Any structure agent in $S$ prefers $\pi$ (having removed $b_2$ from $S$). Even if $b_2$ and all real players prefer $\pi^*$ over $\pi$, we have that at least $4n+2m$ agents prefer $\pi$ whereas at most $4n + m + 1$ agents prefer $\pi^*$.
Hence, $\pi$ is more popular (recall that $m\geq 2$), a contradiction.
\qed
\end{proof}

We are ready to determine the exact coalitions in $\pi^*$.

\begin{lemma} 
\label{lem_ashg2:C_coalition}
It holds that $C\cup C'\in\pi^*$.
\end{lemma}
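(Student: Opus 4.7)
My plan is to argue by contradiction: assuming $C \cup C' \notin \pi^*$, I will exhibit a partition $\pi$ that is more popular than $\pi^*$. The key leverage is that the valuations from $C$- and $C'$-agents, combined with \Cref{lem_ashg2:left_right} and the Pareto-optimality of $\pi^*$ (which follows from popularity), severely restrict their coalitions: because $C'$-agents value everyone except other $C$- and $C'$-agents at $-\infty$, every $C'$-coalition in $\pi^*$ must sit inside $C \cup C'$; because $C$-agents value the right-side structure agents $X_t$, $X_f$, $Y'$ at $-\infty$ and cannot sit with left-side agents by the separation lemma, every $C$-coalition in $\pi^*$ consists solely of $X$-, $Y$-, $C$-, and $C'$-agents.

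The partition $\pi$ I would construct extracts all $C$- and $C'$-agents from their $\pi^*$-coalitions, gathers them into the single new coalition $C \cup C'$, and leaves the remaining agents of each affected coalition (which, by the structural analysis above, are necessarily $X$- or $Y$-agents) together as a reduced coalition. I then carry out an agent-by-agent comparison. In $\pi$, every $C'$-agent attains the maximum utility $m$, while her utility in $\pi^*$ is at most $m$; every $C$-agent obtains utility exactly $0$ in $\pi$, while in $\pi^*$ her utility was non-positive since all her admissible coalition-mates carry valuations $\leq 0$. Hence both $C$- and $C'$-agents weakly prefer $\pi$.

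Next I would verify that no affected $X$- or $Y$-agent strictly prefers $\pi^*$. For any $X$-agent $\xl$ sitting with $C$-agents in $\pi^*$, Pareto-optimality rules out her complementary agent and all $C'$-agents being in the same coalition (both scenarios would give her $-\infty$ utility), so she lies in a coalition of $X$-, $Y$-, and $C$-agents, all valued at $0$, yielding $u_{\xl}(\pi^*) = 0$; after extraction she remains only with $X$- and $Y$-agents, so $u_{\xl}(\pi) = 0$ as well. A symmetric argument applies to $Y$-agents, and all remaining agents have unchanged coalitions and are therefore indifferent.

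The final step, which I expect to be the most delicate (though not truly hard), is producing the strict preference that completes the contradiction. Since $C \cup C' \notin \pi^*$, at least one $C'$-agent cannot be in the same coalition as all $m$ $C$-agents: otherwise, since every $C'$-coalition sits inside $C \cup C'$ and the $m$ $C$-agents collectively inhabit one coalition per agent, all $C'$-agents would be forced into a single coalition equal to $C \cup C'$. The stranded $C'$-agent obtains utility strictly below $m$ in $\pi^*$ and thus strictly prefers $\pi$, yielding $\phi(\pi, \pi^*) \geq 1$, which contradicts the popularity of $\pi^*$.
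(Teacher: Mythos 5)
Your proposal is correct and follows essentially the same route as the paper: both extract all $C$- and $C'$-agents and form the coalition $C\cup C'$, observing that (given \Cref{lem_ashg2:left_right}) no remaining agent values a $C$- or $C'$-agent positively, every $C$-agent had utility at most $0$, and some $C'$-agent strictly improves. The paper merely packages this as a Pareto improvement contradicting Pareto optimality of $\pi^*$, which shortcuts your agent-by-agent verification.
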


\begin{proof}
By \Cref{lem_ashg2:left_right}, all $C$-agents have utility at most $0$ in $\pi^*$.
In addition, no agent apart from $C'$-agents assigns positive value to $C$-agents.
Hence if $C\cup C'\notin \pi^*$, then removing all agents in $C\cup C'$ from their coalitions and forming $C\cup C'$ would be a Pareto improvement.
This contradicts Pareto optimality of $\pi^*$.
\qed
\end{proof}

\begin{lemma} 
\label{lem_ashg2:Y_coalition}
Fix agent $\y\in Y$. Then $\pi^*(\y)=\{\y,\yp\}$.
\end{lemma}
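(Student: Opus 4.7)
The plan is to exploit that $\pi^*$, being popular, is Pareto-optimal, together with the very restrictive valuation structure around $\y$ and $\yp$. Denote $S = \pi^*(\y)$. By \Cref{lem_ashg2:left_right}, $S$ contains only right-side agents, and by \Cref{lem_ashg2:C_coalition}, $S \cap (C \cup C') = \emptyset$, so $S \subseteq X \cup Y \cup X_t \cup X_f \cup Y'$. The crucial observation is that $\yp$ assigns $-\infty$ to every agent except $\y$ (whom she values at $1$), while $\y$ assigns $-\infty$ to every agent in $X_t \cup X_f$, to every noncorresponding $Y'$-agent, and to her complement $\ny$; she assigns $0$ to the remaining $X$- and $Y$-agents and $\frac{1}{2}$ to her corresponding $\yp$.

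First, I would handle the case $\yp \notin S$. Let $\pi$ be the partition obtained from $\pi^*$ by extracting $\y$ from $S$ and $\yp$ from $\pi^*(\yp)$ and forming the new coalition $\{\y, \yp\}$, leaving the remnants of $S$ and $\pi^*(\yp)$ otherwise intact. Then $\y$'s utility in $\pi^*$ is at most $0$ (no agent in $S$ gives her positive value, since $\yp \notin S$ and $S$ contains no left-side agent), while it becomes $\frac{1}{2}$ in $\pi$; similarly $\yp$ has utility $\leq 0$ in $\pi^*$ and utility $1$ in $\pi$. Every remaining agent in $S \setminus \{\y\}$ valued $\y$ at $\leq 0$, so losing $\y$ weakly increases her utility, and the analogous statement holds for $\pi^*(\yp) \setminus \{\yp\}$ since $\yp$ is valued at $-\infty$ by every agent other than $\y$. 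Hence $\pi$ Pareto-dominates $\pi^*$, a contradiction.

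Second, I would handle the case $\yp \in S$ but $|S| \geq 3$. Then there is some $a \in S \setminus \{\y, \yp\}$ with $v_{\yp}(a) = -\infty$, whence $u_{\yp}(\pi^*) = -\infty$. Let $\pi$ be obtained from $\pi^*$ by splitting $S$ into $\{\y, \yp\}$ and $S \setminus \{\y, \yp\}$. Then $\yp$ strictly gains (from $-\infty$ to $1$); $\y$'s utility becomes $\frac{1}{2}$, which weakly exceeds $u_\y(\pi^*)$ since any agent in $S$ apart from $\yp$ contributes $\leq 0$ to her utility; and every remaining agent in $S \setminus \{\y, \yp\}$ weakly gains because she valued $\yp$ at $-\infty$ in $\pi^*$. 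This again contradicts Pareto optimality.

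The main subtlety is verifying that these extractions are genuine Pareto improvements, which relies on the sparse positive-value structure of $\y$ and $\yp$ guaranteeing that removing either of them never strictly hurts any remaining agent. Combining the two cases, we conclude $\pi^*(\y) = \{\y, \yp\}$.
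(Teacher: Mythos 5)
Your proof is correct and follows essentially the same route as the paper's: after invoking the left/right separation, the sparse valuation structure of $\y$ and $\yp$ makes forming $\{\y,\yp\}$ a Pareto improvement whenever this coalition is absent, contradicting Pareto optimality of $\pi^*$. The paper compresses your two cases into a single sentence, but the underlying argument is identical.
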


\begin{proof} 
By \Cref{lem_ashg2:left_right}, agent $\yp$ is the only agent to whom $\y$ assigns a positive value, and vice versa.
Furthermore, $\yp$ assigns value $-\infty$ to any player apart from $\y$. 
Hence, if $\{\y,\yp\}\notin \pi^*$, then it is a Pareto improvement to form $\{\y,\yp\}$. 
We obtain a contradiction to the Pareto optimality of $\pi^*$.
\qed
\end{proof}

\begin{lemma} 
\label{lem_ashg2:X_coalition}
Fix $x\in\XX$. Then we have that either $\{\{\x,\xt\},\{\nx,\xf\}\}\subseteq\pi^*$ or $\{\{\x,\xf\},\{\nx,\xt\}\}\subseteq\pi^*$.
\end{lemma}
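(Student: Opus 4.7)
The plan is to pin down $\pi^*(\xt)$ and $\pi^*(\xf)$ using the large-negative valuations built into the reduction, and then eliminate the remaining configurations via explicit popularity-improving deviations.

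First, because $\xt$ values every agent outside $\{\x, \nx\}$ at $-\infty$ and $v_\x(\nx) = -\infty$, any coalition $S = \pi^*(\xt)$ not in $\{\{\xt\}, \{\xt, \x\}, \{\xt, \nx\}\}$ must contain a ``stranger'' $\ga \notin \{\x, \nx\}$, forcing $u_\xt(\pi^*) = -\infty$; moreover, since only $\x$ and $\nx$ value $\xt$ non-negatively, every stranger $\ga \in S$ also has $u_\ga(\pi^*) = -\infty$. Dissolving $S$ into singletons strictly benefits $\xt$ together with every stranger, while at most the agents of $S \cap \{\x, \nx, \xf\}$ can prefer $\pi^*$; a short count---using $v_\xt(\xf) = v_\xf(\xt) = -\infty$ whenever $\xf \in S$, so that $\xf$ too is a strict gainer---shows the gainers strictly outnumber the losers, contradicting popularity of $\pi^*$. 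The identical argument applied to $\xf$ gives $\pi^*(\xf) \in \{\{\xf\}, \{\xf, \x\}, \{\xf, \nx\}\}$ and also rules out $\pi^*(\xt) = \pi^*(\xf)$.

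Next, I rule out $\pi^*(\xt) = \{\xt\}$ (the case $\pi^*(\xf) = \{\xf\}$ is analogous). By the previous step, at least one of $\x, \nx$---call it $\xl$---is not in $\pi^*(\xf)$. Using Lemma~\ref{lem_ashg2:left_right} (so $\pi^*(\xl) \subseteq R$), Pareto optimality (ruling out $-\infty$ coalition-mates), and the fact that neither $\xt$ nor $\xf$ lies in $\pi^*(\xl)$, we conclude that $\pi^*(\xl) \setminus \{\xl\}$ consists only of real non-complementary agents that $\xl$ values at $0$, whence $u_\xl(\pi^*) = 0$. Form $\pi$ from $\pi^*$ by extracting $\xl$ from $\pi^*(\xl)$ and creating the new coalition $\{\xl, \xt\}$. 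Then $\xt$ strictly gains (from $0$ to $1$), $\xl$ strictly gains (from $0$ to $\tfrac{1}{2}$), and every ex-coalition-mate of $\xl$ is either indifferent (valuing $\xl$ at $0$, as do other $X$-, $Y$-agents, and the $C$-agents whose clause does not contain $\xl$'s literal) or strictly gains (a $C$-agent whose clause contains $\xl$'s literal had $v_\ga(\xl) = -2$ and now gains $2$). No agent prefers $\pi^*$, so $\pi$ is strictly more popular, contradicting popularity of $\pi^*$.

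The two steps together leave exactly the two configurations in the statement: $\{\{\x, \xt\}, \{\nx, \xf\}\} \subseteq \pi^*$ or $\{\{\x, \xf\}, \{\nx, \xt\}\} \subseteq \pi^*$. The main obstacle is the bookkeeping in the first step's dissolution move, particularly when $\pi^*(\xt) = \pi^*(\xf)$: one must notice that both $\xt$ and $\xf$ are stuck at $-\infty$ because of each other, and hence both are strict gainers upon dissolution, providing enough slack to outvote the at most one surviving loser ($\x$ or $\nx$) in the shared coalition. With this observation in place, the counts in every case are routine.
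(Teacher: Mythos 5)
Your overall strategy is sound and close in spirit to the paper's: both arguments exploit the $-\infty$ valuations to confine $\xt$ and $\xf$ to coalitions with their corresponding $X$-agents, then eliminate the residual configurations by explicit deviations (the paper first shows the four gadget agents form coalitions only among themselves and then enumerates sub-partitions of $\{\x,\nx,\xt,\xf\}$; you instead characterize $\pi^*(\xt)$ and $\pi^*(\xf)$ directly and handle a possibly non-singleton $\pi^*(\xl)$ in the second step, which is a perfectly workable variant). Your second step is correct: if $\xt,\xf\notin\pi^*(\xl)$, then by \Cref{lem_ashg2:left_right} and Pareto optimality every remaining coalition-mate of $\xl$ is a real, non-complementary agent valued at $0$, so $u_{\xl}(\pi^*)=0$ and forming $\{\xl,\xt\}$ is a Pareto improvement.

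There is, however, one concrete hole in your first step. You assert that any $S=\pi^*(\xt)$ outside $\{\{\xt\},\{\xt,\x\},\{\xt,\nx\}\}$ must contain a stranger $\ga\notin\{\x,\nx\}$, but $S=\{\xt,\x,\nx\}$ is a counterexample: it contains no stranger, and in it $\xt$ has utility $2>0$, so your count (``$\xt$ together with every stranger strictly gains'') does not apply---$\xt$ is in fact the unique \emph{loser} upon dissolution. The case is still killed by the same dissolution move, but with a different tally: $\x$ and $\nx$ each sit at $-\infty$ because of their complementary partner and therefore both strictly gain, outvoting the single loser $\xt$ by $2$ to $1$ (and symmetrically for $\{\xf,\x,\nx\}$). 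You need to add this case explicitly, both to justify the claimed membership $\pi^*(\xt)\in\{\{\xt\},\{\xt,\x\},\{\xt,\nx\}\}$ and because your final assembly of the two admissible configurations relies on it. With that case added, the proof is complete.
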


\begin{proof} 
By \Cref{lem_ashg2:C_coalition,lem_ashg2:Y_coalition,lem_ashg2:left_right}, the only coalitions of right-side agent that were not characterized, yet, are $X$-, $X_t$-, and $X_f$-agents. 
If a coalition contains noncorresponding agents, then splitting the coalition is a Pareto improvement.
Hence, agents in $\{\x,\nx,\xt,\xf\}$ form coalitions among themselves.

Of these, every coalition of size at least $3$ can contain at most one agent with nonnegative utility.
Hence, dissolving this coalition into singletons is more popular.
Moreover, $\{\x,\nx\},\{\xt,\xf\}\notin \pi^*$ as dissolving these coalitions would yield a Pareto improvement.
Finally, in any situation different from the two possibilities detailed in the lemma, we would have a singleton $X$-agent and a corresponding singleton $X_t$- or $X_f$-agent, and merging them would yield a Pareto improvement.
\qed
\end{proof}

We now have enough information about the structure of $\pi^*$ to deduce a satisfying assignment to the $\QSAT$ instance from it.
First, we summarize what we know about $\pi^*$.

\begin{itemize}
    \item For the left side, we have that
    \begin{itemize}
        \item $\{b_1\}\cup T_1\in\pi^*$ and $\{b_2\}\cup T_2\in\pi^*$ or 
        \item $\{b_2\}\cup T_1\in\pi^*$ and $\{b_1\}\cup T_2\in\pi^*$ (\Cref{lem_ashg2:right_side_arrangement,lem_ashg2:left_right}).
    \end{itemize}
    \item We have that $C\cup C'\in\pi^*$ (\Cref{lem_ashg2:C_coalition}).
    \item For $\y\in Y$, we have that $\{\y,\yp\}\in\pi^*$ (\Cref{lem_ashg2:Y_coalition}).
    \item For $x\in\XX$, we have that
\begin{itemize}
    \item $\{\x,\xt\}\in\pi^*$ and $\{\nx,\xf\}\in\pi^*$, or 
    \item $\{\x,\xf\}\in\pi^*$ and $\{\nx,\xt\}\in\pi^*$ (\Cref{lem_ashg2:X_coalition}).
\end{itemize}
\end{itemize}

This allows us to define the following truth assignment $\tau_{\XX}$ to the $\XX$ variables. 
For each $x\in \XX$, $x$ is assigned \Tru{} if and only if $\pi^*(\x)=\{\x,\xt\}$ (by \Cref{lem_ashg2:X_coalition}, this is a valid assignment). 
We claim that $\tau_{\XX}$ is a satisfying assignment to the $\QSAT$ instance, i.e., that $\psi(\tau_{\XX},\tau_{\YY}) = \Tru$ for all truth assignments $\tau_{\YY}$ to the $\YY$ variables.

Assume otherwise, namely that there exists a truth assignment $\tau_{\YY}$ to the $\YY$ variables such that $\psi(\tau_{\XX},\tau_{\YY})=\Fals$. 
We will show that this allows us to find a partition that is more popular than $\pi^*$. Recalling \Cref{lem_ashg2:right_side_arrangement}, let us assume without loss of generality that $\{\{b_1\}\cup T_1,\{b_2\}\cup T_2\}\subseteq\pi^*$.
Consider the partition $\pi$ obtained from $\pi^*$ as follows.
\begin{itemize}
    \item Extract the following agents from their respective coalitions, and place them all together in a new coalition $S$:
    \begin{itemize}
        \item All $\xl\in X$ such that $\{\xl,\xt\}\in \pi^*$, for some $\xt\in\Xt$.
        \item All $\y\in Y$ such that the literal represented by $\y$ is assigned \Tru{} by $\tau_{\YY}$.
        \item All $C$-agents.
        \item Agent $b_1$.
    \end{itemize}
    \item Extract $b_2$ from her coalition, and set $\pi(b_2)=\{b_2\}\cup T_1$.    
\end{itemize}
Note that the new coalition $S$ consists of $2n+m+1$ agents.
Moreover, by definition of $\tau_{\XX}$, if $\tau_{\XX}$ assigns \Tru{} to $x$, then $S$ contains $\x$ and if $\tau_{\XX}$ assigns \Fals{} to $x$, then $S$ contains $\nx$.
In addition, for $y\in \YY$, $S$ contains $\yfix$ if $\tau_{\YY}$ assigns \Tru{} to $y$ and $S$ contains $\nyfix$ if $\tau_{\YY}$ assigns \Fals{} to $y$.

We compute the popularity margin between $\pi$ and $\pi^*$.

\begin{itemize}
	\item Fix $x\in\XX$, and let $\alpha\in\{x,\neg x\}$ such that $\xl\in X\cap S$. We have $u_{\xl}(\pi)=1>\frac{1}2=u_{\xl}(\pi^*)$, and $u_{\xt}(\pi)=0<1=u_{\xt}(\pi^*)$. Hence $\phi_{\{\xl,\xt\}}(\pi^*,\pi)=0$.
	\item Let $\y\in Y\cap S$. We have $u_{\y}(\pi)=1>\frac{1}2=u_{\y}(\pi^*)$, and $u_{\yp}(\pi)=0<1=u_{\yp}(\pi^*)$. Hence $\phi_{\{\y,\yp,\ypp\}}(\pi^*,\pi)=0$.
	\item Let $c\in\CC$. Since $\psi(\tau_{\XX},\tau_{\YY})=\Fals$, we have that $c$ has at most two literals in $S$ assigned \Tru{} by $\tau_{\XX}$ and $\tau_{\YY}$. Hence, since the $X$- and $Y$-agents in $S$ correspond to the literals assigned \Tru{} by $\tau_{\XX}$ and $\tau_{\YY}$, there are at most two $X$- or $Y$-agents in $S$ to whom $\ca$ assigns value $-2$ (to the other $X$- or $Y$-agents she assigns $0$). 
	By contrast, for all $c'\in C'$ we have $u_{c'}(\pi)=0<m=u_{c'}(\pi^*)$. 
	Therefore, $\phi_{C\cup C'}(\pi^*,\pi)=-1$ (since there are $m-1$ $C'$-agents and $m$ $C$-agents). 
	\item For all $t_1\in T_1$ we have $u_{t_1}(\pi)=2>1=u_{t_1}(\pi^*)$, and for all $t_2\in T_2$ we have $u_{t_2}(\pi)=0<2=u_{t_2}(\pi^*)$. Thus, we have $\phi_{T_1\cup T_2}(\pi^*,\pi)=0$.
        \item Agents $b_1$ and $b_2$ are indifferent, since in both partitions their coalitions contain the same amount of agents who they value positively (which all contribute equally to their utility), and no agents who they value negatively.
	\item All other agents are in the same coalition in $\pi$ and $\pi^*$ and they are therefore indifferent between the two partitions. 
\end{itemize}

	Altogether, we conclude that $\phi(\pi^*,\pi)=-1$, in contradiction to $\pi^*$ being a popular partition.
	Hence, $(\XX,\YY,\psi)$ is a Yes-instance of \QSAT.
\qed

\section{Detailed Proof of \Cref{thm_fhg}}
\label{apndx_fhg}
In this section, we provide the proof of \Cref{thm_fhg}.
The proof is based on the reduction from \QSAT{} constructed in \Cref{sec_fhg:setup}. 

To complete the description of the reduced instance, we now specify the exact valuation functions of the agents. 
A summary of the agents' valuations can be seen in \Cref{table_FHG}.
Whenever a value assigned by some agent $\ga$ to some agent $\gap$ is not specified, it means $\ga$ assigns a value of $0$ to $\gap$.

\begin{itemize}
    \item For any variable $x\in\XX$:
    \begin{itemize}
        \item $\x$ and $\nx$ both assign value $v_X$ to $\xt$, and $\frac{9}{10}v_X$ to $\xf$ and $\xfp$, where $v_X=2\frac{4n+m+1}{4n+m+2}$. 
        Furthermore, they assign 1 to any real agent, except to their counterpart (to whom they assign value $0$). 
        Lastly, they assign value $2$ to~$j$.
        \item $\xt,\xf$ and $\xfp$ assign value $1$ to $\x$ and $\nx$. In addition, $\xf$ and $\xfp$ assign value $1$ to each other.
    \end{itemize}
    
    \item For any $Y$-agent $\y$:
    \begin{itemize}
        \item Agent $\y$ assigns value $\frac{3}{2}$ to $\yp$ and $\ypp$. Furthermore, she assigns value $1$ to any real agent, except for her counterpart $\ny$ (to whom she assigns value $0$). Lastly, she assigns value $2$ to $j$.
        \item $\yp$ and $\ypp$ assign value $2$ to each other; furthermore, they assign value $1$ to $\y$.
    \end{itemize}
    
    \item For any clause $c\in\CC$:
    \begin{itemize}
        \item Agent $\ca$ assigns value $0$ to any $X$-agent or $Y$-agent corresponding to a literal that was in the clause $c$; to any other $X$- or $Y$-agent, she assigns $v_C=\frac{2n+1}{2n-2.5}$; to all $C$-agents and to $j$, she assigns value $1$; and lastly, to her corresponding $L_6$-agent $\la^6$, she assigns value $2$. 
        \item For all $i\in\{1, \dots,5\}$, $\la^i$ assigns value $1$ to $\ra$.
        \item Agent $\la^6$ assigns value $1$ to $\ra$, and value $\frac{3}{4}$ to $\ca$.
        \item Agent $\ra$ assigns value $1$ to her six corresponding leaves $\la^1\dots\la^6$.
    \end{itemize}
    \item $j$ assigns value $1$ to any real agent, and value $v_J=\frac{3(2n+m-1)}{2(2n+m)}$ to both $j'$ and $j''$.
    \item $j'$ and $j''$ assign value $2$ to each other; furthermore, they assign value $1$ to $j$.
\end{itemize}

\begin{table}[t]
\centering
\caption{Summary of values of FHG reduction (assigned by row agents to column agents). When two values $v_1/v_2$ appear, $v_1$ refers to corresponding agents, and $v_2$ to noncorresponding.}
\label{table_FHG}
\setlength{\tabcolsep}{6pt}
\renewcommand{\arraystretch}{1.1}
\begin{tabular}{ c c c c c c c c c c c c c c c } 
\hline
& \textbf{$\xl$} & \textbf{$\y$} & \textbf{$\ca$} & \textbf{$j$} & \textbf{$j'$} & \textbf{$j''$} & \textbf{$\xt$} & \textbf{$\xf$} & \textbf{$\xfp$} & \textbf{$\yp$} & \textbf{$\ypp$} & \textbf{$\la^{1\to 5}$} & \textbf{$\la^6$} & \textbf{$\ra$} \\
\hline
$\xl$ & $0/1$ & $1$ & $1$ & $2$ & $0$ & $0$ & $v_X/0$ & $\frac{9}{10}v_X/0$ & $\frac{9}{10}v_X/0$ & $0$ & $0$ & $0$ & $0$ & $0$ \\ 
 
$\y$ & $1$ & $0/1$ & $1$ & $2$ & $0$ & $0$ & $0$ & $0$ & $0$ & $\frac{3}{2}/0$ & $\frac{3}{2}/0$ & $0$ & $0$ & $0$ \\ 
 
$\ca$ & $0/v_C$ & $0/v_C$ & $1$ & $1$ & $0$ & $0$ & $0$ & $0$ & $0$ & $0$ & $0$ & $0$ & $2/0$ & $0$ \\ 
 
$j$ & $1$ & $1$ & $1$ & $0$ & $v_J$ & $v_J$ & $0$ & $0$ & $0$ & $0$ & $0$ & $0$ & $0$ & $0$ \\ 

$j'$ & $0$ & $0$ & $0$ & $1$ & $0$ & $2$ & $0$ & $0$ & $0$ & $0$ & $0$ & $0$ & $0$ & $0$ \\ 

$j''$ & $0$ & $0$ & $0$ & $1$ & $2$ & $0$ & $0$ & $0$ & $0$ & $0$ & $0$ & $0$ & $0$ & $0$ \\ 

$\xt$ & $1/0$ & $0$ & $0$ & $0$ & $0$ & $0$ & $0$ & $0$ & $0$ & $0$ & $0$ & $0$ & $0$ & $0$ \\ 

$\xf$ & $1/0$ & $0$ & $0$ & $0$ & $0$ & $0$ & $0$ & $0$ & $1/0$ & $0$ & $0$ & $0$ & $0$ & $0$ \\

$\xfp$ & $1/0$ & $0$ & $0$ & $0$ & $0$ & $0$ & $0$ & $1/0$ & $0$ & $0$ & $0$ & $0$ & $0$ & $0$ \\

$\yp$ & $0$ & $1/0$ & $0$ & $0$ & $0$ & $0$ & $0$ & $0$ & $0$ & $0$ & $2/0$ & $0$ & $0$ & $0$ \\ 

$\ypp$ & $0$ & $1/0$ & $0$ & $0$ & $0$ & $0$ & $0$ & $0$ & $0$ & $2/0$ & $0$ & $0$ & $0$ & $0$ \\ 

$\la^{1\to 5}$ & $0$ & $0$ & $0$ & $0$ & $0$ & $0$ & $0$ & $0$ & $0$ & $0$ & $0$ & $0$ & $0$ & $1/0$ \\ 

$\la^6$ & $0$ & $0$ & $\frac{3}{4}/0$ & $0$ & $0$ & $0$ & $0$ & $0$ & $0$ & $0$ & $0$ & $0$ & $0$ & $1/0$ \\ 

$\ra$ & $0$ & $0$ & $0$ & $0$ & $0$ & $0$ & $0$ & $0$ & $0$ & $0$ & $0$ & $1/0$ & $1/0$ & $0$ \\ 
\hline
\end{tabular}
\end{table}
In the subsequent two sections, we will prove that satisfiability of the source instance implies the existence of a popular partition and vice versa.

Before we proceed, we start with a general proposition that is useful for understanding the structure of the preferences in FHGs.
It states that an agent prefers a coalition containing an additional agent if and only if this agent yields a value greater than her utility for the current coalition.
This will be useful when computing popularity margins. 
We note that this proposition will usually be used implicitly, and therefore an intuitive understanding of it may be helpful throughout the proof.

\begin{proposition}\label{prop:FHGpreferences}
	Let $(N,v)$ be an FHG, $C\subseteq N$, $i\in C$, and $j\in N\setminus C$.
	Then it holds that $u_i(C\cup\{j\})\ge u_i(C)$ if and only if $v_i(j)\ge u_i(C)$.
\end{proposition}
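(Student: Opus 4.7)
The plan is to prove this by direct algebraic manipulation using the definition of utility in FHGs. Let $k = |C|$ and let $s = \sum_{\ell \in C \setminus \{i\}} v_i(\ell)$, so that by definition $u_i(C) = s/k$. Since $j \notin C$, we have $|C \cup \{j\}| = k+1$ and $\sum_{\ell \in (C \cup \{j\}) \setminus \{i\}} v_i(\ell) = s + v_i(j)$, yielding $u_i(C \cup \{j\}) = (s + v_i(j))/(k+1)$.

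Next I would rearrange the desired inequality. The condition $u_i(C\cup\{j\})\ge u_i(C)$ becomes $(s+v_i(j))/(k+1) \ge s/k$. Since $k$ and $k+1$ are both positive integers ($i \in C$ guarantees $k \ge 1$), I can cross-multiply without flipping the inequality to obtain $k(s+v_i(j)) \ge (k+1)s$, which simplifies to $kv_i(j) \ge s$, i.e., $v_i(j) \ge s/k = u_i(C)$. Each step is an equivalence, so the two inequalities are equivalent, yielding the claim.

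Honestly, there is no substantial obstacle here: the proposition is essentially the statement that adding an element to an average raises the average if and only if the new element exceeds the current average. The only care needed is to verify that $k \ge 1$ so that cross-multiplication is valid, which is guaranteed by $i \in C$. I would present the proof as a short computation along the lines above, stating both directions through the chain of equivalences.
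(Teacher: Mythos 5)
Your proof is correct and follows essentially the same route as the paper's: both express $u_i(C\cup\{j\})$ as $\bigl(u_i(C)|C| + v_i(j)\bigr)/(|C|+1)$ and clear the denominator to reduce the inequality to $v_i(j)\ge u_i(C)$. Your version just spells out the sums and the positivity of $|C|$ a bit more explicitly.
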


\begin{proof}
	It holds that $u_i(C\cup\{j\}) = \frac{u_i(C)|C|+v_i(j)}{|C|+1}$.
	Hence, $u_i(C\cup\{j\})\ge u_i(C)$ if and only if $u_i(C)|C|+v_i(j)\ge (|C|+1)u_i(C)$.
	This holds if and only if $v_i(j)\ge u_i(C)$.
    \qed
\end{proof}

We proceed with the consideration of FHGs with the goal of proving \Cref{thm_fhg}.

\subsection{Satisfiability Implies Popular Partition}
\label{apndx_fhg:if_satisfiable}
Consider an instance $(\XX,\YY,\psi)$ of \QSAT{} and its reduced FHG as described in the previous section.
Throughout this section, we assume there is a truth assignment $\tau_{\XX}$ to $\XX$ such that for all truth assignments $\tau_{\YY}$ to $\YY$ it holds that $\psi(\tau_{\XX},\tau_{\YY})=\Tru$. 
Consider the following partition of the agents, denoted by $\pi^*$:
\begin{itemize}
    \item For each clause $c\in\CC$, $\{\ca,\la^6\}\in\pi^*$ and $\{\ra,\la^1,\la^2,\la^3\}\in\pi^*$.
    \item All remaining $L$-agents (i.e., the $L_4$ and $L_5$-agents) form singleton coalitions.
    \item Agents $j$, $j'$, and $j''$ form a coalition together.
    \item Each $Y$-agent $\y$ forms a coalition with her corresponding agents $\yp\in Y'$ and $\ypp\in Y''$.
    \item For each variable $x\in \XX$, if $x$ is assigned \Tru{} by $\tau_{\XX}$ then $\{\{\x,\xt\},\{\nx,\xf,\xfp\}\}\subseteq\pi^*$, and if $x$ is assigned \Fals{} by $\tau_{\XX}$ then $\{\{\nx,\xt\},\{\x,\xf,\xfp\}\}\subseteq\pi^*$.
\end{itemize}

Our goal is to show $\pi^*$ is a popular partition. To that end, assume towards contradiction there exists a partition $\pi$ that is more popular than $\pi^*$. By \Cref{prop:wlogPO}, we may assume that $\pi$ is Pareto-optimal (note that, however, $\pi$ need not be popular).
We begin by proving the following three lemmas regarding the popularity margin on certain agent sets. 

\begin{lemma}
\label{lem_fhg1:CLR_balance}
Let $c\in \CC$. Then it holds that $\phi_{\{\ca,\ra,\la^1, \dots,\la^6\}}(\pi^*,\pi)\geq 0$.
\end{lemma}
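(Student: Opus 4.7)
The plan is a case analysis on the structure of $\pi$ restricted to $M = \{\ca, \ra, \la^1, \ldots, \la^6\}$. A key preliminary observation is that every agent in $M \setminus \{\ca\}$ has all positive valuations within $M$; consequently, for such $i$ the utility $u_i(\pi)$ is a function of $\pi(i) \cap M$ and $|\pi(i)|$ only. I would first record the baselines in $\pi^*$: $u_{\ca}(\pi^*) = 1$, $u_{\la^6}(\pi^*) = 3/8$, $u_{\ra}(\pi^*) = 3/4$, $u_{\la^i}(\pi^*) = 1/4$ for $i \in \{1,2,3\}$, and $u_{\la^j}(\pi^*) = 0$ for $j \in \{4,5\}$.

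From these baselines I would extract two structural facts. First, $\la^6$ strictly prefers $\pi$ only if $\ra \in \pi(\la^6)$ and the coalition is small (either $\{\la^6, \ra\}$, or of size at most $4$ while containing $\ca$); moreover, $\la^6$ is indifferent between the two partitions only when $\pi(\la^6) = \{\la^6, \ca\}$, in which case $\ca$ is indifferent as well. Second, $\ra$ strictly prefers $\pi$ only when $\pi(\ra) \subseteq \{\ra, \la^1, \ldots, \la^6\}$ contains at least four leaves and no other agent (by \Cref{prop:FHGpreferences} applied to $\ra$); this in particular forces $\ca \notin \pi(\ra)$ and $u_{\la^6}(\pi) \leq 1/5$, so $\la^6$ then prefers $\pi^*$ in this regime.

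The central case split is on whether $\la^6$ strictly prefers $\pi$. In the positive case, a short enumeration of the possible $\pi(\la^6)$ shows that $\ra$ and (whenever present) $\ca$ both prefer $\pi^*$, while leaves not joined to $\ra$ contribute favorably; the tally on $M$ works out to at least $+2$. In the negative case I would subdivide by $k = |\pi(\ra) \cap \{\la^1, \ldots, \la^6\}|$: for $k \in \{0,1,2,3\}$, $\ra$ does not strictly prefer $\pi$ and the votes gained by leaves joining $\ra$ (chiefly $\la^4, \la^5$) are balanced by $\la^1, \la^2, \la^3$ outside $\pi(\ra)$ together with $\la^6$; for $k \in \{4,5,6\}$, $\ra$ prefers $\pi$ but each of $\la^1, \la^2, \la^3$ has utility at most $1/5 < 1/4$ and prefers $\pi^*$, which together with $\la^6$'s preference for $\pi^*$ compensates the votes of $\ra, \la^4, \la^5$.

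The main obstacle is controlling $\ca$'s contribution, since $\ca$ has positive valuations of magnitude up to $v_C > 1$ to many agents outside $M$ and may well strictly prefer $\pi$. The key point that ties the argument together is the indifference observation above: whenever $\ca$ strictly prefers $\pi$, we must have $\pi(\la^6) \neq \{\la^6, \ca\}$, so in the negative case this forces $\la^6$ to strictly prefer $\pi^*$. The resulting extra vote for $\pi^*$ exactly offsets $\ca$'s vote, so the tallies above remain nonnegative throughout.
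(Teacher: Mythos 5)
Your proposal is correct and follows essentially the same route as the paper's proof: both rest on the observation that $\la^6$ can only match her $\pi^*$-utility of $\frac{3}{8}$ in the coalition $\{\la^6,\ca\}$ (where $\ca$ is then indifferent as well), which couples the votes of $\ca$ and $\la^6$, combined with an enumeration over the number of leaves in $\pi(\ra)$ to balance the remaining agents. The paper's top-level split is on whether $\la^6\in\pi(\ra)$ rather than on whether $\la^6$ strictly prefers $\pi$, but this is only a cosmetic reorganization of the same case analysis.
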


\begin{proof}
We make a case distinction based on whether $\la^6$ is in $\pi(\ra)$ or not.

\textit{Case 1:} Suppose $\la^6\notin\pi(\ra)$. 
In this case, it is easy to verify that we cannot rearrange the coalitions containing $\ra$ and $\la^1, \dots, \la^5$ in a more popular way than in $\pi^*$; namely, $\phi_{\{\ra, \la^1, \dots, \la^5\}}(\pi^*,\pi)\geq 0$.
Furthermore, the only agent apart from $\ra$ that $\la^6$ has a positive value for is $\ca$, and therefore her maximal utility in $\pi$ is~$\frac{3}{8}$ (which is what she obtains in $\pi^*$ as well). 
Hence, if $\pi(\la^6)=\pi^*(\la^6)=\{\la^6,\ca\}$ then $\la^6$ is clearly indifferent between $\pi$ and $\pi^*$, but in any other case $\la^6$ prefers $\pi^*$. 
Therefore, if $\ca$ prefers $\pi$, $\la^6$ prefers $\pi^*$, implying $\phi_{\{\la^6,\ca\}}(\pi^*,\pi)\geq 0$, and we are done.

\textit{Case 2:} Suppose $\la^6\in\pi(\ra)$. Observe that if $\pi(\ra)$ contains 4 or more leaves, then $|\pi(\ra)|>|\pi^*(\ra)|$ and so $\la^1$, $\la^2$, and $\la^3$ all prefer $\pi^*$ over $\pi$ (if they are not in $\pi(\ra)$ they get utility $0$ in $\pi$). 
In addition, $\la^6$ prefers $\pi^*$, since she gains utility at most $\frac{7}{24}$ being in $\pi(\ra)$ with 3 additional leaves (even if additionally $\ca\in\pi(\ra)$). 
So even if $\la^4$, $\la^5$, and $\ca$ prefer $\pi$, the statement of the lemma holds. 

If $\pi(\ra)$ contains at most $2$ leaves (one of which is $\la^6$), it is easy to see that $\phi_{\{\ra, \la^1, \dots, \la^5\}}(\pi^*,\pi)\geq 2$, so we are done even if $\ca$ and $\la^6$ prefer $\pi$.

Lastly, suppose $\pi(\ra)$ contains $3$ leaves (one of which is $\la^6$). If $\ca\in\pi(\ra)$, then $\ra, \la^1, \la^2$, and $\la^3$ prefer $\pi^*$, and we are done. 
Otherwise, the utility of $\la^6$ in $\pi$ is at most $\frac{1}{4}$, which implies she prefers $\pi^*$. 
Moreover, it is simple to check that $\phi_{\{\ra, \la^1, \dots, \la^5\}}(\pi^*,\pi)\geq 1$ in this case (by dividing into cases according to which leaves are in $\pi(\ra)$). 
Hence, even if $\ca$ prefers $\pi$ over $\pi^*$, we obtain the desired result.
\qed
\end{proof}

\begin{lemma}
\label{lem_fhg1:X_balance}
Let $x\in \XX$. Then it holds that $\phi_{\{a_x,a_{\lnot x},\xt,\xf,\xfp\}}(\pi^*,\pi)\geq 0$.
\end{lemma}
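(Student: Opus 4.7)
By the symmetric definition of $\pi^*$ with respect to the truth value of $x$, I may assume without loss of generality that $\{\x, \xt\}$ and $\{\nx, \xf, \xfp\}$ both belong to $\pi^*$; the baseline utilities are then $u_{\x}(\pi^*) = v_X/2$, $u_{\nx}(\pi^*) = 3v_X/5$, $u_{\xt}(\pi^*) = 1/2$, and $u_{\xf}(\pi^*) = u_{\xfp}(\pi^*) = 2/3$. The key technical step is to characterize precisely when each of the three structure agents strictly prefers $\pi$. Because $\xt$ values only $\x$ and $\nx$ (each at $1$) and $\xf, \xfp$ additionally value each other (also at $1$), the inequality $u_{\xt}(\pi) > 1/2$ reduces to $|\pi(\xt) \cap \{\x, \nx\}|/|\pi(\xt)| > 1/2$, whose unique solution is $\pi(\xt) = \{\xt, \x, \nx\}$; analogously $\xf$ strictly prefers $\pi$ if and only if $\pi(\xf) = \{\xf, \xfp, \x, \nx\}$, and the same coalition is the unique witness for $\xfp$. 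Two incompatibilities follow: if $\xt$ strictly prefers $\pi$ then $\x \in \pi(\xt)$ forbids the size-four coalition required for $\xf$ or $\xfp$; and if $\xf$ strictly prefers $\pi$ then $\xfp$ shares its coalition and also strictly prefers $\pi$. The analysis therefore splits into three disjoint cases.

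In Case A ($\xt$ strictly prefers $\pi$), the coalition $\{\xt, \x, \nx\}$ forces $u_{\x} = u_{\nx} = v_X/3$, strictly below both baselines, while $\xf, \xfp$, cut off from $\x, \nx$, cannot exceed utility $1/2$ and thus strictly prefer $\pi^*$; hence $\phi_{\{\x,\nx,\xt,\xf,\xfp\}}(\pi^*, \pi) = 3$. In Case B ($\xf$ and $\xfp$ both strictly prefer $\pi$), the joint coalition $\{\xf, \xfp, \x, \nx\}$ gives $u_{\x} = u_{\nx} = 9v_X/20$, which lies below both $v_X/2$ and $3v_X/5$, while $\xt$, deprived of $\x$ and $\nx$, receives utility $0$; hence $\phi_{\{\x,\nx,\xt,\xf,\xfp\}}(\pi^*, \pi) = 1$.

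In Case C, where none of $\xt, \xf, \xfp$ strictly prefers $\pi$, the task is to offset every strict $\pi$-preference among $\{\x, \nx\}$ by a strict $\pi^*$-preference elsewhere in the set. A short enumeration shows that $\x$ strictly prefers $\pi$ only via $\pi(\x) = \{\x, j\}$ or via a coalition of size at least $3$ containing some agent from $\{j, \xt, \xf, \xfp\}$; the list for $\nx$ further excludes every size-two coalition, since $u_{\nx}(\pi^*) = 3v_X/5 > 1 \geq u_{\nx}(\{\nx,j\})$ whenever $v_X > 5/3$, which holds throughout the reduction. In any such size-$\geq 3$ coalition, each structure agent $s \in \{\xt, \xf, \xfp\}$ that lies inside it suffers a utility strictly below its baseline---placing $s$ in $N(\pi^*, \pi) \cap \{\x,\nx,\xt,\xf,\xfp\}$---except in a handful of exceptional configurations (most notably $\pi(\x) = \{\x, \xf, \xfp\}$, in which $\xf, \xfp$ are merely indifferent) that I handle individually by locating a compensating agent outside $\pi(\x)$. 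If both $\x$ and $\nx$ strictly prefer $\pi$, then since $j$ can belong to at most one of their coalitions, at least one of them must employ several structure agents in a coalition of size $\geq 3$, which supplies the needed compensating agents. If only one of them---say $\x$---strictly prefers $\pi$, an incompatibility argument using the characterizations of structure-agent indifference ($\xt$ indifferent forces $\pi(\xt) = \{\xt, \nx\}$, while $\xf$ or $\xfp$ indifferent forces $\pi(\xf) = \{\xf, \xfp, \nx\}$; these place competing demands on $\nx$) produces the required compensating agent. The main obstacle is this Case C book-keeping, which leans throughout on $v_X = 2(4n+m+1)/(4n+m+2) \in (1, 2)$ to eliminate borderline indifferences and to match every strict $\pi$-preference among $\{\x, \nx\}$ to a strict $\pi^*$-preference within $\{\x, \nx, \xt, \xf, \xfp\}$.
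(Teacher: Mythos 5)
Your overall strategy is the same as the paper's: pin down exactly which coalitions allow $\xt$ (only $\{\xt,\x,\nx\}$) and $\xf,\xfp$ (only $\{\xf,\xfp,\x,\nx\}$) to strictly prefer $\pi$, dispose of those situations by direct computation, and then fight through the remaining case where no structure agent strictly gains. Your Cases A and B are correct and correspond to the paper's opening observation and its Case~2. The problem is Case C, which you yourself flag as the main obstacle and leave as a sketch; the sketch does not close the following configuration. Take $\pi^*(\x)=\{\x,\xt\}$ and $\pi^*(\nx)=\{\nx,\xf,\xfp\}$, and let $\pi$ contain $\{\x,\xf,\xfp\}$ and $\{\nx,\xt,j\}$. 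Then $u_{\x}(\pi)=\tfrac{3}{5}v_X>\tfrac{1}{2}v_X=u_{\x}(\pi^*)$, and $u_{\nx}(\pi)=\tfrac{v_X+2}{3}>\tfrac{3}{5}v_X=u_{\nx}(\pi^*)$ (the latter since $v_X<2.5$), so both $X$-agents strictly prefer $\pi$; $\xf$ and $\xfp$ sit exactly at their baseline $\tfrac{2}{3}$ and are indifferent; and only $\xt$ prefers $\pi^*$. The restricted margin is therefore $1-2=-1$. Your claim that when both $\x$ and $\nx$ strictly prefer $\pi$, at least one of them must employ several structure agents in a coalition of size at least $3$, ``which supplies the needed compensating agents,'' fails here: $\x$ employs $\xf$ and $\xfp$, who compensate nothing, while $\nx$ employs only one structure agent from the set, namely $\xt$, yielding $+1$ against a deficit of $-2$.

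No purely local bookkeeping over $\{\x,\nx,\xt,\xf,\xfp\}$ can repair this, because the restricted margin genuinely is $-1$ in that configuration; the statement can only be rescued by observing that the configuration forces $j\in\pi(\nx)$, hence $j$ strictly prefers $\pi^*$ and can serve at most one variable gadget, so that globally $\phi(\pi^*,\pi)\geq 0$ and $\pi$ is not more popular after all---i.e., one must invoke an agent outside the five-agent set together with the standing hypothesis that $\pi$ is more popular. To be fair, the paper's own proof is equally terse at exactly this point: its Case~1, subcase $\pi(\xf)=\{\xf,\xfp,\x\}$, asserts that a contradiction arises ``no matter how we partition the remaining two agents,'' overlooking $\pi(\nx)=\{\nx,\xt,j\}$. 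So the gap in your Case C is shared with the published argument, but as written it does not establish the lemma.
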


\begin{proof}
Assume towards contradiction that 
\begin{equation}
\label{eq_lem_fhg1:X_balance}
\phi_{\{\x,\nx,\xt,\xf,\xfp\}}(\pi^*,\pi)<0\text.
\end{equation}
Without loss of generality, assume that $\pi^*(\x)=\{\x,\xt\}$ and $\pi^*(\nx)=\{\nx,\xf,\xfp\}$. 
Observe that the only way in which $\xf$ or $\xfp$ prefers $\pi$ is if $\pi(\xf)=\{\xf,\xfp,\x,\nx\}$.
In this case, it holds that $u_{\xt}(\pi)=0$ and $u_{\x}(\pi)=u_{\nx}(\pi)=\frac{9}{20}v_X$. 
Hence, $\x$, $\nx$, and $\xt$ prefer $\pi^*$, contradicting Inequality~\ref{eq_lem_fhg1:X_balance}. 
Thus, $\xf$ and $\xfp$ are either indifferent or prefer $\pi^*$. 
We consider two cases accordingly.

\textit{Case 1:} Assume that $\xf$ or $\xfp$ are indifferent between $\pi$ and $\pi^*$. Without loss of generality assume that $\xf$ is indifferent, i.e., $u_{\xf}(\pi)=\frac{2}{3}$. 
Then it must be that $\pi(\xf)\in \{S_1, S_2, S_3\}$ where $S_1=\{\xf,\xfp,\nx\}$, $S_2=\{\xf,\xfp,\x\}$, and $S_3=\{\xf,\x,\nx\}$. It is easy to verify that for all three of these possible coalitions, no matter how we partition the remaining two agents, we get a contradiction to Inequality~\ref{eq_lem_fhg1:X_balance}.

\textit{Case 2:} Assume that $\xf$ and $\xfp$ both prefer $\pi^*$ over $\pi$. Then $\x$, $\nx$, and $\xt$ must all prefer $\pi$ in order for Inequality~\ref{eq_lem_fhg1:X_balance} to hold. 
However, in order for $\xt$ to prefer $\pi$, it must be that $\pi(\xt)=\{\xt,\x,\nx\}$, which implies that $\x$ and $\nx$ prefer $\pi^*$. This is again a contradiction.
\qed\end{proof}

\begin{lemma}
\label{lem_fhg1:Y_balance}
Fix agent $\y\in Y$. 
Then it holds that $\phi_{\{\y,\yp,\ypp\}}(\pi^*,\pi)\geq 0$.
\end{lemma}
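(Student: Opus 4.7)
The plan is to compute the reference utilities in $\pi^*$ and then exploit very tight bounds on the utilities of $\yp$ and $\ypp$ in any other partition. A direct calculation from the valuations gives $u_\y(\pi^*) = u_{\yp}(\pi^*) = u_{\ypp}(\pi^*) = 1$: for $\y$ it is $(3/2 + 3/2)/3$, and for $\yp, \ypp$ it is $(1+2)/3$. So indifference between $\pi^*$ and $\pi$ for each of these three agents amounts to obtaining utility exactly $1$.

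The key observation is that $\yp$'s only positive valuations are $v_{\yp}(\y)=1$ and $v_{\yp}(\ypp)=2$; all other valuations are $0$. Hence, for any coalition $S\ni \yp$, the numerator of $u_{\yp}(S)$ is at most $3$, attained only if both $\y,\ypp\in S$. A short case analysis on whether $\y$ and/or $\ypp$ belong to $\pi(\yp)$ shows that $u_{\yp}(\pi) \le 1$, with equality only when $\pi(\yp) \in \{\{\y,\yp,\ypp\}, \{\yp,\ypp\}\}$. In particular, $\yp$ never strictly prefers $\pi$ over $\pi^*$, and by the symmetric argument the same is true of $\ypp$.

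It then follows that the only scenario in which $\phi_{\{\y,\yp,\ypp\}}(\pi^*,\pi)<0$ could arise is when $\y$ strictly prefers $\pi$ while both $\yp$ and $\ypp$ are indifferent. I would rule this out as follows. Indifference of $\yp$ and $\ypp$ forces $\pi(\yp)$ and $\pi(\ypp)$ into the list above; consistency ($\yp\in\pi(\ypp)$ iff $\ypp\in\pi(\yp)$) reduces this to two possibilities: either $\pi(\yp)=\pi(\ypp)=\{\y,\yp,\ypp\}$, in which case $\pi(\y)=\pi^*(\y)$ and $\y$ is indifferent (contradiction), or $\pi(\yp)=\pi(\ypp)=\{\yp,\ypp\}$, in which case $\yp,\ypp\notin\pi(\y)$.

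The main obstacle lies in this last subcase: I need to show $u_\y(\pi) \le 1$ whenever $\pi(\y)$ avoids $\{\yp,\ypp\}$. The valuations $\y$ can still collect are $v_\y(j)=2$ on the single agent $j$ and at most $1$ on every other agent in her coalition (she assigns $0$ to her complementary $Y$-agent and to all structure agents other than $\yp,\ypp$). Writing $|\pi(\y)|=t+1$, the numerator of $u_\y(\pi)$ is therefore bounded by $2 + (t-1)\cdot 1 = t+1$ when $j\in\pi(\y)$, and by $t$ otherwise; in both cases $u_\y(\pi)\le 1$, contradicting the assumed strict preference. This completes the proof.
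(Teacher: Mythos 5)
Your proof is correct and follows essentially the same route as the paper's: both arguments note that $\yp$ and $\ypp$ already attain their maximum possible utility of $1$ in $\pi^*$, reduce the bad case to $\y$ strictly preferring $\pi$ while $\yp,\ypp$ are indifferent, and then show that indifference of $\yp,\ypp$ forces them out of $\pi(\y)$, whence $\y$'s utility is capped at $1$. Your version just spells out the numerator/denominator bounds slightly more explicitly than the paper does.
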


\begin{proof}
Assume towards contradiction that 
\begin{equation}
\label{eq_lem_fhg1:Y_balance}
\phi_{\{\y,\yp,\ypp\}}(\pi^*,\pi) < 0\text.
\end{equation}
First, we observe that $\yp$ and $\ypp$ cannot prefer $\pi$ over $\pi^*$, as they already get maximal utility in their coalition in $\pi^*$. 
So it must be that $\y$ prefers $\pi$ over $\pi^*$, while both $\yp$ and $\ypp$ must be indifferent (if one of them prefers $\pi^*$, we already get a contradiction to Inequality~\ref{eq_lem_fhg1:Y_balance}). 
Notice that $\y$ assigns value $2$ only to $j$, $\frac{3}{2}$ to $\yp$ and $\ypp$, and at most $1$ to any other agent. 
Hence, it is easy to verify that the only possibility where $\y$ has utility greater than $1$ is if she is with $j$ and with at least one of $\yp$, $\ypp$, in contradiction to $\yp$ and $\ypp$ being indifferent.
\qed
\end{proof}

So far, we have seen that the popularity margin with respect to $(\pi^*,\pi)$ for all agents apart from $j$, $j'$, and $j''$ is at least $0$. 
It is clear that $j'$ and $j''$ have maximal utility in $\pi^*$, and thus $\phi_{\{j',j''\}}(\pi^*,\pi)\geq 0$. 
Therefore, in order for $\pi$ to be more popular than $\pi^*$, $j$ must prefer $\pi$ over $\pi^*$; furthermore, according to \Cref{lem_fhg1:CLR_balance,lem_fhg1:X_balance,lem_fhg1:Y_balance}, we cannot afford any of the subsets of agents discussed in those lemmas to have a strictly positive popularity margin with respect to $(\pi^*,\pi)$: it must be exactly 0, as otherwise $\pi$ ``loses'' the advantage over $\pi^*$ it gains from $j$. 
By the same reasoning, $j'$ and $j''$ must also be indifferent between $\pi$ and $\pi^*$. 
The following proposition summarizes all of these observations.

\begin{proposition}
\label{prop:fhg1_observations}
The following statements are true.
\begin{enumerate}
    \item $\phi_j(\pi^*,\pi)<0$.\label{prop:fhg1_observations:j}
    \item $\phi_{j'}(\pi^*,\pi)=\phi_{j''}(\pi^*,\pi)=0$.\label{prop:fhg1_observations:j'_j''}
    \item For each $x\in\XX$, it holds that $\phi_{\{\x,\nx,\xt,\xf,\xfp\}}(\pi^*,\pi)=0$.\label{prop:fhg1_observations:X}
    \item For any $\y\in Y$, it holds that $\phi_{\{\y,\yp,\ypp\}}(\pi^*,\pi)=0$.\label{prop:fhg1_observations:Y}
    \item For any $c\in\CC$, it holds that $\phi_{\{\ca,\ra, \la^1, \dots, \la^6\}}(\pi^*,\pi)=0$.\label{prop:fhg1_observations:C}
\end{enumerate}
\end{proposition}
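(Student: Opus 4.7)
The plan is to decompose $\phi(\pi^*,\pi)$ into a sum of popularity margins over the disjoint agent groups already treated in \Cref{lem_fhg1:CLR_balance,lem_fhg1:X_balance,lem_fhg1:Y_balance}, together with the two remaining groups $\{j\}$ and $\{j',j''\}$, and then to squeeze the total into the single integer value $-1$.

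First, I would verify that the family consisting of the singleton $\{j\}$, the pair $\{j',j''\}$, the set $\{\x,\nx,\xt,\xf,\xfp\}$ for each $x\in\XX$, the set $\{\y,\yp,\ypp\}$ for each $\y\in Y$, and the set $\{\ca,\ra,\la^1,\dots,\la^6\}$ for each $c\in\CC$ partitions the entire agent set; a quick count confirms this covers all $11n+8m+3$ agents. By additivity of the popularity margin over disjoint unions, $\phi(\pi^*,\pi)$ equals the sum of the popularity margins on these groups.

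Next, I would assemble lower bounds on each summand. \Cref{lem_fhg1:CLR_balance,lem_fhg1:X_balance,lem_fhg1:Y_balance} directly supply nonnegative lower bounds for all clause-groups, $X$-groups, and $Y$-groups. For $\{j',j''\}$, I would briefly observe that both agents already attain their maximum achievable utility in $\pi^*$: one computes $u_{j'}(\pi^*)=u_{j''}(\pi^*)=\tfrac{1+2}{3}=1$, and their only positive valuations are $2$ on each other and $1$ on $j$, so by \Cref{prop:FHGpreferences} no coalition can give them utility strictly exceeding $1$. Hence neither $j'$ nor $j''$ can strictly prefer $\pi$ over $\pi^*$, yielding $\phi_{\{j',j''\}}(\pi^*,\pi)\geq 0$. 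Since $j$ is a single agent, $\phi_j(\pi^*,\pi)\in\{-1,0,1\}$ and in particular $\phi_j(\pi^*,\pi)\geq -1$. Summing the bounds gives $\phi(\pi^*,\pi)\geq -1$.

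Since $\pi$ is more popular than $\pi^*$ we have $\phi(\pi^*,\pi)<0$, and integrality of the popularity margin forces $\phi(\pi^*,\pi)=-1$. Plugging this back into the decomposition forces $\phi_j(\pi^*,\pi)=-1$ and every other summand to equal exactly $0$, yielding statements 1, 3, 4, and 5. For statement 2, I would combine $\phi_{\{j',j''\}}(\pi^*,\pi)=0$ with the earlier observation that neither $j'$ nor $j''$ can strictly prefer $\pi$ to conclude that both agents must be indifferent, so $\phi_{j'}(\pi^*,\pi)=\phi_{j''}(\pi^*,\pi)=0$. I do not anticipate any real obstacle: this proposition is a bookkeeping consolidation of the three preceding lemmas plus the elementary maximum-utility observation for $j'$ and $j''$, and the latter is immediate from the valuation table.
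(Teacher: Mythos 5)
Your proposal is correct and follows essentially the same route as the paper: the paper likewise combines \Cref{lem_fhg1:CLR_balance,lem_fhg1:X_balance,lem_fhg1:Y_balance} with the observation that $j'$ and $j''$ already attain their maximal utility in $\pi^*$, and then squeezes $\phi(\pi^*,\pi)<0$ so that $j$ must account for the entire negative margin while every other group's margin is forced to be exactly $0$. The partition count and the maximal-utility argument for $j'$ and $j''$ are both accurate, so no gaps remain.
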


By \Cref{prop:fhg1_observations}(\ref{prop:fhg1_observations:j'_j''}), we have that $\{j',j''\}\subseteq \pi(j'')\subseteq \{j,j',j''\}$.
However, if $j\in\pi(j'')$ we would get a contradiction to \Cref{prop:fhg1_observations}(\ref{prop:fhg1_observations:j}), and we can therefore assume that $j\notin\pi(j'')$.

We now proceed with showing that $\pi(j)$ cannot contain more than $n$ $X$-agents or more than $n$ $Y$-agents; we will even have more specific restrictions as to which $X$- and $Y$-agents can reside in a coalition with $j$, as formalized in the next two lemmas. 

\begin{lemma}
\label{lem_fhg1:xf_j}
Let $x\in\XX$. 
If $\{\nx, \xf, \xfp\}\in \pi^*$ then it holds that $\nx\notin\pi(j)$, and if $\{\x, \xf, \xfp\}\in \pi^*$ then it holds that $\x\notin\pi(j)$.
\end{lemma}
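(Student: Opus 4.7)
The plan is to prove the first part by contradiction; the second follows by the symmetric argument with the roles of $\x$ and $\nx$ interchanged. Assume $\{\nx,\xf,\xfp\}\in\pi^*$ and, for contradiction, $\nx\in\pi(j)$. I aim to derive $\phi_{\{\x,\nx,\xt,\xf,\xfp\}}(\pi^*,\pi)\geq 1$, contradicting \Cref{prop:fhg1_observations}(\ref{prop:fhg1_observations:X}). To start, I record the baseline utilities $u_\x(\pi^*)=v_X/2$, $u_{\xt}(\pi^*)=1/2$, $u_\nx(\pi^*)=3v_X/5$, and $u_{\xf}(\pi^*)=u_{\xfp}(\pi^*)=2/3$, read directly from the valuation table and the definition of $\pi^*$.

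The decisive step is a counting argument: I would show that $\pi(j)$ contains no structure agents. Writing $k$ and $s$ for the number of real and structure agents in $\pi(j)\setminus\{j\}$, \Cref{prop:fhg1_observations}(\ref{prop:fhg1_observations:j}) gives $u_j(\pi)=k/(1+k+s)>(2n+m-1)/(2n+m)$, which rearranges to $k>(2n+m-1)(1+s)$. Combined with $k\leq 4n+m$ (the total number of real agents) and $m\geq 2$, this forces $s=0$, so in particular $\xt,\xf,\xfp\notin\pi(j)$. I then extract two immediate consequences. First, since $v_\nx(j)=2$, $v_\nx(\x)=0$, and $v_\nx$ equals $1$ on every other real agent in $\pi(j)$, a direct computation gives $u_\nx(\pi)\leq 1$; the algebraic inequality $3v_X/5>1$ reduces to $4n+m>4$ and holds by $n,m\geq 2$, so $\nx$ strictly prefers $\pi^*$. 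Second, since the only positive valuations of $\xt$ are on $\{\x,\nx\}$ and $\xt\notin\pi(j)$ means $\nx\notin\pi(\xt)$, we have $u_{\xt}(\pi)\leq 1/|\pi(\xt)|\leq 1/2$, so $\xt$ weakly prefers $\pi^*$.

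The remainder is a case distinction on $\pi(\xf)$ and $\pi(\xfp)$. A short calculation using $\nx\notin\pi(\xf)$ yields $u_{\xf}(\pi)\leq 2/3$ with equality only when $\pi(\xf)=\{\x,\xf,\xfp\}$, and analogously for $\xfp$; hence $\xf$ and $\xfp$ are simultaneously indifferent only in the configuration $\pi(\xf)=\pi(\xfp)=\{\x,\xf,\xfp\}$, and otherwise both strictly prefer $\pi^*$. In the indifferent configuration, $u_\x(\pi)=3v_X/5>v_X/2$ so $\x$ strictly prefers $\pi$, while $\x,\nx\notin\pi(\xt)$ forces $u_{\xt}(\pi)=0$ and $\xt$ strictly prefers $\pi^*$, giving margin $0+0+1-1+1=1$. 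In the other case, $\xf$ and $\xfp$ together contribute $+2$, and a brief enumeration of coalitions $\pi(\x)$ with $u_\x(\pi)>v_X/2$ shows each either coincides with $\{\x,\xf,\xfp\}$ (excluded) or contains $\xt$ with $|\pi(\x)|\geq 3$, forcing $u_{\xt}(\pi)\leq 1/3$ and $\xt$'s strict contribution of $+1$; combined with $\nx$'s $+1$, the margin is at least $+2$.

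The main obstacle is the counting step establishing $s=0$, which requires pairing the bound from \Cref{prop:fhg1_observations}(\ref{prop:fhg1_observations:j}) with the total count of real agents. Once this is in place, the tight algebraic comparisons involving $v_X=2(4n+m+1)/(4n+m+2)$ (in particular $3v_X/5>1$) and the short enumeration of coalitions $\pi(\x)$ for which $\x$ might strictly prefer $\pi$ are routine.
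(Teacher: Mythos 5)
Your proof is correct and reaches the same kind of contradiction as the paper (a nonzero popularity margin on $\{\x,\nx,\xt,\xf,\xfp\}$, violating \Cref{prop:fhg1_observations}), but it gets there by a genuinely different route. The paper argues directly by a case distinction on $\pi(\xf)$ without establishing anything about the composition of $\pi(j)$; in particular, in its Case~1 it bounds $u_{\nx}(\pi)$ by $\frac{2+v_X}{3}$ (the value of the coalition $\{\nx,\xt,j\}$) and compares this to $u_{\nx}(\pi^*)=\frac{3}{5}v_X$. You instead first prove the structural fact that $\pi(j)\setminus\{j\}$ consists of real agents only, which immediately gives $\xt,\xf,\xfp\notin\pi(j)$ and hence the clean bound $u_{\nx}(\pi)\le 1<\frac{3}{5}v_X$; your subsequent case analysis on $\pi(\xf),\pi(\xfp)$ then mirrors the paper's. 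Your route is arguably the more robust one: the comparison $\frac{2+v_X}{3}<\frac{3}{5}v_X$ that the paper's Case~1 leans on is actually false (it would require $v_X>2.5$ while $v_X<2$; the ``$\frac{6}{5}v_X$'' there appears to be a slip for $\frac{3}{5}v_X$), whereas your bound $u_{\nx}(\pi)\le 1$ genuinely holds once $\xt\notin\pi(j)$ is known, and your structural step is exactly what excludes the problematic coalition $\{\nx,\xt,j\}$. One point to tighten: the formula $u_j(\pi)=k/(1+k+s)$ in your counting step treats every structure agent as worth $0$ to $j$, which fails for $j'$ and $j''$ (valued $v_J>1$, so the counting alone does not rule them out); you must first invoke the fact, established in the paper immediately before the lemma from \Cref{prop:fhg1_observations}, that $j',j''\notin\pi(j)$. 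With that in hand your counting argument (using $k\le 4n+m$ and $m\ge 2$) is correct, and the closing enumeration of coalitions $\pi(\x)$ in your last case is not even needed, since $\xf$, $\xfp$, and $\nx$ already contribute $+3$ against at most $-1$ from $\x$.
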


\begin{proof}
Without loss of generality assume that $\{\nx, \xf, \xfp\}\in \pi^*$ (and therefore also $\{\x, \xt\}\in \pi^*$).
Assume towards contradiction that $\nx\in\pi(j)$. 
Observe that $\xf$ and $\xfp$ can only be indifferent between $\pi$ and $\pi^*$ if $\pi(\xf)=\{\xf,\xfp,\x\}$.
In any other case, they would prefer $\pi^*$ (if $\pi(\xf)=\{\xf,\xfp,\x,\nx,j\}$ then $u_{\xf}(\pi)=u_{\xfp}(\pi)=\frac{3}{5}$, whereas $u_{\xf}(\pi^*)=u_{\xfp}(\pi^*)=\frac{2}{3}$; other cases are simpler to rule out).
We make a case distinction to complete the proof.

\textit{Case 1:} If indeed $\pi(\xf)=\{\xf,\xfp,\x\}$, then $u_{\xt}(\pi)<\frac{1}{2}$ and so $\xt$ prefers $\pi^*$ over $\pi$. 
Furthermore, the maximal utility for $\nx$ would be achieved if $\pi(\nx)=\{\nx,\xt,j\}$, and therefore $u_{\nx}(\pi)\leq \frac{2+v_X}{3}<\frac{6}{5}v_X=u_{\nx}(\pi^*)$. Hence, we have a contradiction to \Cref{prop:fhg1_observations}(\ref{prop:fhg1_observations:X}).

\textit{Case 2:} If $\pi(\xf)\neq\{\xf,\xfp,\x\}$, then as mentioned above, $\xf$ and $\xfp$ both prefer $\pi^*$ over $\pi$. 
If $\pi(\xt)=\pi^*(\xt)=\{\xt,\x\}$, then we get a contradiction to \Cref{prop:fhg1_observations}(\ref{prop:fhg1_observations:X}). If not, we again have $u_{\xt}<\frac{1}{2}$ and so $\xt$ prefers $\pi^*$ over $\pi$, which once again contradicts \Cref{prop:fhg1_observations}(\ref{prop:fhg1_observations:X}).
\qed
\end{proof}

\begin{lemma}
\label{lem_fhg1:complementary_Y}
Let $\y\in Y$. Then it holds that $\ny\notin\pi(\y)$.
\end{lemma}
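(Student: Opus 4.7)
My plan is to proceed by contradiction. Suppose $\ny\in\pi(\y)$. The goal is to show that this forces $\phi_{\{\y,\yp,\ypp\}}(\pi^*,\pi)\ge 1$, contradicting \Cref{prop:fhg1_observations}(\ref{prop:fhg1_observations:Y}). The key starting observation is that in $\pi^*$ each of the three agents $\y,\yp,\ypp$ attains utility exactly $1$, and moreover both $\yp$ and $\ypp$ have total positive valuations summing to $3$ (value $1$ for $\y$ and value $2$ for the sibling $Y''$/$Y'$-agent). In particular, $1$ is already the maximum utility either $\yp$ or $\ypp$ can achieve in any coalition, so neither of them can strictly prefer $\pi$ over $\pi^*$.

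Next, I would establish two utility bounds under the assumption $\ny\in\pi(\y)$. First, if $\yp\in\pi(\y)$, then $\pi(\yp)=\pi(\y)$ contains $\{\y,\ny,\yp\}$; since $\yp$ assigns value $0$ to $\ny$ and to every agent outside $\{\y,\ypp\}$, the maximum utility of $\yp$ over such coalitions is attained at $\{\y,\ny,\yp,\ypp\}$ with value $3/4$, hence $u_{\yp}(\pi)<1$ and $\yp$ strictly prefers $\pi^*$. The symmetric argument shows $\ypp$ strictly prefers $\pi^*$ whenever $\ypp\in\pi(\y)$. Second, if both $\yp,\ypp\notin\pi(\y)$, then the only agents in $\pi(\y)$ to whom $\y$ assigns positive value are $j$ (value $2$) and non-counterpart real agents (value $1$); the most favourable such coalition is $\{\y,\ny,j\}\cup R$ with $R$ a set of real agents other than $\y,\ny$, giving $u_{\y}(\pi)=\frac{2+|R|}{3+|R|}<1$, so $\y$ strictly prefers $\pi^*$. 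A case distinction on the membership of $\yp$ and $\ypp$ in $\pi(\y)$ now yields $\phi_{\{\y,\yp,\ypp\}}(\pi^*,\pi)\ge 1$ in every case.

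The main obstacle is the subcase $\{\yp,\ypp\}\subseteq\pi(\y)$, since here $\y$ can be merely indifferent rather than strictly worse off: a direct calculation shows this happens precisely when $\pi(\y)=\{\y,\ny,\yp,\ypp,j\}\cup R$ for some set $R$ of real agents other than $\y,\ny$, which makes both the valuation sum and the size of $\y$'s coalition equal to $5+|R|$. In that configuration the contradiction has to be carried entirely by the $Y'$- and $Y''$-agents, so the argument hinges on the observation that any coalition containing $\{\y,\ny,\yp,\ypp\}$ has size at least $5$ while $\yp$'s internal valuation sum is only $3$, so $u_{\yp}(\pi),u_{\ypp}(\pi)\le 3/5<1$. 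The remaining bookkeeping is routine FHG ratio arithmetic, conveniently packaged by \Cref{prop:FHGpreferences}.
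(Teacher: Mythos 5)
Your proposal is correct and follows essentially the same route as the paper's proof: both argue by contradiction against $\phi_{\{\y,\yp,\ypp\}}(\pi^*,\pi)=0$ from \Cref{prop:fhg1_observations}, using that $\yp$ and $\ypp$ already attain their maximum utility of $1$ in $\pi^*$ and hence strictly prefer $\pi^*$ whenever either joins a coalition containing $\ny$, and that otherwise $\y$'s utility drops strictly below $1$ because $\ny$ dilutes the coalition. Your extra care in the subcase $\{\yp,\ypp\}\subseteq\pi(\y)$ (where $\y$ may be indifferent) is subsumed by the paper's first step, which already yields two strict losers there; the argument is sound.
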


\begin{proof} 
Assume towards contradiction that $\ny\in\pi(\y)$. 
If $\yp$ or $\ypp$ are in $\pi(\y)$, then $\yp$ and $\ypp$ both prefer $\pi^*$ over $\pi$, in contradiction to \Cref{prop:fhg1_observations}(\ref{prop:fhg1_observations:Y}). 
By the same reasoning we have that $\neg \yp,\neg \ypp\notin\pi(\y)$. 
Hence, we get $u_{\y}(\pi)<1=u_{\y}(\pi^*)$: 
indeed, apart from $\yp$ and $\ypp$, $\y$ assigns positive value only to the real agents (except for $\ny$), and $j$.
Thus, since $\ny\in\pi(\y)$, no matter which subset of them are in $\pi(\y)$,  $\y$ will have utility less than $1$. 
Since $\yp$ and $\ypp$ already get maximal utility in $\pi^*$, this gives a contradiction to \Cref{prop:fhg1_observations}(\ref{prop:fhg1_observations:Y}).
\qed
\end{proof}

Now, let us consider $\pi(j)$. As discussed earlier, we have that $j',j''\notin\pi(j)$, and yet by \Cref{prop:fhg1_observations}(\ref{prop:fhg1_observations:j}), we have $u_j(\pi)>u_j(\pi^*)=\frac{2n+m-1}{2n+m}$. Note that $j$ assigns $1$ to all real agents and $0$ to any other agent (excluding $j'$ and $j''$). By \Cref{lem_fhg1:xf_j,lem_fhg1:complementary_Y}, there are at most $n$ $X$-agents and at most $n$ $Y$-agents in $\pi(j)$. 
Hence, in order for $u_j(\pi)>\frac{2n+m-1}{2n+m}$ to hold, we must have that $n$ $X$-agents, $n$ $Y$-agents, all $C$-agents, and no other agent are in $\pi(j)$ (in particular, note that this implies $L$- and $R$- agents have no $C$-agents in their coalitions). 
For each clause $c\in \CC$, let $k_c\in\{0,1,2,3\}$ denote the number of $X$- and $Y$-agents in $\pi(\ca)$ who represent literals from the original clause~$c$. Then we have that:
\begin{equation}
\label{eq_fhg_d1_c_utility}
u_{\ca}(\pi)=\frac{(2n-k_c)\cdot v_C+m}{2n+m+1}=\frac{(2n-k_c)\cdot (\frac{2n+1}{2n-2.5})+m}{2n+m+1}
\end{equation}
Our next lemma shows that all $C$-agents must strictly prefer $\pi$ over $\pi^*$.

\begin{lemma}
\label{lem_fhg1:all_c_prefer_pi}
Let $c\in\CC$. 
Then it holds that $\phi_{\ca}(\pi^*,\pi)<0$.
\end{lemma}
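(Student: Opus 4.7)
My plan is as follows. The statement I want is $\phi_{\ca}(\pi^*,\pi)<0$. By item~(\ref{prop:fhg1_observations:C}) of \Cref{prop:fhg1_observations} I already know that $\phi_{\{\ca,\ra,\la^1,\ldots,\la^6\}}(\pi^*,\pi)=0$, so it is enough to establish the complementary bound
\[
\phi_{\{\ra,\la^1,\ldots,\la^6\}}(\pi^*,\pi)\;\ge\;1,
\]
from which $\phi_{\ca}(\pi^*,\pi)\le -1$ follows immediately. The entire argument therefore reduces to analysing $\pi$ restricted to this seven-agent set.

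The key structural input, established just before the lemma, is that $\pi(j)=\pi(\ca)$ consists of $j$, exactly $n$ $X$-agents, exactly $n$ $Y$-agents, and all $C$-agents; in particular it contains no $L$-agent. Hence $\ca\notin\pi(\la^i)$ for every $i$, so $\la^6$ has lost her only positive-value neighbour besides $\ra$, and the only way she can gain utility in $\pi$ is through $\ra$. Writing $S=\pi(\ra)$, $s=|S|$, and $\ell$ for the number of leaves among $\la^1,\ldots,\la^6$ inside $S$, the utilities under $\pi$ are $u_{\ra}(\pi)=\ell/s$, $u_{\la^i}(\pi)=1/s$ for $\la^i\in S$, and $u_{\la^i}(\pi)=0$ for $\la^i\notin S$ (again using $\ca\notin\pi(\la^i)$). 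These are to be compared against the $\pi^*$-utilities $u_{\ra}(\pi^*)=3/4$, $u_{\la^i}(\pi^*)=1/4$ for $i\in\{1,2,3\}$, $u_{\la^i}(\pi^*)=0$ for $i\in\{4,5\}$, and $u_{\la^6}(\pi^*)=3/8$.

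I then case-split on $s$ and on the identity of the leaves in $S$, governed by the three elementary thresholds $1/s>1/4\Leftrightarrow s\le 3$, $1/s>3/8\Leftrightarrow s\le 2$, and $\ell/s>3/4\Leftrightarrow \ell>3s/4$. In each configuration the seven individual popularity margins tabulate immediately, and the target is to verify that their sum is always at least $+1$. The tightest case is $S=\pi^*(\ra)=\{\ra,\la^1,\la^2,\la^3\}$, where $\ra,\la^1,\la^2,\la^3,\la^4,\la^5$ are all indifferent and only $\la^6$ strictly prefers $\pi^*$ (she cannot recover her lost value from $\ca$), giving margin exactly $+1$. Any other $S$ forces some agent in $\{\ra,\la^1,\la^2,\la^3\}$ to shift from indifferent to strictly preferring $\pi^*$---either because $\ra$'s average is diluted by non-leaf companions or by $\la^4,\la^5$, or because one of $\la^1,\la^2,\la^3$ is either displaced out of $S$ or sees its utility drop below $1/4$---and this compensates for any gain of $\la^4,\la^5$, or of $\la^6$ when $\la^6\in S$. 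I expect the only obstacle to be the bookkeeping for borderline configurations such as $S=\{\ra,\la^1,\la^2,\la^3,\la^6\}$ or $S=\{\ra,\la^1,\la^2,\la^4\}$, where the margin is exactly $+1$ with no slack and so must be checked explicitly rather than by a soft estimate.
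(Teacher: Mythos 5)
Your proposal is correct and takes essentially the same route as the paper: both arguments hinge on showing $\phi_{\{\ra,\la^1,\dots,\la^6\}}(\pi^*,\pi)\ge 1$ via a case analysis on $\pi(\ra)$ (crucially using that $\ca\notin\pi(\la^6)$ because $\pi(\ca)=\pi(j)$ contains no $L$-agents, and that $\frac38>\frac14$), and then invoke \Cref{prop:fhg1_observations}(\ref{prop:fhg1_observations:C}); your complementary-bound formulation even streamlines away the paper's separate step ruling out $\phi_{\ca}(\pi^*,\pi)=0$ via \Cref{eq_fhg_d1_c_utility}. One harmless slip in your anticipation of the tight cases: $S=\{\ra,\la^1,\la^2,\la^3,\la^6\}$ actually gives margin $+3$ (there $\la^1,\la^2,\la^3$ drop to $\frac15<\frac14$ and all prefer $\pi^*$), whereas the configurations with margin exactly $+1$ are, e.g., $S=\{\ra,\la^1,\la^2,\la^3\}$, $S=\{\ra,\la^1,\la^2\}$, and $S=\{\ra,\la^1,\la^2,\la^4\}$.
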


\begin{proof}
First, it is easy to see from \Cref{eq_fhg_d1_c_utility} that we cannot have $\phi_{\ca}(\pi^*,\pi)=0$ (we would need $k_c=2.5$ for that, but $k_c\in\NN$). 
Therefore, assume towards contradiction that $\phi_{\ca}(\pi^*,\pi)>0$, namely $\ca$ prefers $\pi^*$ over~$\pi$. Consider the corresponding agents of $\ca$. We have that $\phi_{\{\ra, \la^1, \dots, \la^6\}}(\pi^*,\pi)\geq 1$. One may verify this by considering separate cases based on which leaves out of $\la^1, \dots, \la^6$ are included in $\pi(\ra)$; 
for this, notice that if $|\pi(\ra)|\geq 4$ then $\la^6$ prefers $\pi^*$ even if $\la^6\in\pi(\ra)$, since $\frac{3}{8}>\frac{1}{4}$.
(Note that we cannot assume that $\ra,\la^1, \dots,\la^6$ do not have any external agents in their coalitions in $\pi$, but clearly that would only strengthen the arguments as it weakens those coalitions).

Hence, we get a contradiction to \Cref{prop:fhg1_observations}(\ref{prop:fhg1_observations:C}).
\qed
\end{proof}

We have now gathered enough knowledge about $\pi$ to contradict the assumption that it is more popular than $\pi^*$. Observe that, for each $c\in\CC$, $u_{\ca}(\pi^*)=1$. Using \Cref{eq_fhg_d1_c_utility}, some basic algebra shows that $u_{\ca}(\pi)>1$ if and only if $k_c<2.5$ (and since $k_c\in \NN$, we have $k_c\leq 2$).
By \Cref{lem_fhg1:all_c_prefer_pi}, this must hold for all $c\in \CC$. 
Consider the following assignment $\tau_{\YY}$ to the $\YY$ variables: variable $y\in \YY$ is assigned \Tru{} if and only if $\yfix\in\pi(j)$ (by \Cref{lem_fhg1:complementary_Y}, this is a valid assignment). By \Cref{lem_fhg1:xf_j}, and since $|\pi(j)\cap X|=n$ (as observed above), the $X$-agents in $\pi(j)$ are exactly the agents $\xl\in X$ for which $\alpha$ is assigned \Tru{} by $\tau_{\XX}$.
Hence, we have that the assignments $\tau_{\XX},\tau_{\YY}$ yield $\psi(\XX,\YY)=\Fals$, since for each clause $c\in \CC$ there exists a literal assigned \Fals{} by $\tau_{\XX},\tau_{\YY}$ (since $k_c\leq 2$). 
This gives a contradiction to our choice of $\tau_{\XX}$. 

Hence, we have completed the proof showing that if $(\XX,\YY,\psi)$ is a Yes-instance of \QSAT{}, then its reduced FHG admits a popular partition.

\subsection{Popular Partition Implies Satisfiability}
\label{apndx_fhg:if_popular}
Throughout this section, we assume that there is a popular partition $\pi^*$ in the constructed FHG.
We will prove that this implies the source instance is a Yes-instance to \QSAT.
We observe that $\pi^*$ must be Pareto-optimal, as otherwise, applying a Pareto improvement will yield a more popular partition, a contradiction. 

We call a coalition $S\subseteq N$ a \textit{null coalition} if $|S|>1$ and all of its members get utility $0$. Observe that if there exist any null coalition $S$ in $\pi^*$, we can construct another popular partition $\pi^{**}$ by dissolving all null coalitions into singletons: 
Clearly, a partition $\pi$ is more popular than $\pi^*$ if and only if it is more popular than $\pi^{**}$. 
Thus, if there exists a popular partition, then there must exist one with no null coalitions.
Hence, without loss of generality, we may assume that $\pi^*$ contains no null coalitions.
We refer to this assumption as the \textit{null-coalition assumption}. 
We will now begin analyzing some properties of $\pi^*$, aiming to eventually derive a satisfying assignment to the \QSAT{} instance.

Our first lemma derives that only agents in singletons can receive a utility of~$0$.
\begin{lemma}
\label{lem_fhg2:0_implies_singleton}
Let $\ga$ be an agent such that $u_{\ga}(\pi^*)=0$. Then $\ga$ is in a singleton coalition.
\end{lemma}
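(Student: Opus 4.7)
The plan is by contradiction: assume $u_{\ga}(\pi^*) = 0$ and $|\pi^*(\ga)| \geq 2$, let $S = \pi^*(\ga)$, and construct a partition $\pi$ that is either a Pareto improvement over $\pi^*$ or strictly more popular. Since valuations are non-negative and $u_{\ga}(S) = 0$, we must have $v_{\ga}(j) = 0$ for every $j \in S \setminus \{\ga\}$. The natural candidate is the partition $\pi$ obtained from $\pi^*$ by extracting $\ga$ into a singleton coalition, leaving all other coalitions intact. Then $\ga$ is indifferent, and for each $i \in S \setminus \{\ga\}$, writing $A_i = \sum_{j \in S \setminus \{i, \ga\}} v_i(j) \geq 0$, one computes
\[
u_i(\pi) - u_i(\pi^*) \;=\; \frac{A_i - (|S|-1)\,v_i(\ga)}{|S|\,(|S|-1)},
\]
so $i$ strictly prefers $\pi^*$ over $\pi$ iff $A_i < (|S|-1)\,v_i(\ga)$, which in particular requires $v_i(\ga) > 0$.

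The clean case is when $v_i(\ga) = 0$ for every $i \in S \setminus \{\ga\}$. Then every $i$ weakly prefers $\pi$ (strictly whenever $u_i(\pi^*) > 0$), so $\pi$ is a Pareto improvement of $\pi^*$ as soon as some $i \in S \setminus \{\ga\}$ has $u_i(\pi^*) > 0$; the null-coalition assumption guarantees such an $i$. This contradicts Pareto-optimality of $\pi^*$. A type-by-type inspection of \Cref{table_FHG} shows that this clean case applies whenever $\ga$ is \emph{not} a $C$-agent: for every other agent type, the set of agents that assign $\ga$ a positive value is contained in the set of agents that $\ga$ herself assigns a positive value to, and the latter is excluded from $S$ by $u_{\ga}(S) = 0$.

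The main obstacle is the remaining case $\ga = \ca$, since $\ca$ assigns value $0$ to the $X$- and $Y$-agents corresponding to literals of clause $c$ while those agents assign $\ca$ the value $1$; hence such agents may lie in $S$ without forcing $u_{\ca}(S) > 0$, and the naive singleton deviation may fail. Here I would modify the deviation by simultaneously extracting $\ca$ from $S$ to pair it with its corresponding $L_6$-agent $\la^6$ in a new coalition $\{\ca, \la^6\}$ (removing $\la^6$ from $T = \pi^*(\la^6)$), and rearranging the $X$- and $Y$-agents of $S$ with their corresponding $\xt$-, $\xf$-, $\xfp$-, $\yp$-, and $\ypp$-agents into the ``natural'' coalitions. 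The key observations are that $\la^6 \notin S$ (else $v_{\ca}(\la^6) = 2$ would force $u_{\ca}(S) > 0$), and that only the corresponding $\ra$-agent assigns $\la^6$ a positive value, so removing $\la^6$ from $T$ can harm at most $\ra$. A careful accounting---$\ca$'s jump from utility $0$ to $1$, the gains of the $X$- and $Y$-agents paired with their preferred structure counterparts, the clean-case gains among the remaining members of $S$ and $T$, and the bounded losses of $\la^6$ and $\ra$---should yield $\phi(\pi^*, \pi) < 0$, completing the proof.
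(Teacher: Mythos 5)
Your generic case is exactly the paper's argument: extract $\ga$ into a singleton, observe that the remaining members of $S$ lose nothing they value positively, and invoke the null-coalition assumption to turn this into a strict Pareto improvement. The paper in fact disposes of \emph{all} cases this way, via the blanket claim that $\ga_1$ assigns a positive value to $\ga_2$ if and only if $\ga_2$ assigns a positive value to $\ga_1$. You have correctly spotted that this claim fails for precisely one family of pairs: an agent $\xl$ (or $\y$) representing a literal of clause $c$ assigns $\ca$ the value $1$, while $\ca$ assigns her $0$. So your reduction to the single problematic case $\ga=\ca$ is not over-engineering; it isolates a real imprecision that the paper's own one-line proof glosses over.

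However, your treatment of the case $\ga=\ca$ is a sketch with a genuine gap. This is the first structural lemma of the section, so nothing is yet known about $\pi^*(\la^6)$, $\pi^*(\ra)$, or the coalitions of the $X_t$-, $X_f$-, $X_{f'}$-agents, and the accounting for your proposed deviation can go the wrong way. Concretely: if $S=\{\ca,\xl,\xt\}$ with $\alpha$ a literal of $c$, then $u_{\xl}(\pi^*)=\frac{1+v_X}{3}>\frac{v_X}{2}$ (since $v_X<2$), so $\xl$ \emph{strictly prefers keeping} $\ca$ and votes against the rearranged partition; and if $\pi^*(\ra)=\{\ra,\la^6\}$, then pulling $\la^6$ out to form $\{\ca,\la^6\}$ drops her utility from $\frac12$ to $\frac38$ and also hurts $\ra$, costing two further votes against the single vote gained from $\ca$. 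In that configuration your deviation yields $\phi(\pi^*,\pi)\ge 0$ rather than $<0$. Excluding such configurations requires auxiliary popularity arguments about the star gadget and the $X$-gadget that you have not supplied, so the lemma remains unproved for $C$-agents — by your proposal and, strictly read, by the paper's proof as well.
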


\begin{proof}
Assume towards contradiction that there exists an agent $\ga$ with $u_{\ga}(\pi^*)=0$ and $|\pi^*(\ga)|\ge 2$. 
Observe that for any two agents $\ga_1$ and $\ga_2$, we have that $\ga_1$ assigns a positive value to $\ga_2$ if and only if $\ga_2$ assigns a positive value to $\ga_1$ (simply by our specific game construction). 
Now, since $u_{\ga}(\pi^*)=0$, there is no agent in $\pi^*(\ga)$ who assigns a positive value to $\ga$. 
Furthermore, by the null-coalition assumption, there must exist some agent $\gap\in\pi^*(\ga)$ with a positive utility. 
But then, it is a Pareto improvement to remove $\ga$ from her coalition, as $\gap$ will clearly prefer the resulting partition, while no agent can prefer the original $\pi^*$. 
\qed
\end{proof}

The next lemmas concern the structure of agents in the star gadgets corresponding to clauses.

\begin{lemma}
\label{lem_fhg2:l6_c}
Let $c\in\CC$ be a clause. Then it holds that $\ca\in\pi^*(\la^6)$.
\end{lemma}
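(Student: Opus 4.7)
My plan is to argue by contradiction. Assume $\ca \notin \pi^*(\la^6)$, and write $G = \{\ra, \la^1, \dots, \la^6\}$ for the star gadget of clause~$c$. The idea is that under this assumption no positive-value edge connects $G$ to the rest of the game, so $\pi^*$ effectively induces a partition of $G$ in isolation; applying the known No-instance property of the six-leaf star then yields a strictly more popular partition, contradicting popularity of $\pi^*$.

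The first step is a no-mixing claim: every coalition $T \in \pi^*$ satisfies $T \subseteq G$ or $T \cap G = \emptyset$. Inspecting \Cref{table_FHG}, the only pair of agents with one member in $G$ and one member in $N \setminus G$ that has a positive valuation in either direction is $(\la^6, \ca)$, and by assumption these two are not in a common coalition. Hence in any putative mixed $T$, every agent in $T$ has zero valuation for every agent on the opposite side of the cut $T \cap G$ versus $T \setminus G$. Replacing $T$ by the two coalitions $T \cap G$ and $T \setminus G$ then keeps each agent's numerator unchanged while strictly decreasing her denominator, so every agent in $T$ weakly prefers the split, with strict preference for any agent whose utility in $T$ was positive. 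By the null-coalition assumption some such agent exists, so the split is a Pareto improvement, contradicting the Pareto-optimality of $\pi^*$ noted at the start of \Cref{apndx_fhg:if_popular}.

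The second step transports the No-instance into the full game. By Step~1, $\pi^*$ restricted to $G$ is a well-defined partition $\pi_G^*$ of $G$, and for every $g \in G$ the utility $u_g(\pi^*)$ equals her utility under $\pi_G^*$ in the isolated six-leaf star FHG. Since that star is a No-instance \citep{BrBu20a}, there exists a partition $\sigma$ of $G$ with $\phi_G(\sigma, \pi_G^*) > 0$ inside the star game. Define a partition $\pi$ of $N$ by using $\sigma$ on $G$ and keeping the coalitions of $N \setminus G$ identical to $\pi^*$; this is legitimate because, by Step~1, $\pi^*$ also induces a partition of $N \setminus G$. In $\pi$ every $G$-agent's coalition lies entirely inside $G$, so her utility equals her $\sigma$-utility in the star game, while every agent in $N \setminus G$ is in the same coalition as in $\pi^*$ and is therefore indifferent. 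Consequently $\phi(\pi, \pi^*) = \phi_G(\sigma, \pi_G^*) > 0$, contradicting popularity of $\pi^*$.

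The main obstacle is Step~1: it is where the structure of the construction is really used, and it depends crucially on the fact that the only positive-value edge crossing the boundary of $G$ is $\ca$–$\la^6$, precisely the edge that the contradiction hypothesis disables. Once the gadget is decoupled, the remainder is a routine lifting of the known star No-instance into the full game.
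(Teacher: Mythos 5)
Your proof is correct, but it takes a different route from the paper's. The paper argues directly by a case distinction on the number of leaves in $\pi^*(\ra)$: with at least four leaves, expelling one leaf is more popular; with at most three, it first shows (by the same ``no positive cross-valuations'' observation you use) that $\pi^*(\ra)$ contains no agents outside the gadget and that the missing leaves are singletons, and then verifies by an explicit count that merging everything into the full star $\{\ra,\la^1,\dots,\la^6\}$ beats $\pi^*$. You instead prove the stronger intermediate claim that, under the contradiction hypothesis, \emph{every} coalition of $\pi^*$ is either contained in $G=\{\ra,\la^1,\dots,\la^6\}$ or disjoint from it (using Pareto-optimality, the null-coalition assumption, and the fact that the only positive cross-pair is $\{\ca,\la^6\}$), and then invoke the cited result of \citet{BrBu20a} that the six-leaf star admits no popular partition, lifting the witnessing more popular partition $\sigma$ of $G$ to the whole game. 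Your argument is more modular and avoids re-deriving the star's non-popularity by hand; its cost is that it leans on the external No-instance result as a black box, whereas the paper's inline case analysis is self-contained and produces the explicit deviations (and quantitative margins) that are echoed in the later structural lemmas. Both arguments are valid; just make sure the decoupling step is stated for mixed coalitions only (so that both parts of the split are nonempty and the denominators genuinely decrease), which your write-up implicitly does.
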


\begin{proof}
Assume towards contradiction that $\ca\notin\pi^*(\la^6)$. 
Intuitively, the problem is that $\ra,\la^1, \dots,\la^6$ form a No-instance as discussed earlier, implying that $\pi^*$ is not popular. Formally, let us consider separate cases based on the number of leaves in $\pi^*(\ra)$.

\textit{Case 1:} Suppose that $\pi^*(\ra)$ contains at least $4$ leaves of $\la^1, \dots,\la^6$. 
Then consider the partition $\pi$ which is identical to $\pi^*$ except we remove one of those leaves $\la^i$ from $\pi^*(\ra)$ and make it a singleton $\{\la^i\}$. 
The only agents who may prefer $\pi^*$ over $\pi$ are $\ra$ and $\la^i$, while the other leaves in $\pi^*(\ra)$ (of which there are at least $3$) prefer $\pi$. Hence, we obtain a contradiction to $\pi^*$ being popular. 

\textit{Case 2:} Suppose that $\pi^*(\ra)$ contains at most $3$ leaves of $\la^1, \dots,\la^6$. 
Let $S=\pi^*(\ra)\setminus\{\ra,\la^1, \dots,\la^6\}$. By \Cref{lem_fhg2:0_implies_singleton}, for all $s\in S$ we have $u_s(\pi^*)>0$, implying it is more popular to split $\pi^*(\ra)$ into $S$ and $\pi^*(\ra)\setminus S$.
If $S\neq \emptyset$, this is more popular since no two agents from $S$ and $\pi^*(\ra)\setminus S$ respectively assign positive utility to each other (under the assumption that $\ca\notin\pi^*(\la^6)$). Hence, $S=\emptyset$.

Moreover, if $\la^i\in\{\la^1, \dots,\la^6\}$ was not in $\pi^*(\ra)$, then $u_{\la^i}(\pi^*)=0$, and by \Cref{lem_fhg2:0_implies_singleton}, $\la^i$ is in a singleton coalition.
Now, consider the partition $\pi$ which is identical to $\pi^*$ except that we merge coalitions to form $\{\ra,\la^1, \dots,\la^6\}$.
Observe that $\ra$ prefers $\pi$ over $\pi^*$, as her utility is clearly monotonically increasing in the number of leaves (out of $\la^1, \dots,\la^6$) that are in her coalition. 
Furthermore, if $\la^i\in\{\la^1, \dots,\la^6\}$ was not in $\pi^*(\ra)$, then $u_{\la^i}(\pi)>0 = u_{\la^i}(\pi^*)$ (under the assumption that $\ca\notin\pi^*(\la^6)$).
As this concerns at least three leaf agents, we obtain $\phi(\pi,\pi^*) \ge 1$, a contradiction. 
\qed
\end{proof}

\begin{lemma}
\label{lem_fhg2:LRC}
Let $c\in\CC$. Then it holds that $\ra,\la^1, \dots,\la^5\notin\pi^*(\ca)$.
\end{lemma}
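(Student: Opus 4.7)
The plan is a proof by contradiction. Suppose $\pi^*(\ca)$ contains at least one of $\ra,\la^1,\dots,\la^5$, and let $S:=\pi^*(\ca)$; by Lemma~\ref{lem_fhg2:l6_c} we already have $\la^6\in S$.

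First, I would dispose of the case in which some $\la^i$ with $i\leq 5$ lies in $S$ while $\ra\notin S$: since $\la^i$ assigns positive value only to $\ra$, we would have $u_{\la^i}(\pi^*)=0$, and since $|S|\geq 2$ this contradicts Lemma~\ref{lem_fhg2:0_implies_singleton}. So from now on $\ra\in S$. I would also note, for use below, that Lemma~\ref{lem_fhg2:0_implies_singleton} implies any $\la^i$ (with $i\leq 5$) not contained in $S$ must already be a singleton in $\pi^*$.

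The main step is to exhibit a strictly more popular partition $\pi$ by ``relaxing'' $S$ into the natural good coalitions of the clause-$c$ gadget. I would define $\pi$ to agree with $\pi^*$ off of $S\cup\{\la^1,\dots,\la^5\}$, and on this set form $\{\ca,\la^6\}$, $\{\ra,\la^1,\la^2,\la^3\}$, the singletons $\{\la^4\}$ and $\{\la^5\}$, and leave the intruders $I:=S\setminus(\{\ca,\la^6,\ra\}\cup\{\la^1,\dots,\la^5\})$ together as their own coalition. Writing $k:=|S|$, $n_R:=|\{i\leq 5:\la^i\in S\}|$, and $\Sigma_I:=\sum_{s\in I}v_{\ca}(s)$, a direct computation gives $u_{\ca}(\pi)=1$, $u_{\la^6}(\pi)=3/8$, $u_{\ra}(\pi)=3/4$, and $u_{\la^i}(\pi)=1/4$ for $i\in\{1,2,3\}$, whereas in $\pi^*$ the same agents have utilities $(2+\Sigma_I)/k$, $7/(4k)$, $(1+n_R)/k$, and either $0$ or $1/k$, respectively.

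The base case $k=3$ is immediate: then $S=\{\ca,\la^6,\ra\}$, $I=\emptyset$, and four agents ($\ca$, $\ra$, and the three singleton leaves $\la^1,\la^2,\la^3$) strictly prefer $\pi$ while only $\la^6$ strictly prefers $\pi^*$, yielding $\phi(\pi,\pi^*)\geq 3$ and the desired contradiction. The main obstacle is the general case $k\geq 4$: the intruder set may include noncorresponding $X$- or $Y$-agents that contribute the large value $v_C>1$ to $\Sigma_I$ and could tip $\ca$ toward $\pi^*$, while each such intruder also loses its own $\ca$-value when moved into $I$. Closing this gap relies on two observations: (i) as soon as $k\geq 5$ both $\la^6$ and the in-$S$ leaves among $\la^1,\la^2,\la^3$ also flip to preferring $\pi$, so the guaranteed ``$\pi$-vote'' count grows with $k$; and (ii) each intruder's potential contribution to $\phi(\pi^*,\pi)$ is tied to $|I|=k-3-n_R$, so the ``$\pi^*$-vote'' count is tightly controlled. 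A short case split on $k=4$ versus $k\geq 5$, with a sub-split on the intruder type, should then yield $\phi(\pi,\pi^*)\geq 1$ uniformly, contradicting popularity of $\pi^*$ and completing the proof.
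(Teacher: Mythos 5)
Your opening step (if some $\la^i$ with $i\leq 5$ lies in $S$ then $\ra\in S$, via \Cref{lem_fhg2:0_implies_singleton}) matches the paper, and the base case $k=3$ is fine, but the general case has a genuine gap that your observations (i) and (ii) do not close. Because your deviation dissolves $S$ entirely, every intruder loses $\ca$ from her coalition, and real-agent intruders as well as $j$ assign positive value to $\ca$, so a priori every member of $I$ may vote for $\pi^*$. Observation (i) is not enough: the guaranteed votes for $\pi$ among the gadget agents cap out at a small constant (at most $\ra$, $\la^6$, the three leaves placed with $\ra$, and possibly $\ca$), whereas $|I|=k-3-n_R$ is unbounded; and observation (ii) bounds only the size of $I$, not the number of its members who prefer $\pi^*$. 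What is actually needed is the structural fact that an intruder $g$ prefers $\pi^*$ only when $(3+n_R)\Sigma_g<|I|\,v_g(\ca)$, where $\Sigma_g$ is her total value for $I\setminus\{g\}$ (intruders value the remaining gadget agents at $0$), followed by an argument over intruder types (structure agents, complementary pairs, other clauses' gadget agents, $C$-agents valuing their own literals at $0$) showing that only few intruders can satisfy this simultaneously. That is exactly the ``sub-split on the intruder type'' you defer; it is neither short nor obviously uniform over all configurations, so as written the proof is incomplete.

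The paper sidesteps the issue by using deviations that never separate the intruders from anything they value: it first rules out $\la^1,\dots,\la^5\in S$ by either removing a single surplus leaf from $S$ or re-forming the star on $\{\ra,\la^1,\dots,\la^5\}$, both of which remove from $S$ only agents that every non-gadget member of $S$ values at $0$, so by \Cref{lem_fhg2:0_implies_singleton} all intruders strictly prefer the new partition and need not be examined; only then does it extract $\ra$ alone to join the five now-singleton leaves, where again only $\la^6$ can object. You should either restructure your deviation along these lines or supply the missing intruder analysis.
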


\begin{proof}
We begin with the agents $\la^1, \dots,\la^5$. Assume towards contradiction that, without loss of generality, $\la^1\in \pi^*(\ca)$. 
Since $\la^1$ is not in a singleton coalition, 
\Cref{lem_fhg2:0_implies_singleton} implies that $\ra\in\pi^*(\ca)$. 
We consider two cases.

\textit{Case 1:} Assume at least three of $\la^1, \dots,\la^6$ are in $\pi^*(\ca)$ (including $\la^1$ and $\la^6$, which we already know are in $\pi^*(\ca)$). Then the partition $\pi$ obtained from $\pi^*$ by removing $\la^1$ from $\pi^*(\ca)$ is more popular than $\pi^*$: Only $\ra$ and $\la^1$ may prefer $\pi^*$, while (at least) two other leaves and $\ca$ prefer $\pi$. 

\textit{Case 2:} Assume $\la^1$ and $\la^6$ are the only agents of $\la^1, \dots,\la^6$ in $\pi^*(\ca)$. 
Then the utility of $\la^2,\la^3,\la^4,\la^5$ is $0$, and these are therefore in singleton coalitions (by \Cref{lem_fhg2:0_implies_singleton}). 
Hence, we can construct the partition $\pi$ obtained from $\pi^*$ by extracting $\ra,\la^1,\la^2,\la^3,\la^4,\la^5$ from their coalitions and forming a new coalition consisting only of them. 
Only $\la^1$ and $\la^6$ may prefer $\pi^*$ over $\pi$ (since we removed $\ra$ from their coalition), while $\ca,\ra,\la^2,\la^3,\la^4$ and $\la^5$ all prefer $\pi$. 
This is a contradiction.

So far, we proved that $\la^1, \dots,\la^5\notin\pi^*(\ca)$. 
It remains to show that $\ra\notin\pi^*(\ca)$. 
Assume towards contradiction that $\ra\in \pi^*(\ca)$.
By \Cref{lem_fhg2:0_implies_singleton}, $\la^1, \dots,\la^5$ are in singletons.
Consider the partition $\pi$ obtained from $\pi^*$ by removing $\ra$ from her coalition and forming $\{\ra,\la^1, \dots,\la^5\}$. 
This is preferred by $\la^1, \dots,\la^5$.
Also, $\ra$ prefers $\pi$ over $\pi^*$ since $u_r(\pi^*)\leq\frac{1}{3}$ while $u_r(\pi)=\frac{5}{6}$.
By contrast, 
the only agent that may prefer $\pi^*$ is $\la^6$.
We again obtain a contradiction to the popularity of $\pi^*$. 
\qed
\end{proof}

\begin{lemma}
\label{lem_fhg2:LRC_general}
Let $c, c'\in\CC$. 
Then it holds that $\ra,\la^1, \dots,\la^5\notin\pi^*(\casec)$.
\end{lemma}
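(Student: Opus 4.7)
The plan is to reduce to \Cref{lem_fhg2:LRC} (which handles $c = c'$) and, for $c \neq c'$, derive a contradiction by exhibiting a more popular partition. Assume $c \neq c'$ and, for contradiction, that some $\ga \in \{\ra, \la^1, \ldots, \la^5\}$ satisfies $\ga \in \pi^*(\casec)$.

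The first step is to narrow down $\pi^*(\casec)$ using the preceding lemmas. If $\ca \in \pi^*(\casec)$, then $\pi^*(\ca) = \pi^*(\casec) \ni \ga$, contradicting \Cref{lem_fhg2:LRC}; and if $\la^6 \in \pi^*(\casec)$, then \Cref{lem_fhg2:l6_c} gives $\ca \in \pi^*(\la^6) = \pi^*(\casec)$, the case just excluded. Hence $\ca, \la^6 \notin \pi^*(\casec)$. Since the only agent positively valued by any $\la^i$ is $\ra$, \Cref{lem_fhg2:0_implies_singleton} forces $\ra \in \pi^*(\casec)$ (uniformly covering $\ga = \ra$ and $\ga = \la^i$), and any $\la^j$ with $j \in \{1, \ldots, 5\}$ outside $\pi^*(\casec)$ must be a singleton. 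Writing $T := \pi^*(\casec) \cap \{\ra, \la^1, \ldots, \la^5\}$ and $K := |\pi^*(\casec)|$, I obtain $\ra \in T$, $|T| \ge 2$ (else $u_{\ra}(\pi^*) = 0$, contradicting \Cref{lem_fhg2:0_implies_singleton}), and $K \ge |T| + 2$ since $\casec$ and the corresponding $L_6$-agent $\ell_{c'}^6$ lie in $\pi^*(\casec) \setminus T$ (the latter by \Cref{lem_fhg2:l6_c} applied to $c'$).

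The second step constructs the deviation $\pi$: extract $T$ from $\pi^*(\casec)$ together with the freed singleton leaves of $c$, and form the new coalition $\{\ra, \la^1, \ldots, \la^5\}$. The decisive property is that every agent in $\pi^*(\casec) \setminus T$ assigns value $0$ to every agent in $T$: noncorresponding $L$- and $R$-agents are valued at $0$ by everyone in the game, and the one potentially problematic agent, $\la^6$, has already been excluded from $\pi^*(\casec)$. Hence these agents' value sums are preserved while their coalition strictly shrinks, so (starting from positive utility by \Cref{lem_fhg2:0_implies_singleton}) their utilities strictly increase. Moreover, $\ra$ prefers $\pi$ because $u_{\ra}(\pi^*) = (|T|-1)/K \le (|T|-1)/(|T|+2) \le 5/8 < 5/6 = u_{\ra}(\pi)$, and the $6 - |T|$ freed singletons prefer $\pi$ (utility jumps from $0$ to $1/6$). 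The only agents who might prefer $\pi^*$ are the $|T| - 1$ leaves in $T \setminus \{\ra\}$, and only when $K < 6$. A short case distinction --- $K \ge 6$ giving $\phi(\pi, \pi^*) \ge 9 - |T| \ge 3$, or $K \in \{4, 5\}$ with $|T| \in \{2, 3\}$ giving $\phi(\pi, \pi^*) \ge K - 3|T| + 8 \ge 4$ --- yields $\phi(\pi, \pi^*) > 0$, contradicting popularity of $\pi^*$.

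The main obstacle is the popularity bookkeeping in the small-$K$ boundary cases where some leaves in $T$ may strictly prefer $\pi^*$, but the constraints $|T| \ge 2$ and $|T| \le K - 2$ leave only three subcases, each closed by the simple bound $\phi(\pi, \pi^*) \ge K - 3|T| + 8$.
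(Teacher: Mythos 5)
Your proof is correct, and its first half mirrors the paper's exactly: both arguments use \Cref{lem_fhg2:LRC,lem_fhg2:l6_c} to establish that $\ca,\la^6\notin\pi^*(\casec)$, so that the clause-$c$ star agents inside $\pi^*(\casec)$ assign value $0$ to everyone else in that coalition and vice versa. Where you diverge is the concluding deviation. The paper simply splits the set $T=\pi^*(\casec)\cap\{\ra,\la^1,\dots,\la^5\}$ off into its own coalition: since all cross-values are $0$ and, by the null-coalition assumption together with \Cref{lem_fhg2:0_implies_singleton}, every member of $\pi^*(\casec)$ has positive utility, both resulting coalitions shrink without losing any positive value, so every affected agent strictly improves and the split is a Pareto improvement contradicting Pareto optimality of $\pi^*$ --- no margin counting is needed. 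You instead reassemble the full star $\{\ra,\la^1,\dots,\la^5\}$ by also absorbing the freed singleton leaves, which forces you to track the $|T|-1$ leaves of $T$ that may prefer $\pi^*$ when $|\pi^*(\casec)|<6$ and to close the boundary subcases by hand. Your bookkeeping checks out: the constraints $2\le|T|\le K-2$ are correctly derived (using that $\casec$ and her corresponding $L_6$-agent both lie in $\pi^*(\casec)\setminus T$), and the bounds $\phi(\pi,\pi^*)\ge 9-|T|\ge 3$ for $K\ge 6$ and $\phi(\pi,\pi^*)\ge K-3|T|+8\ge 4$ for $K\in\{4,5\}$ are both valid. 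So the argument is sound, just strictly more work than the bare split, which already suffices.
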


\begin{proof}
Consider $S=\pi^*(\casec)\cap \{\ra,\la^1, \dots,\la^5\}$. 
Assume towards contradiction that $S\neq\emptyset$.
By \Cref{lem_fhg2:LRC}, we know that $c'\neq c$.
By \Cref{lem_fhg2:l6_c,lem_fhg2:LRC}, since $S\neq\emptyset$, we have that $\ca,\la^6\notin\pi^*(\casec)$. 
Therefore, agents in $S$ assign value $0$ to any agent in $\pi^*(\casec)\setminus S$, and vice versa. 
Hence, it is clearly a Pareto improvement to split $S$ from $\pi^*(\casec)$, in contradiction to Pareto optimality of $\pi^*$ (the null-coalition assumption ensures that at least one agent in $\pi^*(\casec)$, and by \Cref{lem_fhg2:0_implies_singleton} all of them, prefer $\pi$).
\qed
\end{proof}

\begin{lemma}
\label{lem_fhg2:RL_everyone}
Let $c\in\CC$. 
Fix some agent $\ga\notin \{\ra,\la^1, \dots,\la^5\}$. 
Then $\ra,\la^1, \dots,\la^5\notin\pi^*(\ga)$.
\end{lemma}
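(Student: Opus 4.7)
The plan is to argue by contradiction and exhibit a strict Pareto improvement of $\pi^*$, using that every popular partition is Pareto-optimal. Suppose some $\ga\notin\{\ra,\la^1,\dots,\la^5\}$ has some $b\in\{\ra,\la^1,\dots,\la^5\}$ in its coalition $\pi^*(\ga)$. First, I would reduce to the case $\ra\in\pi^*(\ga)$: if $b=\la^i$ for some $i\le 5$, then $\la^i$ only values $\ra$ positively and, being non-singleton in $\pi^*$, must share its coalition with $\ra$ by \Cref{lem_fhg2:0_implies_singleton}, so $\ra\in\pi^*(\la^i)=\pi^*(\ga)$.

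Set $S:=\pi^*(\ga)=\pi^*(\ra)$. The key second step is to show that neither $\ca$ nor $\la^6$ lies in $S$. Indeed, $\ca\in S$ would give $\ra\in\pi^*(\ca)$, directly contradicting \Cref{lem_fhg2:LRC}; and $\la^6\in S$ would imply $\ca\in\pi^*(\la^6)=S$ via \Cref{lem_fhg2:l6_c}, which we have just excluded.

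With this in hand, partition $S=T\sqcup U$ where $T:=S\cap\{\ra,\la^1,\dots,\la^5\}$ and $U:=S\setminus T$, noting $\ra\in T$ and $\ga\in U$. Inspection of \Cref{table_FHG} shows that the only agents assigning positive value to any member of $\{\ra,\la^1,\dots,\la^5\}$ are $\ra,\la^1,\dots,\la^6,\ca$, and that each member of this gadget assigns positive value only to agents in $\{\ra,\la^1,\dots,\la^6\}$. Since the second step places $\la^6$ and $\ca$ outside $S$, every valuation across the cut $(T,U)$ vanishes. In particular $\ra$'s utility in $S$ is collected entirely from $T$ and $\ga$'s utility in $S$ entirely from $U$; invoking \Cref{lem_fhg2:0_implies_singleton} for $\ra$ and $\ga$ (both non-singletons of $S$) therefore forces $|T|\ge 2$ and $|U|\ge 2$ as well as $u_\ra(S),u_\ga(S)>0$.

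Now let $\pi$ be the partition obtained from $\pi^*$ by replacing $S$ with the two coalitions $T$ and $U$. Because no valuation crosses the cut, the numerator of the FHG utility is preserved for every agent while the denominator strictly drops from $|S|$ to $|T|$ or $|U|$; as the numerator is strictly positive (again by \Cref{lem_fhg2:0_implies_singleton}), this yields $u_b(T)>u_b(S)$ for every $b\in T$ and $u_a(U)>u_a(S)$ for every $a\in U$. Thus $\pi$ is a strict Pareto improvement over $\pi^*$, the desired contradiction. The main obstacle is the second step: $\la^6$ and $\ca$ are precisely the two ``external'' agents valuing the gadget $\{\ra,\la^1,\dots,\la^5\}$ positively, and without excluding them the cut $(T,U)$ would not be valuation-free and the clean fractional split in the last step could hurt some agent; once they are ruled out, the split argument is essentially the standard FHG computation.
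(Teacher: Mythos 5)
Your proposal is correct and follows essentially the same route as the paper's proof: exclude the $L_6$-agent (via \Cref{lem_fhg2:l6_c} together with \Cref{lem_fhg2:LRC}, where the paper instead cites \Cref{lem_fhg2:LRC_general}) so that the cut between $\pi^*(\ga)\cap\{\ra,\la^1,\dots,\la^5\}$ and the rest carries no positive valuations, and then split the coalition, using \Cref{lem_fhg2:0_implies_singleton} and the null-coalition assumption to turn the shrinking denominators into a Pareto improvement. The only cosmetic difference is your preliminary reduction to $\ra\in\pi^*(\ga)$, which the paper does not need but which does no harm.
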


\begin{proof}
Let $S=\pi^*(\ga)\cap\{\ra,\la^1, \dots,\la^5\}$, and assume towards contradiction that $S\neq\emptyset$. 
Notice that it is a Pareto improvement to remove $S$ from $\pi^*(\ga)$, obtaining another partition denoted $\pi$: By \Cref{lem_fhg2:l6_c,lem_fhg2:LRC_general}, there are no $L_6$-agents in $\pi^*(\ga)$, and thus all agents in $\pi^*(\ga)\setminus S$ must assign value $0$ to all agents in $S$, and vice versa. 
Therefore, it is a Pareto improvement to split $S$ from $\pi^*(\ga)$, in contradiction to Pareto optimality of $\pi^*$ (again, the null-coalition assumption ensures at least one agent in $\pi^*(\ga)$ prefers $\pi$).
\qed
\end{proof}

\begin{lemma}
\label{lem_fhg2:star_partition}
Let $c\in\CC$. 
Then $\pi^*(\ra)$ consists exactly of $\ra$ and three agents from $\la^1, \dots,\la^5$ (without loss of generality $\la^1,\la^2$ and $\la^3$), while the remaining two leaves ($\la^4$ and $\la^5$) are singletons.
\end{lemma}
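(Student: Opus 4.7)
The plan is first to use the already-established structural lemmas to isolate the gadget $\{\ra,\la^1,\ldots,\la^5\}$ from the rest of $\pi^*$, then to rule out every configuration within it other than the target one by explicit improving deviations.

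By Lemmas~\ref{lem_fhg2:LRC}, \ref{lem_fhg2:LRC_general}, and \ref{lem_fhg2:RL_everyone}, no agent outside $\{\ra,\la^1,\ldots,\la^5\}$ shares a coalition in $\pi^*$ with $\ra$ or with any $\la^i$; in particular, neither $\casec$ for any $c'\in\CC$ nor $\la^6$ does. Hence $\pi^*$ partitions this six-element set among itself. Each $\la^i$ has positive valuation only for $\ra$ within the gadget (and vice versa), so any $\la^i\notin\pi^*(\ra)$ has $u_{\la^i}(\pi^*)=0$ and must be a singleton by Lemma~\ref{lem_fhg2:0_implies_singleton} together with the null-coalition assumption. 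Consequently, $\pi^*$ restricted to $\{\ra,\la^1,\ldots,\la^5\}$ is described, up to relabeling of leaves, by a single integer $k\in\{0,\ldots,5\}$: $\ra$ sits with $k$ leaves in a coalition in which $u_{\ra}=k/(k+1)$ and each co-leaf obtains $1/(k+1)$, while the remaining $5-k$ leaves are singletons.

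It remains to show $k=3$. For each $k\in\{0,1,2,4,5\}$, I construct a partition $\pi$ that differs from $\pi^*$ only by changing $k$ to some other value $k'$ inside the gadget; since the rest of $\pi^*$ is untouched, all agents outside the gadget are indifferent between $\pi^*$ and $\pi$. Taking $k=0\mapsto k'=1$ (margin $2$), $k=1\mapsto k'=2$ (margin $1$), $k=2\mapsto k'=4$ (margin $1$), $k=4\mapsto k'=3$ (margin $1$), and $k=5\mapsto k'=4$ (margin $2$), a direct computation in each case gives $\phi(\pi,\pi^*)\geq 1$, contradicting popularity of $\pi^*$. For instance, the deviation $k=5\to k'=4$ sees the four leaves remaining with $\ra$ move from utility $1/6$ to $1/5$ and prefer $\pi$, whereas $\ra$ (utility $5/6\to 4/5$) and the newly extracted leaf (utility $1/6\to 0$) prefer $\pi^*$, yielding $\phi=4-2=2$.

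The step is not genuinely difficult; the only mild subtlety is that the closest deviation $k\to k+1$ can tie at $\phi=0$ (this happens precisely for $k=2\to 3$), so for $k=2$ one must move two leaves at once in order to secure a strict improvement. Once the five cases above are verified, $k=3$ is forced, which is exactly the content of the lemma.
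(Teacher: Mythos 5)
Your proposal is correct and follows essentially the same route as the paper: first use \Cref{lem_fhg2:RL_everyone} (together with \Cref{lem_fhg2:0_implies_singleton}) to reduce the gadget to a single parameter $k$, then rule out every $k\neq 3$ by an explicit more popular rearrangement within the gadget. The only cosmetic difference is the choice of target configurations (the paper jumps to the full star $\{\ra,\la^1,\dots,\la^5\}$ for $k<3$ and removes one leaf for $k>3$, whereas you use $k\mapsto k+1$ for $k\in\{0,1\}$ and $2\mapsto 4$); your margin computations, including the observation that $2\mapsto 3$ only ties, are all correct.
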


\begin{proof}
\Cref{lem_fhg2:RL_everyone} tells us that none of the agents $\ra,\la^1, \dots,\la^5$ can belong to a coalition containing an agent who is not from $\ra,\la^1, \dots,\la^5$. 
Hence, it is easy to verify that if $\pi^*(\ra)$ contained more than three of $\la^1, \dots,\la^5$, it would be more popular to remove one of them; and if it contained less, it would be more popular to form a coalition $\{\ra,\la^1, \dots,\la^5\}$.
\qed
\end{proof}

So far we have derived the structure of the coalitions of the agents in the star gadgets.
We now wish to gain more insight on the structure of the coalitions of the remaining agents.
\begin{lemma}
\label{lem_fhg2:j'_j''}
Fix some agent $\ga\notin\{j,j',j''\}$. Then $j',j''\notin\pi^*(\ga)$. Furthermore, $j'\in\pi^*(j'')$.
\end{lemma}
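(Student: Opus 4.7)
The plan is to derive both claims by the same local modification of $\pi^*$: extract $j'$ from $\pi^*(j')$ and $j''$ from $\pi^*(j'')$ and merge them into the new coalition $\{j',j''\}$. The key structural fact is that each of $j',j''$ assigns positive value only to $j$ (value $1$) and to the other of the two (value $2$), while outside $\{j,j',j''\}$ no one assigns positive value to either of them. Combined with \Cref{prop:FHGpreferences}, this means the extraction can strictly hurt no agent except possibly $j$: every other affected agent sees only a $0$-valuing coalition-mate leave and hence weakly gains. In the resulting coalition $\{j',j''\}$, both $j'$ and $j''$ attain their maximum possible utility of $1=2/2$; the strategy is therefore to show that whenever $\pi^*$ violates the conclusion of the lemma, both $j'$ and $j''$ strictly prefer $\pi$, while at most one agent (namely $j$) can prefer $\pi^*$, giving $\phi(\pi,\pi^*)\geq 2-1=1$ and contradicting popularity of $\pi^*$.

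For the first claim, I would suppose toward contradiction that some $\ga\notin\{j,j',j''\}$ lies in $\pi^*(j')$; the $\pi^*(j'')$ case is symmetric. Since $\ga\in\pi^*(j')$ contributes nothing to the numerator of $u_{j'}(\pi^*)$ but increases the denominator $|\pi^*(j')|$ by one, a short case split on which of $j,j''$ are also in $\pi^*(j')$ shows that the numerator is always strictly below the denominator, so $u_{j'}(\pi^*)<1$. For $j''$ I split on whether $j''\in\pi^*(j')$: if so, the same calculation yields $u_{j''}(\pi^*)<1$; if not, then the only positive-valued coalition-mate of $j''$ in $\pi^*(j'')$ is $j$, so $u_{j''}(\pi^*)\leq 1/|\pi^*(j'')|\leq 1/2<1$. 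Thus both $j'$ and $j''$ strictly prefer $\pi$. By \Cref{prop:FHGpreferences}, every member of $\pi^*(j')\cup\pi^*(j'')$ besides $j$ weakly prefers $\pi$ (a $0$-valuing mate has been removed), and $j$ lies in at most one of these two coalitions, so at most one agent strictly prefers $\pi^*$. Hence $\phi(\pi,\pi^*)\geq 1$, the desired contradiction.

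For the second claim, the first claim already gives $\pi^*(j'),\pi^*(j'')\subseteq\{j,j',j''\}$. If $j'\notin\pi^*(j'')$, then neither $\pi^*(j')$ nor $\pi^*(j'')$ contains both of $j',j''$, which forces $u_{j'}(\pi^*)\leq 1/2$ and $u_{j''}(\pi^*)\leq 1/2$ directly. Repeating the same merge raises both utilities to $1$, so $j'$ and $j''$ again strictly prefer $\pi$; and $j$, present in at most one of the two coalitions, contributes at most one preference against the change, giving $\phi(\pi,\pi^*)\geq 1$, a contradiction.

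The one subtle point is the role of $j$: because $v_j(j')=v_j(j'')=v_J>1$ are among $j$'s largest valuations, it is genuinely possible that $j$ prefers $\pi^*$ after the extraction, and it is tempting to worry that extracting $j',j''$ away from $j$ destroys the argument. The argument survives only because $j$ is a single agent inhabiting a single coalition of $\pi^*$, and therefore contributes at most one preference against the change, while $j'$ and $j''$ jointly contribute two preferences in favor, giving the slim but sufficient margin of one.
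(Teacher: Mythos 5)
Your proof is correct and takes essentially the same route as the paper: the same deviation merging $j'$ and $j''$ into the coalition $\{j',j''\}$, the same observation that both strictly gain (utility rises to $1$) while only $j$ can possibly object, and the same count giving a popularity margin of at least $2-1=1$. Your treatment of the second claim, which bounds $u_{j'},u_{j''}\le \frac12$ and counts votes rather than asserting that both agents sit in singletons and invoking a Pareto improvement, is in fact slightly more careful than the paper's, since it also covers the configuration where one of $j',j''$ is paired with $j$.
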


\begin{proof}
Let $\ga\notin\{j,j',j''\}$.
Assume towards contradiction that, without loss of generality, $j'\in\pi^*(\ga)$. Observe that, regardless of whether or not $j$ or $j''$ are also in $\pi^*(\ga)$, we have that $u_{j'}(\pi^*)\leq \frac{3}{4}$ and $u_{j''}(\pi^*)\leq \frac{3}{4}$. 
Consider the partition $\pi$ obtained from $\pi^*$ by removing $j'$ and $j''$ from their coalitions and forming a new coalition consisting of the two of them. 
Then $u_{j'}(\pi)=u_{j''}(\pi)=1>\frac{3}{4}$, and therefore $j'$ and $j''$ prefer $\pi$ over $\pi^*$.
By contrast, the only agent that may prefer $\pi^*$ is $j$ (since she is the only agent who assigns positive value to $j'$ and $j''$). 
Hence, $\pi$ is more popular than $\pi^*$, a contradiction.

To prove the second part of the statement, we observe that if $j'\notin\pi^*(j'')$ then, according to the first part, $j'$ and $j''$ are in singleton coalitions. Hence, it would be a Pareto improvement to group them together, a contradiction to popularity of $\pi^*$.
\qed
\end{proof}

\begin{lemma}
\label{lem_fhg2:Y'_Y''}
Fix agents $\y\in Y$ and $\ga\notin\{\y,\yp,\ypp\}$. Then $\yp,\ypp\notin\pi^*(\ga)$. Furthermore, $\yp\in\pi^*(\ypp)$.
\end{lemma}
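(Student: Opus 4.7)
The plan is to mirror the proof of \Cref{lem_fhg2:j'_j''} almost verbatim, since the valuations involving $\{\y,\yp,\ypp\}$ are structurally identical to those involving $\{j,j',j''\}$: $\yp$ and $\ypp$ each value one another at $2$, value $\y$ at $1$, and value everyone else at $0$; and $\y$ is the only outside agent giving them positive value (namely $\frac{3}{2}$). I therefore expect the proof to be short and the arguments to be essentially mechanical translations of the earlier ones.

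For the first part, I assume for contradiction that, without loss of generality, $\yp\in\pi^*(\ga)$ for some $\ga\notin\{\y,\yp,\ypp\}$, and I would argue in two steps. First, I bound $u_{\yp}(\pi^*)\le \tfrac{3}{4}$: because $\yp$'s only positive values are $1$ on $\y$ and $2$ on $\ypp$, and her coalition already contains the $0$-valued $\ga$, her utility is maximized by $\{\y,\yp,\ypp,\ga\}$ yielding $(1+2)/4=\tfrac{3}{4}$. Next, I bound $u_{\ypp}(\pi^*)\le \tfrac{3}{4}$ by cases: if $\ypp\in\pi^*(\yp)$ the same computation gives $\tfrac{3}{4}$, and otherwise the best $\ypp$ can do without $\yp$ is $\{\y,\ypp\}$ with utility $\tfrac{1}{2}$. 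Then I define $\pi$ from $\pi^*$ by extracting $\yp,\ypp$ and forming the new coalition $\{\yp,\ypp\}$, in which both of them get utility $1>\tfrac{3}{4}$.

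To close the first part, I verify that no agent other than $\y$ can possibly prefer $\pi^*$: any agent $a\in\pi^*(\yp)\setminus\{\yp,\y,\ypp\}$ has $v_a(\yp)=0$, so losing $\yp$ from her coalition leaves her numerator unchanged while shrinking the denominator, and by \Cref{lem_fhg2:0_implies_singleton} together with the null-coalition assumption $a$ has strictly positive utility in $\pi^*$, so $a$ weakly prefers $\pi$; the symmetric argument handles $\pi^*(\ypp)\setminus\{\ypp,\y,\yp\}$. Hence at least two agents strictly prefer $\pi$ (namely $\yp$ and $\ypp$) while at most one strictly prefers $\pi^*$ (namely $\y$), so $\phi(\pi,\pi^*)\ge 1$, contradicting popularity of $\pi^*$.

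For the second part, the first part forces $\pi^*(\yp),\pi^*(\ypp)\subseteq\{\y,\yp,\ypp\}$. If $\yp\notin\pi^*(\ypp)$, these two coalitions are disjoint subsets of $\{\y,\yp,\ypp\}$, so at most one of them contains $\y$; merging them into $\{\yp,\ypp\}$ or $\{\y,\yp,\ypp\}$ strictly increases the utility of both $\yp$ and $\ypp$ (to $1$ or $1$ respectively, from values bounded by $\tfrac{2}{3}$) and weakly improves $\y$'s utility (either unaffected or raised from $\tfrac{3}{4}$ to $1$), giving a Pareto improvement that contradicts Pareto-optimality of $\pi^*$. The only mildly delicate point I anticipate is being careful, in the ``$\y$ might prefer $\pi^*$'' step, to enumerate exactly which agents can sit in $\pi^*(\yp)$ and $\pi^*(\ypp)$ simultaneously and to confirm that $\y$'s presence in those coalitions is not required anywhere in the argument.
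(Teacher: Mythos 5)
Your proposal is correct and matches the paper's argument, which simply declares this lemma's proof identical to that of \Cref{lem_fhg2:j'_j''} under the substitution $j\mapsto\y$, $j'\mapsto\yp$, $j''\mapsto\ypp$; your utility bounds, the deviation to $\{\yp,\ypp\}$, and the observation that only $\y$ can prefer $\pi^*$ are exactly that translation. In the second part you are in fact slightly more careful than the paper (which asserts the two agents would be singletons, overlooking the possibility that one of them sits with $\y$), but your case analysis still yields the same Pareto improvement, so nothing is lost.
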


\begin{proof}
The proof is identical to the proof of \Cref{lem_fhg2:j'_j''}, substituting $j$ with $\y$, $j'$ with $\yp$ and $j''$ with $\ypp$.
\qed
\end{proof}

\begin{lemma}
\label{lem_fhg2:y_max}
Fix agent $\y\in Y$. Then $u_{\y}(\pi^*)\leq 1$.
\end{lemma}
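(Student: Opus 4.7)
The plan is to reduce the statement to a two-case analysis built on top of the structural constraint established in \Cref{lem_fhg2:Y'_Y''}. That lemma proves $\yp \in \pi^*(\ypp)$ while forbidding any agent outside $\{\y, \yp, \ypp\}$ from joining them, so $\pi^*(\yp) = \pi^*(\ypp)$ must be either $\{\yp, \ypp\}$ or $\{\y, \yp, \ypp\}$. Consequently, there are exactly two options for $\pi^*(\y)$: either it equals $\{\y, \yp, \ypp\}$, or it contains neither $\yp$ nor $\ypp$.

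In the first case, a direct computation using $v_\y(\yp) = v_\y(\ypp) = \frac{3}{2}$ gives $u_\y(\pi^*) = \frac{3/2 + 3/2}{3} = 1$, settling this case. In the second case, the key observation is that once $\yp$ and $\ypp$ are removed from consideration, the only agent to whom $\y$ assigns a value strictly greater than $1$ is $j$ (value $2$); every other agent receives value at most $1$, with complementary agents and structure agents contributing $0$. Writing $S = \pi^*(\y)$, I would bound the numerator of $u_\y(S)$ by $2\cdot\mathds{1}[j \in S] + (|S| - 1 - \mathds{1}[j \in S])$, which equals $|S| - 1 + \mathds{1}[j \in S] \leq |S|$. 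Dividing by $|S|$ yields $u_\y(\pi^*) \leq 1$.

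No substantial obstacle is expected; the claim follows cleanly from \Cref{lem_fhg2:Y'_Y''} together with the bounded magnitude of $\y$'s valuations. The only minor point to check is that agents such as $\ny$, the $Y'$- and $Y''$-agents of other variables, the $X$-structure and $L$-agents, and $j'$, $j''$ all contribute value $0$ to $\y$, so they merely increase the denominator of $u_\y$ and cannot violate the bound.
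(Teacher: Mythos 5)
Your proof is correct and follows essentially the same route as the paper: both arguments rest on \Cref{lem_fhg2:Y'_Y''} plus the fact that $\y$'s only valuations exceeding $1$ are the $\frac{3}{2}$'s for $\yp,\ypp$ and the $2$ for $j$, and both resolve the claim by a two-case analysis (the paper splits on whether $j\in\pi^*(\y)$, you split on whether $\yp,\ypp\in\pi^*(\y)$, which carves up the same possibility space). No gap; your observation that $\pi^*(\yp)=\pi^*(\ypp)\in\{\{\yp,\ypp\},\{\y,\yp,\ypp\}\}$ makes the case split exhaustive.
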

\begin{proof}
If $j\in\pi^*(\y)$, then by \Cref{lem_fhg2:Y'_Y''} we have $\yp,\ypp\notin\pi^*(\y)$; hence, since $\y$ assigns value $2$ to $j$ and at most value 1 to all remaining agents, her utility is at most 1. 

If $j\notin\pi^*(\y)$, since $\y$ assigns value $\frac{3}{2}$ to $\yp,\ypp$, and at most value $1$ to all remaining agents, her utility is at most 1.
\qed\end{proof}
 
\begin{lemma}
\label{lem_fhg2:xf_xf'_together}
Let $x\in\XX$. Then $\xfp\in\pi^*(\xf)$.
\end{lemma}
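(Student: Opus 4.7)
The plan is to prove the lemma by contradiction: assume $\xfp \notin \pi^*(\xf)$ and construct a partition $\pi$ strictly more popular than $\pi^*$. Write $S = \pi^*(\xf)$ and $S' = \pi^*(\xfp)$. The starting observation is that $\xf$ assigns positive value only to $\{\x, \nx, \xfp\}$ (each with value $1$), and analogously for $\xfp$, so by the null-coalition assumption combined with \Cref{lem_fhg2:0_implies_singleton}, either $S = \{\xf\}$ or $S \cap \{\x, \nx\} \neq \emptyset$ (using $\xfp \notin S$), and analogously for $S'$.

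I would then split into three main cases. If both $S$ and $S'$ are singletons, forming $\{\xf, \xfp\}$ is an immediate Pareto improvement (both jump from $0$ to $1$). If exactly one is a singleton, say $S' = \{\xfp\}$ with $\x \in S$ WLOG, I would propose a modification that pulls $\xfp$ into $\pi^*(\xf)$: for $S = \{\xf, \x\}$ merge to $\{\x, \xf, \xfp\}$, which strictly improves all three agents (Pareto); for $S = \{\xf, \x, \nx\}$ merge to $\{\x, \nx, \xf, \xfp\}$ (also Pareto); and for the remaining $|S| \geq 3$ configurations, form $\{\xf, \xfp\}$ and leave $S \setminus \{\xf\}$ as a coalition, using the key fact that every agent in $S \setminus \{\xf, \x, \nx\}$ assigns value $0$ to $\xf$ and thus strictly prefers $\pi$ (removing a zero-valued agent shrinks the denominator of her utility), which together with the strict gains for $\xf$ and $\xfp$ outweighs the at most two potential objectors $\x$ and $\nx$.

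The third case is when neither $S$ nor $S'$ is a singleton. Then $\x$ and $\nx$ lie in different coalitions, WLOG $\x \in S$ and $\nx \in S'$. The delicate sub-case is $S = \{\xf, \x\}$ and $S' = \{\xfp, \nx\}$, where the naive modification of merely forming $\{\xf, \xfp\}$ gives a $2$-$2$ tie; here the crucial move is the four-agent merge $\{\x, \nx, \xf, \xfp\}$, where the calibrated value $\tfrac{9}{10}v_X$ ensures $u_{\x}$ and $u_{\nx}$ remain exactly $\tfrac{9}{20}v_X$ (indifference), while $u_{\xf}$ and $u_{\xfp}$ strictly improve from $\tfrac{1}{2}$ to $\tfrac{3}{4}$, giving a popularity gain of $2$. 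Otherwise, WLOG $|S| \geq 3$, and I would use the asymmetric modification that forms $\{\xf, \xfp, \nx\}$ (pulling $\nx$ out of $S'$ and $\xf$ out of $S$): this strictly improves $\xf$ (from at most $\tfrac{1}{3}$ to $\tfrac{2}{3}$) and $\nx$ (from $\tfrac{9}{20}v_X$ to $\tfrac{3}{5}v_X$), costs only $\xfp$ (who drops from $1$ to $\tfrac{2}{3}$), and strictly improves every non-$X$-agent of $S \setminus \{\xf\}$ via the same denominator-shrinking observation, yielding $\phi \geq 1$ against at most the two objectors $\xfp$ and $\x$.

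The main obstacle is precisely the delicate skinny sub-case $S = \{\xf, \x\}$, $S' = \{\xfp, \nx\}$: the obvious swap pairing $\xf$ with $\xfp$ ties at $2$-$2$, and one is forced to the four-agent merge whose success depends on the carefully chosen coefficient $\tfrac{9}{10}v_X$ in the valuations $v_{\x}(\xf) = v_{\nx}(\xf) = \tfrac{9}{10}v_X$, chosen exactly so that $\x$ and $\nx$ stay indifferent. All remaining cases reduce to routine utility algebra combined with the observation that removing a zero-valued agent from a coalition with positive utility strictly raises the utilities of the other members.
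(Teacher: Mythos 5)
Your overall strategy (a case analysis on whether $\pi^*(\xf)$ and $\pi^*(\xfp)$ are singletons, followed by local rearrangements) is in the spirit of the paper's proof, and several of your cases are fine --- in particular your four-agent merge for the configuration $\{\xf,\x\}$, $\{\xfp,\nx\}$ is a valid alternative to the paper's move there. However, there are two configurations where the deviation you prescribe yields a popularity margin of exactly $0$, which does not contradict popularity. First, in your ``one singleton'' case take $S=\pi^*(\xf)=\{\xf,\x,\nx,a\}$ for one further agent $a$ and $\pi^*(\xfp)=\{\xfp\}$. Forming $\{\xf,\xfp\}$ leaves $\xf$ exactly indifferent, since $u_{\xf}(S)=\tfrac{2}{4}=\tfrac{1}{2}$; the supporters are then only $\xfp$ and $a$, while both $\x$ and $\nx$ strictly prefer $\pi^*$ (each loses an agent valued $\tfrac{9}{10}v_X$, which exceeds her average utility in $S$). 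Your claimed ``strict gains for $\xf$'' fail precisely at $|S|=4$ with both complementary $X$-agents present, and the count is $2$--$2$. The paper avoids this by reversing the direction of the merge when $S$ is small: it adds $\xfp$ \emph{into} $S$, whereupon all of $\xf,\xfp,\x,\nx$ gain and only $a$ loses.

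Second, in your ``neither singleton'' case you compute $\nx$'s utility as if $S'=\pi^*(\xfp)=\{\xfp,\nx\}$, but nothing established up to this point forces $|S'|=2$. If, say, $S'=\{\xfp,\nx,\xt\}$ and $S=\{\xf,\x,a\}$, then $u_{\nx}(S')=\tfrac{19}{30}v_X>\tfrac{3}{5}v_X$, so $\nx$ prefers $\pi^*$ to your proposed coalition $\{\xf,\xfp,\nx\}$, and $\xt$, dropped to a singleton with utility $0$, is a third objector alongside $\x$; against the supporters $\xf$, $\xfp$, $a$ this is again a $3$--$3$ tie. (Your side remark that $\xfp$ ``drops from $1$ to $\tfrac{2}{3}$'' is also incorrect --- $u_{\xfp}(S')\le\tfrac{1}{2}$ because $\xfp$ values only $\x$, $\nx$, and $\xf$ positively --- though that error is in the favorable direction.) The missing idea, which is how the paper proceeds, is to let the relative sizes of the two coalitions decide which of $\xf$ and $\xfp$ migrates into the other's coalition, so that the members of the larger coalition supply enough supporters of the deviation.
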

\begin{proof}
Assume towards contradiction that $\xf$ and $\xfp$ are members of different coalitions. 
Clearly if both have utility~0, and therefore are singletons, it is a Pareto improvement to pair them together.
The only agents that they receive positive utility from are $\x$ and $\nx$, so one of $\xf$ and $\xfp$ has to reside in a coalition with one of them.
Without loss of generality, we assume that $\x\in \pi^*(\xf)$. 
We consider two cases.

\textit{Case 1:} Assume that $u_{\xfp}(\pi^*)>0$. Then $\nx\in\pi^*(\xfp)$. 
If $|\pi^*(\xf)|\geq |\pi^*(\xfp)|$, then consider the partition $\pi$ obtained from $\pi^*$ by moving $\xf$ from $\pi^*(\xf)$ to $\pi^*(\xfp)$.
The partition $\pi$ is more popular than $\pi^*$: First note that $v_{\xfp}(\xf)\ge u_{\xf}(\pi^*)$ and $v_{\nx}(\xf)\ge u_{\nx}(\pi^*)$.
Hence, by \Cref{prop:FHGpreferences}, $\xfp$ and $\nx$ prefer $\pi$.
In addition, $\xf$ has a valuation of $1$ for $\xfp$ and $\nx$ in $\pi(\xf)$ whereas she only had a valuation of $1$ for $\x$ in $\pi^*$.
As her coalition size also shrinks (and therefore the number of agents for which she has a valuation of $0$), she also prefers $\pi$.
Moreover, all additional members of $\pi^*(\xf)$ (apart from $\x$) prefer $\pi$.
Since $|\pi^*(\xf)|\geq |\pi^*(\xfp)|$, we conclude that $\pi$ is more popular. 
If $|\pi^*(\xf)|<|\pi^*(\xfp)|$, a symmetric argument shows it is more popular to move $\xfp$ to $\pi^*(\xf)$.

\textit{Case 2:} Assume $u_{\xfp}(\pi^*)=0$. If $\nx\in\pi^*(\xf)$, then if $|\pi^*(\xf)|\leq 5$ it is more popular to include $\xfp$ in this coalition, and if $|\pi^*(\xf)|>5$ it is more popular to extract $\xf$ from it and pair it with $\xfp$. 
So suppose $\nx\notin\pi^*(\xf)$. Then if $|\pi^*(\xf)|\leq 3$ it is more popular to include $\xfp$ in this coalition, and if $|\pi^*(\xf)|>3$ it is more popular to extract $\xf$ from it and pair it with $\xfp$.

In all cases, we have a contradiction to the popularity of $\pi^*$.
\qed
\end{proof}

We now narrow down the structure of possible coalitions of agents of type $\xt$, $\xf$, and $\xfp$ if there exists a corresponding $X$-agent that does not form a coalition with any of them.

\begin{lemma}
\label{lem_fhg2:xt_xf_xf'}
Let $x\in\XX$, and let $\alpha\in\{x,\neg x\}$. If $\xt,\xf,\xfp\notin\pi^*(\xl)$, then exactly one of the following holds:
\begin{enumerate}
    \item $\pi^*(\xt)=\{\xt\}$ and $\pi^*(\xf)=\pi^*(\xfp)=\{\xf,\xfp\}$.\label{lem_fhg2:xt_xf_xf':both_free}
    \item $\xf,\xfp\in\pi^*(\nxl)$ and $\pi^*(\xt)=\{\xt\}$.\label{lem_fhg2:xt_xf_xf':xt_free}
    \item $\xt\in\pi^*(\nxl)$ and $\pi^*(\xf)=\pi^*(\xfp)=\{\xf,\xfp\}$.\label{lem_fhg2:xt_xf_xf':xf_free}
    \item $\xt,\xf,\xfp\in\pi^*(\nxl)$, and thus $u_{\xf}=u_{\xfp}\leq \frac{1}{2}$.\label{lem_fhg2:xt_xf_xf':all_together}
\end{enumerate}
\end{lemma}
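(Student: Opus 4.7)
The plan is to determine the coalitions of $\xt$ and of the pair $(\xf,\xfp)$ independently, then combine the possibilities. By \Cref{lem_fhg2:xf_xf'_together} we have $\pi^*(\xf)=\pi^*(\xfp)$, so we can work with a single coalition, which we denote $S_f$.

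First we show that either $\pi^*(\xt)=\{\xt\}$ or $\nxl\in\pi^*(\xt)$. Indeed, if $\xt$ is not a singleton, \Cref{lem_fhg2:0_implies_singleton} forces $u_{\xt}(\pi^*)>0$. Since $\xt$ assigns positive value only to $\xl$ and $\nxl$, and the hypothesis of the lemma excludes $\xl$ from $\pi^*(\xt)$, we must have $\nxl\in\pi^*(\xt)$.

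Next we show that either $S_f=\{\xf,\xfp\}$ or $\nxl\in S_f$. Suppose for contradiction that $S_f\supsetneq\{\xf,\xfp\}$ and $\nxl\notin S_f$. Since $\xf$ values positively only $\xl$, $\nxl$, and $\xfp$, and of these only $\xfp$ lies in $S_f$, we have $u_{\xf}(\pi^*)=1/|S_f|\le 1/3$. Consider the partition $\pi$ obtained from $\pi^*$ by extracting $\{\xf,\xfp\}$ as its own coalition: then $\xf$ and $\xfp$ strictly improve to utility $1/2$. For every other agent $a\in S_f\setminus\{\xf,\xfp\}$, the fact that $a\notin\{\xl,\nxl\}$ gives $v_a(\xf)=v_a(\xfp)=0$, so the numerator of $u_a$ is unchanged while the denominator drops by $2$, yielding $u_a(\pi)=u_a(\pi^*)\cdot |S_f|/(|S_f|-2)\ge u_a(\pi^*)$. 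Hence $\pi$ is more popular than $\pi^*$, contradicting popularity.

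Combining the two substeps produces exactly the four scenarios of the lemma, indexed by whether $\nxl\in\pi^*(\xt)$ and whether $\nxl\in S_f$. In the case where both hold, the two coalitions coincide, giving $\{\xt,\xf,\xfp\}\subseteq\pi^*(\nxl)$. Inside this coalition the only agents that $\xf$ values positively are $\xfp$ and $\nxl$ (each with value~$1$), so $u_{\xf}=u_{\xfp}=2/|\pi^*(\xf)|$, and since the coalition has size at least $4$ we conclude $u_{\xf}=u_{\xfp}\le 1/2$. The only step that requires real work is the popularity argument bounding $S_f$; the remainder is a clean two-by-two case split.
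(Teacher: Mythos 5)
Your proof is correct and follows essentially the same route as the paper: the same two independent observations (a non-singleton $\xt$ must contain $\nxl$ by \Cref{lem_fhg2:0_implies_singleton}, and a coalition $\pi^*(\xf)=\pi^*(\xfp)$ not containing $\nxl$ must be exactly $\{\xf,\xfp\}$ by Pareto optimality), combined in the same two-by-two case split. The paper merely states the second observation as following ``by Pareto optimality,'' whereas you spell out the extraction of $\{\xf,\xfp\}$ explicitly; the content is identical.
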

\begin{proof}
By \Cref{lem_fhg2:0_implies_singleton}, if $\xt\notin\pi^*(\nxl)$ then $\pi^*(\xt)=\{\xt\}$.
By \Cref{lem_fhg2:xf_xf'_together} and by Pareto optimality, if $\xf,\xfp\notin\pi^*(\nxl)$ then $\pi^*(\xf)=\pi^*(\xfp)=\{\xf,\xfp\}$. Hence, parts 1 through 4 describe the only possibilities.  
\qed
\end{proof}

Using \Cref{lem_fhg2:xt_xf_xf'}, we can see that an $X$-agent who is separated from her corresponding agents always has a ``good'' deviation, resulting in a partition that is liked at least as much as $\pi^*$ by these agents. 
Similarly, we can find a ``good'' deviation for $Y$-agents and $j$ if they are not with their corresponding agents.
We formalize this idea as it plays an important role in the remainder of the proof.

\begin{definition}[Isolated Agent]
\label{def:fhg2_isolated_agent}
 Let $x\in\XX$, and let $\alpha\in\{x,\neg x\}$. 
 If $\xt,\xf,\xfp\notin \pi^*(\xl)$, we say that $\xl$ is \emph{isolated}. 
 Note that $\xl$ and $\nxl$ may be together in the same coalition, and still be considered isolated.
 
Let $\y\in Y$. If $\yp,\ypp\notin\pi^*(\y)$, we say that $\y$ is \emph{isolated}.

If $j',j''\notin \pi^*(j)$, we say that $j$ is \emph{isolated}.
\end{definition}

\begin{definition}[Isolated Deviation]
\label{def:fhg2_isolated_deviations}
Let $x\in\XX$, and let $\alpha\in\{x,\neg x\}$. 
Denote $S=\pi^*(\xl)$, and assume $\xl$ is isolated. 
Then the following deviation from $\pi^*$, resulting in a partition denoted $\pi$, is denoted \emph{isolated deviation}, performed by $\xl$.
\begin{itemize}
    \item If $\nxl\in S$, extract both $\x$ and $\nx$, and set $\pi(\x)=\{\x,\xt\},\pi(\nx)=\{\nx,\xf,\xfp\}$ (note that it holds in this situation $\pi^*(\xt)=\{\xt\}$ and $\pi^*(\xf)=\pi^*(\xfp)=\{\xf,\xfp\}$ by \Cref{lem_fhg2:xt_xf_xf'}(\ref{lem_fhg2:xt_xf_xf':both_free})). 
    \item If $\nxl\notin S$, extract $\xl$ from $S$; if \Cref{lem_fhg2:xt_xf_xf'}(\ref{lem_fhg2:xt_xf_xf':xt_free}) holds, set $\pi(\xl)=\{\xl,\xt\}$; otherwise, set $\pi(\xl)=\{\xl,\xf, \xfp\}$.
\end{itemize}

Let $\y\in Y$, and denote $S=\pi^*(\y)$. Assume $\yp,\ypp\notin S$. Consider the deviation from $\pi^*$, resulting in a partition denoted $\pi$, where we extract $\y$ from $S$ and set $\pi(\y)=\{\y,\yp,\ypp\}$. This deviation is denoted an \emph{isolated deviation}, performed by $\y$.

If $j',j''\notin \pi(j)$, then setting $\pi(j)=\{j,j',j''\}$ is called an \emph{isolated deviation}, performed by $j$.
\end{definition}

\begin{lemma}
\label{lem_fhg2:X_deviation_outcome}
Let $x\in\XX$, and let $\alpha\in\{x,\neg x\}$. If $\xl$ is isolated, and she performs an isolated deviation resulting in some partition $\pi$, then $\phi_{\{\xl,\nxl,\xt,\xf,\xfp\}}(\pi^*,\pi)\leq 0$. 
Furthermore, if $\phi_{\xl}(\pi^*,\pi)\leq 0$, then $\phi_{\{\xl,\nxl,\xt,\xf,\xfp\}}(\pi^*,\pi)<0$.

Lastly, if \Cref{lem_fhg2:xt_xf_xf'}(\ref{lem_fhg2:xt_xf_xf':all_together}) holds, then for any agent $\ga\in\pi^*(\xf) \setminus \{\xl,\nxl,\xt,\xf,\xfp\}$, it holds that $\phi_{\ga}(\pi^*,\pi)=-1$.
\end{lemma}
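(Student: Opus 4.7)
The plan is to proceed by case distinction on which sub-case of \Cref{lem_fhg2:xt_xf_xf'} applies to $\xl$, and (within Case~(\ref{lem_fhg2:xt_xf_xf':both_free})) on whether $\nxl \in \pi^*(\xl)$. In each case I track how the isolated deviation reshapes the coalitions of the five agents in $M := \{\xl, \nxl, \xt, \xf, \xfp\}$, and count how many strictly prefer each partition.

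Across all cases, the deviation creates ``good'' coalitions of one of the forms $\{\cdot, \xt\}$ or $\{\cdot, \xf, \xfp\}$. In a coalition $\{\cdot, \xt\}$, agent $\xt$ receives utility $1/2$, strictly better than the singleton utility $0$ she had in $\pi^*$ (by the hypotheses of Cases~(\ref{lem_fhg2:xt_xf_xf':both_free}) and~(\ref{lem_fhg2:xt_xf_xf':xt_free})). In a coalition $\{\cdot, \xf, \xfp\}$, each of $\xf, \xfp$ obtains utility $(1 + 1)/3 = 2/3$, strictly above her prior utility: $1/2$ in Cases~(\ref{lem_fhg2:xt_xf_xf':both_free}) and~(\ref{lem_fhg2:xt_xf_xf':xf_free}) (where $\pi^*(\xf) = \{\xf, \xfp\}$), and $\leq 1/2$ in Case~(\ref{lem_fhg2:xt_xf_xf':all_together}) by the explicit upper bound of \Cref{lem_fhg2:xt_xf_xf'}. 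Still in Case~(\ref{lem_fhg2:xt_xf_xf':all_together}), $\xt$ also benefits from the deviation: she remains in the shrunken coalition $\pi^*(\nxl) \setminus \{\xf, \xfp\}$, whose only member she positively values ($\nxl$) still belongs, so her utility $1/|\pi^*(\nxl)|$ strictly increases.

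Summing contributions to $\phi_M(\pi^*, \pi)$: the structure agents contribute $-3$ in Case~(\ref{lem_fhg2:xt_xf_xf':both_free}) with $\nxl \in \pi^*(\xl)$ (where both good coalitions form simultaneously) and in Case~(\ref{lem_fhg2:xt_xf_xf':all_together}); they contribute $-2$ in Case~(\ref{lem_fhg2:xt_xf_xf':both_free}) with $\nxl \notin \pi^*(\xl)$ and in Case~(\ref{lem_fhg2:xt_xf_xf':xf_free}); they contribute $-1$ in Case~(\ref{lem_fhg2:xt_xf_xf':xt_free}). Since $\xl$ and $\nxl$ each contribute at most $+1$, and in Cases~(\ref{lem_fhg2:xt_xf_xf':xt_free}) and~(\ref{lem_fhg2:xt_xf_xf':xf_free}) $\nxl$ is untouched and contributes $0$, the total in each case is $\leq 0$, establishing the first claim. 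For the strengthened statement: if $\phi_{\xl}(\pi^*, \pi) \leq 0$, then $\xl$'s contribution drops from at most $+1$ to at most $0$, and each of the above bounds tightens by one, yielding $\phi_M(\pi^*, \pi) < 0$.

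For the last statement, fix $\ga \in \pi^*(\xf) \setminus M$ in Case~(\ref{lem_fhg2:xt_xf_xf':all_together}). Inspection of the valuation table (\Cref{table_FHG}) shows that the only agents assigning positive value to $\xf$ or $\xfp$ are $\xl, \nxl, \xf, \xfp$, none of which equals $\ga$, so $v_\ga(\xf) = v_\ga(\xfp) = 0$. Thus removing $\xf, \xfp$ from $\ga$'s coalition leaves the numerator of $u_\ga$ unchanged while shrinking the denominator by two. Since $\ga$ is not a singleton in $\pi^*$ (the coalition $\pi^*(\xf) = \pi^*(\nxl)$ contains at least four agents), \Cref{lem_fhg2:0_implies_singleton} gives $u_\ga(\pi^*) > 0$, so $u_\ga(\pi) > u_\ga(\pi^*)$ and $\phi_\ga(\pi^*, \pi) = -1$. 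The main delicacy is Case~(\ref{lem_fhg2:xt_xf_xf':all_together}), where the deviation perturbs $\pi^*(\nxl)$ globally: one has to argue separately that $\xt$ benefits despite not joining the newly formed coalition, and that $\nxl$'s potential preference for $\pi^*$ is always absorbed by the three structure agents' gains.
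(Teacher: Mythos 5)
Your proposal is correct and follows essentially the same route as the paper's proof: a case distinction over the sub-cases of \Cref{lem_fhg2:xt_xf_xf'} (splitting \Cref{lem_fhg2:xt_xf_xf'}(\ref{lem_fhg2:xt_xf_xf':both_free}) according to whether $\nxl\in\pi^*(\xl)$), tracking the structure agents' strict gains from the newly formed coalitions $\{\cdot,\xt\}$ and $\{\cdot,\xf,\xfp\}$, and for the last claim observing that $\xf$ and $\xfp$ contribute value $0$ to any $\ga\in\pi^*(\xf)\setminus\{\xl,\nxl,\xt,\xf,\xfp\}$ so that their removal strictly raises $\ga$'s positive utility (via \Cref{lem_fhg2:0_implies_singleton}). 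You merely make explicit the utility computations that the paper labels ``straightforward to verify.''
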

\begin{proof}
If \Cref{lem_fhg2:xt_xf_xf'}(\ref{lem_fhg2:xt_xf_xf':both_free}), \Cref{lem_fhg2:xt_xf_xf'}(\ref{lem_fhg2:xt_xf_xf':xf_free}), or \Cref{lem_fhg2:xt_xf_xf'}(\ref{lem_fhg2:xt_xf_xf':all_together}) holds, then it is straightforward to verify that $\phi_{\{\xl,\nxl,\xt,\xf,\xfp\}}(\pi^*,\pi)<0$. 
If \Cref{lem_fhg2:xt_xf_xf'}(\ref{lem_fhg2:xt_xf_xf':xt_free}) holds, then, if $\xl$ prefers $\pi^*$, we get $\phi_{\{\xl,\nxl,\xt,\xf,\xfp\}}(\pi^*,\pi)=0$, and otherwise  $\phi_{\{\xl,\nxl,\xt,\xf,\xfp\}}(\pi^*,\pi)<0$.

For the last part, observe that the only agents removed from $\pi^*(\xf)$ are $\xf$ and $\xfp$.
These yield a valuation of zero to all agents except $\xl$ and $\nxl$.
Hence, since other agents had positive utility (by \Cref{lem_fhg2:0_implies_singleton}), any agent $\ga\in\pi^*(\xf) \setminus \{\xl,\nxl,\xt,\xf,\xfp\}$ improves her utility upon the removal of $\xf$ and $\xfp$.
\end{proof}

Note that the only agents not considered in the previous lemma who may additionally be affected by the considered deviation are agents in $\pi^*(\xl)\setminus \{\xl,\nxl\}$.
We will deal with them when we apply the lemma.

\begin{lemma}
\label{lem_fhg2:Y_deviation_outcome}
Let $\y\in Y$. If $\y$ is isolated, and she performs an isolated deviation resulting in some partition $\pi$, then $\phi_{\{\y,\yp,\ypp\}}(\pi^*,\pi)\leq 0$; furthermore, if $\y$ prefers $\pi$ over $\pi^*$, then $\phi_{\{\y,\yp,\ypp\}}(\pi^*,\pi)=-1$.
\end{lemma}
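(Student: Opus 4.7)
The plan is to observe that when $\y$ is isolated, the situation is essentially static for her corresponding agents, and the isolated deviation produces a perfectly balanced three-agent coalition that changes utilities in a very predictable way. The main point is that there is almost no obstacle: all utilities in the relevant coalitions can be computed explicitly because the valuations among $\y$, $\yp$, and $\ypp$ are fixed by construction, and any external interactions are ruled out by the earlier structural lemmas. So this is not a deep argument; it is a short utility computation packaged with one invocation of each of the earlier structural results.

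First I would use \Cref{lem_fhg2:Y'_Y''} together with the fact that $\y$ is isolated (i.e.\ $\yp,\ypp\notin\pi^*(\y)$) to conclude that $\pi^*(\yp)=\pi^*(\ypp)=\{\yp,\ypp\}$: the lemma says no agent outside $\{\y,\yp,\ypp\}$ can share a coalition with $\yp$ or $\ypp$, and that $\yp$ and $\ypp$ are in a common coalition. A direct computation then gives $u_{\yp}(\pi^*)=u_{\ypp}(\pi^*)=\tfrac{2}{2}=1$.

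Next I would evaluate everything in the partition $\pi$ produced by the isolated deviation, which by \Cref{def:fhg2_isolated_deviations} sets $\pi(\y)=\{\y,\yp,\ypp\}$ (leaving all other coalitions intact, except for the removal of $\y$ from $\pi^*(\y)$, which is irrelevant to the agents we are tracking). Plugging in the valuations from \Cref{table_FHG} yields
\[
u_{\y}(\pi)=\frac{\tfrac{3}{2}+\tfrac{3}{2}}{3}=1,\qquad u_{\yp}(\pi)=u_{\ypp}(\pi)=\frac{1+2}{3}=1.
\]
Hence both $\yp$ and $\ypp$ are indifferent between $\pi^*$ and $\pi$, so $\phi_{\{\yp,\ypp\}}(\pi^*,\pi)=0$.

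It remains to handle $\y$. Here I would invoke \Cref{lem_fhg2:y_max}, which bounds $u_{\y}(\pi^*)\le 1=u_{\y}(\pi)$. Thus $\y$ either is indifferent or prefers $\pi$ over $\pi^*$, so $\phi_{\y}(\pi^*,\pi)\in\{0,-1\}$. Combining with the previous paragraph gives $\phi_{\{\y,\yp,\ypp\}}(\pi^*,\pi)\le 0$, and in the case where $\y$ strictly prefers $\pi$ we obtain $\phi_{\{\y,\yp,\ypp\}}(\pi^*,\pi)=-1$, as claimed. The only thing to be slightly careful about is to verify that no tie-breaking issue hides in the case $u_{\y}(\pi^*)=1$, but this is automatic since in that scenario $\y$ is indifferent and contributes $0$ to the margin.
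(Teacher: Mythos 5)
Your proof is correct and follows exactly the route the paper takes: the paper's own proof is the one-line remark that the claim follows immediately from \Cref{lem_fhg2:Y'_Y''} and \Cref{lem_fhg2:y_max}, and your write-up simply spells out the utility computations ($u_{\yp}=u_{\ypp}=1$ in both partitions, $u_{\y}(\pi)=1\ge u_{\y}(\pi^*)$) that make that immediacy explicit.
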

\begin{proof}
The result is given immediately by \Cref{lem_fhg2:Y'_Y'',lem_fhg2:y_max}.
\end{proof}

\begin{lemma}
\label{lem_fhg2:j_deviation_outcome}
If $j$ is isolated, and she performs an isolated deviation resulting in some partition $\pi$, then $j'$ and $j''$ are indifferent between $\pi^*$ and $\pi$.
\end{lemma}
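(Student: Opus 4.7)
The plan is to compute the utilities of $j'$ and $j''$ in both $\pi^*$ and $\pi$ and observe that they agree. First, I would locate $j'$ and $j''$ in $\pi^*$: since $j$ is isolated, we have $j',j''\notin \pi^*(j)$, and by \Cref{lem_fhg2:j'_j''}, no agent outside $\{j,j',j''\}$ can be in a coalition with $j'$ or $j''$ either. Combined with the second part of \Cref{lem_fhg2:j'_j''}, which states $j'\in \pi^*(j'')$, this forces $\pi^*(j')=\pi^*(j'')=\{j',j''\}$.

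Next I would evaluate utilities. From the valuations, $v_{j'}(j'')=v_{j''}(j')=2$, so in $\pi^*$ both $j'$ and $j''$ obtain utility $\tfrac{2}{2}=1$. The isolated deviation gives $\pi(j)=\{j,j',j''\}$, leaving all other coalitions unchanged. In the new coalition, $j'$ receives value $v_{j'}(j'')=2$ from $j''$ and $v_{j'}(j)=1$ from $j$, so $u_{j'}(\pi)=\tfrac{2+1}{3}=1$; by symmetry $u_{j''}(\pi)=1$. Hence both agents obtain utility $1$ in either partition and are indifferent, which is the claim.

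There is no real obstacle here: once \Cref{lem_fhg2:j'_j''} pins down $\pi^*(j'')=\{j',j''\}$, the conclusion is a one-line arithmetic check using the specific value $v_J$ does not even enter, since the contributions to $j'$ and $j''$ come only from $j$ (value $1$) and each other (value $2$). The lemma is essentially a verification that the particular valuations $v_{j'}(j)=v_{j''}(j)=1$ and $v_{j'}(j'')=v_{j''}(j')=2$ were chosen precisely so that adding $j$ to the pair $\{j',j''\}$ leaves their average utility unchanged.
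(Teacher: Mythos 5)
Your proof is correct and matches the paper's argument, which likewise invokes \Cref{lem_fhg2:j'_j''} to pin down $\pi^*(j')=\pi^*(j'')=\{j',j''\}$ and then verifies that $u_{j'}$ and $u_{j''}$ equal $1$ in both partitions. The explicit arithmetic $\tfrac{2}{2}=\tfrac{2+1}{3}=1$ is exactly the ``easy to verify'' step the paper leaves implicit.
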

\begin{proof}
By \Cref{lem_fhg2:j'_j''}, it is easy to verify that $u_{j'}(\pi^*)=u_{j''}(\pi^*)=u_{j'}(\pi)=u_{j''}(\pi)=1$.
\end{proof}

We have collected enough insights to precisely determine the coalitions of agents $\ca$.

\begin{lemma}
\label{lem_fhg2:l6_coalition}
Let $c\in\CC$, and denote $S=\pi^*(\ca)$. Then $S=\{\ca,\la^6\}$.
\end{lemma}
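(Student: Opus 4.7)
The plan is to assume for contradiction that $T:=S\setminus\{\ca,\la^6\}$ is nonempty and construct a partition $\pi$ with $\phi(\pi^*,\pi)<0$. I would first characterise $T$ using the preceding lemmas: any other $C$-agent in $T$ must be accompanied by its corresponding $L_6$-agent (by \Cref{lem_fhg2:l6_c}); every $Y$-agent in $T$ is isolated (by \Cref{lem_fhg2:Y'_Y''}); if $j\in T$ then $j$ is isolated (by \Cref{lem_fhg2:j'_j''}); and each $X$-agent in $T$ may appear with structure agents $\xt,\xf,\xfp$ only as constrained by \Cref{lem_fhg2:xf_xf'_together} and \Cref{lem_fhg2:0_implies_singleton}. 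I would then define $\pi$ from $\pi^*$ by simultaneously (i) extracting $\{\ca,\la^6\}$ as its own coalition; (ii) pairing each other $C$-agent in $T$ with its corresponding $L_6$-agent as a two-element coalition; (iii) performing the isolated deviations of \Cref{def:fhg2_isolated_deviations} for every isolated $Y$-agent in $T$ and for $j$ when $j\in T$; and (iv) reshuffling each $X$-cluster meeting $T$ into the canonical pairing $\{\xl,\xt\}$, $\{\nxl,\xf,\xfp\}$ (or the appropriate isolated deviation dictated by \Cref{lem_fhg2:xt_xf_xf'} when only one of $\xl,\nxl$ lies in $S$).

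The heart of the proof is to compute $\phi(\pi^*,\pi)$ by partitioning agents into the disjoint groups $\{\la^6\}$, $\{\ca\}$, each other $C$-pair, each triple $\{\y,\yp,\ypp\}$, the triple $\{j,j',j''\}$, and each $X$-cluster $\{\xl,\nxl,\xt,\xf,\xfp\}$. The driving observation is that $\la^6$ strictly prefers $\pi$, since $u_{\la^6}(\{\ca,\la^6\})=3/8$ while $u_{\la^6}(S)=3/(4|S|)<3/8$ as $|S|>2$; this contributes $-1$ to $\phi(\pi^*,\pi)$. By \Cref{lem_fhg2:Y_deviation_outcome} and \Cref{lem_fhg2:j_deviation_outcome}, the $Y$- and $j$-triples contribute non-positively, and a short direct calculation shows $\y$ and $j$ in fact strictly prefer $\pi$---$\y$ because $\la^6\in S$ contributes valuation $0$ forcing $u_{\y}(S)<1$, and $j$ because the non-real agents $\la^6,j\in S$ together with $m\ge 2$ yield $u_j(S)<(2n+m-1)/(2n+m)$. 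Each other $C$-pair contributes non-positively by a direct analogue of the $\{\ca,\la^6\}$ computation, and the $X$-cluster contributions follow by extending \Cref{lem_fhg2:X_deviation_outcome} across the cases of \Cref{lem_fhg2:xt_xf_xf'}, supplemented by a direct computation when both $\xl$ and $\nxl$ lie in $S$.

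The main obstacle I expect is the contribution of $\ca$ herself: when $S$ contains many non-corresponding $X$- or $Y$-agents, the valuation $v_C>1$ can push $u_{\ca}(S)$ above $1$ so that $\ca$ strictly prefers $\pi^*$, contributing $+1$ to $\phi(\pi^*,\pi)$. The resolution is that each such non-corresponding agent $\ga$ assigns valuation $0$ to $\la^6\in S$, so $u_{\ga}(S)$ is strictly less than her utility under the canonical deviation (namely $1$ for an isolated $Y$-agent joining $\{\y,\yp,\ypp\}$, and controlled by $v_X$ for an $X$-agent); hence $\ga$ strictly prefers $\pi$ and contributes an additional $-1$ on her cluster, compensating $\ca$'s $+1$. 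Combined with the unconditional $-1$ from $\la^6$, this yields $\phi(\pi^*,\pi)\le -1$, contradicting popularity of $\pi^*$ and forcing $S=\{\ca,\la^6\}$.
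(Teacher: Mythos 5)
Your overall strategy---extract $\{\ca,\la^6\}$ and repair everything else cluster by cluster---is different from the paper's, and it contains a genuine gap in the step that compensates for $\ca$'s possible $+1$. You claim that whenever $u_{\ca}(S)>1$, some non-corresponding $X$- or $Y$-agent $\ga\in S$ strictly prefers $\pi$ and hence her cluster contributes $-1$. This is false for non-isolated $X$-agents. Take $n=2$, $m=2$ and $S=\{\ca,\la^6,\xl,\xt,j\}$ with $\xl$ non-corresponding to $c$ (nothing proved up to this point excludes such a coalition): then $u_{\ca}(S)=(v_C+3)/5>1$, so $\ca$ prefers $\pi^*$, while $u_{\xl}(S)=(v_X+3)/5>v_X/2$ because $v_X<2$, so $\xl$ \emph{strictly prefers} $\pi^*$ to your reshuffled pair $\{\xl,\xt\}$ and her cluster contributes $0$, not $-1$. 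In this instance the triple $\{j,j',j''\}$ happens to supply the missing $-1$, but your argument never establishes that some group always does; making that work requires a further case analysis over which kinds of agents populate $S$ that the proposal does not carry out. A second, related problem is that the ``reshuffling'' of a non-isolated $X$-cluster is under-specified: forming $\{\nxl,\xf,\xfp\}$ when $\nxl$ sits in some other non-trivial coalition perturbs agents outside all of your declared groups, so the clean disjoint-group accounting does not go through as stated.

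The paper avoids all of this with a different and much lighter deviation. For $|S|\ge 4$ it does \emph{not} keep $\ca$ and $\la^6$ together: it moves $\la^6$ into the star coalition $\{\ra,\la^1,\la^2,\la^3\}$ in place of $\la^3$ (who becomes a singleton). Then $\la^6$ gains ($\tfrac14>\tfrac{3}{16}$), $\ra,\la^1,\la^2$ are indifferent, and---crucially---\emph{every} agent of $S\setminus\{\ca,\la^6\}$ strictly gains because each values $\la^6$ at $0$ and has positive utility by \Cref{lem_fhg2:0_implies_singleton}; since there are at least two such agents and only $\la^3$ and $\ca$ can lose, the margin is at most $-1$ with no cluster analysis at all. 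Only for $|S|=3$ does the paper reason about the single extra agent, which is then forced to be an isolated $X$-, $Y$-, or $j$-agent, so one isolated deviation suffices. If you want to salvage your route, you would need to replace the blanket compensation claim with an exhaustive argument that whenever $u_{\ca}(S)>1$ at least one of the groups meeting $S$ has a strictly negative margin; the paper's swap of $\la^6$ into the star is the cleaner way out.
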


\begin{proof}
\Cref{lem_fhg2:l6_c} tells us $\la^6\in S$, so it is left to show no other agent can be in $S$. Note that $\ra\notin S$ by \Cref{lem_fhg2:RL_everyone}, and therefore $\ca$ is the only agent in $S$ to whom $\la^6$ assigns positive value.
Assume towards contradiction that $|S|\geq 3$. 
We make a case distinction. 

\textit{Case 1:} Assume that $|S|\ge 4$, i.e., there are at least two more agents in $S$ apart from $\ca$ and $\la^6$. 
Then $u_{\la^6}(\pi^*)\leq\frac{3}{16}<\frac{1}{4}$. Consider the partition $\pi$ obtained from $\pi^*$ by extracting $\la^6$ from $S$, and having $\la^6$ join the coalition of $\ra$, while $\la^3$ is removed from that coalition and becomes a singleton (recall the arrangement of $\ra,\la^1, \dots,\la^5$ from \Cref{lem_fhg2:star_partition}). 
We have that $\la^3$ prefers $\pi^*$ over $\pi$, and $\la^6$ prefers $\pi$ over $\pi^*$ (as $u_{\la^6}(\pi)=\frac{1}{4}>u_{\la^6}(\pi^*)$), while $\ra,\la^1,\la^2,\la^4$ and $\la^5$ are clearly indifferent between $\pi,\pi^*$. 
The only other agents who were affected by this change are agents in $S\setminus \{\la^6\}$.
Of these, all except $\ca$ prefer $\pi$. 
Since $S$ contains at least two other agents, it follows that $\pi$ is more popular than $\pi^*$, a contradiction. 

\textit{Case 2:} Assume $|S|=3$. i.e., there is just one agent $\ga$ in $S\setminus \{\ca,\la^6\}$. 
By \Cref{lem_fhg2:RL_everyone}, we have that $\ga\notin (\bigcup_{i=1}^5 L_i)\cup R$, and by \Cref{lem_fhg2:j'_j'',lem_fhg2:Y'_Y''}, we have that $\ga\notin Y'\cup Y''\cup\{j',j''\}$. 
Furthermore, since $|S|=3$, $\ga$ cannot be a $C$- or $L_6$-agent, as those always come in pairs (by \Cref{lem_fhg2:l6_c}), which would yield $|S|\geq 4$. 
Hence, $\ga$ must be either an $X$-agent, a $Y$-agent, or $j$; either way, $\ga$ is isolated. 
Consider the partition $\pi$ obtained from $\pi^*$ by having $\ga$ perform an isolated deviation. 
Clearly, $\la^6$ prefers $\pi$ over $\pi^*$, while $\ca$ may prefer $\pi^*$. By \Cref{lem_fhg2:X_deviation_outcome,lem_fhg2:Y_deviation_outcome,lem_fhg2:j_deviation_outcome}, it is enough to show that $\ga$ prefers $\pi$ to establish that $\pi$ is more popular than $\pi^*$. 
Observe that $u_{\ga}(\pi^*)=\frac{1}{3}$ (regardless of whether $\ga$ is $j$, an $X$-agent, or a $Y$-agent).
If $\ga=j$, then $u_{\ga}(\pi)=\frac{2n+m-1}{2n+m}\geq \frac{2}{3}$;
if $\ga\in Y$, then $u_{\ga}(\pi)=1$; 
and if $\ga\in X$, then any of the possible isolated deviations yield $u_{\ga}(\pi)\geq\frac{v_X}{2}\geq\frac{11}{12}$. 
Hence, $\ga$ prefers $\pi$ over $\pi^*$, concluding the proof.
\qed
\end{proof}

The next three lemmas precisely settle the coalition of $j$.

\begin{lemma}
\label{lem_fhg2:Xf_j}
Let $x\in\XX$. Denote $S=\pi^*(j)$. Then $\xf,\xfp\notin S$.
\end{lemma}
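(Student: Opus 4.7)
Suppose for contradiction that $\xf \in S$. By Lemma~\ref{lem_fhg2:xf_xf'_together}, $\xfp \in S$ as well. I first observe that $j', j'' \notin S$: if $j' \in S = \pi^*(j)$, Lemma~\ref{lem_fhg2:j'_j''} would force $\pi^*(j') \subseteq \{j, j', j''\}$, making $S = \{j, j', j''\}$, contradicting $\xf \in S$; in particular $\pi^*(j') = \{j', j''\}$. I then split into two cases based on $\{\x, \nx\} \cap S$.

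In the \emph{easy case} where $\{\x, \nx\} \cap S = \emptyset$, the key point is that every agent $i \in S \setminus \{\xf, \xfp\}$ assigns value $0$ to both $\xf$ and $\xfp$, since only $\x, \nx, \xf, \xfp$ themselves value $\xf$ or $\xfp$ positively. Hence extracting $\xf, \xfp$ from $S$ and forming $\{\xf, \xfp\}$ is a Pareto improvement over $\pi^*$: the agents $\xf$ and $\xfp$ jump from utility $1/|S|$ to $1/2$, while every remaining agent in $S$ strictly benefits from losing two $0$-contributors (by Proposition~\ref{prop:FHGpreferences}, combined with Lemma~\ref{lem_fhg2:0_implies_singleton} to rule out zero-utility members of $S$). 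This contradicts the Pareto-optimality of $\pi^*$.

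In the \emph{hard case} where WLOG $\x \in S$, I would consider the partition $\pi$ obtained from $\pi^*$ by (i)~extracting $\x, \xf, \xfp$ from $S$ to form $\{\x, \xf, \xfp\}$ and (ii)~extracting $j$ from $S$ and merging her with $\{j', j''\}$ to form $\{j, j', j''\}$. The guaranteed gainers are $\xf, \xfp$ (utility $2/3$, strictly above $u_{\xf}(\pi^*) \leq 3/|S|$: indeed $|S| \geq 5$ when $\nx \in S$, while otherwise $u_{\xf}(\pi^*) \leq 2/|S| \leq 1/2$) and $j$ (utility $(2n+m-1)/(2n+m)$ strictly above $u_j(\pi^*) = k/|S|$, where $k$ denotes the number of real agents in $S$; one uses $|S| \geq k+3$ and $k \leq 4n+m < 3(2n+m-1)$, valid because $n, m \geq 2$). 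Agents $j'$ and $j''$ are indifferent. The main obstacle is the bookkeeping for $\x$ and the remaining agents in $S \setminus \{j, \x, \xf, \xfp\}$, whose coalition shrinks by four members: by Proposition~\ref{prop:FHGpreferences}, such an agent $i$ prefers $\pi$ iff $u_i(\pi^*) > (v_i(j) + v_i(\x) + v_i(\xf) + v_i(\xfp))/4$. The expected losers are $\x$ and possibly $\nx$ (who values $\xf, \xfp$ at $\frac{9}{10}v_X$ each), plus perhaps a few $Y$- or $C$-agents depending on their utility levels; one must bound these candidates via Lemma~\ref{lem_fhg2:y_max} and direct valuation-based estimates, and verify that the three guaranteed gainers $\xf, \xfp, j$ outweigh them to yield $\phi(\pi, \pi^*) > 0$, contradicting the popularity of $\pi^*$.
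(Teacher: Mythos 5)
Your easy case (no $X$-agent of variable $x$ in $S$) is correct and coincides with the paper's opening observation: splitting $\{\xf,\xfp\}$ off is a Pareto improvement. Your hard case, however, has a genuine gap, and it is not merely unfinished bookkeeping: the deviation you propose can fail outright. Consider $S=\{j,\x,\xf,\xfp,\y_1,\y_2,\y_3\}$ with the $\y_i$ pairwise noncomplementary $Y$-agents (nothing proved before \Cref{lem_fhg2:Xf_j} excludes this). After your deviation the residual coalition is $\{\y_1,\y_2,\y_3\}$, and each $\y_i$ drops from utility $\frac{5}{7}$ to $\frac{2}{3}$, so all three are losers; $\x$ also loses (her utility drops from roughly $\frac{5+1.8v_X}{7}$ to $\frac{3}{5}v_X$). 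That is four losers against your three guaranteed gainers $\xf,\xfp,j$, so $\phi(\pi,\pi^*)=-1$ and the deviation disproves nothing. Even the six-agent variant $S=\{j,\x,\xf,\xfp,\y_1,\y_2\}$ only ties at $\phi(\pi,\pi^*)=0$. The structural problem is that you leave the other real agents of $S$ behind in a shrunken coalition, where losing the high-value agent $j$ (worth $2$ to each of them) turns them into losers.

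The paper avoids this with a different decomposition. For $|S|\ge 7$ it extracts only $\{\xf,\xfp\}$: they gain ($\tfrac12$ versus at most $\tfrac{3}{7}$), and since only $\x$ and $\nx$ value them positively, at least three of the remaining at least five agents also gain, which already wins. For $|S|\le 6$ it has $j$, every $Y$-agent, and every isolated $X$-agent in $S$ perform \emph{isolated deviations} (\Cref{def:fhg2_isolated_deviations}), so those agents land in their designated good coalitions and, by \Cref{lem_fhg2:X_deviation_outcome,lem_fhg2:Y_deviation_outcome,lem_fhg2:j_deviation_outcome}, contribute a nonpositive popularity margin together with their corresponding agents; meanwhile $\xf$ and $\xfp$ gain because their coalition shrinks to agents they value. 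To repair your argument you would need to adopt some such mechanism for relocating the leftover real agents rather than bounding their number, since no bound of the form ``at most two losers'' holds for your deviation.
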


\begin{proof}
Assume otherwise. 
Then, by \Cref{lem_fhg2:xf_xf'_together}, it holds that $\{\xf,\xfp\}\subseteq S$.
Observe that $S$ must contain $\x$ or $\nx$, as otherwise it is a Pareto improvement to split $\{\xf,\xfp\}$ from $S$.
Without loss of generality, assume that $\x\in S$. 
We consider the following cases.

\textit{Case 1:} If $|S|\geq 7$, consider the partition $\pi$ obtained from $\pi^*$ by extracting $\xf$ and $\xfp$ from $S$ and setting $\pi(\xf)=\pi(\xfp)=\{\xf,\xfp\}$. 
We have that $u_{\xf}(\pi)=u_{\xfp}(\pi)=\frac{1}{2}$, whereas $u_{\xf}(\pi^*)=u_{\xfp}(\pi^*)\leq \frac{3}{7}$. 
Hence, $\xf$ and $\xfp$ prefer $\pi$ over $\pi^*$. 
Considering the rest of the agents in $S$, at most two of them might prefer $\pi^*$ over~$\pi$ (only $\x$ and $\nx$ assign a positive value to $\xf$ and $\xfp$), 
while the rest must prefer $\pi$, a contradiction to the popularity of $\pi^*$.

\textit{Case 2:} Assume $|S|\leq 6$, namely $S$ contains at most 2 agents in addition to $\xf,\xfp,\x$ and $j$. Consider the partition $\pi$ obtained from $\pi^*$ as follows.
\begin{itemize}
    \item $j$ performs an isolated deviation.
    \item Any $Y$-agent in $S$ performs an isolated deviation.
    \item Any isolated $X$-agent in $S$ performs an isolated deviation.
\end{itemize}
Consider the following agents.
Agent $j$ prefers $\pi$ over $\pi^*$, since $u_j(\pi)=\frac{2n+m-1}{2n+m}>\frac{1}{2}\geq u_j(\pi^*)$.
Agents $\xf$ and $\xfp$ clearly prefer $\pi$ over $\pi^*$.
Agent $\x$ may prefer $\pi^*$ over $\pi$, and so may $\nx$ (if $\nx\in S$).
By \Cref{lem_fhg2:X_deviation_outcome,lem_fhg2:Y_deviation_outcome,lem_fhg2:j_deviation_outcome}, any agent who performed an isolated deviation, along with her corresponding agents, will have a popularity margin of at most 0 with respect to $\pi^*$ and $\pi$. 
Lastly, if an $X$-agent $\ga\neq \xf$ and her corresponding $\Xt$-agent $\gap$ were in $S$, then, since $|S| = 6$, it holds that $S = \{\xf,\xfp,\x,j,\ga,\gap\}$.
Hence, $j$ is the only agent who was removed from $S$, and so $\ga$ and $\x$ prefer $\pi^*$ while $\gap$, $\xf$, $\xfp$, and $j$ prefer $\pi$. 
It follows that $\pi$ is more popular than $\pi^*$, a contradiction.
\qed
\end{proof}

\begin{lemma}
\label{lem_fhg2:Xt_j}
Let $x\in\XX$. Denote $S=\pi^*(j)$. Then $\xt\notin S$.
\end{lemma}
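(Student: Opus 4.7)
The plan mirrors the structure of \Cref{lem_fhg2:Xf_j}. Suppose for contradiction that $\xt \in S$ where $S = \pi^*(j)$. By \Cref{lem_fhg2:Xf_j} we have $\xf, \xfp \notin S$, and by \Cref{lem_fhg2:0_implies_singleton} together with Pareto optimality $\xt$ has positive utility in $S$; since only $\x$ and $\nx$ value $\xt$ positively, at least one of them lies in $S$, and without loss of generality $\x \in S$. The structural lemmas (\Cref{lem_fhg2:j'_j'',lem_fhg2:Y'_Y'',lem_fhg2:RL_everyone,lem_fhg2:Xf_j,lem_fhg2:l6_coalition}) then restrict $S$ to consist of $j$ together with $X$-, $Y$-, and $\xt$-agents. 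The plan is to construct a partition $\pi$ strictly more popular than $\pi^*$ via a case distinction on $|S|$.

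For $|S| \ge 7$ I would use the simplest possible move: extract $\xt$ from $S$ and form the singleton $\{\xt\}$. Since every agent in $S \setminus \{\xt, \x, \nx\}$ values $\xt$ at $0$, she strictly benefits from the removal; no agent outside $S$ is affected. Thus at most three agents ($\xt$, $\x$, and possibly $\nx$) prefer $\pi^*$, whereas at least $|S| - 3 \ge 4$ prefer $\pi$, yielding $\phi(\pi, \pi^*) \ge 1$ and contradicting popularity.

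For $|S| \le 6$ the simple move no longer suffices, so I would assemble a composite deviation: extract $\xt$ and $\x$ to form $\{\x, \xt\}$; let $j$ perform her isolated deviation to $\{j, j', j''\}$; let every isolated $Y$- or $X$-agent of $S$ perform her own isolated deviation; and, if $\nx \in S$, move $\nx$ into $\pi^*(\xf)$, which Pareto optimality combined with \Cref{lem_fhg2:0_implies_singleton} forces to equal $\{\xf, \xfp\}$ whenever both $\x$ and $\nx$ lie in $S$. The popularity analysis then collects the following contributions. Agent $j$ always strictly prefers $\pi$, since $u_j(\pi^*) \le (|S|-2)/|S|$ and $u_j(\{j, j', j''\}) = (2n+m-1)/(2n+m)$, and the inequality between them reduces to $|S| < 2(2n+m)$, trivial as $|S| \le 6$ and $n, m \ge 2$. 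Agent $\xt$ weakly prefers $\pi$ because $u_{\xt}(\pi) = \tfrac{1}{2} \ge 2/|S| \ge u_{\xt}(\pi^*)$, strictly except in the subcase $|S| = 4$ with $\nx \in S$. Every isolated deviation has nonpositive local popularity margin by \Cref{lem_fhg2:X_deviation_outcome,lem_fhg2:Y_deviation_outcome,lem_fhg2:j_deviation_outcome}, while the deviator herself strictly gains, so these moves are net helpful. If $\nx \in S$, the auxiliary move raises $u_\nx$ to $\tfrac{3}{5} v_X$ and $u_{\xf}, u_{\xfp}$ to $\tfrac{2}{3}$, all strict improvements using $v_X > 3/2$, which follows from the definition of $v_X$ and $n, m \ge 2$. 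Finally, only $\x$ may lose a vote. Summing across all sub-cases, the strict gain of $j$, combined with either the three-agent gain from the $\{\nx, \xf, \xfp\}$ move (when $\nx \in S$) or the gains from isolated $Y$-/$X$-agents present in $S$, overcompensates the at-most-one loss from $\x$, giving $\phi(\pi, \pi^*) \ge 1$.

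The hard part is the pathological subcase $S = \{j, \x, \xt, \nx\}$: here $\xt$ is only indifferent, no $Y$- or isolated $X$-agent is available, and only the auxiliary $\{\nx, \xf, \xfp\}$ move produces the decisive three-agent gain. Verifying its sufficiency relies on the precise value $v_X = 2\frac{4n+m+1}{4n+m+2} \in (3/2, 2)$ together with $n, m \ge 2$, so the careful arithmetic in this specific subcase will be the main technical hurdle.
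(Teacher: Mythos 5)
Your overall strategy is the same as the paper's: a singleton-removal argument for large $S$, and for small $S$ a composite deviation that leaves $\{\x,\xt\}$ behind, sends $j$ to $\{j,j',j''\}$, sends $\nx$ to $\{\nx,\xf,\xfp\}$, and has the remaining isolated real agents perform isolated deviations. The paper merely organizes the cases by whether $\nx\in S$ first and then by $|S|$, and your thresholds ($|S|\ge 7$ versus the paper's $|S|\ge 5$ when $\nx\notin S$) differ harmlessly.

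There is, however, a gap in your accounting for the $|S|\le 6$ case. You correctly note that $S$ may contain $X_t$-agents besides $\xt$, and by \Cref{lem_fhg2:0_implies_singleton} any such $x'_t\in S$ forces its corresponding $X$-agent $a_{\alpha'}$ into $S$ as well. That agent $a_{\alpha'}$ is \emph{not} isolated, so your composite deviation leaves the pair $\{a_{\alpha'},x'_t\}$ behind as a residual coalition, and your claim that ``only $\x$ may lose a vote'' is then false: for instance with $S=\{j,\x,\xt,a_{\alpha'},x'_t\}$ one has $u_{a_{\alpha'}}(\pi^*)=\frac{v_X+3}{5}>\frac{v_X}{2}=u_{a_{\alpha'}}(\pi)$ since $v_X<2$, so $a_{\alpha'}$ also votes for $\pi^*$. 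The deviation still wins because $x'_t$ gains a compensating vote (going from $\frac{1}{|S|}$ to $\frac12$), so the pair nets to zero and your final inequality survives, but this subcase must be checked explicitly; the paper instead rules it out by observing that a second corresponding pair puts one back in the $\nx\notin S$ situation for the variable $x'$, where the simpler argument applies. A second, minor issue: your claimed sufficient condition $v_X>\frac32$ for $\nx$'s strict gain is not enough in the worst subcase $|S|=6$, where $\frac{3v_X}{5}>\frac{v_X+4}{6}$ requires $v_X>\frac{20}{13}>\frac32$; the actual bound $v_X\ge\frac{11}{6}$ (from $n,m\ge 2$) does suffice, so the conclusion stands but the stated justification is off.
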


\begin{proof}
Assume otherwise. Observe that $S$ must contain $\x$ or $\nx$, as otherwise $u_{\xt}=0$, in contradiction to \Cref{lem_fhg2:0_implies_singleton}. 
Without loss of generality, assume that $\x\in S$. 
We divide into cases.

\textit{Case 1:} Suppose $\nx\notin S$. If $|S|\geq 5$, consider the partition $\pi$ obtained from $\pi^*$ by removing $\xt$ from $S$ and forming a singleton.
At least 3 members of $S$ prefer $\pi$ over $\pi^*$, and only $\x$ and $\xt$ prefer $\pi^*$, a contradiction. 

If $|S|=3$, then the partition $\pi$ where $j$ performs an isolated deviation is more popular, a contradiction. 

Therefore, $|S|=4$, namely $S=\{\xt,\x,j,\ga\}$ for some agent $\ga\neq\nx$. 
By \Cref{lem_fhg2:0_implies_singleton}, $u_{\ga}(\pi^*)$ must be positive, and so by \Cref{lem_fhg2:l6_coalition,lem_fhg2:xf_xf'_together,lem_fhg2:j'_j''}, $\ga$ can only be an (isolated) $X$- or $Y$-agent. 
Consider the partition $\pi$ obtained from $\pi^*$ where $j$ and $\ga$ perform isolated deviations.
Agent $j$ prefers $\pi$ over $\pi^*$, since $u_j(\pi)=\frac{2n+m-1}{2n+m}>\frac{1}{2}\geq u_j(\pi^*)$.
Agents $j'$ and $j''$ are indifferent between the partitions, by \Cref{lem_fhg2:j_deviation_outcome}.
$\xt$ clearly prefers $\pi$ over $\pi^*$.
Lastly, by \Cref{lem_fhg2:X_deviation_outcome,lem_fhg2:Y_deviation_outcome}, $\ga$ and her corresponding agents will have a popularity margin of at most 0 with respect to $\pi^*$ and $\pi$. 
Hence, $\pi$ is more popular then $\pi^*$, a contradiction.

\textit{Case 2:} Suppose $\nx\in S$. If $|S|\geq 7$, consider the partition $\pi$ obtained from $\pi^*$ by removing $\xt$ from $S$. 
At least 4 members of $S$ prefer $\pi$ over $\pi^*$, and only $\x,\nx$ and $\xt$ prefer $\pi^*$, a contradiction. 
So assume $|S|\leq 6$, namely $S$ contains at most 2 agents in addition to $\xt$, $\x$, $\nx$, and $j$. 
Note that these two agents cannot be a pair of corresponding $X$- and $\Xt$-agents, as that would lead us back to Case 1
Hence, no $\Xt$-agent apart from $\xt$ is in $S$. 
Together with \Cref{lem_fhg2:Xf_j}, this means that every $X$-agent in $S$ is isolated.

Consider the partition $\pi$ obtained from $\pi^*$ as follows.
\begin{itemize}
    \item Agent $j$ performs an isolated deviation.
    \item Agent $\nx$ performs an isolated deviation (which, by definition, results in $\pi(\nx)=\{\nx,\xf,\xfp\}$).
    \item Any additional $X$-agent in $S$ apart from $\x$ performs an isolated deviation.
    \item Any $Y$-agent in $S$ performs an isolated deviation.
\end{itemize}
Agent $j$ prefers $\pi$ over $\pi^*$, since $u_j(\pi)=\frac{2n+m-1}{2n+m}>\frac{1}{2}\geq u_j(\pi^*)$.
By \Cref{lem_fhg2:j_deviation_outcome}, agents $j'$ and $j''$ are indifferent between the partitions.
Agents $\xf$ and $\xfp$ clearly prefer $\pi$ over $\pi^*$.
Agent $\xt$ is either indifferent between $\pi$ and $\pi^*$, or prefers $\pi$, since $u_{\xt}(\pi^*)\leq \frac{1}{2}=u_{\xt}(\pi)$.
Agent $\x$ and $\nx$ may prefer $\pi^*$ over $\pi$.
Lastly, by \Cref{lem_fhg2:X_deviation_outcome,lem_fhg2:Y_deviation_outcome}, any $X$- or $Y$-agent in $S$ and her corresponding agents will have a popularity margin of at most 0 with respect to $\pi^*$ and $\pi$.
Hence, $\pi$ is more popular then $\pi^*$, a contradiction.
\qed
\end{proof} 

\begin{lemma}
\label{lem_fhg2:j_j'_j''}
It holds that $\{j,j',j''\}\in \pi^*$.
\end{lemma}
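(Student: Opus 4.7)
The plan is to argue by contradiction. I will suppose $\{j,j',j''\}\notin\pi^*$; then \Cref{lem_fhg2:j'_j''} forces $\pi^*(j')=\pi^*(j'')=\{j',j''\}$, so $j\notin\pi^*(j')$. Let $S=\pi^*(j)$. Combining \Cref{lem_fhg2:Xf_j,lem_fhg2:Xt_j,lem_fhg2:Y'_Y'',lem_fhg2:l6_coalition,lem_fhg2:RL_everyone} with $j',j''\notin S$ shows that $S\setminus\{j\}$ consists only of $X$- and $Y$-agents, all of which are isolated in the sense of \Cref{def:fhg2_isolated_agent}. The degenerate case $S=\{j\}$ is handled immediately: forming $\{j,j',j''\}$ lifts $u_j$ from $0$ to $\frac{2v_J}{3}>0$ while leaving $u_{j'}$ and $u_{j''}$ at $1$, so the resulting partition is more popular, a contradiction. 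Hence I may assume $|S|\geq 2$.

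The core construction will be the partition $\pi$ obtained from $\pi^*$ by having $j$ perform its isolated deviation (joining $\{j',j''\}$) simultaneously with isolated deviations by every $X$- and every $Y$-agent lying in $S$. When both $\x$ and $\nx$ are in $S$, the fact that $\xt,\xf,\xfp\notin S=\pi^*(\nx)$ rules out options~(2)--(4) of \Cref{lem_fhg2:xt_xf_xf'}, forcing option~(1); the joint deviation then cleanly places $\x$ with $\xt$ and $\nx$ with $\{\xf,\xfp\}$. Because the affected blocks $\{\xl,\nxl,\xt,\xf,\xfp\}$, $\{\y,\yp,\ypp\}$, and $\{j,j',j''\}$ are pairwise agent-disjoint, the popularity margin decomposes additively. \Cref{lem_fhg2:j_deviation_outcome} gives $\phi_{\{j',j''\}}(\pi^*,\pi)=0$; \Cref{lem_fhg2:X_deviation_outcome} and \Cref{lem_fhg2:Y_deviation_outcome} give a contribution of at most $0$ from each $X$- and $Y$-block; and $C$-, $L$-, $R$-agents are untouched.

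What remains is to balance $\phi_j$ against surplus negativity from the other blocks. Direct computation gives $u_j(\pi^*)=\frac{|S|-1}{|S|}$ and $u_j(\pi)=\frac{2v_J}{3}=\frac{2n+m-1}{2n+m}$, so $\phi_j(\pi^*,\pi)=-1$ precisely when $|S|<2n+m$; that subcase already delivers $\phi(\pi^*,\pi)\leq -1$ and thus a contradiction. The hard part will be the subcase $|S|\geq 2n+m$, where $\phi_j\in\{0,+1\}$ and the surplus must come from the real-agent blocks. Here I plan to use pigeonhole: since $|S\cap X|+|S\cap Y|=|S|-1\geq 2n+m-1>2n$ while $X$ and $Y$ split into only $n$ complementary pairs each, some complementary pair must lie entirely in $S$. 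If it is an $X$-pair $\{\x,\nx\}$, using $v_X<2$ and $|S|\leq 4n+1<4n+m+2$ one verifies that each of $\x,\nx,\xt,\xf,\xfp$ strictly prefers $\pi$ to $\pi^*$, contributing $-5$, whereupon the total margin is at most $-4$. If it is a $Y$-pair $\{\y,\ny\}$, both $\y$ and $\ny$ strictly prefer $\pi$ (jumping from $\frac{|S|-1}{|S|}<1$ to exactly $1$), so their two blocks contribute $-1$ each and the total margin is at most $-1$. Either way $\pi$ is more popular than $\pi^*$, contradicting popularity.

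The main obstacle is exactly this last subcase: when $|S|\geq 2n+m$, $j$ no longer strictly prefers the merge, and the single-deviation outcome lemmas alone are not enough to force $\phi(\pi^*,\pi)<0$. The pigeonhole counting that forces a complementary pair inside $S$ is the crux, since any such pair, of type $X$ or $Y$, generates strictly satisfied preferences in the combined deviation that overpower $\phi_j\leq +1$.
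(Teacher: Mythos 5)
Your proposal is correct and follows essentially the same route as the paper's proof: the same combined isolated-deviation partition, the same block-wise decomposition of the popularity margin via \Cref{lem_fhg2:X_deviation_outcome,lem_fhg2:Y_deviation_outcome,lem_fhg2:j_deviation_outcome}, and the same pigeonhole argument producing a complementary pair in $\pi^*(j)$ when $j$ does not strictly prefer the deviation, with the identical case split on $X$- versus $Y$-pairs.
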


\begin{proof}
Denote $S=\pi^*(j)$.
Assume towards contradiction that one of $j'$ and $j''$ is not in $S$. 
Then by \Cref{lem_fhg2:j'_j''}, $\pi^*(j')=\pi^*(j'')=\{j',j''\}$. 
Moreover, by \Cref{lem_fhg2:RL_everyone,lem_fhg2:Y'_Y'',lem_fhg2:l6_coalition,lem_fhg2:Xt_j,lem_fhg2:Xf_j}, $S$ can contain only isolated $X$- and $Y$-agents, apart from $j$. 
Consider the partition $\pi$ obtained from $\pi^*$ as follows: 
\begin{itemize}
    \item $j$ performs an isolated deviation.
    \item Any $Y$-agent in $S$ performs an isolated deviation.
    \item Any $X$-agent in $S$ performs an isolated deviation.
\end{itemize}
By \Cref{lem_fhg2:X_deviation_outcome,lem_fhg2:Y_deviation_outcome}, any $X$- and $Y$-agent and her corresponding agents will have a popularity margin of at most 0 with respect to $(\pi^*,\pi)$. 
Hence, if $j$ prefers $\pi$ over $\pi^*$, then $\pi$ is more popular than $\pi^*$, a contradiction. 
So assume $u_j(\pi^*)\geq u_j(\pi)$. Then, since $u_j(\pi^*)=\frac{|S|-1}{|S|}$ and $u_j(\pi)=\frac{2n+m-1}{2n+m}\geq \frac{2n+1}{2n+2}$ (recall $m\geq 2$), we must have that $|S|\geq 2n+2$, i.e., $S$ contains at least $2n+1$ $X$- and $Y$-agents. 
Thus, by the Pigeonhole principle, $S$ must contain at least one pair of complementary agents, either $\y,\ny\in Y$ or $\x,\nx\in X$ in $S$.

If $S$ contains complementary agents $\y,\ny\in Y$, we have $u_{\y}(\pi^*)=u_{\ny}(\pi^*)\leq\frac{4n}{4n+1}<1=u_{\y}(\pi)=u_{\ny}(\pi)$, and so both $\y$ and $\ny$ prefer $\pi$, while their corresponding $Y'$-agents and $Y''$-agents are indifferent. 
Hence, by \Cref{lem_fhg2:X_deviation_outcome,lem_fhg2:Y_deviation_outcome}, even if $j$ prefers $\pi^*$ over $\pi$ we have that $\pi$ is more popular than $\pi^*$, a contradiction. 

If $S$ contains complementary agents $\x,\nx\in X$, we have $u_{\x}(\pi^*)=u_{\nx}(\pi^*)\leq\frac{4n}{4n+1}<\frac{4n+m+1}{4n+m+2}$, and so both $\x$ and $\nx$ prefer $\pi$.
Furthermore, their corresponding $\xt\in \Xt,\xf\in \Xf$ and $\xfp\in \Xfp$ all prefer~$\pi$ as well, so we have $\phi_{\{\x,\nx,\xt,\xf,\xfp\}}(\pi^*,\pi)=-5$, which is clearly enough to contradict the popularity of~$\pi^*$. 

We have shown that $j',j''\in S$. Hence, by \Cref{lem_fhg2:j'_j''}, it holds that $S = \{j,j',j''\}$.
\qed
\end{proof}

\begin{lemma}
\label{lem_fhg2:xy_separate}
Let $S$ be some coalition in $\pi^*$. Then $S$ cannot contain both an $X$- and a $Y$-agent.
\end{lemma}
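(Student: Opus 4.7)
The plan is to argue by contradiction: assume there is $S\in\pi^*$ containing both an $X$-agent $\xl$ and a $Y$-agent $\y$, and build a partition $\pi$ strictly more popular than $\pi^*$. First I would pin down the location of $S$. Combining \Cref{lem_fhg2:j_j'_j'',lem_fhg2:l6_coalition,lem_fhg2:star_partition,lem_fhg2:RL_everyone,lem_fhg2:Y'_Y''}, we get $S\subseteq X\cup \Xt\cup \Xf\cup \Xfp\cup Y$. Moreover, for any $Y$-agent in $S$, \Cref{lem_fhg2:Y'_Y''} (applied with $\ga=\xl$) implies that its corresponding $Y'$- and $Y''$-agents are absent from $S$; hence every $Y$-agent in $S$ is isolated in the sense of \Cref{def:fhg2_isolated_agent}.

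Next, I define $\pi$ from $\pi^*$ by letting every $Y$-agent in $S$ and every isolated $X$-agent in $S$ perform the isolated deviation of \Cref{def:fhg2_isolated_deviations}. The valuations are such that these moves are strictly attractive. For any $Y$-agent $\y\in S$, since $\y$ values every member of $S\setminus\{\yp,\ypp\}$ by at most $1$ (assigning $0$ to $\Xt/\Xf/\Xfp$-agents and to her complementary agent), a direct count gives $u_\y(\pi^*)\le (|S|-1)/|S|<1=u_\y(\pi)$; \Cref{lem_fhg2:Y_deviation_outcome} then contributes $-1$ per $Y$-block to $\phi(\pi^*,\pi)$. For any isolated $X$-agent $\xl\in S$, an analogous bound together with $v_X/2>u_{\xl}(\pi^*)$ (using $v_X=2(4n+m+1)/(4n+m+2)$ and $p+k\le 3n$, where $p=|S\cap X|$ and $k=|S\cap Y|$) shows $\xl$ strictly prefers $\pi$, and \Cref{lem_fhg2:X_deviation_outcome} contributes $\le -1$ per $X$-block.

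In the residual coalition $S'=S\setminus(Y\cup X_{\mathrm{iso}})$, every $\Xt/\Xf/\Xfp$-agent's only positive-valued partners are her corresponding $X$-agents, which must be non-isolated and hence remain in $S'$; since her valuation of every removed agent is $0$, her utility strictly improves from $\pi^*$ to $\pi$, contributing $-1$. Summing everything so far, $\phi(\pi^*,\pi)\le -k - p_{\mathrm{iso}} - q + (\text{net from non-isolated } X\text{-agents in } S')$, where $k\ge 1$ and $q=|S\cap(\Xt\cup \Xf\cup \Xfp)|$.

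The hard part will be bounding the non-isolated $X$-agents in $S'$. A direct calculation shows that an $\xl\in S$ whose sole structural companion in $S$ is $\xt$ satisfies $u_{\xl}(\pi^*)<1$, so she prefers $\pi^*$ and contributes $+1$. I plan to group each variable's non-isolated $X$-agents with its corresponding $\Xt/\Xf/\Xfp$-agents already in $S'$ and verify, using the explicit form of $v_X$, that each such local block contributes non-positively. The only remaining trouble is the configuration where both $\xl$ and $\nxl$ lie in $S$ accompanied only by $\xt$. In that case, \Cref{lem_fhg2:xf_xf'_together} combined with the absence of any $X$-agent from $\pi^*(\xf)$ forces $\pi^*(\xf)=\{\xf,\xfp\}$, so I further modify $\pi$ by extracting $\nxl$ from $S$ and forming $\{\nxl,\xf,\xfp\}$; then $\nxl$ (whose utility jumps from below $1$ to $\tfrac{3}{5}v_X>1$), $\xf$, and $\xfp$ all strictly prefer $\pi$, more than compensating for any positive contribution from $\xl$. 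Altogether $\phi(\pi^*,\pi)<0$, contradicting popularity.
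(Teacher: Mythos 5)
Your proposal is correct in substance and, at its core, uses the same machinery as the paper: isolated deviations, the outcome lemmas (\Cref{lem_fhg2:X_deviation_outcome,lem_fhg2:Y_deviation_outcome}), and crucially the same rescue move for the one genuinely problematic configuration ($\xl,\nxl,\xt\in S$ with $\xf,\xfp\notin S$), namely extracting one complementary agent and forming $\{\nxl,\xf,\xfp\}$ --- this is exactly the paper's Case~3. The organizational route differs, though. The paper first proves $|S|\leq 6$ via a direct removal argument (if $|S|\geq 7$, extracting $\xt$ or the pair $\{\xf,\xfp\}$ is already more popular), which makes the subsequent case analysis finite and lets it enumerate the at most two ``extra'' agents explicitly; you instead keep $|S|$ arbitrary and sum signed contributions over per-variable blocks and per-$Y$-agent blocks. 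Your route buys uniformity (no size bound needed, and the strict preference of isolated agents follows from $u\leq\frac{p+k-1}{p+k}<\frac{v_X}{2}$ for any $|S|$), at the cost of having to verify non-positivity for every local configuration of a variable's agents inside $S$; the paper's size bound buys a short, concrete enumeration. Two caveats on your write-up: the bound $p+k\leq 3n$ is not justified (complementary agents can coexist in $S$), but $p+k\leq 4n$ holds trivially and still gives $\frac{p+k-1}{p+k}<\frac{4n+m+1}{4n+m+2}=\frac{v_X}{2}$, so nothing breaks; more importantly, the central claim that every non-isolated block other than $\{\xl,\nxl,\xt\}$ contributes non-positively is announced (``I plan to verify'') rather than carried out. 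That verification is the bulk of the lemma's work --- one must check, e.g., that a block $\{\xl,\xf,\xfp\}$ nets at most $-1$ because $\xf,\xfp$ always gain, and that a lone $\{\xl,\xt\}$ nets at most $0$ --- but each check does go through, so this is an incompleteness of exposition rather than a flaw in the argument.
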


\begin{proof}
Assume towards contradiction that $S$ contains both an $X$-agent and a $Y$-agent. 
If $S$ contains only $X$- and $Y$-agents, by \Cref{lem_fhg2:X_deviation_outcome,lem_fhg2:Y_deviation_outcome}, it is more popular to have all of them perform isolated deviations (here, since $j\notin S$, we used the strict version of \Cref{lem_fhg2:Y_deviation_outcome}, and the fact that there is at least one $Y$-agent in $S$). 
Hence, $S$ additionally contains an agent of another type.
Even more, by \Cref{lem_fhg2:Y'_Y'',lem_fhg2:j_j'_j'',lem_fhg2:l6_coalition,lem_fhg2:RL_everyone}, $S$ contains an $\Xt$- ,$\Xf$-, or $\Xfp$-agent. 
By \Cref{lem_fhg2:0_implies_singleton}, $S$ also contains one of her corresponding $X$-agents.
Let, therefore, $x\in \XX$ and $\alpha\in\{x,\neg x\}$, such that $\xl$ is a nonisolated $X$-agent in $S$. 
Fix a $Y$-agent $\y$ in $S$. 
Notice that we must have $|S|\leq 6$. 
Indeed, if $|S|\geq 7$, then we can either remove $\xt$, or remove $\{\xf,\xfp\}$ as a pair (this is well defined by \Cref{lem_fhg2:xf_xf'_together}).
Denote the resulting partition by $\pi$. 
If $\xt$ was removed from $S$ to create $\pi$, then only $\xl$, $\nxl$, and $\xt$ may prefer $\pi^*$ while the remaining members of $S$ prefer $\pi$. 
If $\xf$ and $\xfp$ were removed from $S$ to create $\pi$, then we have that $u_{\xf}(\pi)=u_{\xfp}(\pi)=\frac{1}{2}$, whereas $u_{\xf}(\pi^*)=u_{\xfp}(\pi^*)\leq \frac{3}{7}$. 
Hence, $\xf$ and $\xfp$ prefer $\pi$ over $\pi^*$, and so do the remaining agents in $S$ apart from $\xl$ and $\nxl$. 
Thus, $\pi$ is more popular than $\pi$, a contradiction. 
It remains to consider the case where $|S|\leq 6$.

Another useful observation is that, since \Cref{lem_fhg2:j_j'_j''} implies $j\notin S$, any isolated $X$ or $Y$-agent in $S$ has utility at most $\frac{|S|-1}{|S|}$. 
Therefore, since $|S|\leq 6$, every such agent performing an isolated deviation will end up with a higher utility (an $X$-agent will end up with utility at least $\frac{11}{12}$, and a $Y$-agent will end up with utility $1$).
This enables us to use the strict-inequality versions of \Cref{lem_fhg2:X_deviation_outcome,lem_fhg2:Y_deviation_outcome}. 
We divide into cases dependent on whether $\xf,\xfp\in S$ or $\xt\in S$.

\textit{Case 1:} Suppose that $\xf,\xfp\in S$. If either $\nxl\in S$ or $\xt\in S$, then 
$S$ contains $\xf$, $\xfp$, $\xl$, $\x$, $\y$, one of $\nxl$ or $\xt$, and possibly another agent $\ga$.
Hence, $\y$ is an isolated agent.
Also, $\ga$ must be an isolated $X$- or $Y$-agent.
Hence, having all isolated agents in $S$ perform an isolated deviation results in a more popular partition, a contradiction. 

If $\nxl,\xt\notin S$, then apart from $\xl,\xf,\xfp$ and $\y$, there is space for at most two other agents. 
If those two are corresponding $X$- and $\Xt$-agents, then we are back in Case 1; otherwise they can only be isolated $X$- or $Y$-agents. Once again, having all isolated agents in $S$ perform isolated deviations results in a more popular partition than $\pi^*$, a contradiction.

\textit{Case 2:} Suppose $\xt\in S$ while $\nxl\notin S$ and $\xf,\xfp\notin S$. 
Then $|S|\geq 4$ as otherwise removing $\xt$ from $S$ results in a more popular partition. 
Hence, $S$ contains $\xl,\xt,\y$ and possibly another agent $\ga$ who must be an isolated $X$- or $Y$-agent. 
Thus, the partition where all isolated agents in $S$ perform isolated deviations is more popular than $\pi^*$, a contradiction. 

\textit{Case 3:} Suppose that $\xt,\nxl\in S$ while $\xf,\xfp\notin S$. 
Apart from $\xl,\nxl,\xt$ and $\y$, there is room for at most two other agents. 
If those two are corresponding $X$- and $\Xt$-agents, then we are back in Case~2 from the perspective of this $X$-agent. 
Otherwise they are both isolated $X$- or $Y$-agents. 
Hence, removing $\xl$ and allocating her with $\{\xf,\xfp\}$ and having all isolated agents in $S$ perform isolated deviations results in a more popular partition, a contradiction.
\qed
\end{proof}

We can now fully determine the coalitions of $Y$-agents.

\begin{lemma}
\label{lem_fhg2:y_coalition}
Let $\y\in Y$. Then $\pi^*(\y)=\{\y,\yp,\ypp\}$.
\end{lemma}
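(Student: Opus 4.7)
The plan is to determine $\pi^*(\y)$ by first constraining which agent types can appear in $S := \pi^*(\y)$, and then using isolated deviations (\Cref{def:fhg2_isolated_deviations}) to rule out every possibility except $\{\y,\yp,\ypp\}$.

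First, I would establish that $S$ consists only of $Y$-agents together with possibly $\yp$ and $\ypp$. Sweeping through the agent types: $j,j',j''$ are in a fixed coalition by \Cref{lem_fhg2:j_j'_j''}; each $\ca$ and its $\la^6$ are fixed in pairs by \Cref{lem_fhg2:l6_coalition}; $\ra$ and $\la^1,\dots,\la^5$ are excluded by \Cref{lem_fhg2:RL_everyone}; $X$-agents by \Cref{lem_fhg2:xy_separate}; each $\xt$ would have utility $0$ in a coalition devoid of $X$-agents, violating \Cref{lem_fhg2:0_implies_singleton}; any pair $\{\xf,\xfp\}$ (always together by \Cref{lem_fhg2:xf_xf'_together}) would give value $0$ to every other agent in $S$, so extracting the pair would be a Pareto improvement, contradicting Pareto optimality of $\pi^*$; and by applying \Cref{lem_fhg2:Y'_Y''} to any $Y$-agent in $S$, the only $Y'$ or $Y''$ agents that can appear in $S$ are $\yp$ and $\ypp$ themselves (since $\y \in S$).

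Next, I would split cases on whether $\{\yp,\ypp\} \subseteq S$. In Case~A, $\yp,\ypp \notin S$, so $\y$ is isolated; moreover, every other $Y$-agent in $S$ is also isolated, since by \Cref{lem_fhg2:Y'_Y''} her corresponding $Y'$/$Y''$-agents cannot coexist with $\y$ in one coalition. I would construct $\pi$ by having every $Y$-agent in $S$ perform her isolated deviation simultaneously; these deviations are independent because each target triple is built around a disjoint pair coalition in $\pi^*$. Every deviator jumps from utility at most $\frac{|S|-1}{|S|}$ (or $0$ if $|S|=1$) to $1$, strictly preferring $\pi$; each joined pair remains at utility $1$ by \Cref{prop:FHGpreferences}; no other agent is affected. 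Hence $\phi(\pi^*,\pi) \leq -|S| \leq -1$, contradicting popularity of $\pi^*$.

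In Case~B, $\{\yp,\ypp\} \subseteq S$. If $|S|=3$ we are done; otherwise some $Y$-agent $\ga \in S \setminus \{\y,\yp,\ypp\}$ exists and, by the same application of \Cref{lem_fhg2:Y'_Y''}, is isolated. Letting every such $\ga$ perform her isolated deviation leaves $\{\y,\yp,\ypp\}$ in place. Each deviator strictly prefers $\pi$ (utility at most $\frac{|S|-3}{|S|}<1$ before, $1$ after), and both $\yp$ and $\ypp$ strictly prefer $\pi$ (utility $\frac{3}{|S|}<1$ before, $1$ after); $\y$ is either indifferent or strictly prefers $\pi$, and the pairs joined by the deviators stay at utility $1$. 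Thus $\phi(\pi^*,\pi) \leq -2$, a contradiction, and we conclude $\pi^*(\y)=\{\y,\yp,\ypp\}$. The main bookkeeping obstacle is verifying that simultaneous isolated deviations do not inadvertently hurt bystander agents; the key point is that each joined pair $\{a',a''\}$ remains at utility exactly $1$ by \Cref{prop:FHGpreferences}, because the incoming $Y$-agent's valuation of $1$ for each pair member matches their current utility.
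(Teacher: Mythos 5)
Your proof is correct and follows essentially the same route as the paper: reduce to the situation where $\pi^*(\y)$ consists only of isolated $Y$-agents and then have them all perform isolated deviations, which is a Pareto improvement and hence yields a more popular partition. The only difference is cosmetic: your Case~B with $|S|>3$ is already vacuous, since \Cref{lem_fhg2:Y'_Y''} applied to $\y$ and any $Y$-agent $\ga\in S\setminus\{\y,\yp,\ypp\}$ forbids $\yp\in\pi^*(\ga)=S$, so the paper's one-case argument coincides with your Case~A.
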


\begin{proof}
If $\pi^*(\y)\neq\{\y,\yp,\ypp\}$, then by previous lemmas $\pi^*(\y)$ contains only $Y$-agents, and thus all agents in this coalition have utility less than $1$. 
Hence, it would be a Pareto improvement to have all agents in $\pi^*(\y)$ perform isolated deviations, a contradiction.
\qed
\end{proof}

Moreover, we can fully determine the coalitions of $X$-agents.

\begin{lemma}
\label{lem_fhg2:x_coalition}
Let $x\in\XX$. Then one of the following holds.
\begin{itemize}
    \item $\pi^*(\x)=\{\x,\xt\}$ and $\pi^*(\nx)=\{\nx,\xf,\xfp\}$.
    \item $\pi^*(\x)=\{\x,\xf,\xfp\}$ and $\pi^*(\nx)=\{\nx,\xt\}$.
\end{itemize}
\end{lemma}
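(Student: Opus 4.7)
The plan is to use the previously established structural lemmas to confine the coalitions of $\x$, $\nx$, $\xt$, $\xf$, $\xfp$ to consist only of these five agents, and then to perform an explicit case analysis of the possible partitions of this five-agent block, ruling out everything except the two configurations listed.

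First, I would combine Lemmas \ref{lem_fhg2:xy_separate}, \ref{lem_fhg2:Y'_Y''}, \ref{lem_fhg2:j'_j''}, \ref{lem_fhg2:j_j'_j''}, \ref{lem_fhg2:RL_everyone}, \ref{lem_fhg2:l6_coalition}, and \ref{lem_fhg2:y_coalition} to conclude that every coalition of $\pi^*$ meeting the set $S := \{\x,\nx,\xt,\xf,\xfp\}$ is entirely contained in $S$. In particular, $\pi^*$ induces a partition of $S$.

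Second, I would show that $\x$ cannot be isolated (the argument for $\nx$ is symmetric). Assuming $\x$ isolated, Lemma \ref{lem_fhg2:xt_xf_xf'} supplies four possible configurations. In cases (\ref{lem_fhg2:xt_xf_xf':both_free}), (\ref{lem_fhg2:xt_xf_xf':xt_free}), (\ref{lem_fhg2:xt_xf_xf':xf_free}), the agent $\x$ is either a singleton or paired only with $\nx$, while one of $\{\xt\}$ or $\{\xf,\xfp\}$ sits on the side as a coalition or singleton; merging $\x$ into that idle block (and, when $\nx\in\pi^*(\x)$, sending $\nx$ into the remaining block too) produces a Pareto improvement, contradicting Pareto optimality of $\pi^*$. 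The crucial case is (\ref{lem_fhg2:xt_xf_xf':all_together}), where $\pi^*(\x)=\{\x\}$ and $\pi^*(\nx)=\{\nx,\xt,\xf,\xfp\}$. Here I would consider the deviation to $\{\x,\xt\}$ and $\{\nx,\xf,\xfp\}$ and check, using $v_X=2\frac{4n+m+1}{4n+m+2}$, that each of $\x,\nx,\xt,\xf,\xfp$ strictly increases her utility (the key computations being $\tfrac{9}{10}v_X>0.7v_X$ for $\nx$ and $\tfrac{2}{3}>\tfrac{1}{2}$ for $\xf$ and $\xfp$). This yields a Pareto improvement and the desired contradiction.

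Third, having shown both $\x$ and $\nx$ are nonisolated, I would use Lemma \ref{lem_fhg2:xf_xf'_together} (so $\xf$ and $\xfp$ are in the same coalition) to conclude that each of $\pi^*(\x)$ and $\pi^*(\nx)$ contains either $\xt$, or both of $\xf,\xfp$, or all three of $\xt,\xf,\xfp$. Since $\xt$ is a single agent and $\{\xf,\xfp\}$ is a single pair, this restricts the joint structure of $\pi^*$ on $S$ to a short list of candidates. The two configurations in the lemma statement are two of them. The remaining candidates are: (a) $\pi^*(\x)=\pi^*(\nx)$ contains all of $\{\x,\nx,\xt\}$ with $\{\xf,\xfp\}$ on the side; (b) $\pi^*(\x)=\pi^*(\nx)$ contains all of $\{\x,\nx,\xf,\xfp\}$ with $\{\xt\}$ on the side; (c) one of $\x,\nx$ holds all three of $\xt,\xf,\xfp$ while the other is a singleton; and (d) the full block $\{\x,\nx,\xt,\xf,\xfp\}$ is a single coalition. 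In each of these, I would exhibit the deviation to $\{\x,\xt\},\{\nx,\xf,\xfp\}$ (or its mirror), compute the popularity margin, and verify that it is strictly positive for the new partition; in (c) and (d) the deviation is in fact a unanimous Pareto improvement (using the same $v_X$ inequalities as in Step 2), while in (a) and (b) the tally is $3$ agents strictly preferring the deviation against $2$ preferring $\pi^*$, a net gain of $1$.

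The main obstacle is the quantitative verification in Step 2 case (\ref{lem_fhg2:xt_xf_xf':all_together}) and candidates (c), (d) of Step 3: one must ensure, using the precise value of $v_X$, that $\x$ and $\nx$ both strictly prefer the $\{\x,\xt\},\{\nx,\xf,\xfp\}$ arrangement to the large coalition, which relies on the bound $\tfrac{9}{10}v_X > \tfrac{2.8}{k}v_X$ for $k\in\{4,5\}$, and that $\xf,\xfp$ also benefit despite losing a valued coalition partner. These inequalities all hold since $v_X$ is close to $2$, but each needs to be checked explicitly. Once these verifications are done, the only configurations left standing are the two stated.
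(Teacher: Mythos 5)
Your overall architecture---confine $\x,\nx,\xt,\xf,\xfp$ to their own block and then enumerate the partitions of that block---matches the paper's, but your Step~1 has a genuine gap. The lemmas you cite (\Cref{lem_fhg2:xy_separate,lem_fhg2:Y'_Y'',lem_fhg2:j'_j'',lem_fhg2:j_j'_j'',lem_fhg2:RL_everyone,lem_fhg2:l6_coalition,lem_fhg2:y_coalition}) exclude $Y$-, $C$-, $L$-, $R$- and $j$-type agents from $\pi^*(\x)$, but none of them rules out $X$-, $\Xt$-, $\Xf$- or $\Xfp$-agents originating in a \emph{different} variable of $\XX$. Establishing this cross-variable separation is in fact the bulk of the paper's proof of this lemma: it assumes a coalition contains $X$-agents from two different variables, shows (as in the proof of \Cref{lem_fhg2:xy_separate}) that such a coalition has size at most six and contains a nonisolated $X$-agent, and then runs a three-case analysis on whether $\xt$ or $\xf,\xfp$ accompany that agent, using isolated deviations and splits into corresponding pairs/triples. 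You would also need to dispose of stray $\Xt$- or $\Xf$/$\Xfp$-agents of other variables appearing without their own $X$-agents (the paper does this via \Cref{lem_fhg2:0_implies_singleton} and a Pareto improvement removing an $\Xf$/$\Xfp$ pair). Without this, your ``partition of the five-agent block'' enumeration never gets off the ground.

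Separately, several of your quantitative claims are incorrect, although the conclusions happen to survive. In Step~2, case~(4) of \Cref{lem_fhg2:xt_xf_xf'}, moving $\nx$ from $\{\nx,\xt,\xf,\xfp\}$ to $\{\nx,\xf,\xfp\}$ \emph{lowers} her utility from $0.7v_X$ to $0.6v_X$: the agent removed from her coalition is $\xt$, whom she values at $v_X>0.7v_X$, so by \Cref{prop:FHGpreferences} she loses (your inequality $\tfrac{9}{10}v_X>0.7v_X$ compares the wrong quantities). The deviation is therefore not a Pareto improvement; it is more popular only by a $4$-to-$1$ count. The same issue recurs in your candidates (c) and (d) of Step~3; in (d), agent $\x$ drops from $\tfrac{2.8}{5}v_X=0.56v_X$ to $\tfrac12 v_X$. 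These miscounts are repairable (each deviation still wins the vote), but the claimed ``unanimous Pareto improvement'' is false, and in a popularity argument the exact tally matters.
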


\begin{proof}
By previous lemmas, $\pi^*(\x)$ and $\pi^*(\nx)$ can only contain $X$-, $\Xt$-, $\Xf$-, and $\Xfp$-agents. 
Our main claim is that $\x$ and $\nx$ are the only $X$-agents in $\pi^*(\x)$ and $\pi^*(\nx)$. 
Assume towards contradiction that there exists a coalition $S$ containing $X$-agents originating in two different variables of $\XX$.
By the same reasoning as in the proof of \Cref{lem_fhg2:xy_separate}, $S$ must contain some nonisolated $X$-agent, $|S|\leq 6$, and every isolated agent performing an isolated deviation will end up with a strictly higher utility, enabling us to use the strict-inequality versions of \Cref{lem_fhg2:X_deviation_outcome,lem_fhg2:Y_deviation_outcome}.
Without loss of generality, assume that $\x$ is nonisolated. Consider the following cases.

\textit{Case 1:} Suppose $\xt\in S$ while $\xf,\xfp\notin S$ (and by \Cref{lem_fhg2:xt_xf_xf'}, we have $\pi^*(\xf)=\pi^*(\xfp)=\{\xf,\xfp\}$). 
If $\nx\in S$, then the partition $\pi$ obtained from $\pi^*$ by extracting $\x$, $\nx$, and $\xt$ from $S$ and setting $\pi(\x)=\{\x,\xt\}$ and $\pi(\nx)=\{\nx,\xf,\xfp\}$ results in a more popular partition: 
Agents $\x$, $\nx$, $\xt$, $\xf$, and $\xfp$ prefer $\pi$ while at most three agents (the remaining agents in $S$) may prefer $\pi^*$, a contradiction.
If $\nx\notin S$, then we have $S\leq 4$, as otherwise removing $\xt$ from $S$ results in a more popular partition. Then, if $S$ is composed of two separate pairs of corresponding $X$- and $\Xt$-agents, it is a Pareto improvement to separate it into two pairs:
Clearly, the $\Xt$-agents prefer this, and furthermore, since $v_X>1$, it holds that $\frac {v_X}2 > \frac {v_X}4 + \frac 14 = \frac{v_X + 1}4$ and therefore the $X$-agents also prefer this.
Otherwise, the (at most two) remaining agents in $S$ are isolated $X$-agents, and having them perform isolated deviations results in a more popular partition, a contradiction.

\textit{Case 2:} Suppose $\xt,\xf,\xfp\in S$. 
If $S$ contains another $X$-agent and corresponding $\Xt$-agent, we are back in Case~1 from their perspective. 
Otherwise, $S$ contains an isolated $X$-agent, and having her perform an isolated deviation results in a more popular partition.

\textit{Case 3:} 
Suppose $\xt\notin S$ (and thus, no $\Xt$-agent is in $S$, as otherwise we are back in Case 1 or 2). 
Then since $\x$ is not isolated, $\xf,\xfp\in S$. 
If $\nx\in S$, then, since no $\Xt$-agent is in $S$, all other agents are isolated $X$-agents; 
having all of them perform isolated deviations results in a more popular partition.
So assume that $\nx\notin S$. 
Since $|S|\leq 6$, there is room for at most three other agents. 
If those three agents are another triple of $X$-,$\Xf$-, and $\Xfp$-agents, it is more popular to separate $S$ into two triplets (the $\Xf$-agents clearly prefer this, and, since $v_X>\frac{5}{9}$, it is easy to see that this is even true for the $X$-agents).
Otherwise, all other agents are isolated $X$-agents, and having them all perform isolated deviations results in a more popular partition, a contradiction.

Together, we have shown that $\x$ and $\nx$ are the only $X$-agents in $\pi^*(\x)$ and $\pi^*(\nx)$.
Hence, by \Cref{lem_fhg2:0_implies_singleton}, these coalitions cannot contain other $\Xt$-agents and since removing a pair of $\Xf$- and $\Xfp$-agents yields a Pareto improvement, there can also be no other such agents.
Hence, $\pi^*(\x),\pi^*(\nx)\subseteq \{\x,\nx,\xt,\xf,\xfp\}$.

Moreover, neither $\x$ nor $\nx$ is isolated, as then they would be in a singleton (the coalition $\{\x,\nx\}$ yields a utility of $0$ which is not possible), and could perform a strictly improving isolated deviation. 
Hence, we only have to exclude that $\nx\in \pi^*(\x)$ to end up in one of the cases described in the lemma.

Assume towards contradiction that $\nx\in \pi^*(\x)$.
If $\pi^*(\x) = \{\x,\nx,\xt,\xf,\xfp\}$, then forming $\{\x,\xt\}$ and $\{\nx,\xf,\xfp\}$ is more preferred by $\xt$ (since $\frac 12 > \frac 25$) and $\xf$ and $\xfp$ (since $\frac 23 > \frac 35$).
If $\pi^*(\x) = \{\x,\nx,\xf,\xfp\}$ or $\pi^*(\x) = \{\x,\nx,\xt\}$ then $u_{\x}(\pi^*) = u_{\nx}(\pi^*) \le \max\left\{\frac {18}{40} v_X, \frac 13 v_X\right\} = \frac {18}{40} v_X$.
Forming $\pi$ by dissolving $\pi^*(\x)$ into $\{\x,\xt\}$ and $\{\nx,\xf,\xfp\}$ yields $u_{\x}(\pi) = \frac 12 v_X$ and $u_{\nx}(\pi) = \frac {18}{30} v_X$, which is higher for $\x$ and $\nx$.
Moreover $\pi$ is preferred by the agent(s) among $\xt$, $\xf$, and $\xfp$ not contained in $\pi^*(\x)$.
Hence, $\pi$ is more popular, a contradiction.
\qed
\end{proof}

We have collected enough knowledge about the structure of the popular partition $\pi^*$ to extract a truth assignment.
First, we summarize what we know about $\pi^*$.

\begin{itemize}
	\item For every clause $c\in\CC$, we have the coalition $\{\ca,\la^6\}$ (\Cref{lem_fhg2:l6_coalition}).
Moreover, without loss of generality, we have the coalitions $\{\ra,\la^1,\la^2,\la^3\}$, $\{\la^4\}$, and $\{\la^5\}$ (\Cref{lem_fhg2:star_partition}).
	\item We have the coalition $\{j,j',j''\}$ (\Cref{lem_fhg2:j_j'_j''}).
	\item For $\y\in Y$, we have the coalition $\{\y,\yp,\ypp\}$ (\Cref{lem_fhg2:y_coalition}).
	\item For $x\in\XX$, we have that
\begin{itemize}
    \item $\pi^*(\x)=\{\x,\xt\}$ and $\pi^*(\nx)=\{\nx,\xf,\xfp\}$ or 
    \item $\pi^*(\x)=\{\x,\xf,\xfp\}$ and $\pi^*(\nx)=\{\nx,\xt\}$ (\Cref{lem_fhg2:x_coalition}).
\end{itemize}
\end{itemize}

This allows us to define the following truth assignment $\tau_{\XX}$ to the $\XX$ variables. 
For each $x\in \XX$, $x$ is assigned \Tru{} if and only if $\pi^*(\x)=\{\x,\xt\}$ (by \Cref{lem_fhg2:x_coalition}, this is a valid assignment). 
We claim that $\tau_{\XX}$ is a satisfying assignment to the $\QSAT$ instance, i.e., that $\psi(\tau_{\XX},\tau_{\YY}) = \Tru$ for all truth assignments $\tau_{\YY}$ to the $\YY$ variables.

Assume otherwise, namely that there exists a truth assignment $\tau_{\YY}$ to the $\YY$ variables such that $\psi(\tau_{\XX},\tau_{\YY})=\Fals$. 
We will show that this allows us to find a partition that is more popular than $\pi^*$. 
Therefore, consider the partition $\pi$ obtained from $\pi^*$ by extracting the following agents from their respective coalitions, and placing them all together in a new coalition $S$:
\begin{itemize}
    \item All $\xl\in X$ such that $\{\xl,\xt\}\in \pi^*$, for some $\xt\in\Xt$.
    \item All $\y\in Y$ such that the literal represented by $\y$ is assigned \Tru{} by $\tau_{\YY}$.
    \item All $\ca\in C$.
    \item Agent $j$.
\end{itemize}
Note that the new coalition consists of $2n+m+1$ agents.
Moreover, by definition of $\tau_{\XX}$, if $\tau_{\XX}$ assigns \Tru{} to $x$, then $S$ contains $\x$ and if $\tau_{\XX}$ assigns \Fals{} to $x$, then $S$ contains $\nx$.
In addition, for $y\in \YY$, $S$ contains $\yfix$ if $\tau_{\YY}$ assigns \Tru{} to $y$ and $S$ contains $\nyfix$ if $\tau_{\YY}$ assigns \Fals{} to $y$.

We compute the popularity margin between $\pi$ and $\pi^*$.
\begin{itemize}
	\item Fix $x\in\XX$, and let $\alpha\in\{x,\neg x\}$ such that $\xl\in X\cap S$. We have $u_{\xl}(\pi)=1>\frac{4n+m+1}{4n+m+2}= \frac {v_X}2 = u_{\xl}(\pi^*)$, and $u_{\xt}(\pi)=0<\frac{1}{2}=u_{\xt}(\pi^*)$. Hence $\phi_{\{\xl,\xt\}}(\pi^*,\pi)=0$.
	\item Let $\y\in Y\cap S$. The utilities of $\y,\yp$ and $\ypp$ are 1 in both partitions. Hence $\phi_{\{\y,\yp,\ypp\}}(\pi^*,\pi)=0$.
	\item Let $c\in\CC$. 
	We have that $u_{\la^6}(\pi)=0<\frac{3}{8}=u_{\la^6}(\pi^*)$, and so $\la^6$ prefers $\pi^*$ over $\pi$. 
	However, since $\psi(\tau_{\XX},\tau_{\YY})=\Fals$, we have that $c$ has at most two literals in $S$ assigned \Tru{} by $\tau_{\XX}$ and $\tau_{\YY}$.
	Hence, since the $X$- and $Y$-agents in $S$ correspond to the literals assigned \Tru{} by $\tau_{\XX}$ and $\tau_{\YY}$, there are at most two $X$- or $Y$-agents in $S$ to whom $\ca$ assigns value $0$ (to the other $X$- or $Y$-agents she assigns $v_C$). 
	Thus, we have $u_{\ca}(\pi)\geq \frac{(2n-2)v_C+m}{2n+m+1} > \frac{2n+1+m}{2n+m+1} = 1=u_{\ca}(\pi^*)$. 
	There, we use that $v_C = \frac{2n+1}{2n-2.5}$ and therefore $(2n-2)\frac{2n+1}{2n-2.5} > 2n+1$.
	Therefore, $\phi_{\{\ca,\la^6\}}(\pi^*,\pi)=0$. 
	\item Consider agents $j,j',j''$. Agents $j'$ and $j''$ obtain utility $1$ in both partitions, and are thus indifferent. As for $j$, we have $u_j(\pi)=\frac{2n+m}{2n+m+1}>\frac{2n+m-1}{2n+m}=u_j(\pi^*)$, and thus $\phi_{\{j,j',j''\}}(\pi^*,\pi)=-1$.
	\item All other agents are in the same coalition in $\pi$ and $\pi^*$ and they are therefore indifferent between the two partitions. 
\end{itemize}

	Altogether, we conclude that $\phi(\pi^*,\pi)=-1$, in contradiction to $\pi^*$ being a popular partition.
	Hence, $(\XX,\YY,\psi)$ is a Yes-instance of \QSAT.
\qed


\begin{thebibliography}{37}
\providecommand{\natexlab}[1]{#1}
\providecommand{\url}[1]{\texttt{#1}}
\expandafter\ifx\csname urlstyle\endcsname\relax
  \providecommand{\doi}[1]{doi: #1}\else
  \providecommand{\doi}{doi: \begingroup \urlstyle{rm}\Url}\fi

\bibitem[Arora and Barak(2009)]{ArBa09a}
S.~Arora and B.~Barak.
\newblock \emph{Computational Complexity: {A} Modern Approach}.
\newblock Cambridge University Press, 2009.

\bibitem[Aziz and Savani(2016)]{AzSa15a}
H.~Aziz and R.~Savani.
\newblock Hedonic games.
\newblock In F.~Brandt, V.~Conitzer, U.~Endriss, J.~Lang, and A.~D. Procaccia,
  editors, \emph{Handbook of Computational Social Choice}, chapter~15.
  Cambridge University Press, 2016.

\bibitem[Aziz et~al.(2013)Aziz, Brandt, and Seedig]{ABS11c}
H.~Aziz, F.~Brandt, and H.~G. Seedig.
\newblock Computing desirable partitions in additively separable hedonic games.
\newblock \emph{Artificial Intelligence}, 195:\penalty0 316--334, 2013.

\bibitem[Aziz et~al.(2019)Aziz, Brandl, Brandt, Harrenstein, Olsen, and
  Peters]{ABB+17a}
H.~Aziz, F.~Brandl, F.~Brandt, P.~Harrenstein, M.~Olsen, and D.~Peters.
\newblock Fractional hedonic games.
\newblock \emph{ACM Transactions on Economics and Computation}, 7\penalty0
  (2):\penalty0 1--29, 2019.

\bibitem[Banerjee et~al.(2001)Banerjee, Konishi, and S{\"o}nmez]{BKS01a}
S.~Banerjee, H.~Konishi, and T.~S{\"o}nmez.
\newblock Core in a simple coalition formation game.
\newblock \emph{Social Choice and Welfare}, 18:\penalty0 135--153, 2001.

\bibitem[Bir{\'o} et~al.(2010)Bir{\'o}, Irving, and Manlove]{BIM10a}
P.~Bir{\'o}, R.~W. Irving, and D.~F. Manlove.
\newblock Popular matchings in the marriage and roommates problems.
\newblock In \emph{Proceedings of the 7th Italian Conference on Algorithms and
  Complexity (CIAC)}, pages 97--108, 2010.

\bibitem[Bogomolnaia and Jackson(2002)]{BoJa02a}
A.~Bogomolnaia and M.~O. Jackson.
\newblock The stability of hedonic coalition structures.
\newblock \emph{Games and Economic Behavior}, 38\penalty0 (2):\penalty0
  201--230, 2002.

\bibitem[Brandl et~al.(2015)Brandl, Brandt, and Strobel]{BBS14a}
F.~Brandl, F.~Brandt, and M.~Strobel.
\newblock Fractional hedonic games: {I}ndividual and group stability.
\newblock In \emph{Proceedings of the 14th International Conference on
  Autonomous Agents and Multiagent Systems (AAMAS)}, pages 1219--1227, 2015.

\bibitem[Brandt and Bullinger(2022)]{BrBu20a}
F.~Brandt and M.~Bullinger.
\newblock Finding and recognizing popular coalition structures.
\newblock \emph{Journal of Artificial Intelligence Research}, 74:\penalty0
  569--626, 2022.

\bibitem[Brandt et~al.(2023)Brandt, Bullinger, and Wilczynski]{BBW21b}
F.~Brandt, M.~Bullinger, and A.~Wilczynski.
\newblock Reaching individually stable coalition structures.
\newblock \emph{ACM Transactions on Economics and Computation}, 11\penalty0
  (1--2):\penalty0 4:1--65, 2023.

\bibitem[Brandt et~al.(2024)Brandt, Bullinger, and Tappe]{BBT23a}
F.~Brandt, M.~Bullinger, and L.~Tappe.
\newblock Stability based on single-agent deviations in additively separable
  hedonic games.
\newblock \emph{Artificial Intelligence}, 334, 2024.

\bibitem[Bullinger and Kraiczy(2024)]{BuKr24a}
M.~Bullinger and S.~Kraiczy.
\newblock Stability in random hedonic games.
\newblock In \emph{Proceedings of the 25th ACM Conference on Economics and
  Computation (ACM-EC)}, 2024.

\bibitem[Bullinger et~al.(2024)Bullinger, Elkind, and Rothe]{BER24a}
M.~Bullinger, E.~Elkind, and J.~Rothe.
\newblock Cooperative game theory.
\newblock In J.~Rothe, editor, \emph{Economics and Computation: An Introduction
  to Algorithmic Game Theory, Computational Social Choice, and Fair Division},
  chapter~3, pages 139--229. Springer, 2024.

\bibitem[Cechl{\'a}rov{\'a} and Romero-Medina(2001)]{CeRo01a}
K.~Cechl{\'a}rov{\'a} and A.~Romero-Medina.
\newblock Stability in coalition formation games.
\newblock \emph{International Journal of Game Theory}, 29:\penalty0 487--494,
  2001.

\bibitem[Cohen-Addad et~al.(2022)Cohen-Addad, Lattanzi, Maggiori, and
  Parotsidis]{CLMP22a}
V.~Cohen-Addad, S.~Lattanzi, A.~Maggiori, and N.~Parotsidis.
\newblock Online and consistent correlation clustering.
\newblock In \emph{Proceedings of the 39th International Conference on Machine
  Learning (ICML)}, pages 4157--4179, 2022.

\bibitem[Condorcet(1785)]{Cond85a}
M.~Condorcet.
\newblock \emph{Essai sur l'application de l'analyse {\`a} la probabilit{\'e}
  des d{\'e}cisions rendues {\`a} la pluralit{\'e} des voix}.
\newblock Imprimerie Royale, 1785.
\newblock Facsimile published in 1972 by Chelsea Publishing Company, New York.

\bibitem[Cseh and Peters(2022)]{CsPe21a}
{\'A}.~Cseh and J.~Peters.
\newblock Three-dimensional popular matching with cyclic preferences.
\newblock In \emph{Proceedings of the 21st International Conference on
  Autonomous Agents and Multiagent Systems (AAMAS)}, pages 309--317, 2022.

\bibitem[Dimitrov et~al.(2006)Dimitrov, Borm, Hendrickx, and Sung]{DBHS06a}
D.~Dimitrov, P.~Borm, R.~Hendrickx, and S.~C. Sung.
\newblock Simple priorities and core stability in hedonic games.
\newblock \emph{Social Choice and Welfare}, 26\penalty0 (2):\penalty0 421--433,
  2006.

\bibitem[Dr{\`e}ze and Greenberg(1980)]{DrGr80a}
J.~H. Dr{\`e}ze and J.~Greenberg.
\newblock Hedonic coalitions: Optimality and stability.
\newblock \emph{Econometrica}, 48\penalty0 (4):\penalty0 987--1003, 1980.

\bibitem[Faenza et~al.(2019)Faenza, Kavitha, Power, and Zhang]{FKPZ19a}
Y.~Faenza, T.~Kavitha, V.~Power, and X.~Zhang.
\newblock Popular matchings and limits to tractability.
\newblock In \emph{Proceedings of the 30th Annual ACM-SIAM Symposium on
  Discrete Algorithms (SODA)}, pages 2790--2809, 2019.

\bibitem[Fanelli et~al.(2021)Fanelli, Monaco, and Moscardelli]{FMM21a}
A.~Fanelli, G.~Monaco, and L.~Moscardelli.
\newblock Relaxed core stability in fractional hedonic games.
\newblock In \emph{Proceedings of the 30th International Joint Conference on
  Artificial Intelligence (IJCAI)}, pages 182--188, 2021.

\bibitem[Fioravanti et~al.(2023)Fioravanti, Flammini, Kodric, and
  Varricchio]{FFKV23a}
S.~Fioravanti, M.~Flammini, B.~Kodric, and G.~Varricchio.
\newblock $\varepsilon$-fractional core stability in hedonic games.
\newblock In \emph{Proceedings of the 37th Annual Conference on Neural
  Information Processing Systems (NeurIPS)}, 2023.

\bibitem[Gairing and Savani(2019)]{GaSa19a}
M.~Gairing and R.~Savani.
\newblock Computing stable outcomes in symmetric additively separable hedonic
  games.
\newblock \emph{Mathematics of Operations Research}, 44\penalty0 (3):\penalty0
  1101--1121, 2019.

\bibitem[Gale and Shapley(1962)]{GaSh62a}
D.~Gale and L.~S. Shapley.
\newblock College admissions and the stability of marriage.
\newblock \emph{The American Mathematical Monthly}, 69\penalty0 (1):\penalty0
  9--15, 1962.

\bibitem[G{\"a}rdenfors(1975)]{Gard75a}
P.~G{\"a}rdenfors.
\newblock Match making: {A}ssignments based on bilateral preferences.
\newblock \emph{Behavioral Science}, 20\penalty0 (3):\penalty0 166--173, 1975.

\bibitem[Gupta et~al.(2021)Gupta, Misra, Saurabh, and Zehavi]{GMSZ21a}
S.~Gupta, P.~Misra, S.~Saurabh, and M.~Zehavi.
\newblock Popular matching in roommates setting is np-hard.
\newblock \emph{ACM Transactions on Computation Theory}, 13\penalty0
  (2):\penalty0 9:1--9:20, 2021.

\bibitem[Kavitha et~al.(2011)Kavitha, Mestre, and Nasre]{KMN11a}
T.~Kavitha, J.~Mestre, and M.~Nasre.
\newblock Popular mixed matchings.
\newblock \emph{Theoretical Computer Science}, 412\penalty0 (24):\penalty0
  2679--2690, 2011.

\bibitem[Kerkmann and Rothe(2020)]{KeRo20a}
A.~M. Kerkmann and J.~Rothe.
\newblock Altruism in coalition formation games.
\newblock In \emph{Proceedings of the 29th International Joint Conference on
  Artificial Intelligence (IJCAI)}, pages 461--467, 2020.

\bibitem[Kerkmann et~al.(2020)Kerkmann, Lang, Rey, Rothe, Schadrack, and
  Schend]{KLR+20a}
A.~M. Kerkmann, J.~Lang, A.~Rey, J.~Rothe, H.~Schadrack, and L.~Schend.
\newblock Hedonic games with ordinal preferences and thresholds.
\newblock \emph{Journal of Artificial Intelligence Research}, 67:\penalty0
  705--756, 2020.

\bibitem[Newman(2004)]{Newm04a}
M.~E.~J. Newman.
\newblock Detecting community structure in networks.
\newblock \emph{The European Physical Journal B - Condensed Matter and Complex
  Systems}, 38\penalty0 (2):\penalty0 321--330, 2004.

\bibitem[Olsen(2012)]{Olse12a}
M.~Olsen.
\newblock On defining and computing communities.
\newblock In \emph{Proceedings of the 18th Computing: The Australasian Theory
  Symposium (CATS)}, volume 128 of \emph{Conferences in Research and Practice
  in Information Technology (CRPIT)}, pages 97--102, 2012.

\bibitem[Papadimitriou(1994)]{Papa94a}
C.~H. Papadimitriou.
\newblock \emph{Computational Complexity}.
\newblock Addison-Wesley, 1994.

\bibitem[Peters(2017)]{Pete17b}
D.~Peters.
\newblock Precise complexity of the core in dichotomous and additive hedonic
  games.
\newblock In \emph{Proceedings of the 5th International Conference on
  Algorithmic Decision Theory (ADT)}, pages 214--227, 2017.

\bibitem[Ray(2007)]{Ray07a}
D.~Ray.
\newblock \emph{A Game-Theoretic Perspective on Coalition Formation}.
\newblock Oxford University Press, 2007.

\bibitem[Stockmeyer(1977)]{St77}
L.~J. Stockmeyer.
\newblock The polynomial-time hierarchy.
\newblock \emph{Theoretical Computer Science}, 3\penalty0 (1):\penalty0 1--22,
  1977.

\bibitem[Sung and Dimitrov(2010)]{SuDi10a}
S.~C. Sung and D.~Dimitrov.
\newblock Computational complexity in additive hedonic games.
\newblock \emph{European Journal of Operational Research}, 203\penalty0
  (3):\penalty0 635--639, 2010.

\bibitem[Woeginger(2013)]{Woeg13a}
G.~J. Woeginger.
\newblock A hardness result for core stability in additive hedonic games.
\newblock \emph{Mathematical Social Sciences}, 65\penalty0 (2):\penalty0
  101--104, 2013.

\end{thebibliography}
\end{document}